\newcommand{\nc}{\newcommand}
\nc{\ben}{\begin{eqnarray}}
\nc{\een}{\end{eqnarray}}
\newcommand{\beqa}{\begin{eqnarray}}
\newcommand{\eeqa}{\end{eqnarray}}
\nc{\Z}{{\bold Z}}
\newcommand{\fpt}[7]{{}_4\phi_3\left[\begin{matrix} #1 , #2, #3, #4 \\
#5, #6, #7 \end{matrix}\,; q^2,q^2\right]}
\newtheorem{cor}{Corollary}[section]
\newtheorem{lem}{Lemma}[section]
\newtheorem{prop}{Proposition}[section]
\newtheorem{example}{Example}
\newtheorem{conj}{Conjecture}
\newtheorem{rem}{Remark}
\newcommand{\cal}{\mathcal}
\newcommand{\tA}{\textsf{A}}
\newcommand{\tB}{\textsf{B}}
\newcommand{\tC}{\tA^\diamond}
\newcommand{\tW}{\textsf{W}}
\newcommand{\tX}{\textsf{X}}
\newcommand{\tP}{\textsf{P}}
\newcommand{\tH}{\textsf{H}}
\newcommand{\bin}[2]{\left( \begin{matrix} #1  \\
#2 \end{matrix}\right)}
\newcommand{\cA}{\mathcal{A}}
\newcommand{\cB}{\mathcal{B}}
\newcommand{\cC}{\mathcal{C}}
\newcommand{\cD}{\mathcal{D}}
\newcommand{\bcV}{\bar {\cal V}}
\newcommand{\fa}{\mathfrak{a}}  
\newcommand{\fb}{\mathfrak{b}}  
\newcommand{\fc}{\mathfrak{c}}  
\newcommand{\fd}{\mathfrak{d}}
\newcommand{\non}{\nonumber}  
\newcommand{\q}{\quad}
\newcommand{\qq}{\qquad}
\numberwithin{equation}{section}
\begin{document}

\title[Diagonalization of the Heun-Askey-Wilson operator]{Diagonalization of the Heun-Askey-Wilson operator,\\  Leonard pairs and the algebraic Bethe ansatz}
\author{Pascal Baseilhac$^{*}$}
\address{$^*$ Institut Denis-Poisson CNRS/UMR 7013 - Universit\'e de Tours - Universit\'e d'Orl\'eans
Parc de Grammont, 37200 Tours, 
FRANCE}
\email{pascal.baseilhac@idpoisson.fr}

\author{Rodrigo A. Pimenta$^{*,**}$}
\address{$^{**}$ Instituto de F\'isica de S\~ao Carlos, Universidade de S\~ao Paulo, Caixa Postal 369, 13.560-590, S\~ao Carlos,
SP, BRAZIL} 
\address{CAPES Foundation, Ministry of Education of
Brazil, Brasilia - DF, Zip code 70.040-020, BRAZIL}
\email{pimenta@ifsc.usp.br}

\begin{abstract} An operator of Heun-Askey-Wilson type  is diagonalized within the framework of the algebraic Bethe ansatz using the theory of Leonard pairs. For different specializations and the generic case, the corresponding eigenstates are constructed in the form of Bethe states, whose Bethe roots satisfy Bethe ansatz equations of homogeneous or inhomogenous type.  For each set of Bethe equations, an alternative presentation is given in terms of `symmetrized' Bethe roots.   Also, two families of on-shell Bethe states are shown to generate two explicit bases on which a Leonard pair acts in a tridiagonal fashion. In a second part, the (in)homogeneous Baxter T-Q relations are derived. Certain realizations of the Heun-Askey-Wilson operator as second q-difference operators are introduced. Acting on the Q-polynomials, they produce the T-Q relations. For a special case, the Q-polynomial is identified with the Askey-Wilson polynomial, which allows one to obtain the solution of the associated Bethe ansatz equations. The analysis presented can be viewed as a toy model for studying integrable models generated from the Askey-Wilson algebra and its generalizations. As examples, the q-analog of the quantum Euler top and various types of three-sites Heisenberg spin chains in a magnetic field with inhomogeneous couplings, three-body and  boundary interactions are solved. Numerical examples are given. The results also apply to the time-band limiting problem in signal processing.
\end{abstract}

\maketitle

\vskip -0.5cm

{\small MSc:\ 81R50;\ 81R10;\ 81U15.}

{{\small  {\it \bf Keywords}: Askey-Wilson algebra; Leonard pairs, Reflection equation; Bethe ansatz; Integrable systems.}}

\vspace{3mm}


\section{Introduction}

Although not pointed out in standard textbooks of quantum mechanics, the quantum harmonic oscillator is among the basic examples of quantum integrable systems generated from the  so-called  Askey-Wilson algebra introduced in \cite{Z91} (see (\ref{aw1}), (\ref{aw2}) below). Introducing the Heisenberg algebra with generators $a,a^\dagger$ and defining relations $[a,a^\dagger]=1$,
for an appropriate change of variables the Hamiltonian, position and momentum operators read, respectively:
\beqa
\tH = a^\dagger a +\frac{1}{2} \ ,\quad     \tX = \frac{1}{\sqrt{2}}(a^\dagger+a)\ ,\quad \tP = \frac{i}{\sqrt{2}}(a^\dagger-a)\  .\label{HXP}
\eeqa 
The triplet  $(\tH,\tX,i\tP)$ generates a specialization of the Askey-Wilson algebra. Considering the first presentation of the Askey-Wilson algebra with three generators given in \cite{Z91},  one routinely  finds:
$\big[\tH,\tX\big]= -i\tP$, $\big[\tX,i\tP\big]=-1$ and $\big[i\tP,\tH\big]= \tX$. It follows that the pair $(\tH,\tX)$ satisfies the so-called Askey-Wilson relations \cite{Z91}:
\beqa
\big[\tX,\big[\tX,\tH\big]\big] &=& -1\ ,\quad \big[\tH,\big[\tH,\tX\big]\big]= \tX\ \label{aw01}
\eeqa
giving a second presentation  \cite{Z91}.
Alternatively, for the pair $(\tH,i\tP)$ one also gets Askey-Wilson relations, with different structure constants:
\beqa
\big[\tH,\big[\tH,i\tP\big]\big] &=& i\tP\ ,\quad \big[i\tP,\big[i\tP,\tH\big]\big]= 1\ .\label{awp01}
\eeqa

In the literature, the analysis of the spectral problem for the Hamiltonian $\tH$ is usually based on two different representations\footnote{We assume the reader is familiar with the bases $\{|x\rangle\}$ and $\{|n\rangle\}$ of standard textbooks in quantum mechanics. For convenience, below we introduce $\langle x | = e^{-x^2/2} \langle \theta_x|$    and $|n \rangle = (2^n n!)^{-1/2} | \theta^*_n \rangle$. Note that a rigorous mathematical definition of the basis $\{|x\rangle\}$  requires the framework of rigged Hilbert spaces, see  \cite{Madr1,Madr2} and references therein.} of the Heisenberg algebra. There exists a basis $\{|\theta^*_n\rangle, \ n\in \mathbb{Z}_+\}$ such that the generators $a,a^\dagger$ and the so-called  number operator $N=a^\dagger a$ act respectively as
$a|\theta^*_n\rangle = \sqrt{2} n|\theta^*_{n-1}\rangle$\ , $a^\dagger|\theta^*_n\rangle = \frac{1}{\sqrt{2}}|\theta^*_{n+1}\rangle$ and  $N|\theta^*_n\rangle = n |\theta^*_n\rangle$ where $|\theta^*_0\rangle$ denotes the lowest weight (or `reference') state such that $a|\theta^*_0\rangle=0$. In terms of  the operators $(\tH,\tX)$, one has: 
\beqa
\left( \tX - [\tX,\tH]\right)^n   |  \theta^*_0\rangle=0 \quad \mbox{and}\quad  |\theta^*_n\rangle  = \left( \tX + [\tX,\tH]\right)^n   |  \theta^*_0\rangle \ .\label{nvect}
\eeqa

 On the other hand, there exists a  basis $\{| \theta_x\rangle , \ x \in {\mathbb R}\}$ to which one associates the linear functional denoted  $\langle \theta_x|$ such that the generators $a,a^\dagger$ and $N$ act as
\beqa
 && \langle \theta_x| a  = \frac{1}{\sqrt{2}}\frac{d}{dx} \langle \theta_x| \ ,\qquad   \langle \theta_x| a^\dagger  = \frac{1}{\sqrt{2}}\left(2x-\frac{d}{dx}\right) \langle \theta_x|\ ,\qquad  \langle \theta_x| N= \frac{1}{2}\left(2x-\frac{d}{dx}\right)\frac{d}{dx}\langle \theta_x|\ . \label{actthx}
\eeqa
Then, the transition coefficients   $\langle \theta_x| \theta^*_n \rangle $ solve a bispectral problem which reads as a second-order differential equation and a three term recurrence relation given by:
\beqa
&& \langle \theta_x| \tH| \theta^*_n\rangle = (n+\frac{1}{2}) \langle  \theta_x|  \theta^*_n\rangle\quad \qquad\quad \quad \Leftrightarrow \quad \left(2x-\frac{d}{dx}\right)\frac{d}{dx}H_n(x) = 2n H_n(x) \  \quad \mbox{with} \quad H_n(x) =\langle  \theta_x|  \theta^*_n\rangle \ ,\label{specH2}\\
&& \langle \theta_x| \tX| \theta^*_n\rangle =   \langle  \theta_x|  \theta^*_{n+1}\rangle  +  2n \langle  \theta_x|  \theta^*_{n-1}\rangle \   \quad \Leftrightarrow \quad      2xH_n(x) =  H_{n+1}(x) + 2n H_{n-1}(x) \ .
\eeqa
This system coincides with the well-known defining relations of the  orthogonal Hermite polynomials $H_n(x)$ \cite[eqs. (1.13.3), (1.13.4)]{KS} located at the bottom of the Askey-scheme:
\beqa
 H_n(x) = \sum_{k=0}^{[n/2]}(-1)^k \frac{n!}{k!(n-2k!)}(2x)^{n-2k}\ \quad \mbox{e.g.}\quad  H_0(x)=1\ ,\  H_1(x)=2x\ ,\ H_2(x)=4x^2-2,...
\eeqa

Alternatively, it is also known that the spectral problem (\ref{specH2}) can be formulated within the analytical Bethe ansatz framework, see e.g. \cite[Section 2.1]{Gro}. Namely, 
consider the elementary Baxter T-Q relation:
\beqa
2xQ_n'(x) - Q_n''(x)=\Lambda_n Q_n(x) \qquad \mbox{where} \qquad Q_n(x)=\prod_{i=1}^n(x-x_i)\ .\label{Qherm}
\eeqa
Given $n$ fixed, (\ref{Qherm}) implies that the zeroes $\{x_i|i=1,...,n\}$ satisfy the set of Bethe ansatz equations
\beqa
\frac{Q_n''(x_i)}{Q_n'(x_i)}=2x_i \quad \Leftrightarrow \quad \sum_{j=1,j\neq i}^n \frac{1}{x_i-x_j} = x_i \quad \mbox{for all} \quad i=1,...,n,\ \label{BAQx}
\eeqa
and the spectrum is given by $\Lambda_n=2n$.  Thus, by comparison of (\ref{Qherm}) with  (\ref{specH2}) it follows that the  Q-polynomial in (\ref{Qherm}) can be interpreted as the the transition coefficients $\langle  \theta_x|  \theta^*_n\rangle$:
\beqa
Q_n(x) = 2^{-n} \langle  \theta_x|  \theta^*_n\rangle = 2^{-n} H_n(x) \ .\label{Qx}
\eeqa
Furthermore, using  (\ref{HXP}) and (\ref{actthx}) together with  (\ref{nvect}), the Q-polynomial  can be written in terms of the pair $(\tH,\tX)$  as:
\beqa
Q_n(x) =    \langle  \theta_x| \left( \tX + [\tX,\tH]\right)^n   |  \theta^*_0\rangle  \qquad  \Leftrightarrow  \qquad Q_n(x) =   2^{-n}\left(2x-\frac{d}{dx}\right)^n {\bf 1}\ . \label{actQx}
\eeqa

More generally, given the triplet $(\tH,\tX,i\tP)$  satisfying the Askey-Wilson relations (\ref{aw01}) and (\ref{awp01}) it is natural to consider  the combination (recall that $i\tP=[\tX,\tH]$)
\beqa  
{\textsf I}_{ho}= \kappa\tX + \kappa^*\tH  + \kappa_+[\tX,\tH]\ \label{Iho}
\eeqa
and the corresponding spectral problem. From an algebraic perspective,  the pair $(\tX,{\textsf I}_{ho})$  (or alternatively $(\tH,{\textsf I}_{ho})$) generates the simplest specialization of the family of Heun algebras of Lie type recently introduced in \cite[Definition 2.1]{nico}. For generic parameters $\kappa,\kappa^*,\kappa_+$,  it is well-known that the spectral problem\footnote{However, adding a term such as $\{\tX,\tH\}$ to (\ref{Iho}) where $\{.,.\}$ is the anticommutator leads to a more complicated spectral problem.} for ${\textsf I}_{ho}$ can still be solved in terms of Hermite polynomials (see e.g. \cite{Co}).\vspace{1mm}

The possibility of generalizing the above picture to the Askey-Wilson algebra with generic structure constants \cite{Z91} and to the Heun-Askey-Wilson algebra recently introduced in \cite{BTVZ} is a natural problem to be considered, for various reasons. Some motivations are briefly described in Section \ref{persp}. Let $\rho,\omega,\eta,\eta^*$ be generic scalars. The  Askey-Wilson algebra is generated by $\tA,\tA^*$ subject to the relations \cite{Z91}
\beqa
&&\big[\tA,\big[\tA,\tA^*\big]_q\big]_{q^{-1}}=  \rho \,\tA^*+\omega \,\tA+\eta\mathcal{I} \ , \label{aw1} \\
&&\big[\tA^*,\big[\tA^*,\tA\big]_q\big]_{q^{-1}}= \rho \,\tA+\omega \,\tA^*+\eta^*\mathcal{I} \ .\label{aw2}
\eeqa
In the literature, these relations are called the Askey-Wilson relations. It is known that the Askey-Wilson algebra gives an algebraic framework for all the orthogonal polynomials of the Askey-scheme \cite{Z91}. Also, it is known that the zeroes of the Askey-Wilson polynomials satisfy a class of Bethe ansatz equations \cite{WZ,DE}. Define the Heun-Askey-Wilson element \cite{BTVZ,nico}:
\beqa
{\textsf I}(\kappa,\kappa^*,\kappa_+,\kappa_-)=  \kappa\,\tA+\kappa^* \tA^*+\kappa_+ \chi^{-1}\left[\tA,\tA^*\right]_q+\kappa_-\chi\left[\tA^*,\tA\right]_q\ \label{I}
\eeqa
where $\kappa,\kappa^*,\kappa_\pm$ and\footnote{The parameter $\chi$ is introduced for further convenience.} $\chi\neq 0$ are scalars. To our knowledge, the diagonalization of a  Heun-Askey-Wilson operator associated with (\ref{I}) has not been considered yet in full generality\footnote{For a certain image of (\ref{I}) in $U_q(sl_2)$, see however \cite{WZ}.}. For finite dimensional representations, as we will show
it turns out that this problem can be solved within the boundary quantum inverse scattering method \cite{Skly88}. In fact, the Heun-Askey-Wilson operator commutes with a given transfer matrix \cite{Z95,Bas2} associated with non-diagonal reflection matrices \cite{DeG,GZ}, see (\ref{trans}) below.
Such kind of reflection matrices contain arbitrary constant parameters (boundary couplings) which in general break the $U(1)$ symmetry of the transfer matrix, and it prevents the application of the standard Bethe ansatz techniques to study its diagonalization, unless that restrictions on the boundary couplings are imposed, see\footnote{For a more complete list of references and an account of historic details on this subject, we refer the reader to the Introduction of \cite{MABAq1}.} {\it e.g.} \cite{cao03,TQ1,YZ07,Doikou06,FNR,BCR12,PL13}. An important progress for the unrestricted cases was achieved by the introduction of the off-diagonal Bethe ansatz \cite{cao13}, a method that proposes an {\it inhomogeneous} Baxter T-Q equation as solution of the spectral problem for integrable models without $U(1)$ symmetry \cite{ODBA}. Beyond the computation of the spectrum, a modification of the algebraic Bethe ansatz was developed in \cite{BC2013,MABAq1,Cra14,MABAq2,MABAq3} providing the construction of the associated off-shell Bethe states\footnote{The on-shell Bethe states can also be retrieved from the off-diagonal Bethe ansatz \cite{onshellbv}.}, which in particular allows the computation of scalar products between on-shell/off-shell Bethe states, see  \cite{scalar1,scalar2,proof1,whydet} and references therein. The main feature of the modified algebraic Bethe ansatz are the off-shell relations satisfied by the `creation operators', see e.g. Lemma \ref{prop:diagonaloffshell} and Conjecture \ref{prop:genericoffshell}. In the algebraic Bethe ansatz perspective, these off-shell relations are the origin of the inhomogeneous term in the Baxter T-Q equation.\vspace{1mm}

In the present paper, for any irreducible finite dimensional representation of the Askey-Wilson algebra and $q\neq 1$, the construction of eigenvectors of (\ref{I}) within the (modified) algebraic Bethe ansatz framework and the analog of the construction  (\ref{Qherm}), (\ref{BAQx}), (\ref{Qx}) is considered. In this case, the theory of Leonard pairs developed in \cite{T03,T04} offers the proper mathematical setting. The main results are the following:
\vspace{1mm}

$(i)$ The Heun-Askey-Wilson operator associated with (\ref{I}) is diagonalized on any irreducible finite dimensional representation $(\bar\pi,{\bcV})$ of dimension $2s+1$ ($s$ is an integer or half-integer), using the theory of Leonard pairs combined with the algebraic Bethe ansatz technique. Three cases of (\ref{I})  indexed by `{\it a}' are considered in details: the {\it special} cases (i.e. $a=sp$) $\kappa\neq 0,\kappa^*=\kappa^\pm=0$ or $\kappa^*\neq 0,\kappa=\kappa_\pm=0$; the {\it diagonal} case (i.e. $a=d$)  $\kappa,\kappa^*\neq 0,\kappa_\pm=0$; the generic case (i.e. $a=g$) $\kappa,\kappa^*,\kappa_\pm\neq 0$. In all cases, the solution of the spectral problem\footnote{Note that for the generic case, the eigenvalue expression can be obtained from the results in \cite{cao15}.} takes the form:
\beqa
 \bar\pi\left({\textsf I}(\kappa,\kappa^*,\kappa_+,\kappa_-) \right)|\Psi_{{a},\epsilon}^{M}(\bar u)\rangle= \Lambda_{a,\epsilon}^{M}|\Psi_{{a},\epsilon}^{M}(\bar u)\rangle \ ,\label{specgeneral}
\eeqa
where the eigenstates $|\Psi_{{a},\epsilon}^{M}(\bar u)\rangle$ are given in the form of on-shell Bethe states (with $M=0,1,...,2s$ for $a=sp$ and $M=2s$ for $a=d,g$) obtained from  the so-called `reference state' with index $\epsilon\in\{\pm\}$ through successive actions of a `creation' operators $\{\mathscr{B}^{\epsilon}(u_i,m_i)\}$, thus generalizing (\ref{actQx}). According to the choice of parameters and reference state, the Bethe roots $\bar u=\{u_1,...,u_M\}$ satisfy {\it homogeneous} or {\it inhomogeneous} Bethe ansatz equations. See Propositions \ref{p31}, \ref{p33}, \ref{p34}, \ref{p35}. For the case $a=d$, the inhomogeneous term is related with the characteristic polynomial of an operator denoted $\tA^\diamond$ that forms a  Leonard triple with $\tA,\tA^*$. Importantly, this provides a  proof of the off-shell relations in Lemma \ref{prop:diagonaloffshell} only based on the representation theory of the Askey-Wilson algebra.\vspace{1mm}

$(ii)$ An infinite dimensional representation $(\pi, {\cal V})$ of the Askey-Wilson algebra is introduced. In this representation, the  operators $\pi(\tA),\pi(\tA^*)$ act as  second-order q-difference operators whereas the  Heun-Askey-Wilson operator $\pi(\textsf{I}(\kappa,\kappa^*,\kappa_+,\kappa_-))$ acts in general as a fourth-order q-difference operator. For the special case $\kappa=\kappa_\pm=0$, the spectral problem for the Heun-Askey-Wilson operator q-difference operator (reduced to the one for $\pi(\tA^*)$)  produces an homogeneous Baxter T-Q relation generalizing (\ref{Qherm}). In this case, the Baxter Q-polynomial of degree $M$  follows from the `transition' coefficient (see Lemma \ref{lem5}): 
\beqa
Q_M(Z) = {\cal N}_M({\bar u})^{-1} \langle z|\Psi_{{sp},+}^{M}(\bar u)\rangle = \prod_{i=1}^M\left(Z- \frac{qu_i^2+q^{-1}u_i^{-2}}{q+q^{-1}}\right)  \quad  \mbox{where} \quad  Z=\frac{z+z^{-1}}{q+q^{-1}}\ ,
\eeqa
thus generalizing (\ref{Qx}).
Furthermore, it solves a bispectral problem for the Askey-Wilson polynomial, as expected. For the diagonal case $\kappa,\kappa^*\neq 0,\kappa_\pm=0$,  the action of the q-difference operator $\pi(\textsf{I}(\kappa,\kappa^*,0,0))$ on the Baxter Q-polynomial  produces an inhomogeneous Baxter T-Q relation generalizing (\ref{Qherm}).  In this case, the Baxter Q-polynomial of degree $2s$ is also expressed in terms of the `transition' coefficient  (see Lemma \ref{lem5}):
\beqa
Q_{2s}(Z) = {\cal N}_{2s}({\bar u})^{-1} \langle z|\Psi_{{d},+}^{2s}(\bar u)\rangle.
\eeqa
For the generic case $\kappa,\kappa^*,\kappa_\pm\neq 0$  and a different realization of the Heun-Askey-Wilson operator, a similar picture holds as briefly mentioned.\vspace{1mm}

$(iii)$  For each system of homogeneous or inhomogeneous  Bethe ansatz equations derived in Section \ref{s3}, an alternative presentation in terms of the `symmetrized' Bethe roots (\ref{sBr}) is obtained. In particular, this presentation gives a simpler characterization of the Bethe roots than the one based on the standard Bethe equations.  See Proposition \ref{propP}.
\vspace{1mm}

$(iv)$  An algebraic Bethe ansatz solution for the q-analog of the quantum Euler top \cite{WZ,Tu16} and various examples of three-sites Heisenberg spin chains with three-body terms, inhomogeneous couplings and  boundary interactions. See Section \ref{App} where numerical examples are given. \vspace{1mm}

The paper is organized as follows. In Section 2, we review the derivation of the Heun-Askey-Wilson element (\ref{I}) based on the connection between the reflection equation algebra and the Askey-Wilson algebra \cite{WZ,Bas2}. Using the transfer matrix formalism and the theory of Leonard pairs \cite{T03,T04}, the diagonalization of the Heun-Askey-Wilson operator associated with  (\ref{I}) is considered within the framework of the algebraic Bethe ansatz in Section 3. Namely, for various choices of parameters $\{\kappa,\kappa^*,\kappa_+,\kappa_-\}$, the eigenstates of the Heun-Askey-Wilson operator  are given in the form of Bethe states and associated Bethe equations, and in terms of Leonard pairs' data. In each case, an alternative presentation of the Bethe ansatz equations is given in terms of a system of polynomial equations in  the `symmetrized' variables (\ref{sBr}). Also, for any off-shell or on-shell Bethe state an expansion formula is given using the Poincar\'e-Birkhoff-Witt basis of the Askey-Wilson algebra. It follows that certain families of Bethe states generate explicit bases for Leonard pairs.
In Section 4, four different types of Baxter T-Q relations are derived, either homogeneous or inhomogeneous. The corresponding system of Bethe equations produces the Bethe equations derived in Section 3 and the solutions are expressed as symmetric polynomials, identified as Q-polynomials.  Independently, a second-order q-difference operator realization of the Heun-Askey-Wilson element (\ref{I}) is given for the special and diagonal cases.  It is shown that its action on the Q-polynomial  produces the T-Q relations. For the generic case, a different realization has to be considered, as briefly mentioned. For the special case $\kappa=\kappa_\pm=0$ the Q-polynomial is expressed in terms of the orthogonal Askey-Wilson polynomials. An interpretation of the Q-polynomial as transition coefficients arises naturally.
In Section \ref{App}, we apply the previous results to the diagonalization of various examples of integrable models: a q-analog of the quantum Euler top and Heisenberg spin chains with three-sites, that can be viewed as deformations of the three-sites $U_q(sl_2)$-invariant XXZ spin chain. In Section \ref{persp}, some perspectives are briefly described.
In Appendices \ref{apA}, \ref{Sec:coefcommut}, formulas are collected. In Appendix \ref{apD}, the proof of Lemma \ref{prop:diagonaloffshell} is given. Appendix \ref{prP} is devoted to the proof of Proposition \ref{propBAU}. In Appendix \ref{appqdiff}, different realizations of the Askey-Wilson algebra in terms of first or second-order q-difference operators are given. In Appendix \ref{apF}, realizations of the dynamical operators in terms of the q-difference operators are displayed. 
\vspace{2mm}

{\bf Notations:} The parameter $q$ is assumed not to be a root of unity and is different than $1$. We write $[X,Y]_q = qXY - q^{-1}YX$, the commutator $[X,Y]=[X,Y]_{q=1}$. The identity element is denoted $\mathcal{I}$. We will use the standard $q$-shifted factorials (also called q-Pochhammer functions) \cite{KS}, $q$-number and binomial, respectively:
\beqa
(a;q)_n=\prod_{k=0}^{n-1}(1-aq^{k}), \label{poch} \quad  [n]_q=\frac{q^n-q^{-n}}{q-q^{-1}}\ , \ [0]_q=1\ ,\quad  \bin m  n=\frac{m!}{n!(m-n)!} \ .
\eeqa
We also use the notation
\beqa
b(x)=x-x^{-1}\,\,.\label{b}
\eeqa
For the elementary symmetric polynomials in the variables $\{x_i|i=1,...,n\}$, we use the notation:
\beqa
 \textsf{e}_{k}(x_1,x_2,...,x_n) = \sum_{1\leq j_1 < j_2 < \cdots < j_k \leq n } \! \! \! \! \! \! \! x_{j_1}x_{j_2}\cdots x_{j_k}\label{esymdef}
\eeqa
\vspace{1mm}

\section{The Heun-Askey-Wilson element from the reflection equation}
In this section, we show how the Heun-Askey-Wilson element (\ref{I}) introduced in \cite{BTVZ} follows from the transfer matrix associated with the solution of the reflection equation constructed in \cite{Z95,Bas2}. Most of the material given in this section is taken from the works \cite{Skly88,Z95,Bas2,BK2}, so we skip the details.
\vspace{1mm} 

Let the operator-valued function $R:{\mathbb{C}}^*\mapsto \mathrm{End}({\mathbb{C}}^2\otimes {\mathbb{C}}^2)$ be the intertwining operator (quantum $R-$matrix) between the tensor product of two fundamental representations associated with the quantum algebra $U_q(\widehat{sl_2})$. The element $R(u)$ depends on the deformation parameter $q$ and is defined by
\begin{align}
R(u) =\left(
\begin{array}{cccc} 
 uq -  u^{-1}q^{-1}    & 0 & 0 & 0 \\
0  &  u -  u^{-1} & q-q^{-1} & 0 \\
0  &  q-q^{-1} & u -  u^{-1} &  0 \\
0 & 0 & 0 & uq -  u^{-1}q^{-1}
\end{array} \right) \ ,\label{R}
\end{align}
where $u$ is the so-called spectral parameter. Then $R(u)$ satisfies the quantum Yang-Baxter equation in the space ${\mathbb{C}}^2\otimes {\mathbb{C}}^2\otimes {\mathbb{C}}^2$. Using the standard notation $R_{ij}(u)\in \mathrm{End}({({\mathbb{C}}^2)}_i\otimes {({\mathbb{C}}^2)}_j)$, it reads 
\begin{align}
R_{12}(u/v)R_{13}(u)R_{23}(v)=R_{23}(v)R_{13}(u)R_{12}(u/v)\ \qquad \forall u,v.\label{YB}
\end{align}

Consider the reflection equation (also called the boundary quantum Yang-Baxter equation) introduced in the context of the boundary quantum inverse scattering theory (see \cite{cher84},\cite{Skly88} for details). For the $U_q(\widehat{sl_2})$ $R-$matrix  (\ref{R}), the reflection equation reads:
\begin{align} R(u/v)\ (K(u)\otimes I\!\!I)\ R(uv)\ (I\!\!I \otimes K(v))\
= \ (I\!\!I \otimes K(v))\ R(uv)\ (K(u)\otimes I\!\!I)\ R(u/v)\ ,
\label{RE} \end{align}
where $K(u)$ is a $2\times 2$ matrix. We are now interested in a certain class of solutions of the reflection equation (\ref{RE}) for which the entries of the matrix $K(u)$ are assumed to be Laurent polynomials in the spectral parameter whose coefficients are elements in certain homomorphic  images of the q-Onsager algebra (see \cite{BK2}). Here, we consider the simplest solution of this type. We refer the reader to \cite{Z95,Bas2} for details.
\begin{prop}\label{Kd} Assume  $\{\chi,\rho,\omega,\eta,\eta^*\}$ are generic scalars in ${\mathbb{C}}^*$. Let $K_{11}(u)\equiv \cA(u)$, $K_{12}(u)\equiv \cB(u)$, $K_{21}(u)\equiv \cC(u)$, $K_{22}(u)\equiv \cD(u)$ be the elements of the $2\times 2$ square matrix  $K(u)$ such that:
\beqa
&&\cA(u)=(u^2-u^{-2})\left(qu\,\tA-q^{-1}u^{-1}\tA^*\right)-(q+q^{-1})\rho^{-1}\left(\eta u+\eta^*u^{-1}\right)\,,\label{monoA}\\
&&\cD(u)=(u^2-u^{-2})\left(qu\,\tA^*-q^{-1}u^{-1}\tA\right)-(q+q^{-1})\rho^{-1}\left(\eta^* u+\eta u^{-1}\right)\,,\label{monoD}\\
&&\cB(u)=\chi (u^2-u^{-2})\left(\rho^{-1}\left(\left[\tA^*,\tA\right]_q+\frac{\omega}{q-q^{-1}}\right)+\frac{qu^2+q^{-1}u^{-2}}{q^2-q^{-2}}\right)\,,\label{monoB}\\
&&\cC(u)=\rho \chi^{-1}\, (u^2-u^{-2})\left(\rho^{-1}\left(\left[\tA,\tA^*\right]_q+\frac{\omega}{q-q^{-1}}\right)+\frac{qu^2+q^{-1}u^{-2}}{q^2-q^{-2}}\right)\label{monoc}\,.
\eeqa
Then $K(u)$ satisfies the reflection equation (\ref{RE}) provided $\tA,\tA^*$ generate the Askey-Wilson algebra with defining relations (\ref{aw1}),(\ref{aw2}).
\end{prop}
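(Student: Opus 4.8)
The plan is to verify the reflection equation (\ref{RE}) directly by substituting the explicit $K(u)$ with entries (\ref{monoA})--(\ref{monoc}) and the $R$-matrix (\ref{R}), and to reduce the resulting operator identity to the Askey-Wilson relations (\ref{aw1}),(\ref{aw2}). First I would observe that both sides of (\ref{RE}) are $4\times 4$ matrices whose entries are operators on the representation space, polynomial in $u,v,u^{-1},v^{-1}$; equating the two sides yields a finite collection of operator equations indexed by matrix position and by monomial in the spectral parameters. Since the $R$-matrix has the familiar six-vertex sparsity pattern (nonzero entries only at $(1,1),(4,4)$ on the diagonal, the diagonal block $\{(2,2),(2,3),(3,2),(3,3)\}$, plus the outer corners), many of the sixteen matrix positions vanish identically or collapse, so the genuinely nontrivial content sits in a handful of positions.

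Second, I would exploit the grading by the spectral parameters to organize the computation. Writing $\cA,\cD$ as degree-one (in $u$ and $u^{-1}$) combinations carrying $\tA,\tA^*$, and $\cB,\cC$ as combinations carrying $[\tA^*,\tA]_q$ and $[\tA,\tA^*]_q$ together with the scalar $\omega/(q-q^{-1})$ and the $u$-dependent scalar, each matrix-entry equation splits into several Laurent coefficients in $u$ and $v$. The linear-in-operator coefficients should match automatically from the structure of the monomials, the genuinely quadratic pieces — products such as $\tA\,[\tA,\tA^*]_q$, $[\tA,\tA^*]_q\,\tA^*$ and their reorderings — are where the Askey-Wilson relations enter. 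Concretely, I expect that after collecting a fixed Laurent monomial one obtains precisely an expression of the form $[\tA,[\tA,\tA^*]_q]_{q^{-1}}-\rho\tA^*-\omega\tA-\eta\mathcal{I}$ (or its starred partner), which vanishes exactly by (\ref{aw1}),(\ref{aw2}). The roles of $\chi$ and of the normalizing scalars $\rho^{-1},(q+q^{-1}),\omega/(q-q^{-1})$ are to make these cancellations come out with the correct coefficients, so I would keep careful track of them.

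The main obstacle is the sheer bookkeeping: expanding the two triple matrix products $R(u/v)(K(u)\otimes I\!\!I)R(uv)(I\!\!I\otimes K(v))$ and its mirror image, then matching a large number of Laurent-monomial operator coefficients, is lengthy and error-prone, and one must be consistent about operator ordering throughout since the entries of $K(u)$ do not commute. A clean way to cut down the labor is to use the known abstract correspondence between the reflection equation algebra for this $R$-matrix and the Askey-Wilson (equivalently $q$-Onsager) algebra referenced in \cite{Z95,Bas2,BK2}: rather than verifying all entries by brute force, I would identify the few independent relations that the reflection equation imposes on the coefficient operators $\tA,\tA^*,[\tA,\tA^*]_q,[\tA^*,\tA]_q$ and show these are generated by (\ref{aw1}),(\ref{aw2}). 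Given that the excerpt explicitly says the derivation follows \cite{Z95,Bas2} and that the details are skipped, the cleanest route is to reduce to those nontrivial coefficient equations, display that each is equivalent to one of the two Askey-Wilson relations (with the stated structure constants), and cite the earlier computations for the routine cancellations; the converse direction — that the Askey-Wilson relations suffice — then follows by reversing these reductions.
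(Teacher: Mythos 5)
Your proposal is sound and coincides with what the paper does: the paper itself gives no computation for Proposition \ref{Kd}, deferring entirely to \cite{Z95,Bas2}, and the content of those references is exactly the direct verification you describe — expanding both sides of (\ref{RE}) entry by entry, collecting Laurent coefficients in $u,v$, observing that the linear pieces cancel identically while the quadratic pieces reduce to the two Askey--Wilson relations (\ref{aw1}), (\ref{aw2}). Your organization by the six-vertex sparsity pattern and by spectral-parameter grading is the standard and correct way to tame the bookkeeping, and your caution about operator ordering is the one genuinely delicate point.
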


Note that the scalar parameter $\chi$ does not enter in the structure constants of the Askey-Wilson relations but it is introduced for further convenience. \vspace{1mm}

Two explicit examples are now given. Let $\{q^{\pm s_3},S_{\pm}\}$ denote the generators of the quantum algebra $U_q(sl_2)$ with defining relations:
\ben
\left[s_3,S_{\pm}\right]=\pm S_{\pm}\,,\quad \left[S_+,S_-\right]=\frac{q^{2s_3}-q^{-2s_3}}{q-q^{-1}}\,.
\een
The Casimir element
\beqa
C = (q-q^{-1})^2S_- S_+ + q^{2s_3+1} + q^{-2s_3-1}\ 
\eeqa
is central. Introduce the coproduct $\Delta: U_q(sl_2) \rightarrow U_q(sl_2) \otimes U_q(sl_2)$ such that:
\beqa
\Delta(S_+) = S_+ \otimes I\!\!I + q^{2s_3}\otimes S_+ \, \quad \Delta(S_-) = S_- \otimes q^{-2s_3} +  I\!\!I \otimes S_- \ ,\quad 
\Delta(q^{s_3})= q^{s_3}\otimes q^{s_3}\ .\vspace{1mm}
\eeqa

\begin{example}\label{ex1}\cite{GZ2} 
Define the parameters
\beqa
&& k_+= -\frac{1}{2} q^{\nu } \left(q-q^{-1}\right)\ ,\quad k_-= \frac{1}{2}q^{\nu'}
   \left(q-q^{-1}\right) \ ,\quad \epsilon_+= \cosh (\mu ) q^{\frac{1}{2}
   \left(\nu+\nu' \right)}\ ,\quad \epsilon_-= \cosh \left(\mu ^\prime\right) q^{\frac{1}{2}
   \left(\nu + \nu' \right)}\ ,\label{parbis}
\eeqa
where $\mu,\mu',\nu,\nu',v$ are generic scalars. The following gives an example of elements $\tA,\tA^* $ satisfying the Askey-Wilson relations (\ref{aw1}), (\ref{aw2}):
\beqa
\tA &\rightarrow & k_+ v q^{1/2}S_+q^{s_3}  + k_- v^{-1} q^{-1/2}S_-q^{s_3} + \epsilon_+ q^{2s_3}\ ,\label{Aex1}\\
 \tA^* & \rightarrow & k_+ v^{-1} q^{-1/2}S_+q^{-s_3}  + k_- v q^{1/2}S_-q^{-s_3} + \epsilon_- q^{-2s_3}\ ,\label{Astarex1}
\eeqa
where the structure constants $\rho,\omega,\eta,\eta^*$ are given by (\ref{sc1})-(\ref{sc4}) with the substitution $q^{2s+1}+q^{-2s-1}\rightarrow C$.
\end{example}
Let us mention that this example can be derived using the `dressing' procedure \cite{Skly88} that generates solutions of  the reflection equation. For details, see \cite{Z95}. 

Another example follows from \cite{GZ93b} (see also \cite{Huang}). It will connect the Heun-Askey-Wilson element (\ref{I}) to Hamiltonians of  3-sites spin chains, see Section \ref{App}.
\begin{example}\label{ex2}\cite{GZ93b} (see also \cite{Huang}) The Askey-Wilson algebra is embedded into $U_q(sl_2)\otimes U_q(sl_2) \otimes U_q(sl_2)$. Define $c_1=C \otimes  I\!\!I \otimes  I\!\!I  $, $c_2=I\!\!I \otimes  C \otimes  I\!\!I $, $c_3=I\!\!I \otimes I\!\!I \otimes  C$ and $c_4=(id \times \Delta )\circ \Delta(C) $.
The following gives an example of elements $\tA,\tA^* $ satisfying the Askey-Wilson relations (\ref{aw1}), (\ref{aw2}):
\beqa
\tA &\rightarrow &\Delta(C)\otimes I\!\!I \ ,\label{Aex2}\\
 \tA^* & \rightarrow &  I\!\!I \otimes \Delta(C) \ ,\label{Astarex2}
\eeqa
where the structure constants  are given by:
\beqa
\rho&=&-(q^2-q^{-2})^2\  ,\label{scp1}\\
 \omega&=& -(q-q^{-1})^2 \left(c_1c_3 + c_2c_4\right)\ ,  \label{scp2}\\
  \eta^*&=&  \frac{(q^2-q^{-2})^2}{(q+q^{-1})}\left(  c_1c_2 + c_3c_4 \right) \ ,\label{scp3} \\ 
 \eta&=&  \frac{(q^2-q^{-2})^2}{(q+q^{-1})}\left(  c_2c_3+c_1c_4 \right)   \ . \label{scp4}
\eeqa
\end{example}

Given a solution of the reflection equation, it is known that a generating function for mutually commuting quantities is provided by the so-called transfer matrix  \cite{Skly88}. In the present case, consider the most general scalar solution of the  so-called ``dual'' reflection equation  given by \cite{DeG,GZ}:
\ben\label{KP}
K^{+}(u)=
\left(
\begin{array}{cc}
q u\kappa+q^{-1}u^{-1}\kappa^*  & \kappa_+ (q^2u^2 - q^{-2}u^{-2}) \\
\kappa_- \rho (q^2u^2 - q^{-2}u^{-2})  &  q u\kappa^*+q^{-1}u^{-1}\kappa 
\end{array}
\right)\ ,
\een
where $\kappa_\pm,\kappa,\kappa^*$ are generic scalars in ${\mathbb{C}}$. The transfer matrix reads $t(u)=\mbox{tr}\left(K^+(u)K(u)\right)$, where the trace is taken over the two-dimensional auxiliary space. Expanded in the spectral parameter $u$, for $K(u)$ with (\ref{monoA})-(\ref{monoc}) the transfer matrix produces the Heun-Askey-Wilson element  (\ref{I}):
\ben
&&t(u)=(q^2u^2-q^{-2}u^{-2})(u^2-u^{-2})\left(\kappa\,\tA+\kappa^* \tA^*+\kappa_+\chi^{-1}\left[\tA,\tA^*\right]_q+\kappa_-\chi\left[\tA^*,\tA\right]_q\right) + {\cal F}_0(u)\ , \label{trans}\een
where ${\cal F}_0(u)$ is a scalar function\footnote{The expression of ${\cal F}_0(u)$ is not needed in further analysis, so we omit its explicit expression.}. It follows that the spectral problem for the Heun-Askey-Wilson operator associated with (\ref{I}) and the transfer matrix (\ref{trans}) are identical. In the next section, we will use this connection to diagonalize the Heun-Askey-Wilson operator using the algebraic Bethe ansatz. \vspace{1mm}

Let us mention that the known central element (the so-called Casimir element) of the Askey-Wilson algebra can be extracted from the so-called Sklyanin's quantum determinant $\Gamma(u)$ given in \cite{Skly88}. For the reflection equation algebra, recall that the quantum determinant is defined as:
\beqa
\Gamma(u)=tr\big(P^{-}_{12}(K(u)\otimes I\!\!I)\ R(u^2q) (I\!\!I \otimes K(uq))\big),\ \label{gamma}
\eeqa
where $P^-_{12}=(1-P)/2$ with $P=R(1)/(q-q^{-1})$. As shown in \cite{Skly88}, \ 
$\big[\Gamma(u),(K(u))_{ij}\big]=0$  \ for any $i,j$.  Inserting  $K(u)$ with (\ref{monoA})-(\ref{monoc}) into (\ref{gamma}), one gets:
\beqa
\Gamma(u)= \frac{(u^2q^2-u^{-2}q^{-2})}{2(q-q^{-1})}\left(\Delta(u) - \frac{2\rho}{(q-q^{-1})}\right),\nonumber
\eeqa
with
\beqa
&&\Delta(u)=- \frac{2(q-q^{-1})}{\rho}  \left(u^2-u^{-2}\right) \left(q^2 u^2-q^{-2} u^{-2}\right)
\left(\Gamma + \gamma_0(u)\right)
 \  \nonumber
\eeqa
where the explicit form of the scalar function $\gamma_0(u)$ is omitted for simplicity and
\beqa
&& \Gamma = \left(q^2-q^{-2}\right) q^{-1}\tA \tA^*\left[\tA^*,\tA\right]_q
-q^{-2}\left[\tA^*,\tA\right]_q^2
+q^{-2}\rho \tA^2
+\rho q^2 {\tA^*}^2+\omega(\tA\tA^*+\tA^*\tA)
\label{Gam}\\&&
\quad\quad+ \ \eta^*(1+q^{-2})\tA
+q^2\eta(1+q^{-2})\tA^* \ .\nonumber
\eeqa
It follows that $\Gamma$ satisfies
\beqa
[\Gamma,\tA]=[\Gamma,\tA^*]=0\ \label{comcas}
\eeqa
 i.e. $\Gamma$ is central in the Askey-Wilson algebra. Note that this result provides an alternative derivation of the central element given in \cite[eq. (1.3)]{Z91}. Also, let us mention that $\Gamma$ is a restriction of the expression computed in \cite{BB}  for the q-Onsager algebra. Indeed, the Askey-Wilson algebra can be viewed as a certain quotient of the q-Onsager algebra by the relations (\ref{aw1}), (\ref{aw2}).\vspace{1mm}

To conclude this section, let us make few additional comments.  As mentioned in the introduction, for the quantum harmonic oscillator each couple $(\tH,\tX)$ and $(\tH,i\tP)$ satisfies the simplified Askey-Wilson relations  (\ref{aw01}), (\ref{awp01}). It is thus natural to ask for such triplet in the context of the Askey-Wilson algebra.  Let $\tA,\tA^*$ satisfy the Askey-Wilson relations (\ref{aw1}), (\ref{aw2}). Define the element:
\beqa
\tC =  \frac{i}{\sqrt\rho}[\tA^*,\tA]_q + \frac{i\omega}{(q-q^{-1})\sqrt \rho} \ .\label{Cop}
\eeqa
Then, it is easy to show that
the couple $(\tA,\tC)$ satisfies the Askey-Wilson relations:
\beqa
&&\big[\tA,\big[\tA,\tC\big]_q\big]_{q^{-1}}=  \rho \,\tC+\frac{i(q-q^{-1})}{\sqrt \rho}\eta \,\tA - \frac{i\sqrt \rho}{(q-q^{-1})}\omega \mathcal{I} \ , \label{awc1} \\
&&\big[\tC,\big[\tC,\tA\big]_q\big]_{q^{-1}}= \rho \,\tA+\frac{i(q-q^{-1})}{\sqrt \rho}\eta \,\tC+\eta^*\mathcal{I} \ .\label{awc2}
\eeqa
Similar relations are easily derived for the couple $(\tA^*,\tC)$. In the literature, the triplet $(\tA,\tA^*,\tC)$ is related with the concept of Leonard triple. We refer the reader to \cite{curt,Huang2} for details. Some properties of the element (\ref{Cop}) will be used in Appendix \ref{apD}.\vspace{1mm}

Let us also mention that the transformation from the pair $(\tH,\tX)$ to the pair (\tH,i\tP) relating the  Askey-Wilson relations (\ref{aw01}), (\ref{awp01}) generalizes to the q-deformed case as follows. Given $(\tA^*,\tA)$ satisfying (\ref{aw1}), (\ref{aw2}) with structure constants $\rho, \omega, \eta,\eta^*$, define a new pair of elements $(\bar\tA^*,\bar\tA)$ by:
\ben
\bar \tA^* = \tA^* \ ,\qquad \bar \tA =\frac{i}{\sqrt\rho}[\tA,\tA^*]_q+\frac{i\omega}{(q-q^{-1})\sqrt\rho}\,. \label{dirA}
\een
Then, by direct calculation one finds that $(\bar\tA^*,\bar\tA)$ satisfy
\beqa
&&\big[\bar\tA,\big[\bar\tA,\bar\tA^*\big]_q\big]_{q^{-1}}=  \rho \,\bar\tA^*+\bar\omega \,\bar\tA+\bar\eta\mathcal{I} \ , \label{aw1map} \\
&&\big[\bar\tA^*,\big[\bar\tA^*,\bar\tA\big]_q\big]_{q^{-1}}= \rho \,\bar\tA+\bar\omega \,\bar\tA^*+\bar\eta^*\mathcal{I} \ \label{aw2map}
\eeqa
with the structure constants
\ben\label{barsc}
\bar\omega = \frac{i(q-q^{-1})\eta^*}{\sqrt\rho}\,,\quad
\bar\eta^*=-\frac{i\omega\sqrt\rho}{q-q^{-1}}\,,\quad
\bar\eta = \eta\,.
\een

In Section \ref{s4} and Appendix \ref{appqdiff} we will give various examples of operators acting on an infinite dimensional vector space which provide realizations of the Askey-Wilson algebra (\ref{aw1}), (\ref{aw2}). In Appendix \ref{appqdiff}, the invertible transformation (\ref{dirA}) will be used.

\section{Diagonalization of the Heun-Askey-Wilson operator via the algebraic Bethe ansatz}\label{s3}
In this section, using (\ref{trans}) the Heun-Askey-Wilson operator associated with (\ref{I}) is diagonalized on any irreducible finite dimensional representation of the Askey-Wilson algebra (\ref{aw1})-(\ref{aw2}) for different choices of parameters $\kappa,\kappa^*,\kappa_\pm$ using the framework of the algebraic Bethe ansatz \cite{cao03,Doikou06}  and its modified version \cite{MABAq1,MABAq2,MABAq3}  combined with the theory of Leonard pairs \cite{T03,T04}. The eigenstates are obtained in the form of Bethe states, and the spectrum of the Heun-Askey-Wilson operator is given in terms of solutions (the so-called Bethe roots) of Bethe equations. For each choice of the parameters, it is shown that the corresponding Bethe ansatz equations admit an alternative presentation in terms of the `symmetrized' variables $U_i$ given by (\ref{sBr}). Also, up to an overall factor the Bethe states are expressed in the Poincar\'e-Birkhoff-Witt basis of the Askey-Wilson algebra with polynomial coefficients in $U_i$. Two eigenbases for Leonard pairs in terms of Bethe states are derived in this framework.

\subsection{Preliminaries}
\subsubsection{Leonard pairs}\label{ss31}
Let $(\bar\pi,{\bcV})$ denote a finite dimensional representation on which the elements of the Askey-Wilson algebra ${\textsf A},{\textsf A}^*$ act as $\bar\pi(\tA),\bar\pi(\tA^*)$. Assume $\bar\pi(\tA),\bar\pi(\tA^*)$ are diagonalizable on  ${\bcV}$, the spectra are multiplicity-free and ${\bcV}$  is irreducible. Then, by  \cite[Theorem 6.2]{T04} the operators  $\bar\pi(\tA),\bar\pi(\tA^*)$ form a Leonard pair, see \cite[Definition 1.1]{T04}. Define $\dim({\bcV})=2s+1$ with $s$ an integer or half-integer. Given  the eigenvalue sequence $\theta_0,\theta_1,...,\theta_{2s}$ associated with  $\bar\pi(\tA)$  (resp.  the eigenvalue sequence $\theta^*_0,\theta_1^*,...,\theta^*_{2s}$ associated with  $\bar\pi(\tA^*)$), one associates an eigenbasis with vectors  $|\theta_0\rangle,|\theta_1\rangle,...,|\theta_{2s}\rangle$ (resp. an eigenbasis with vectors $|\theta^*_0\rangle,|\theta^*_1\rangle,...,|\theta^*_{2s}\rangle$). For a Leonard pair, recall that:\vspace{1mm}

(i)  in the eigenbasis of  $\bar\pi(\tA)$, then  $\bar\pi(\tA^*)$ acts as a tridiagonal matrix;

(ii) in the eigenbasis of  $\bar\pi(\tA^*)$, then  $\bar\pi(\tA)$ acts as a tridiagonal matrix.\vspace{1mm}

According to the properties (i)-(ii) the action of the Leonard pair    $\bar\pi(\tA),\bar\pi(\tA^*)$ on the eigenvectors takes the form:
\beqa
\qquad \quad \bar\pi(\tA) |\theta_M\rangle &=& \theta_M|\theta_M\rangle \ ,\quad  \bar\pi(\tA^*) |\theta_M\rangle =
a_{M,M+1} |\theta_{M+1}\rangle +  a_{M,M} |\theta_{M}\rangle +  a_{M,M-1}|\theta_{M-1}\rangle \ ,\label{tridAstar}\\
\qquad \quad \bar\pi(\tA^*) |\theta^*_M\rangle &=& \theta^*_M|\theta^*_M\rangle \ ,\quad  \bar\pi(\tA) |\theta^*_M\rangle =
a^*_{M,M+1} |\theta^*_{M+1}\rangle +  a^*_{M,M} |\theta^*_{M}\rangle +  a^*_{M,M-1}|\theta^*_{M-1}\rangle\ , \label{tridAstar2}
\eeqa
where  the coefficients $a_{0,-1}=a_{2s,2s+1}=a^*_{0,-1}=a^*_{2s,2s+1}=0$, and the set of coefficients $\{a_{M,M\pm 1}\},\{a_{M,M}\}$ and $\{a^*_{M,M\pm 1}\},\{a^*_{M,M}\}$ are determined according to the representation chosen.
Furthermore,  the eigenvalue sequences are such that (see \cite[Theorem 4.4 (case I)]{Ter03}):
\beqa
 \theta_M= b q^{2M} + cq^{-2M}\ , \quad   \theta^*_M= b^* q^{2M} + c^*q^{-2M}\ ,\label{st}
\eeqa
where $b,c,b^*,c^*$ are generic scalars. Due to the existence of the Askey-Wilson relations (\ref{awc1}), (\ref{awc2}), on the vector space $\bar  {\cal V}$ it is clear that $(\tA,\tC)$ and $(\tA^*,\tC)$ both give examples of Leonard pairs. Let $|\theta^\diamond_0\rangle, ...,|\theta^\diamond_{2s}\rangle   $ denote the eigenvectors of $\bar \pi(\tC)$ with respective  eigenvalues $\theta^\diamond_0,...,\theta^\diamond_{2s}$. Adapting the results of \cite{Huang2}, one finds that the eigenvalues of $\tC$ take the form:
\beqa
\theta^\diamond_M = b^\diamond q^{2M} + c^\diamond q^{-2M}\ ,\label{sC}
\eeqa
where $b^\diamond,c^\diamond$ are generic scalars.
In this parametrization, note that the structure constant $\rho$ is given by \cite[Lemma 4.5]{Ter03}:
\beqa
\rho=-bc(q^2-q^{-2})^2=-b^*c^*(q^2-q^{-2})^2 =-b^\diamond c^\diamond(q^2-q^{-2})^2\ .\label{rho}
\eeqa

On the finite dimensional vector space $\bar {\cal V}$,  in addition to the Askey-Wilson relations (\ref{aw1}),  (\ref{aw2}) by the Cayley-Hamilton theorem one has:
\beqa
\prod_{M=0}^{2s}\left( \tA - \theta_M\right) = 0\ , \quad \prod_{M=0}^{2s}\left( \tA^* - \theta^*_M\right) = 0\ ,\quad \prod_{M=0}^{2s}\left( \tC - \theta^\diamond_M\right) = 0\ .\label{polyc}
\eeqa

\begin{example}\label{ex3}\cite{GZ2}
With respect to Example \ref{ex1}, let $(\bar\pi,V(s))$ denote the irreducible (spin-$s$) representation of $U_q(sl_2)$  of dimension $\dim(V(s))=2s+1$  on which the generators $\{q^{\pm s_3},S_\pm\}$ act as:
\ben
 \bar \pi (q^{\pm s_3}) = \sum_{k=1}^{2s+1}q^{\pm ( s+1-k)}E_{kk}\,,\  \bar \pi (S_-) = \sum_{k=1}^{2s}  \sqrt{[k]_q[2s+1-k]_q} E_{k+1k}\, , \  \bar \pi (S_+) = \sum_{k=1}^{2s}  \sqrt{[k]_q[2s+1-k]_q} E_{kk+1}\, ,\nonumber
\een
where the matrix $E_{ij}$ has a unit at the intersection of the $i-$th row and the $j-$th column, and all other entries are zero.
 For generic scalar parameters $k_\pm,\epsilon_\pm,v$ and $q$ in (\ref{Aex1}), (\ref{Astarex1}),  then $\bar\pi(\tA),\bar\pi(\tA^*)$ is a Leonard pair. There exists two bases on which $\bar \pi(\tA), \bar \pi(\tA^*)$ act as (\ref{tridAstar}), (\ref{tridAstar2}) and the spectra  take the form (\ref{st}) with (\ref{par}).  See e.g. \cite{BVZ16} for details.
\end{example}
\begin{example}\label{ex4} \cite{GZ93b,Huang} With respect to Example \ref{ex2}, consider the tensor product of three irreducible representations of $U_q(sl_2)$ denoted $V(j_1)\otimes V(j_2) \otimes V(j_3)$. Following \cite{Huang}, define $\Sigma$ the set consisting of all pairs $(\ell,k)$ of integers such that
\beqa
0 \leq \ell \leq \min \{2j_1+2j_2,2j_2+2j_3,2j_1+2j_3,j_1+j_2+j_3\} \ ,\qquad 0 \leq k  \leq 2j_1+2j_2+2j_3 - 2\ell \  .
\eeqa
Define  $V_k(\ell)$ as the vector subspace of $V(j_1)\otimes V(j_2) \otimes V(j_3)$ spanned by the simultaneous 
eigenvectors of $(id \times \Delta )\circ \Delta(q^{2s_3})$ and $c_4=(id \times \Delta )\circ \Delta(C)$ associated with eigenvalues $q^{2j_1+2j_2+2j_3 -2(k+\ell)}$ and $q^{2j_1+2j_2+2j_3 -2\ell + 1} + q^{-2j_1-2j_2-2j_3 +2\ell - 1}$ respectively. Then
\beqa
V(j_1)\otimes V(j_2) \otimes V(j_3) = \bigoplus_{(\ell,k)\in \Sigma} V_k(\ell) \ .
\eeqa
The irreducible finite dimensional representation  $(\bar\pi,V_k(\ell))$ is such that
\beqa
\dim(V_k(\ell))=\min(2j_1,\ell)+\min(2j_2,\ell)+\min(2j_3,\ell)-2\ell + 1\ .
\eeqa
For all $(\ell,k)\in \Sigma$,  $\bar\pi(\tA),\bar\pi(\tA^*)$ is a Leonard pair. There exists two bases on which $\bar \pi(\tA), \bar \pi(\tA^*)$ act as (\ref{tridAstar}), (\ref{tridAstar2}) and the spectra  take the form (\ref{st}) with (\ref{par}) with the identification
\beqa
b &=& q^{-2\min(2j_3,\ell)-2j_1-2j_2 -1 + 2\ell}\ ,\quad c= b^{-1}  \quad \quad \mbox{and}\quad  M = h - \ell + \min(2j_3,\ell) \ ,\label{parex} \\
 b^*&=& q^{-2\min(2j_1,\ell)-2j_2-2j_3 -1 + 2\ell}\ ,\quad 
 c^*={b^*}^{-1}  \quad \mbox{and} \quad   M = h - \ell + \min(2j_1, \ell) \ \nonumber
\eeqa
where $h$ is an integer such that 
\beqa
\ell - \min(2j_3,\ell) \leq h  \leq \min(2j_1,\ell) + \min(2j_2,\ell) -\ell. 
\eeqa
See  e.g. \cite[Theorem 5.4, Lemma 5.6]{Huang} for details.
\end{example}

\vspace{1mm}

\subsubsection{Gauge transformation and reference states} Our aim is to diagonalize the Heun-Askey-Wilson operator associated with (\ref{I}) on a finite dimensional vector space within the framework of the algebraic Bethe ansatz  and using the theory of Leonard pairs.  In this context, the first step is the construction of the so-called reference (or vacuum) state. Consider the $K-$ matrix with non-diagonal entries of the form (\ref{monoB}), (\ref{monoc}).  As  ${\textsf A},{\textsf A}^*$ act as a Leonard pair on ${\bcV}$, it is not possible to construct a state that is annihilated by the off-diagonal entries. Indeed, given the representation  ${\bcV}$ on which ${\textsf A},{\textsf A}^*$ act as 
$\bar\pi({\textsf A}), \bar\pi({\textsf A}^*)$, according to (i), (ii) there is no state $|\Omega\rangle$ such that $\bar\pi(\cC(u))|\Omega\rangle=0$ (and similarly for $\bar\pi(\cB(u))$) .   To circumvent this problem, an idea \cite{cao03}  is to apply a gauge transformation (parameterized by an integer $m$) to the $K-$matrices $K(u)$ with (\ref{monoA})-(\ref{monoc}) and  $K^+(u)$ given by (\ref{KP}):
\beqa
t(u)=\mbox{tr}\left(K^+(u)K(u)\right) = \mbox{tr}\left({\tilde K}^+(u|m) K(u|m)\right) \ , \label{tm}
\eeqa
such that the off-diagonal entries of ${\tilde K}(u|m)$ admit a reference state. 

The gauge transformation is built as follows. For more details, we refer the reader to   \cite{cao03} and \cite{MABAq2} for the notations used here. Let  $\epsilon=\pm 1$, $\alpha$, $\beta$ be generic complex parameters and $m$ be an integer. Introduce the covariant vectors
\ben\label{coVec}
&& |X^\epsilon(u,m)\rangle=\left(\begin{array}{c}
     \alpha q^{\epsilon m} u^{\epsilon} \\
       1
      \end{array}
\right),\quad
|Y^\epsilon(u,m)\rangle=\left(\begin{array}{c}
     \beta q^{-\epsilon m} u^{\epsilon} \\
       1
      \end{array}
\right)\een
and the contravariant vectors
\ben
\label{conVec}
&& \langle \tilde X^\epsilon(u,m)|=-\epsilon\frac{q^{-\epsilon} u^{-\epsilon}}{\gamma^\epsilon(1,m-1)}\left(\begin{array}{cc}
-1 \ & \alpha q^{\epsilon m} u^{\epsilon}
\end{array}
\right),\quad
 \langle \tilde Y^\epsilon(u,m)|=-\epsilon\frac{q^{-\epsilon} u^{-\epsilon}}{\gamma^\epsilon(1,m+1)}\left(\begin{array}{cc}
1 \ &
      - \beta q^{-\epsilon m} u^{\epsilon} 
      \end{array}
\right)
\een
where
\ben
\gamma^\epsilon(u,m)= \alpha ^{\frac{1-\epsilon }{2}} \beta ^{\frac{\epsilon +1}{2}}
   q^{-m} u -\alpha ^{\frac{\epsilon +1}{2}} \beta ^{\frac{1-\epsilon }{2}}
   q^m u^{-1} \ .\label{gam}
\een
Applying the gauge transformation to $K(u)$, one finds: 
\begin{align}
K(u|m) =\left(
\begin{array}{cc} 
 \mathscr{A}^{\epsilon}(u,m)   &   \mathscr{B}^{\epsilon}(u,m)\\
\mathscr{C}^{\epsilon}(u,m)  & \mathscr{D}^{\epsilon}(u,m)   \\
\end{array} \right) \ ,\label{Kmod}\nonumber
\end{align}
with
\ben\label{odyn}
&&\mathscr{A}^{\epsilon}(u,m)=\langle\tilde Y^{\epsilon}(u,m-2)|K(u)|X^{\epsilon}(u^{-1},m)\rangle, \label{Ae1}\\
&&\mathscr{B}^{\epsilon}(u,m)=\langle \tilde Y^{\epsilon}(u,m)|K(u)|Y^{\epsilon}(u^{-1},m)\rangle,\\
&&\mathscr{C}^{\epsilon}(u,m) =\langle\tilde X^{\epsilon}(u,m)|K(u)|X^{\epsilon}(u^{-1},m)\rangle, \label{ce1}\\
&&\mathscr{D}^{\epsilon}(u,m)=\frac{\gamma^\epsilon(1,m+1)}{\gamma^\epsilon(1,m)}\langle\tilde X^{\epsilon}(u,m+2)|K(u)|Y^{\epsilon}(u^{-1},m)\rangle-
\frac{(q-q^{-1})\gamma^\epsilon(u^{-2},m+1)}{(qu^2 - q^{-1}u^{-2})\gamma^\epsilon(1,m)} \mathscr{A}^{\epsilon}(u,m).\label{De1}
\een
In the literature, the elements above are called the `dynamical operators'. Their explicit expressions in terms of the elements $\textsf{A},\textsf{A}^*$ are reported in Appendix \ref{apA}. In terms of the dynamical operators, the transfer matrix (\ref{tm}) reads: 
\ben
&&t(u)= (q^2 u^2-q^{-2} u^{-2})\big(
a(u,m)\mathscr{A}^{\epsilon}(u,m)
+
d(u,m)\mathscr{D}^{\epsilon}(u,m)
+
b(u,m)
\mathscr{B}^{\epsilon}(u,m)
+
c(u,m)
\mathscr{C}^{\epsilon}(u,m) \big) \label{tmgauge}
\een
where
\ben
&&a(u,m)=
\frac{\alpha  u^{2 \epsilon }
   \left(\kappa  u+{\kappa^*} u^{-1}\right)+u^{\epsilon}(u^2-u^{-2})  q^{-(m+1) \epsilon } (\kappa_+-\alpha  \beta  \kappa_-
   \rho )-\beta  q^{-(2 m+2) \epsilon } \left({\kappa^*} u+\kappa  u^{-1}\right)}
{(\alpha-\beta q^{-\epsilon(2m+2)})(q u^2-q^{-1} u^{-2})}\,,\nonumber\\
&&d(u,m)=\frac{
-\beta  u^{2 \epsilon } q^{-(2 m+1) \epsilon } \left(q\kappa  
   u+q^{-1}{\kappa^*} u^{-1}\right)-u^{\epsilon } q^{-(m+1) \epsilon } \left(q^2 u^2-q^{-2} u^{-2}\right) (\kappa_+-\alpha 
   \beta  \kappa_- \rho )+\alpha  q^{-\epsilon } \left(q{\kappa^*} 
   u+q^{-1}\kappa   u^{-1}\right)}
{(\alpha-\beta q^{-\epsilon(2m+2)})(q^2 u^2-q^{-2} u^{-2})}\,,\nonumber\\
&&b(u,m)=\frac{
u^{\epsilon } \left(\alpha ^2 \kappa_- \rho  q^{(m+2) \epsilon
   }-\alpha  \epsilon  q^{\epsilon } \kappa ^{\frac{\epsilon
   +1}{2}} {\kappa^*}^{\frac{1-\epsilon }{2}}-\kappa_+ q^{-m \epsilon }\right)
}{\alpha-\beta q^{-2m\epsilon}}\,,\nonumber\\
&&c(u,m)=\frac{
u^{\epsilon } \left(-\beta ^2 \kappa_- \rho  q^{(2-3 m) \epsilon
   }+\beta  \epsilon  q^{(1-2 m)
   \epsilon }\kappa ^{\frac{\epsilon +1}{2}} {\kappa^*}^{\frac{1-\epsilon }{2}} +\kappa_+ q^{-m \epsilon }\right)
}{\alpha-\beta q^{-2m\epsilon}}\,.\nonumber
\een

According to a certain gauge transformation parametrized by the scalars $\alpha,\beta$  and $\epsilon=\pm1$, a reference state and its dual can be identified using the theory of Leonard pairs. Recall that the Leonard pair is formed by the operators  $\bar\pi({\textsf A}), \bar\pi({\textsf A}^*)$.
\begin{lem}\label{lem:g1} Let $m_0$ be an integer. If the parameters $\alpha,\beta$ are such that:
\beqa
\mbox{$(q^2-q^{-2})\chi^{-1}\alpha c^*q^{m_0}=1$ \qquad (resp. $(q^2-q^{-2})\chi^{-1}\beta c^*q^{-m_0}=1$)}\label{ab}
\eeqa
then  $|\Omega^+\rangle \equiv |\theta^*_0\rangle$   satisfies
\beqa
\bar\pi(\mathscr{C}^+(u,m_0))|\Omega^+\rangle =0\, \qquad \mbox{(resp.  $\bar\pi(\mathscr{B}^+(u,m_0))|\Omega^+\rangle  =0\,$)}.\label{cmO}
\eeqa
\end{lem}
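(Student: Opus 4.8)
The plan is to insert the explicit entries (\ref{monoA})--(\ref{monoc}) of $K(u)$ together with the covariant and contravariant vectors (\ref{coVec}), (\ref{conVec}) into the definition (\ref{ce1}) of $\mathscr{C}^+$. Sandwiching $K(u)|X^+(u^{-1},m_0)\rangle$ with $\langle \tilde X^+(u,m_0)|$ gives, up to the overall scalar $-q^{-1}u^{-1}/\gamma^+(1,m_0-1)$, the combination $-\alpha q^{m_0}u^{-1}\cA(u) - \cB(u) + \alpha^2 q^{2m_0}\cC(u) + \alpha q^{m_0}u\,\cD(u)$, which is a Laurent polynomial in $u$ ranging from $u^4$ to $u^{-4}$ whose coefficients are $\bar\pi(\tA)$, $\bar\pi(\tA^*)$, $\bar\pi([\tA,\tA^*]_q)$, $\bar\pi([\tA^*,\tA]_q)$ and scalar multiples of the identity, with $u$-independent prefactors built from $\alpha$, $q^{m_0}$, $q$ and the structure constants $\rho,\omega,\eta,\eta^*$.

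First I would act with this operator on $|\Omega^+\rangle = |\theta^*_0\rangle$, the $M=0$ vector of the $\bar\pi(\tA^*)$-eigenbasis, for which $a^*_{0,-1}=0$. By (\ref{tridAstar2}) one has $\bar\pi(\tA^*)|\theta^*_0\rangle = \theta^*_0|\theta^*_0\rangle$ and $\bar\pi(\tA)|\theta^*_0\rangle = a^*_{0,0}|\theta^*_0\rangle + a^*_{0,1}|\theta^*_1\rangle$, whence $\bar\pi([\tA,\tA^*]_q)|\theta^*_0\rangle$ and $\bar\pi([\tA^*,\tA]_q)|\theta^*_0\rangle$ are immediate. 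Since $\bar\pi(\tA^*)$ is diagonal and $\bar\pi(\tA)$ tridiagonal in this basis, every term lands in $\mathrm{span}\{|\theta^*_0\rangle,|\theta^*_1\rangle\}$, so $\bar\pi(\mathscr{C}^+(u,m_0))|\theta^*_0\rangle = X(u)|\theta^*_0\rangle + Y(u)|\theta^*_1\rangle$ with $X,Y$ Laurent polynomials in $u$ whose coefficients involve $\alpha$, $q^{m_0}$, the eigenvalues $\theta^*_0,\theta^*_1$, the numbers $a^*_{0,0},a^*_{0,1}$ and $\rho,\omega,\eta,\eta^*$.

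It then remains to show $X(u)\equiv 0$ and $Y(u)\equiv 0$ once the constraint (\ref{ab}), equivalently $\alpha q^{m_0} = \chi/((q^2-q^{-2})c^*)$, is imposed. The engine is the Leonard-pair parametrization (\ref{st}), $\theta^*_M = b^*q^{2M}+c^*q^{-2M}$, together with (\ref{rho}), $\rho = -b^*c^*(q^2-q^{-2})^2$, and the analogous expressions of $\omega,\eta,\eta^*$ in terms of $b,c,b^*,c^*$. In $Y(u)$ one factors out $a^*_{0,1}$ and uses the identities $q\theta^*_1 - q^{-1}\theta^*_0 = b^*q(q^2-q^{-2})$ and $q\theta^*_0 - q^{-1}\theta^*_1 = c^*q^{-1}(q^2-q^{-2})$; substituting $\alpha q^{m_0}=\chi/((q^2-q^{-2})c^*)$ and $\rho=-b^*c^*(q^2-q^{-2})^2$ then collapses each $u$-power of $Y$ to zero. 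For $X(u)$ the top ($u^4$) coefficient is purely diagonal in $\bar\pi(\tA^*)$ and its vanishing reads $-\chi q/(q^2-q^{-2}) + \alpha^2 q^{2m_0}\rho\chi^{-1}q/(q^2-q^{-2}) + \alpha q^{m_0+1}\theta^*_0 = 0$; with (\ref{st}), (\ref{rho}) this is a quadratic in $\alpha q^{m_0}$ whose two roots correspond to $c^*$ and $b^*$, and (\ref{ab}) selects precisely the $c^*$ root.

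The main obstacle is bookkeeping: a priori the vanishing of all $u$-coefficients of both $X$ and $Y$ is an over-determined system for the single parameter $\alpha$, so the real content is verifying consistency, i.e. that every one of these equations reduces to (\ref{ab}). This collapse is forced by the rigidity of the Leonard pair — the eigenvalue form (\ref{st}), the relation (\ref{rho}) and the expressions for $\omega,\eta,\eta^*$ — which is exactly what makes the $a^*_{0,0}$-dependent and the scalar contributions cancel power by power. Finally, the parenthetical statement for $\mathscr{B}^+$ follows from the parallel computation with $|Y^\epsilon\rangle,\langle\tilde Y^\epsilon|$ replacing $|X^\epsilon\rangle,\langle\tilde X^\epsilon|$, amounting to the substitution $\alpha\mapsto\beta$, $q^{m_0}\mapsto q^{-m_0}$, so that the constraint on $\beta$ in (\ref{ab}) emerges in the same manner and yields $\bar\pi(\mathscr{B}^+(u,m_0))|\Omega^+\rangle=0$ as in (\ref{cmO}).
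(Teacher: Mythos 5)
Your route is the same as the paper's: expand $\mathscr{C}^+(u,m_0)$ via (\ref{ce1}) into the combination $-\alpha q^{m_0}u^{-1}\cA(u)-\cB(u)+\alpha^2 q^{2m_0}\cC(u)+\alpha q^{m_0}u\,\cD(u)$, act on $|\theta^*_0\rangle$ using the tridiagonal action (\ref{tridAstar2}) so that the image lies in $\mathrm{span}\{|\theta^*_0\rangle,|\theta^*_1\rangle\}$, and kill both coefficients using the Leonard-pair spectral data (\ref{st}), (\ref{rho}). The only cosmetic difference is where the quadratic in $\alpha q^{m_0}$ is extracted and how its spurious root is discarded: the paper obtains a quadratic from the off-diagonal ($|\theta^*_1\rangle$) coefficient and then uses the $u^{\pm 2}$ part of the diagonal coefficient to select one of its two roots, whereas you read the selection off the top $u$-power of the diagonal part; both are consistent with each other and with (\ref{ab}).

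There is, however, one concrete gap. After the $u$-dependent pieces of the diagonal coefficient are disposed of, what remains is the $u$-independent relation $\bigl((q-q^{-1})^2(\theta^*_0)^2+\rho\bigr)a^*_{0,0}+\omega\,\theta^*_0+\eta^*=0$, and you attribute its validity to ``the eigenvalue form (\ref{st}), the relation (\ref{rho}) and the expressions for $\omega,\eta,\eta^*$''. Those facts do not determine $a^*_{0,0}$ at all: the diagonal coefficient of the tridiagonal action (\ref{tridAstar2}) is fixed only by the defining relations, and the missing step is to apply the Askey--Wilson relation (\ref{aw2}) to $|\theta^*_0\rangle$, project onto $|\theta^*_0\rangle$, and read off precisely the displayed identity --- this is exactly what the paper does. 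Without invoking (\ref{aw2}) (or an equivalent closed form for $a^*_{0,0}$), the ``collapse'' of the $a^*_{0,0}$-dependent constant term is asserted rather than proved. A second, harmless, imprecision: once the overall prefactor proportional to $u^{-1}b(u^2)$ is stripped off, the off-diagonal coefficient is $u$-independent, so it yields a single scalar (quadratic) equation rather than a power-by-power cancellation.
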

\begin{proof} We show the first relation in (\ref{cmO}). By (\ref{ce1}), $\mathscr{C}^+(u,m_0)$ is given by (\ref{cm}). By (\ref{tridAstar2}), one has:
\beqa
&&\bar \pi\left(\frac{\alpha  q^{m_0} }{u \chi }[\textsf{A},\textsf{A}^*]_q-\frac{\chi  q^{-{m_0}} }{\alpha  \rho 
   u}[\textsf{A}^*,\textsf{A}]_q-\frac{ \left(q^2+1\right)}{q u}\textsf{A}+
   \left(\frac{1}{q u^3}+q u\right)\textsf{A}^* + g_0(u)\mathcal{I} \right)|\theta_0^*\rangle   \\
&& \qquad = \  u^{-1} const_1(u) |\theta^*_1\rangle  + u^{-1}const_0(u)|\theta^*_0\rangle \nonumber
\eeqa
where we denote
\beqa
g_0(u)&=& \frac{q^{-{m_0}-3}}{\alpha  \rho  \chi { (q^2-q^{-2})}}\Big( u^{-1} \rho q^3(qu^2+q^{-1}u^{-2})  \left(\alpha ^2 \rho  q^{2 {m_0}}-\chi ^2\right)
\nonumber\\&&\quad
\qquad \quad +u^{-1}\left(q^2+1\right)
   \left(  (\alpha ^2 \rho  q^{2 {m_0}} -  \chi ^2) q^2\omega  -   \alpha  \eta^* \chi  q^{m_0}   (q^4-1) \right)\Big)\ \nonumber
\eeqa
and
\beqa
\quad const_1(u)&=&  \left(\frac{\alpha  q^{m_0} }{\chi }(q\theta_0^*-q^{-1}\theta_1^*) -\frac{\chi  q^{-{m_0}} }{\alpha  \rho 
   }(q\theta_1^*-q^{-1}\theta_0^*) -\left(q+q^{-1}\right)\right)a^*_{0,1}\ ,   \label{c1} \\
\quad const_0(u)&=&  \left( \Big(\frac{\alpha  q^{m_0} }{ \chi }-\frac{\chi  q^{-{m_0}}}{\alpha  \rho 
   }\Big)(q-q^{-1})\theta^*_0 - (q+q^{-1})\right)  a^*_{0,0} + (qu^2+ q^{-1}u^{-2}) \theta_0^*+ ug_0(u) \label{c0}  .
\eeqa
Requiring  $\bar\pi(\mathscr{C}^+(u,{m_0}))|\Omega^+\rangle =0$,  it implies the conditions $const_1(u)=0$ and  $const_0(u)=0$ to be satisfied. First, let us  consider  $const_1(u)=0$. As $a^*_{0,1}\neq 0$ in (\ref{tridAstar2}), using (\ref{st}) for $M=0,1$,  the  equation  $const_1(u)=0$ gives a  quadratic equation in $\alpha$, which admits two solutions:
\beqa
\alpha=\frac{\chi q^{-{m_0}}}{(q^2-q^{-2})c^*}\quad \mbox{or} \quad  \alpha'=\frac{\chi q^{-{m_0}+2}}{(q^2-q^{-2})c^*}.
\eeqa
We now turn to $const_0(u)=0$.  It is found that only the first solution $\alpha$ is such that the coefficients of $u^2$ and $u^{-2}$ in  $const_0(u)$ are vanishing. For this choice of $\alpha$,  it follows:
\beqa
const_0(u)=0 \quad \Leftrightarrow \quad \left( (q-q^{-1})^2{\theta^*_0}^2 + \rho \right) a^*_{0,0} + \theta^*_0\omega + \eta^* =0\ .\label{c0} 
\eeqa
To show that the coefficient $a_{0,0}^*$ satisfies the second equation, we use the defining relations of the Askey-Wilson algebra. Indeed, by (\ref{aw2}) one has:
\beqa
\bar \pi\left(\big[\tA^*,\big[\tA^*,\tA\big]_q\big]_{q^{-1}}- \rho \,\tA-\omega \,\tA^*-\eta^*\mathcal{I}\right) |\theta_0^*\rangle = 0\ .
\eeqa
Using  (\ref{tridAstar}), (\ref{tridAstar2}), one finds that the Askey-Wilson relation (\ref{aw2}) implies (\ref{rho}) and (\ref{c0}). Thus, $const_0(u)=0$.
The second relation in (\ref{cmO}) is shown similarly.
\end{proof}

Similarly, the following lemma is shown. The proof follows the same steps as previously, so we omit it.
\begin{lem}\label{lem:g2} Let $m_0$ be an integer. If the parameters $\alpha,\beta$ are such that:
\beqa
\mbox{$(q^2-q^{-2})\chi^{-1}\alpha b q^{-{m_0} }=-1$ \qquad (resp. $(q^2-q^{-2})\chi^{-1}\beta b q^{{m_0} }=-1$)}\label{absol}
\eeqa
then  $|\Omega^-\rangle \equiv |\theta_0\rangle$  satisfies
\beqa
\bar\pi(\mathscr{C}^-(u,{m_0}))|\Omega^-\rangle =0\, \qquad \mbox{(resp.  $\bar\pi(\mathscr{B}^-(u,{m_0}))|\Omega^-\rangle =0\,$)}.
\eeqa
\end{lem}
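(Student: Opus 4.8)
The plan is to mirror the proof of Lemma \ref{lem:g1} but with the roles of the two eigenbases interchanged. The key structural observation is that Lemmas \ref{lem:g1} and \ref{lem:g2} are related by the duality symmetry $\tA \leftrightarrow \tA^*$ of the Askey-Wilson algebra. Under this exchange, the structure constants $\rho,\omega$ are invariant while $\eta \leftrightarrow \eta^*$, the parametrization swaps $(b^*,c^*)\to(b,c)$, and the commutator $[\tA,\tA^*]_q \leftrightarrow [\tA^*,\tA]_q$ picks up a corresponding sign/scaling. Crucially, the reference state changes from $|\Omega^+\rangle = |\theta^*_0\rangle$ (lowest eigenvector of $\bar\pi(\tA^*)$) to $|\Omega^-\rangle = |\theta_0\rangle$ (lowest eigenvector of $\bar\pi(\tA)$), so I would now expand using the tridiagonal action (\ref{tridAstar}) rather than (\ref{tridAstar2}).

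Concretely, first I would write out $\bar\pi(\mathscr{C}^-(u,m_0))$ from its definition (\ref{ce1}) using the contravariant and covariant vectors (\ref{coVec})--(\ref{conVec}) with $\epsilon=-1$; its explicit form (the analogue of the displayed expression in the proof of Lemma \ref{lem:g1}) would be collected in Appendix \ref{apA}. Then I would apply this operator to $|\theta_0\rangle$. Since $\bar\pi(\tA)|\theta_0\rangle = \theta_0|\theta_0\rangle$ and $\bar\pi(\tA^*)|\theta_0\rangle = a_{0,1}|\theta_1\rangle + a_{0,0}|\theta_0\rangle$ by (\ref{tridAstar}), the result is a linear combination $u^{-1}\widetilde{const}_1(u)|\theta_1\rangle + u^{-1}\widetilde{const}_0(u)|\theta_0\rangle$, with no $|\theta_{-1}\rangle$ term because $a_{0,-1}=0$. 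Requiring both coefficients to vanish yields the conditions for annihilation.

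The coefficient $\widetilde{const}_1(u)=0$, with $a_{0,1}\neq 0$, reduces by (\ref{st}) for $M=0,1$ to a quadratic equation in $\beta$ (or $\alpha$ for the covariant case), whose relevant root is precisely the normalization (\ref{absol}), namely $(q^2-q^{-2})\chi^{-1}\alpha b q^{-m_0}=-1$; the appearance of $b$ rather than $c^*$ is exactly the consequence of using the $\tA$-eigenbasis, whose eigenvalues (\ref{st}) carry the scalars $b,c$. For the surviving coefficient $\widetilde{const}_0(u)=0$, I would again invoke the Askey-Wilson relation, but now (\ref{aw1}) instead of (\ref{aw2}): evaluating $\bar\pi\big([\tA,[\tA,\tA^*]_q]_{q^{-1}} - \rho\tA^* - \omega\tA - \eta\mathcal{I}\big)|\theta_0\rangle = 0$ and reducing via (\ref{tridAstar}) produces the relation $\big((q-q^{-1})^2\theta_0^2 + \rho\big)a_{0,0} + \theta_0\omega + \eta = 0$ together with the value of $\rho$ in (\ref{rho}), which forces $\widetilde{const}_0(u)=0$.

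The main obstacle I anticipate is purely computational book-keeping rather than conceptual: verifying that exactly one root of the quadratic (the one in (\ref{absol})) simultaneously kills the coefficients of $u^2$ and $u^{-2}$ in $\widetilde{const}_0(u)$, just as only the first root worked in Lemma \ref{lem:g1}. One must be careful that the sign asymmetry between (\ref{ab}) and (\ref{absol}) — a $+1$ versus a $-1$ on the right-hand side — emerges correctly from the interchange $\eta^*\leftrightarrow\eta$ and from the different scaling of the two $q$-commutators in $\mathscr{C}^+$ versus $\mathscr{C}^-$. Since the statement explicitly says the proof follows the same steps, I would simply indicate this correspondence and defer the full algebra to the appendix, noting that the resp.\ (covariant) case with $\bar\pi(\mathscr{B}^-(u,m_0))|\Omega^-\rangle = 0$ is handled identically with $\alpha$ replaced by $\beta$ and the obvious sign adjustment in the exponent of $q^{\pm m_0}$.
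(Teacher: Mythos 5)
Your proposal is correct and is exactly the reconstruction the paper intends: the paper omits the proof of Lemma \ref{lem:g2} with the remark that it ``follows the same steps'' as Lemma \ref{lem:g1}, and your argument carries out precisely those steps under the duality $\tA\leftrightarrow\tA^*$ (using the $\bar\pi(\tA)$-eigenbasis (\ref{tridAstar}), the spectral data $b,c$ in place of $b^*,c^*$, and relation (\ref{aw1}) in place of (\ref{aw2}) to kill the residual diagonal coefficient). The only blemish is the passing phrase ``quadratic equation in $\beta$ (or $\alpha$ for the covariant case)'', which momentarily swaps the roles of $\alpha$ and $\beta$ relative to the statement, but your final sentence assigns them correctly, so this does not affect the argument.
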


For a proper choice of the gauge parameters $\alpha,\beta$, the action of the `diagonal' dynamical operators $\mathscr{A}^\pm(u,{m_0})$ and $\mathscr{D}^\pm(u,{m_0})$ on the reference states $|\Omega^\pm\rangle$ is easily computed.
\begin{lem}\label{lem:diagonalaction} Let $\alpha$ be fixed by (\ref{ab}) for $|\Omega^+\rangle$  or fixed by  (\ref{absol}) for  $|\Omega^-\rangle$. Then, the dynamical operators are such that:
\ben
\bar\pi(\mathscr{A}^\pm(u,{m_0}))|\Omega^\pm\rangle &=&\Lambda_1^\pm(u) |\Omega^\pm\rangle \qquad \mbox{and}\qquad 
\bar\pi(\mathscr{D}^\pm(u,m_0))|\Omega^\pm\rangle =\Lambda_2^\pm(u) |\Omega^\pm\rangle \ ,\label{actionADvac}
\een
where $\Lambda_1^\pm(u),\Lambda_2^\pm(u)$ are ratios of Laurent polynomials in the variable $u$.
\end{lem}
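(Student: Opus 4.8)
The plan is to compute the action of the diagonal dynamical operators $\mathscr{A}^\pm(u,m_0)$ and $\mathscr{D}^\pm(u,m_0)$ on the reference states $|\Omega^\pm\rangle$ directly, using the explicit expressions in Appendix \ref{apA} together with the tridiagonal action (\ref{tridAstar}), (\ref{tridAstar2}) of the Leonard pair. First I would treat the $+$ case with $|\Omega^+\rangle \equiv |\theta^*_0\rangle$. By definition (\ref{Ae1}), the operator $\bar\pi(\mathscr{A}^+(u,m_0))$ is a Laurent polynomial in $u$ whose coefficients are fixed linear combinations of $\bar\pi(\tA)$, $\bar\pi(\tA^*)$, $\bar\pi([\tA,\tA^*]_q)$, $\bar\pi([\tA^*,\tA]_q)$, and the identity. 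Applying this to $|\theta^*_0\rangle$ and using (\ref{tridAstar2}), the term $\bar\pi(\tA^*)$ acts diagonally by $\theta^*_0$, while $\bar\pi(\tA)$ produces a combination of $|\theta^*_0\rangle$ and $|\theta^*_1\rangle$ via $a^*_{0,0}$ and $a^*_{0,1}$. The commutators $\bar\pi([\tA,\tA^*]_q)$ and $\bar\pi([\tA^*,\tA]_q)$, being quadratic in the pair, generally produce components along $|\theta^*_0\rangle$, $|\theta^*_1\rangle$ and $|\theta^*_2\rangle$.

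The key observation is that once $\alpha$ is fixed by (\ref{ab}), the same algebraic constraint that forces $\bar\pi(\mathscr{C}^+(u,m_0))|\Omega^+\rangle = 0$ in Lemma \ref{lem:g1} must cause the off-diagonal components (those along $|\theta^*_1\rangle$ and $|\theta^*_2\rangle$) of $\bar\pi(\mathscr{A}^+(u,m_0))|\Omega^+\rangle$ to cancel. So I would extract the coefficients of $|\theta^*_1\rangle$ and $|\theta^*_2\rangle$ as Laurent polynomials in $u$ and show they vanish identically in $u$, reducing the computation to the coefficient of $|\theta^*_0\rangle$, which defines $\Lambda_1^+(u)$. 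The cleanest route is to invoke the reflection-equation origin: the dynamical $K(u|m)$ satisfies a gauge-transformed reflection equation, whose upper-triangular structure on the reference state is guaranteed precisely by the gauge choice. I would exploit the identities among $\theta^*_M$ from (\ref{st}) and the Askey-Wilson relation (\ref{aw2}) — exactly as in the proof of Lemma \ref{lem:g1} where (\ref{rho}) and (\ref{c0}) emerged — to kill the unwanted components.

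For $\bar\pi(\mathscr{D}^+(u,m_0))$ the analysis is analogous but requires care because of the extra term in (\ref{De1}) proportional to $\mathscr{A}^+(u,m_0)$. Since $\bar\pi(\mathscr{A}^+(u,m_0))|\Omega^+\rangle = \Lambda_1^+(u)|\Omega^+\rangle$ is already established, that correction term acts diagonally and simply contributes to $\Lambda_2^+(u)$; the remaining piece $\langle \tilde X^+(u,m_0+2)|K(u)|Y^+(u^{-1},m_0)\rangle$ is handled by the same tridiagonal bookkeeping. The $-$ case, with $|\Omega^-\rangle \equiv |\theta_0\rangle$ and $\alpha$ fixed by (\ref{absol}), follows by the symmetry exchanging the roles of $\tA \leftrightarrow \tA^*$ and the two eigenbases, using (\ref{tridAstar}) and the Askey-Wilson relation (\ref{aw1}) in place of (\ref{aw2}).

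The main obstacle I anticipate is the off-diagonal cancellation: verifying that the coefficients of $|\theta^*_1\rangle$ and especially $|\theta^*_2\rangle$ vanish requires combining the quadratic commutator actions with the specific value of $\alpha$ and the quadratic eigenvalue ansatz (\ref{st}), and keeping track of all powers of $u$. This is where the gauge-invariance structure does the real work, and the safest presentation is to argue abstractly — the dynamical $K$-matrix is lower-triangularized on $|\Omega^+\rangle$ by construction of the gauge vectors (\ref{coVec})–(\ref{conVec}), so $\bar\pi(\mathscr{C}^+)|\Omega^+\rangle = 0$ forces $|\Omega^+\rangle$ to be a simultaneous eigenstate of the diagonal entries — rather than to grind through the Laurent-polynomial coefficients, from which the fact that $\Lambda_1^\pm, \Lambda_2^\pm$ are ratios of Laurent polynomials in $u$ then follows immediately from the explicit forms in Appendix \ref{apA}.
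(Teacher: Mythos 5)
Your computational skeleton (explicit expressions from Appendix \ref{apA}, tridiagonal action (\ref{tridAstar2}), extraction of off-diagonal coefficients) is the paper's route, but the step where you actually need to do work is precisely the step you propose to replace by an abstract argument, and that abstract argument is not valid. From $\bar\pi(\mathscr{C}^+(u,m_0))|\Omega^+\rangle=0$ one cannot conclude that $|\Omega^+\rangle$ is an eigenvector of the diagonal entries: annihilation by one off-diagonal entry of a $2\times2$ operator-valued matrix says nothing about whether $\mathscr{A}^+$ or $\mathscr{D}^+$ preserve the line spanned by $|\Omega^+\rangle$ (in the standard algebraic Bethe ansatz this eigenvalue property of the vacuum is always an independent computation, not a consequence of triangularity). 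The paper's proof does the honest calculation: writing $\mathscr{A}^+(u,m_0)$ via (\ref{Am}) as a combination of $[\tA^*,\tA]_q$, $[\tA,\tA^*]_q$, $\tA$, $\tA^*$ and $\mathcal{I}$, substituting the value of $\alpha$ from (\ref{ab}), and observing that the $|\theta_1^*\rangle$ contribution of the two $q$-commutator terms cancels \emph{exactly} against the $|\theta_1^*\rangle$ contribution of the term $(\beta+\alpha q^{2m_0})q\tA$. That cancellation is the content of the lemma and has to be exhibited; your proposal never establishes it.

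A secondary inaccuracy: you assert the $q$-commutators acting on $|\theta_0^*\rangle$ ``generally produce components along $|\theta_0^*\rangle$, $|\theta_1^*\rangle$ and $|\theta_2^*\rangle$.'' They do not. Each of $[\tA,\tA^*]_q$ and $[\tA^*,\tA]_q$ is linear in $\tA$ with $\tA^*$ acting diagonally on the $\{|\theta_j^*\rangle\}$ basis, so applied to $|\theta_0^*\rangle$ they reach at most $|\theta_1^*\rangle$; there is no $|\theta_2^*\rangle$ component to cancel. This makes the required cancellation more tractable than you anticipate, but it still must be carried out. Your treatment of $\mathscr{D}^+$ via (\ref{De1}) (using the already-established eigenvalue of $\mathscr{A}^+$ for the correction term) and of the $-$ case by exchanging the roles of $\tA$ and $\tA^*$ is consistent with what the paper does.
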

\begin{proof} We show $\bar\pi(\mathscr{A}^+(u,{m_0}))|\Omega^+\rangle =\Lambda_1^+(u) |\Omega^+\rangle$.   By (\ref{Ae1}), $\mathscr{A}^+(u,{m_0})$ is given by (\ref{Am}). The action of  $\mathscr{A}^+(u,{m_0})$ on the reference state $|\Omega^+\rangle \equiv |\theta_0^*\rangle $ reads
\beqa
\bar\pi(\mathscr{A}^+(u,{m_0}))|\Omega^+\rangle &=& \frac{u^2-u^{-2}}{u(\alpha q^{2m_0}-q^2\beta)}\bar \pi\left(\frac{\chi  q^{m_0} }{\rho }[\textsf{A}^*,\textsf{A}]_q-\frac{\alpha  \beta  q^{{m_0}+2} }{\chi
   }[\textsf{A},\textsf{A}^*]_q+  \left(\beta +\alpha  q^{2
   {m_0}}\right)q\textsf{A} \right. \nonumber\\
&& \qquad \qquad \qquad \qquad  \left.-\frac{ \left(\alpha  q^{2 {m_0}} +\beta  q^4 u^4\right)}{q u^2} \textsf{A}^*+ f_0(u)\mathcal{I}\right) |\theta_0^*\rangle  \ \nonumber
\eeqa
where we have denoted
\beqa
f_0(u)&=& -\frac{1}{\rho  q^3 u^2 \chi 
   \left(q^2-q^{-2}\right) \left(u^2-u^{-2}\right)}\left(     \rho q^{{m_0}+2}(\alpha\beta q^2\rho -\chi^2)(u^6q^2-u^{-2}) \right.  \nonumber\\
&& \qquad \qquad \qquad \qquad \qquad \left. + (q^2+1)q^{{m_0}}\left( \alpha \eta^* \chi q^{m_0}(q^4-1) -q^2\omega (\alpha\beta q^2\rho -\chi^2)   \right)                          \right.\nonumber\\
&& \qquad \qquad \qquad \qquad \qquad \left. - (q^2+1)q^{2}u^4\left( \beta \eta^* \chi (q^4-1) -q^{m_0}\omega (\alpha\beta q^2\rho -\chi^2)   \right) \right.\nonumber\\         
&& \qquad \qquad \qquad \qquad \qquad \left. + (q^2-1)u^2 \left(   ( \alpha\eta\chi q^{2{m_0}} -  \beta \eta \chi q^2) (1+q^2)^2 -\rho q^{{m_0}+2} (\alpha\beta q^2\rho -\chi^2)   \right) \right)\ .\nonumber\vspace{1mm}         
\eeqa
 For  the choice of gauge parameter $\alpha$ fixed by (\ref{ab}), on one hand using (\ref{tridAstar2}) one finds:
\beqa
\bar \pi\left(\frac{\chi  q^{m_0} }{\rho }[\textsf{A}^*,\textsf{A}]_q-\frac{\alpha  \beta  q^{{m_0}+2} }{\chi
   }[\textsf{A},\textsf{A}^*]_q\right) |\theta_0^*\rangle  &=&  \left(   \frac{\chi q^{m_0} + \beta b^* q^2(q^2-q^{-2})}{\rho} \right)(q-q^{-1})\theta^*_0 a^*_{0,0}    |\theta_0^*\rangle \nonumber\\
&& + \   \left(   \frac{\chi q^{m_0+1} b^* (q^2-q^{-2})}{\rho} -\beta q \right)a^*_{0,1}  |\theta_1^*\rangle  \ .\nonumber
\eeqa
On the other hand, using   (\ref{tridAstar2}) one finds:
\beqa
\bar \pi\left( \left(\beta +\alpha  q^{2m_0}\right)q\textsf{A}\right)|\theta_0^*\rangle   &=&\left(  \beta q - \chi q^{m_0+1} \frac{b^*(q^2-q^{-2})}{\rho}\right)a_{0,0}^* |\theta_0^*\rangle \nonumber\\    
&& - \ \left(   \frac{\chi q^{m_0+1} b^* (q^2-q^{-2})}{\rho} -\beta q  \right)a^*_{0,1}|\theta_1^*\rangle \ . \nonumber 
\eeqa
Combining both expressions, the off-diagonal contribution in $|\theta_1^*\rangle$ vanishes. It follows: 
\beqa
\bar\pi(\mathscr{A}^+(u,{m_0}))|\Omega^+\rangle &=& \frac{(u^2-u^{-2})(q^2-q^{-2})c^*}{u(q^{m_0}\chi-c^*\beta(q^4-1))} \left( \left(  \left( \frac{\chi q^{m_0} + \beta b^* q^2(q^2-q^{-2})}{\rho} \right)(q-q^{-1})\theta^*_0  \right. \right.\nonumber \\
&& \left.\left.  +  \left(  \beta q - \chi q^{m_0+1} \frac{b^*(q^2-q^{-2})}{\rho}\right) \right)a_{0,0}^* - \     \left(   \frac{\chi q^{m_0}}{c^*(q^2-q^{-2})} q^{-2}u^{-2} + \beta  q^2 u^2\right)q \theta^*_0+ f_0(u) \right)|\Omega^+\rangle \ . \nonumber
\eeqa
Inserting the expression of $a^*_{0,0}$  according to (\ref{c0}) and using (\ref{st}), $\Lambda_1^+(u)$ is obtained as a ratio of Laurent polynomials in $u$, depending on $b^*,c^*$ and the structure constant $\rho,\omega,\eta^*$. The other polynomial expressions  $\Lambda_1^-(u),\Lambda_2^\pm(u)$ are obtained following the same steps, so we omit the details.
\end{proof}

Without loss of generality, for a convenient choice of parametrization the eigenvalues $\Lambda_1^\pm(u),\Lambda_2^\pm(u)$ can be written in a factorized form. The proof of the following result is straightforward. Let $\mu,\mu',\nu,\nu',v$ be generic scalars. Introduce the parametrization:
\beqa
b&=& \frac{1}{2} q^{\frac{1}{2} (\nu +\nu')}e^{-\mu }q^{-2s}    \ , \quad c=\frac{1}{2} q^{\frac{1}{2} (\nu +\nu')}e^{\mu }q^{2s}    \ ,\label{par}\\
  b^*&=&     \frac{1}{2} q^{\frac{1}{2} (\nu +\nu')}e^{\mu' }q^{-2s}   \ , \quad c^*=   \frac{1}{2} q^{\frac{1}{2} (\nu +\nu')}e^{-\mu' }q^{2s} \  ,\nonumber\\
b^\diamond&=&   \frac{1}{2} q^{\frac{1}{2} (\nu +\nu')}v^2 q^{-2s}   \ , \quad c^\diamond=   \frac{1}{2} q^{\frac{1}{2} (\nu +\nu')}v^{-2} q^{2s} \ ,\nonumber
\eeqa
 and define the parameter $\chi$ as
\beqa
 \chi=-\frac{1}{2}q^\nu(q^2-q^{-2})\ .\nonumber
\eeqa
Adapting the results of \cite{Ter02a,Ter02b}  the structure constants are given by: 
\beqa
\rho&=&-\frac{1}{4}q^{\nu+\nu'}(q^2-q^{-2})^2\  ,\label{sc1}\\
 \omega&=&  \frac{1}{4}q^{\nu+\nu'}(q-q^{-1})^2 \left((v^2+v^{-2})(q^{2s+1}+q^{-2s-1}) - 4\cosh(\mu)\cosh(\mu')\right)\ ,  \label{sc2}\\
 \eta&=&  -\frac{1}{4}q^{\frac{3}{2}({\nu+\nu'})}\frac{(q^2-q^{-2})^2}{(q+q^{-1})}\left(  \cosh(\mu)(v^2+v^{-2}) - \cosh(\mu')(q^{2s+1}+q^{-2s-1})  \right)  ,\label{sc3} \\ 
 \eta^*&=&  -\frac{1}{4}q^{\frac{3}{2}({\nu+\nu'})}\frac{(q^2-q^{-2})^2}{(q+q^{-1})}\left(  \cosh(\mu')(v^2+v^{-2}) - \cosh(\mu)(q^{2s+1}+q^{-2s-1})  \right)   \ . \label{sc4}
\eeqa
Then, the eigenvalues of the dynamical operators take the factorized form:
\ben
&&\Lambda_1^\epsilon(u)=\frac{1}{2u^{\epsilon}}e^{\epsilon(\mu-\mu')/2-(\mu+\mu')/2}q^{(\nu +\nu')/2-(2s+1)}
\left( q^{2 s+1} uv^{-1}-u^{-1}v\right)\left( q^{2 s+1}uv-u^{-1}v^{-1}\right) \label{Lap}\\ && \qquad\qquad
\times
\left(u e^{\frac{1}{2} (1-\epsilon ) \left(\mu -\mu'\right)}  +u^{-1} e^{\frac{1}{2} (\epsilon +1) \left(\mu'+\mu \right)}\right)
\left(
u e^{\frac{1}{2} (1-\epsilon ) \left(\mu'+\mu
   \right)} + u^{-1} e^{\frac{1}{2} (\epsilon +1) \left(\mu'-\mu \right)}
\right)\,,\nonumber
\een
\ben
&& \Lambda_2^\epsilon(u)=
\frac{(u^2-u^{-2})}{2u^\epsilon (qu^2-q^{-1}u^{-2})}e^{\epsilon(\mu-\mu')/2-(\mu+\mu')/2}q^{(\nu +\nu')/2-(2s+ 1)}
\left( q^{2 s-1} u^{-1}v-u v^{-1}\right)
\left(q^{2 s-1} u^{-1} v^{-1}-u v\right) \nonumber
\\ && \qquad\qquad\qquad
\times
\left(q^2 u e^{\frac{1}{2} (\epsilon +1) \left(\mu'-\mu \right)}+u^{-1} e^{\frac{1}{2} (1-\epsilon ) \left(\mu'+\mu
   \right)}\right)
\left(q^2 u e^{\frac{1}{2} (\epsilon +1) \left(\mu'+\mu \right)}+u^{-1} e^{\frac{1}{2} (1-\epsilon ) \left(\mu -\mu'\right)}\right)\,.\nonumber
\een 

To summarize, given a Leonard pair there are at least\footnote{Note that if the constraints $\bar\pi(\mathscr{B}^\pm(u,m_0))|{\Omega}^\pm\rangle =0$ are satisfied  then the excited states would be created by successive actions of the dynamical operators $\bar\pi(\mathscr{C}^\pm (u,m))$ on $|{\Omega}^\pm\rangle$.} two possible choices of gauge transformation for which a reference state can be identified. In each case, the reference state is simply given by  the fundamental eigenvector of either $\bar\pi({\textsf A})$ or $\bar\pi({\textsf A}^*)$.\vspace{1mm}

\subsubsection{Eigenvalue of the Casimir $\Gamma$} On the vector space generated by successive actions of the dynamical operators $\{\mathscr{A}^{\epsilon}(u,m)$, $\mathscr{B}^{\epsilon}(u,m)$, $\mathscr{C}^{\epsilon}(u,m)$, $\mathscr{D}^{\epsilon}(u,m)\}$ on a reference state (the so-called Bethe states), the eigenvalue of the central element $\Gamma$ is computed as follows. The dynamical operators being combinations of $\textsf{A},\textsf{A}^*$, they commute with $\Gamma$. So, it is sufficient to compute  the action of $\Gamma$ on a reference state, for instance  $|\Omega^+ \rangle$. Recall $|\Omega^+ \rangle \equiv|\theta_0^*\rangle$. Applying  (\ref{Gam}) and using (\ref{tridAstar2}), one gets:
\beqa
\Gamma |\Omega^+\rangle = \Gamma_0 |\Omega^+ \rangle  +  a_{01}^{*} \Gamma_1|\theta_1^*\rangle +   a_{01}^*a_{12}^{*}\Gamma_2|\theta_2^*\rangle \ ,
\eeqa  
where $ \Gamma_0$ (  $\equiv \Gamma_0(a_{00}^*,a_{01}^*a_{10}^*)$), $\Gamma_1(\equiv \Gamma_1(a_{00}^*,a_{11}^*))$ and $\Gamma_2$ are functions of the spectral data $\theta_0^*,\theta^*_1, \theta^*_2$ and the structure constants $\rho,\omega,\eta,\eta^*$. Using (\ref{st}), (\ref{rho}), one finds $\Gamma_2=0$.  Applying (\ref{aw1}), (\ref{aw2}), to $|\theta_0^*\rangle$, one extracts the expressions of  $a_{00}^*,a_{11}^*$ as well as the product $a_{01}^*a_{10}^*$ in terms of the $\rho,\omega,\eta,\eta^*$. Inserted in $\Gamma_1(a_{00}^*,a_{11}^*)$, one finds $\Gamma_1(a_{00}^*,a_{11}^*)=0$. Similarly, inserted in the eigenvalue $\Gamma_0\equiv \Gamma_0(a_{00}^*,a_{01}^*a_{10}^*)$, using the parametrization (\ref{par}) and (\ref{sc1})-(\ref{sc4}) one gets finally:
\beqa
\Gamma |\Omega^+\rangle = \Gamma_0 |\Omega^+ \rangle  
\eeqa
with
\beqa
\Gamma_0 &=& \left(q-q^{-1}\right)^2 q^{2 (\nu +\nu'-2 s)}
\Big(
\frac{q^{4s}}{4}
\Big(
\left(q^2+q^{-2}\right)
\Big(
-(q^{-2s-1}+q^{2s+1})(v^2+v^{-2})\cosh(\mu)\cosh(\mu')
\nonumber\\
&&
\quad+\frac{1}{2}(\cosh(2\mu)+\cosh(2\mu'))
\Big)
-\cosh(2\mu)\cosh(2\mu')
\Big)
-\frac{1}{16}  \left(q^{4 s}-1\right) \left(q^2q^{4 s}-q^{-2}\right)(v^4+v^{-4})
\nonumber\\
&&
\quad+\frac{1}{16} \left(\left(1+q^4\right) q^{8 s}+\left(-q^{-4}+4 q^{-2}+2+4
   q^2 -q^4\right) q^{4 s}+q^{-4}+1	\right) 
\Big) \ .
\nonumber
\eeqa

\subsubsection{Commutation relations between the dynamical operators}
 In the following analysis, we will  need the commutation relations between the entries of $K(u|m)$. 
These relations are derived from the reflection equation (\ref{RE}). By straightforward calculations, in particular one has:
\ben
\qquad \mathscr{B}^{\epsilon}(u,m+2)\mathscr{B}^{\epsilon}(v,m) &=& \mathscr{B}^{\epsilon}(v,m+2)\mathscr{B}^{\epsilon}(u,m),\label{comBdBd} \\
\qquad \mathscr{A}^{\epsilon}(u,m+2)\mathscr{B}^{\epsilon}(v,m)&=&f(u,v)\mathscr{B}^{\epsilon}(v,m)\mathscr{A}^{\epsilon}(u,m) \label{comAdBd}\\
&& + g(u,v,m)\mathscr{B}^{\epsilon}(u,m)\mathscr{A}^{\epsilon}(v,m) + w(u,v,m)\mathscr{B}^{\epsilon}(u,m)\mathscr{D}^{\epsilon}(v,m),\nonumber  \\
 \mathscr{D}^{\epsilon}(u,m+2)\mathscr{B}^{\epsilon}(v,m)&=& h(u,v)\mathscr{B}^{\epsilon}(v,m) \mathscr{D}^{\epsilon}(u,m) 
,\label{comDdBd}\\
&&+k(u,v,m)\mathscr{B}^{\epsilon}(u,m)\mathscr{D}^{\epsilon}(v,m)+ n(u,v,m)\mathscr{B}^{\epsilon}(u,m)\mathscr{A}^{\epsilon}(v,m),\nonumber\\
\mathscr{C}^\epsilon(u,m+2)\mathscr{B}^\epsilon(v,m)&=& \mathscr{B}^\epsilon(v,m-2)\mathscr{C}^\epsilon(u,m)  \label{comcdBd} \\
&& +q(u,v,m) \mathscr{A}^\epsilon(v,m)\mathscr{D}^\epsilon(u,m)+r(u,v,m)\mathscr{A}^\epsilon(u,m)\mathscr{D}^\epsilon(v,m)
\nonumber\\
&& +s(u,v,m) \mathscr{A}^\epsilon(u,m)\mathscr{A}^\epsilon(v,m)+x(u,v,m)\mathscr{A}^\epsilon(v,m)\mathscr{A}^\epsilon(u,m)
\nonumber\\
&&+y(u,v,m) \mathscr{D}^\epsilon(u,m)\mathscr{A}^\epsilon(v,m)+z(u,v,m)\mathscr{D}^\epsilon(u,m)\mathscr{D}^\epsilon(v,m), \nonumber
\een
where the coefficients are collected in appendix \ref{Sec:coefcommut}. Using these commutation relations, the action of the entries on products of off-diagonal operators can be derived. Let  $\bar u$ denote the set of variables $\bar u = \{u_1,u_2,\dots,u_M\}$. Define the following strings of length $M$ of operators $ \mathscr{B}^{\epsilon}(u_i,m)$:
\ben
B^{\epsilon}(\bar u,m,M)&=&\mathscr{B}^{\epsilon}(u_1,m+2(M-1))\cdots \mathscr{B}^{\epsilon}(u_M,m)\, ,\label{SB}\\
B^{\epsilon}(\{u,\bar u_i\},m,M)&=&\mathscr{B}^{\epsilon}(u_1,m+2(M-1))\cdots \mathscr{B}^{\epsilon}(u,m+2(M-i)) \dots \mathscr{B}^{\epsilon}(u_M,m)\,.\label{SB2}
\een

Using the dynamical commutation relations (\ref{comBdBd}), (\ref{comAdBd}), (\ref{comDdBd}),
one shows that the action of the diagonal dynamical operators
$\{\mathscr{A}^{\epsilon}(u,m),\mathscr{D}^{\epsilon}(u,m)\}$ on the string (\ref{SB}) is given by
\ben\label{AonSB}
\qquad \mathscr{A}^{\epsilon}(u,m+2M)B^{\epsilon}(\bar u,m,M)&=&
\prod_{i=1}^Mf(u,u_i) B^{\epsilon}(\bar u,m,M)\mathscr{A}^{\epsilon}(u,m)\\
&+&
\sum_{i=1}^M g(u,u_i,m+2(M-1))\! \! \!\prod_{j=1,j\neq i}^M  \! f(u_i,u_j)   B^{\epsilon}(\{u,\bar u_i\},m,M)\mathscr{A}^{\epsilon}(u_i,m)
\nonumber
\\
&+&
\sum_{i=1}^M w(u,u_i,m+2(M-1)) \! \! \!\prod_{j=1,j\neq i}^M  \! h(u_i,u_j)   B^{\epsilon}(\{u,\bar u_i\},m,M)\mathscr{D}^{\epsilon}(u_i,m)
\nonumber
\een
and
 \ben\label{DonSB}
\qquad  \mathscr{D}^{\epsilon}(u,m+2M)B^{\epsilon}(\bar u,m,M)&=&
\prod_{i=1}^M h(u,u_i) B^{\epsilon}(\bar u,m,M)\mathscr{D}^{\epsilon}(u,m)\\\nonumber
&+&\sum_{i=1}^M k(u,u_i,m+2(M-1))\! \! \!\prod_{j=1,j\neq i}^M  \! h(u_i,u_j)   B^{\epsilon}(\{u,\bar u_i\},m,M)\mathscr{D}^{\epsilon}(u_i,m)\\
&+&\sum_{i=1}^M n(u,u_i,m+2(M-1)) \! \! \!\prod_{j=1,j\neq i}^M  \! f(u_i,u_j)   B^{\epsilon}(\{u,\bar u_i\},m,M)\mathscr{A}^{\epsilon}(u_i,m)\,.\nonumber
\een

In the next subsections, we will use extensively the above relations for various choices of gauge  parameters $\epsilon,\alpha,\beta$ in order to solve the spectral problem for the Heun-Askey-Wilson operator associated with (\ref{I}).\vspace{1mm}

\subsubsection{Action of $\bar \pi(\mathscr{B}^{\epsilon}(u,m+4s))$ on $\bar{\cal V}$} 
According to the choice of the gauge parameter $\alpha$, recall that the action of the dynamical operators on each reference state $|\Omega^\pm\rangle $ is given in Lemmas \ref{lem:g1}, \ref{lem:g2}, \ref{lem:diagonalaction}. The vector space $\bar{\cal V}$ being finite dimensional, we now study the action of the dynamical operator $\mathscr{B}^{+}(u,m_0+4s)$ (resp. $\mathscr{B}^{-}(u,m_0+4s)$) on the eigenvector $|\theta^*_{2s}\rangle $ (resp. $|\theta_{2s}\rangle $). Recall (\ref{tridAstar}), (\ref{tridAstar2}). 
\begin{lem} Assume $(q^2-q^{-2})\chi^{-1}\beta b^* q^{-m_0}=1$ (resp. $(q^2-q^{-2})\chi^{-1}\beta c q^{m_0}=-1$). Then
\beqa
\bar\pi(\mathscr{B}^{+}(u,m_0+4s))|\theta^*_{2s}\rangle  =0\ \quad (\mbox{resp.}\quad\  \bar\pi(\mathscr{B}^{-}(u,m_0+4s))|\theta_{2s}\rangle  =0)\ . \label{nulsp}
\eeqa
\end{lem}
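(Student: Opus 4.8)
The plan is to mirror the proof of Lemma \ref{lem:g1}, replacing the off-diagonal operator $\mathscr{C}^+$ by $\mathscr{B}^+$, the lowest weight vector $|\theta^*_0\rangle$ by the highest weight vector $|\theta^*_{2s}\rangle$, and the shift $m_0$ by $m_0+4s$. First I would substitute the explicit expression of $\bar\pi(\mathscr{B}^+(u,m_0+4s))$ as a combination of $\bar\pi(\tA)$, $\bar\pi(\tA^*)$, the $q$-commutators $[\tA,\tA^*]_q$, $[\tA^*,\tA]_q$ and a scalar term, as collected in Appendix \ref{apA}, and then act on $|\theta^*_{2s}\rangle$ using the tridiagonal action (\ref{tridAstar2}). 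Since $\bar\pi(\tA^*)$ is diagonal in the basis $\{|\theta^*_M\rangle\}$ and $\bar\pi(\tA)|\theta^*_{2s}\rangle = a^*_{2s,2s}|\theta^*_{2s}\rangle + a^*_{2s,2s-1}|\theta^*_{2s-1}\rangle$ (the term $a^*_{2s,2s+1}|\theta^*_{2s+1}\rangle$ being absent because $a^*_{2s,2s+1}=0$), every monomial in $\mathscr{B}^+$ maps $|\theta^*_{2s}\rangle$ into the span of $|\theta^*_{2s}\rangle$ and $|\theta^*_{2s-1}\rangle$. Thus the image takes the form $const_{2s-1}(u)|\theta^*_{2s-1}\rangle + const_{2s}(u)|\theta^*_{2s}\rangle$, in complete parallel with Lemma \ref{lem:g1}.

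The vanishing of $\bar\pi(\mathscr{B}^+(u,m_0+4s))|\theta^*_{2s}\rangle$ then reduces to the two conditions $const_{2s-1}(u)=0$ and $const_{2s}(u)=0$. For the off-diagonal coefficient, since $a^*_{2s,2s-1}\neq 0$, dividing it out and inserting the eigenvalue parametrization (\ref{st}) for $M=2s$ and $M=2s-1$ produces a quadratic equation in $\beta$. Under the duality $\alpha \leftrightarrow \beta$, $c^* \leftrightarrow b^*$, $q^{m_0}\leftrightarrow q^{-m_0}$ that interchanges the roles of the highest and lowest weight vectors, this quadratic is the image of the one appearing in Lemma \ref{lem:g1}, and one of its two roots is precisely the stated condition $(q^2-q^{-2})\chi^{-1}\beta b^* q^{-m_0}=1$.

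It remains to check that the diagonal coefficient $const_{2s}(u)$ vanishes identically in $u$ once $\beta$ is fixed by this root. Here I would argue exactly as in Lemma \ref{lem:g1}: first, only the correct root, and not the spurious second root of the quadratic, makes the coefficients of $u^2$ and $u^{-2}$ in $const_{2s}(u)$ vanish; second, what remains reduces to the single scalar identity $((q-q^{-1})^2(\theta^*_{2s})^2 + \rho)a^*_{2s,2s} + \theta^*_{2s}\omega + \eta^* = 0$. This identity is not an additional hypothesis: applying the Askey-Wilson relation (\ref{aw2}) to $|\theta^*_{2s}\rangle$, namely $\bar\pi([\tA^*,[\tA^*,\tA]_q]_{q^{-1}} - \rho\tA - \omega\tA^* - \eta^*\mathcal{I})|\theta^*_{2s}\rangle = 0$, and extracting the $|\theta^*_{2s}\rangle$ component via (\ref{tridAstar2}) together with (\ref{st}) and (\ref{rho}), yields exactly this relation for $a^*_{2s,2s}$. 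Hence $const_{2s}(u)=0$ automatically, and the first claim follows.

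The main obstacle is the bookkeeping in the last step: one must verify that the $u$-dependent scalar term of $\mathscr{B}^+(u,m_0+4s)$, the analog of the function $g_0(u)$ in Lemma \ref{lem:g1}, combines with the diagonal contributions of the two $q$-commutators so that, for the correct root of $\beta$, all powers of $u$ except the constant piece cancel, leaving precisely the Askey-Wilson constraint. The second (\emph{resp.}) statement is proved identically, now working in the eigenbasis of $\bar\pi(\tA)$ with highest weight vector $|\theta_{2s}\rangle$, using the parametrization $\theta_M = bq^{2M}+cq^{-2M}$ and the Askey-Wilson relation (\ref{aw1}) in place of (\ref{aw2}); it is the highest-weight counterpart of Lemma \ref{lem:g2}, and the sign $-1$ in the condition $(q^2-q^{-2})\chi^{-1}\beta c q^{m_0}=-1$ arises from the same source as in that lemma.
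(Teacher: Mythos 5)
Your proposal is correct and follows essentially the same route as the paper: the paper likewise writes $\bar\pi(\mathscr{B}^{+}(u,m))|\theta^*_j\rangle$ in tridiagonal form with explicit coefficients $B_{j\,j\pm1}(m)$, $B_{jj}(U,m)$, uses $a^*_{2s,2s+1}=0$ to truncate at $j=2s$, and checks that for $m=m_0+4s$ and the stated value of $\beta$ both $B_{2s\,2s-1}$ and $B_{2s\,2s}$ vanish, the latter via exactly the identity $\big((q-q^{-1})^2(\theta^*_{2s})^2+\rho\big)a^*_{2s,2s}+\omega\theta^*_{2s}+\eta^*=0$ extracted from the Askey--Wilson relation (\ref{aw2}). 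Your identification of the off-diagonal condition as a quadratic in $\beta$ with the stated condition as the admissible root, and of the $U$-linear structure of the diagonal coefficient as the mechanism selecting that root, matches the computation the paper carries out.
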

\begin{proof}We show the first relation, the second one is derived similarly. 
Using (\ref{Bm}), acting on the eigenstate $|\theta^*_{j}\rangle$ of $\tA^*$ with eigenvalue $\theta_j^*$ one finds:
\beqa
\bar\pi(\mathscr{B}^{+}(u,m)) |\theta^*_{j}\rangle &=& \frac{\beta (u^2-u^{-2})}{(\alpha q^{2m+2}-\beta)u}\left(  B_{j j+1}(m) |\theta^*_{j+1}\rangle  +    B_{j j}(U,m)  |\theta^*_{j}\rangle  +  B_{jj-1}(m) |\theta^*_{j-1}\rangle\right) \label{Bplusm}
\eeqa
where $B_{0 -1}= B_{2s 2s+1}=0$,
\beqa
 B_{j j+1}(m) &=& \left(  \frac{\chi q^m}{\beta \rho}(q\theta_{j+1}^* - q^{-1}\theta_{j}^*) -   \frac{\beta q^{-m}}{\chi}(q\theta_{j}^* - q^{-1} \theta_{j+1}^*)  + (q+q^{-1})\right) a_{j j+1}^* \ ,\nonumber\\
 B_{j j}(U,m) &=&  \left( \left(\frac{\chi q^m}{\beta \rho} - \frac{\beta q^{-m}}{\chi}\right)(q-q^{-1})\theta_j^* + (q+q^{-1})\right)a_{j j}^*   - U (q+q^{-1}) \theta_j^*  \nonumber\\ 
&& \qquad \qquad \qquad  +  \frac{(\chi^2 q^{ m} - \beta^2\rho q^{- m})}{\beta\chi  (q-q^{-1})}\left(  U+ \frac{\omega}{\rho}   \right)  +\frac{\eta^*(q+q^{-1})}{\rho}\ ,\nonumber\\
 B_{j j-1} (m)&=&\left(  \frac{\chi q^m}{\beta \rho}(q\theta_{j-1}^* - q^{-1}\theta_{j}^*) -   \frac{\beta q^{-m}}{\chi}(q\theta_{j}^* - q^{-1}\theta_{j-1}^*)  + (q+q^{-1})\right) a_{j j-1}^* \ ,\nonumber
\eeqa
and we have denoted $U=(qu^2+q^{-1}u^{-2})/(q+q^{-1})$.
From the Askey-Wilson relations, one gets:
\beqa
a^*_{jj} = -  \frac{\omega \theta_j^* + \eta^*}{\rho + (q-q^{-1})^2(\theta_j^*)^2}\ .
\eeqa
For $j=2s$ it follows:
\beqa
\bar\pi(\mathscr{B}^{+}(u,m)) |\theta^*_{2s}\rangle &=& \frac{\beta (u^2-u^{-2})}{(\alpha q^{2m+2}-\beta)u}\left(     B_{2s 2s}(U,m) |\theta^*_{2s}\rangle +  B_{2s 2s-1}(m) |\theta^*_{2s-1}\rangle\right)\ . \nonumber
\eeqa
In particular, for $m=m_0+4s$ and $\beta$ such that $(q^2-q^{-2})\chi^{-1}\beta b^* q^{-m_0}=1$  one finds:
\beqa
B_{2s 2s}(U,m_0+4s) =0\ ,\qquad B_{2s 2s-1}(m_0+4s) =0 \ 
\eeqa
 which implies the first relation in (\ref{nulsp}).
\end{proof}
\subsection{Diagonalization of the `special' case $\kappa^*=\kappa_\pm=0$ or  $\kappa=\kappa_\pm=0$}
We start by considering two simple specializations of the Heun-Askey-Wilson operator (\ref{I}), namely:
\beqa
{\textsf I}(\kappa,0,0,0)=  \kappa\,\tA \qquad \mbox{or} \qquad {\textsf I}(0,\kappa^*,0,0)=  \kappa^*\,\tA^*\ .\label{Ispec}
\eeqa
In these cases, the spectral problem for the specialization of the Heun-Askey-Wilson operator acting on an irreducible finite dimensional vector space reduces to the spectral problem for either $\bar\pi(\textsf{A})$ or   $\bar\pi(\textsf{A}^*)$. Below, the corresponding eigenstates are written as Bethe states, and the eigenvalues are derived. As a byproduct, the two families of Bethe states provide two explicit bases for the Leonard pair $\bar\pi(\textsf{A})$, $\bar\pi(\textsf{A}^*)$, see subsection \ref{eigLP} .
 \vspace{1mm}

In the framework of the algebraic Bethe ansatz, we first need to write the two elements   $\textsf{A}$ and   $\textsf{A}^*$  in terms of the dynamical operators (\ref{Ae1})-(\ref{De1}) according to the reference state on which these operators act, either $|\Omega^-\rangle$ or   $|\Omega^+\rangle$.   From (\ref{monoA}),  (\ref{monoD}) and using (\ref{Ae1}), (\ref{De1}), according to the choice of gauge transformation one gets for instance:
\beqa
\label{defAepm}\textsf{A}&=&{\mathbb{A}}^-(u,m)  +\frac{\left(q\,u\,\bar\eta(u)+q^{-1}u^{-1}\bar\eta(u^{-1})\right)}{(u^2-u^{-2})(q^2u^2-q^{-2}u^{-2})}\ ,\\
\label{defAsepp}\textsf{A}^*&=& {\mathbb{A}}^+(u,m)  +\frac{\left(q\,u\,\bar\eta(u^{-1})+q^{-1}u^{-1}\bar\eta(u)\right)}{(u^2-u^{-2})(q^2u^2-q^{-2}u^{-2})} \, \quad \mbox{with}\quad 
\bar\eta(u)=(q+q^{-1})\rho^{-1}\left(\eta u+\eta^*u^{-1}\right)\, ,
\eeqa
where, for further convenience, we have introduced:
\beqa
&&\label{Aepm}{\mathbb{A}}^-(u,m)= \frac{u^{-1}}{(u^2-u^{-2})}\left(
\frac{1}{(qu^2-q^{-1}u^{-2})}\mathscr{A}^{-}(u,m) +
\frac{1}{(q^2u^2-q^{-2}u^{-2})}\mathscr{D}^{-}(u,m)\right)\,,\\
&&\label{Asepp}
{\mathbb{A}}^+(u,m)=  \frac{u}{(u^2-u^{-2})}\left(
\frac{1}{(qu^2-q^{-1}u^{-2})}\mathscr{A}^{+}(u,m) +
\frac{1}{(q^2u^2-q^{-2}u^{-2})}\mathscr{D}^{+}(u,m)\right)\,.
\eeqa

The spectral problem is now solved. Choose the gauge parameter $\alpha$ according to Lemma \ref{lem:g1} or  Lemma \ref{lem:g2}. For convenience, the parametrization (\ref{par})-(\ref{sc4})  is used. Recall the notations (\ref{SB}), (\ref{SB2}).
\begin{prop}\label{p31} Define
\ben\label{PsitildeA}
|\Psi_{sp,-}^M(\bar u,m_0)\rangle = \bar\pi(B^{-}(\bar u,m_0,M))|\Omega^{-}\rangle\,.\label{bas1}
\een
One has:
\beqa
 \bar\pi\left({\textsf I}(\kappa,0,0,0)\right) |\Psi_{sp,-}^M(\bar u,m_0)\rangle =  \frac{\kappa}{2} q^{\frac{1}{2} ({\nu +\nu'})}\left(e^{-\mu }q^{-2s+2 M}+e^{\mu } q^{2 s-2M}\right)  |\Psi_{sp,-}^M(\bar u,m_0)\rangle \  \label{specA}
\eeqa
where the set $\bar u$ satisfies the Bethe equations:
\beqa
\prod_{j=1,j\neq i}^M\left(\frac{b(u_i/(qu_j))b(u_iu_j)}{b(qu_i/u_j)b(q^2u_iu_j)}\right)=
\frac{\left(q e^{\mu'} u_i+q^{-1}e^{\mu} u_i^{-1}\right) \left(q e^{-\mu }  u_i+q^{-1} e^{\mu'}  u_i^{-1}\right)
b\left(q^{\frac{1}{2}-s} v u_i \right) b\left( q^{\frac{1}{2}-s} v^{-1}u_i\right)}
{\left(e^{\mu'} u_i+e^{-\mu } u_i^{-1}\right) \left(e^{\mu } u_i+e^{\mu'} u_i^{-1}\right)
b\left(q^{s+\frac{1}{2}} v u_i \right) b\left( q^{s+\frac{1}{2}} v^{-1}u_i\right)}\ \nonumber
\eeqa
for $i=1,\dots,M$.
\end{prop}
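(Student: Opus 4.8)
The plan is to run the algebraic Bethe ansatz using the identity (\ref{defAepm}), which writes $\tA$ as $\mathbb{A}^-(u,m)$ plus a scalar function of $u$, valid for \emph{any} $u$ and $m$. Since $\bar\pi(\tA)$ is a fixed operator, I organize the whole calculation around the single operator $\mathbb{A}^-(u,m_0+2M)$ of (\ref{Aepm}), namely $\frac{u^{-1}}{u^2-u^{-2}}\big(\frac{1}{qu^2-q^{-1}u^{-2}}\mathscr{A}^-(u,m_0+2M)+\frac{1}{q^2u^2-q^{-2}u^{-2}}\mathscr{D}^-(u,m_0+2M)\big)$. First I would act with this operator on the Bethe state $|\Psi^M_{sp,-}(\bar u,m_0)\rangle=\bar\pi(B^-(\bar u,m_0,M))|\Omega^-\rangle$, choosing the gauge parameter $\alpha$ as in Lemma \ref{lem:g2} so that $|\Omega^-\rangle=|\theta_0\rangle$ is annihilated by $\bar\pi(\mathscr{C}^-(u,m_0))$. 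Commuting $\mathscr{A}^-(u,m_0+2M)$ and $\mathscr{D}^-(u,m_0+2M)$ through the string $B^-(\bar u,m_0,M)$ is exactly the content of the relations (\ref{AonSB}), (\ref{DonSB}), and the resulting operators $\mathscr{A}^-(\cdot,m_0)$, $\mathscr{D}^-(\cdot,m_0)$ act diagonally on $|\Omega^-\rangle$ with eigenvalues $\Lambda_1^-$, $\Lambda_2^-$ by Lemma \ref{lem:diagonalaction}.

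This produces the standard split into ``wanted'' and ``unwanted'' terms. The wanted terms are those proportional to $B^-(\bar u,m_0,M)|\Omega^-\rangle$, coming from the first lines of (\ref{AonSB}), (\ref{DonSB}); collecting them together with the scalar part of (\ref{defAepm}) yields $\bar\pi(\tA)|\Psi^M_{sp,-}\rangle=\Lambda(u)|\Psi^M_{sp,-}\rangle+(\text{unwanted})$, with $\Lambda(u)$ a definite rational combination of $\prod_i f(u,u_i)\Lambda_1^-(u)$, $\prod_i h(u,u_i)\Lambda_2^-(u)$ and $\bar\eta$. Because the left-hand side $\bar\pi(\tA)|\Psi^M_{sp,-}\rangle$ is manifestly independent of the auxiliary variable $u$, once the unwanted terms are shown to vanish the factor $\Lambda(u)$ is forced to be a constant; its value is then read off by evaluating the factorized expressions (\ref{Lap}) for $\Lambda_1^-,\Lambda_2^-$ at $\epsilon=-1$ in the parametrization (\ref{par})--(\ref{sc4}). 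I expect this constant to collapse to the eigenvalue $\theta_M=bq^{2M}+cq^{-2M}$ of $\bar\pi(\tA)$ predicted by (\ref{st}), which with (\ref{par}) is precisely $\tfrac12 q^{(\nu+\nu')/2}(e^{-\mu}q^{2M-2s}+e^{\mu}q^{2s-2M})$; multiplying by $\kappa$ gives (\ref{specA}). As a consistency check, the spectrum of $\bar\pi(\tA)$ being multiplicity-free, the on-shell state must be proportional to the eigenvector $|\theta_M\rangle$, and $|\Omega^-\rangle=|\theta_0\rangle$ sits at level $M=0$.

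The unwanted terms are those proportional to the replaced strings $B^-(\{u,\bar u_i\},m_0,M)|\Omega^-\rangle$; they receive contributions from the $g,w$ coefficients in (\ref{AonSB}) and the $k,n$ coefficients in (\ref{DonSB}). Requiring the total coefficient of each such string to vanish gives, for every $i=1,\dots,M$, one equation relating $\Lambda_1^-(u_i)/\Lambda_2^-(u_i)$ to $\prod_{j\neq i} f(u_i,u_j)/h(u_i,u_j)$ and to the $u$-independent ratio built from $g,w,k,n$ as $u\to u_i$. The key structural input is that these coefficients (Appendix \ref{Sec:coefcommut}) factorize through $b(x)=x-x^{-1}$, so that $\prod_{j\neq i} f(u_i,u_j)/h(u_i,u_j)$ reproduces the left-hand side $\prod_{j\neq i} b(u_i/(qu_j))b(u_iu_j)/\big(b(qu_i/u_j)b(q^2u_iu_j)\big)$ of the stated Bethe equations. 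Inserting the explicit factorized forms of $\Lambda_1^-,\Lambda_2^-$ then turns the right-hand side into the advertised product of the four linear-in-$u_i^{\pm1}$ factors and the four $b(\cdot)$ factors.

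The main obstacle is this last step: the explicit simplification of the unwanted-term cancellation condition into the clean factorized Bethe equation. It requires (i) checking that the $u$-dependence in the ratio of commutation coefficients cancels, so that the condition is genuinely an equation on $\bar u$ alone, and (ii) matching, factor by factor, the $q$-shifted linear terms of $\Lambda_1^-(u_i)/\Lambda_2^-(u_i)$ with the numerator and denominator on the right-hand side (for instance $(q^{2s+1}u_iv^{-1}-u_i^{-1}v)=q^{s+1/2}b(q^{s+1/2}v^{-1}u_i)$, and the exponential factors recombine as $e^{\mu'}(u_ie^{\mu-\mu'}+u_i^{-1})=e^{\mu}u_i+e^{\mu'}u_i^{-1}$). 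These are routine but lengthy manipulations; the algebra is where all the work lies, whereas the ansatz structure itself follows directly from (\ref{AonSB}), (\ref{DonSB}) and Lemmas \ref{lem:g2}, \ref{lem:diagonalaction}.
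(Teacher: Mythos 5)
Your proposal is correct and follows essentially the same route as the paper's proof: commute $\mathscr{A}^-(u,m_0+2M)$ and $\mathscr{D}^-(u,m_0+2M)$ through the string via (\ref{AonSB}), (\ref{DonSB}), use Lemmas \ref{lem:g2} and \ref{lem:diagonalaction} on $|\Omega^-\rangle$, impose vanishing of the unwanted terms to get the Bethe equations, and extract the constant eigenvalue from the factorized forms (\ref{Lap}). The only cosmetic difference is how constancy of $\Lambda(u)$ is justified: you invoke the $u$-independence of $\bar\pi(\tA)$ acting on the state, while the paper verifies directly that the meromorphic function (\ref{eig}) has vanishing residues at $u=\pm u_j,\pm q^{-1}u_j^{-1}$ once the Bethe equations hold; both arguments are equivalent here.
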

\begin{proof} Recall the notation (\ref{b}).
Using the (off-shell) relations (\ref{AonSB}), (\ref{DonSB}) and the actions (\ref{actionADvac}), one finds:
\ben
\bar\pi({\mathbb{A}}^-(u,m_0+2M))|\Psi_{sp,-}^M(\bar u,m_0)\rangle&=&{\Lambda}_{-}^M(u,\bar u)|\Psi_{sp,-}^M(\bar u,m_0)\rangle \label{Amm}\\  & - &\frac{u^{-1}b(q)}{b(u^2)} \sum_{i=1}^M \frac{E_{sp}^M(u_i,\bar u_i)}{b(uu_i^{-1})b(quu_i)} |\Psi_{sp,-}^M(\{u,\bar u_i\},m_0)\rangle \nonumber
\een
where we denote $ |\Psi_{sp,-}^M(\{u,\bar u_i\},m_0)\rangle = \bar\pi(B^{\epsilon}(\{u,\bar u_i\},m_0,M))|\Omega^{-}\rangle\,$,
\ben
\qquad \quad &&{\Lambda}_{-}^M(u,\bar u)=  \frac{u^{-1}}{(u^2-u^{-2})}\left(
\frac{1}{(qu^2-q^{-1}u^{-2})}
\prod_{j=1}^Mf(u,u_j)\Lambda_1^-(u)+
  \frac{1}{(q^2u^2-q^{-2}u^{-2})}\prod_{j=1}^Mh(u,u_j)\Lambda_2^-(u)\right)\label{LsM}
\een
with (\ref{Lap}) and
\ben
{E}_{sp}^M(u_i,\bar u_i)=-\frac{b(u_i^2)}{b(qu_i^2)}\prod_{j=1,j\neq i}^Mf(u_i,u_j)\Lambda_1^-(u_i)+\prod_{j=1,j\neq i}^Mh(u_i,u_j)\Lambda_2^-(u_i)\,\label{Esp}
\een
 for $i=1,\dots,M$.
Requiring ${E}_{sp}^M(u_i,\bar u_i)=0$, one gets the so-called Bethe equations given below (\ref{specA}) using (\ref{Lap}). To determine the eigenvalues in the r.h.s of (\ref{specA}), we proceed as follows. From (\ref{LsM}), observe that 
\beqa
\Lambda_{-}^M(u,\bar u)  +\frac{\left(q\,u\,\bar\eta(u)+q^{-1}u^{-1}\bar\eta(u^{-1})\right)}{(u^2-u^{-2})(q^2u^2-q^{-2}u^{-2})}  \label{eig}
\eeqa
is a meromorphic function in the variable $u$. To be equal to a constant (i.e. independent of $u$), we need to study the singular part of this function.
Using (\ref{Lap}) and the expressions in Appendix \ref{Sec:coefcommut}, one finds that the singular points are located at:
\beqa
u\in\{
\pm u_j ,\ \pm q^{-1}u_j^{-1},\ j=1,...,M\}\ .   
\eeqa 
The sum of all the residues at these points vanishes.  Extracting the constant part of (\ref{eig}), one obtains (\ref{specA}).
\end{proof}

Similarly, the eigenstates of the Heun-Askey-Wilson operator for the special case $\kappa=\kappa_+=\kappa_-=0$  are derived. The proof being analog to the previous case, we skip the details.
\begin{prop}\label{p33} Define 
\beqa\label{PsitildeA'}
|\Psi_{sp,+}^M(\bar u,m_0)\rangle = \bar\pi(B^{+}(\bar u,m_0,M))|\Omega^{+}\rangle\,.\label{bas2}
\eeqa
One has:
\beqa
 \bar\pi\left({\textsf I}(0,\kappa^*,0,0) \right)|\Psi_{sp,+}^M(\bar u,m_0)\rangle = 
\frac{\kappa^*}{2} q^{\frac{1}{2} ({ \nu +\nu'})}\left(e^{-\mu' }q^{2s-2 M}+e^{\mu' } q^{-2 s+2M}\right) |\Psi_{sp,+}^M(\bar u,m_0)\rangle \  \label{spm}
\eeqa
where the set $\bar u$ satisfies the Bethe equations:
\beqa
\prod_{j=1,j\neq i}^M\left(\frac{b(u_i/(qu_j))b(u_iu_j)}{b(qu_i/u_j)b(q^2u_iu_j)}\right)=
\frac{\left(q e^{-\mu} u_i+q^{-1}e^{-\mu'} u_i^{-1}\right) \left(q e^{\mu' }  u_i+q^{-1} e^{-\mu}  u_i^{-1}\right)
b\left(q^{\frac{1}{2}-s} v u_i \right) b\left( q^{\frac{1}{2}-s} v^{-1}u_i\right)}
{\left(e^{-\mu} u_i+e^{\mu' } u_i^{-1}\right) \left(e^{-\mu' } u_i+e^{-\mu} u_i^{-1}\right)
b\left(q^{s+\frac{1}{2}} v u_i \right) b\left( q^{s+\frac{1}{2}} v^{-1}u_i\right)}\nonumber\ \,
\eeqa
 for $i=1,\dots,M$.
\end{prop}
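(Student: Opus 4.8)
The plan is to reproduce the proof of Proposition \ref{p31} step for step, with the $\epsilon=-$ data replaced everywhere by the $\epsilon=+$ data. First I would fix the gauge parameter $\alpha$ by Lemma \ref{lem:g1}, so that $|\Omega^+\rangle\equiv|\theta^*_0\rangle$ is annihilated by $\bar\pi(\mathscr{C}^+(u,m_0))$ and the diagonal dynamical operators act on it as scalars $\Lambda_1^+(u),\Lambda_2^+(u)$ by Lemma \ref{lem:diagonalaction}. Since ${\textsf I}(0,\kappa^*,0,0)=\kappa^*\tA^*$, I would use the decomposition (\ref{defAsepp}), (\ref{Asepp}), namely $\bar\pi(\tA^*)=\bar\pi({\mathbb{A}}^+(u,m))+\phi(u)$ with $\phi(u)$ the scalar term in (\ref{defAsepp}), so that the whole problem reduces to evaluating $\bar\pi({\mathbb{A}}^+(u,m_0+2M))$ on the Bethe state $|\Psi_{sp,+}^M(\bar u,m_0)\rangle$.

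For this evaluation I would apply the off-shell relations (\ref{AonSB}) and (\ref{DonSB}) with $\epsilon=+$ to push $\mathscr{A}^+(u,m_0+2M)$ and $\mathscr{D}^+(u,m_0+2M)$ through the string $B^+(\bar u,m_0,M)$, and then act on $|\Omega^+\rangle$ using (\ref{actionADvac}). This gives the exact analog of (\ref{Amm}): a diagonal term $\Lambda_+^M(u,\bar u)\,|\Psi_{sp,+}^M(\bar u,m_0)\rangle$, where $\Lambda_+^M$ is obtained from (\ref{LsM}) by the substitutions $\Lambda_{1,2}^-\to\Lambda_{1,2}^+$ and the overall factor $u^{-1}\to u$ dictated by (\ref{Asepp}), plus a sum of `unwanted' contributions proportional to $|\Psi_{sp,+}^M(\{u,\bar u_i\},m_0)\rangle$ whose coefficients are the $+$-data version of (\ref{Esp}). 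Requiring these coefficients to vanish produces the Bethe equations; inserting the explicit $\Lambda_{1,2}^+$ from (\ref{Lap}) together with the functions $f,h$ of Appendix \ref{Sec:coefcommut} and simplifying should yield precisely the system stated below (\ref{spm}).

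Finally, to read off the eigenvalue I would note, exactly as before, that on-shell the function $\Lambda_+^M(u,\bar u)+\phi(u)$ is meromorphic in $u$ and, being the eigenvalue of the $u$-independent operator $\bar\pi(\tA^*)$, must be constant; I would check that the residues at the singular points $u\in\{\pm u_j,\ \pm q^{-1}u_j^{-1}\}$ cancel by virtue of the Bethe equations and extract the constant part. I expect this constant to be $\theta^*_M=b^*q^{2M}+c^*q^{-2M}$, which by the parametrization (\ref{par}) equals $\tfrac12 q^{(\nu+\nu')/2}(e^{\mu'}q^{-2s+2M}+e^{-\mu'}q^{2s-2M})$, so that multiplying by $\kappa^*$ reproduces (\ref{spm}). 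A useful consistency check is that the whole construction is symmetric under exchanging $\tA\leftrightarrow\tA^*$ (hence $\epsilon=-\leftrightarrow\epsilon=+$, $b\leftrightarrow b^*$, $c\leftrightarrow c^*$, $\eta\leftrightarrow\eta^*$), which at the level of the parametrization (\ref{par}), (\ref{sc1})--(\ref{sc4}) is implemented by $(\mu,\mu')\mapsto(-\mu',-\mu)$; under this map Proposition \ref{p31} is carried onto Proposition \ref{p33}, both in the eigenvalue and in the right-hand side of the Bethe equations. The main obstacle, as in the previous proof, is purely computational: verifying that the $+$-data versions of the coefficient functions collapse to the clean Bethe equations and to the single eigenvalue $\theta^*_M$, a lengthy but routine manipulation of the explicit expressions in Appendix \ref{Sec:coefcommut} and (\ref{Lap}).
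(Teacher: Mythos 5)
Your proposal is correct and coincides with the paper's own treatment: the paper explicitly states that the proof of Proposition \ref{p33} is analogous to that of Proposition \ref{p31} and omits the details, and your step-by-step transcription (gauge fixed by Lemma \ref{lem:g1}, decomposition (\ref{defAsepp})--(\ref{Asepp}), off-shell relations (\ref{AonSB})--(\ref{DonSB}) with $\epsilon=+$, vanishing of the unwanted terms, and extraction of the constant part $\theta^*_M$ of the resulting meromorphic function) is exactly the intended argument. Your consistency check via the exchange $(\mu,\mu')\mapsto(-\mu',-\mu)$ correctly maps the eigenvalue and Bethe equations of Proposition \ref{p31} onto those of Proposition \ref{p33}.
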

Let us point out that the spectra of the two specializations (\ref{Ispec}) do not depend on the Bethe roots. Furthermore, as expected the structure of the eigenvalues  matches with (\ref{st}) in agreement with \cite[Theorem 4.4 (case I)]{Ter03}.

\vspace{2mm}

\subsection{Diagonalization of the `diagonal' case  $\kappa\neq 0,\ \kappa^*\neq 0$ and $\kappa_\pm=0$}
This case is associated with the diagonal form of the $K^+(u)$ matrix (\ref{KP}).  The Heun-Askey-Wilson operator that we will diagonalize below using the algebraic Bethe ansatz is given by:
\beqa
{\textsf I}(\kappa,\kappa^*,0,0)=  \kappa\,\tA + \kappa^*\,\tA^*\ .\label{Idiag}
\eeqa
Similarly to the special case discussed in the previous subsection, the Heun-Askey-Wilson operator associated with (\ref{Idiag}) is first expressed in terms of the dynamical operators (\ref{Ae1})-(\ref{De1}).  If we choose the reference state to be $|\Omega^-\rangle$, according to the corresponding gauge transformation the element $\tA$ is given by (\ref{defAepm}) whereas  $\tA^*$ is given by:
\beqa\label{defAsepm}
\tA^*=\tilde{\mathbb{A}}^-(u,m) +\frac{\left(q\,u\,\bar\eta(u^{-1})+q^{-1}u^{-1}\bar\eta(u)\right)}{(u^2-u^{-2})(q^2u^2-q^{-2}u^{-2})} 
\eeqa
with
\beqa
\label{Asepm}\tilde{\mathbb{A}}^-(u,m)&=&\frac{u^{-1}}{(u^2-u^{-2})}     \left( \frac{\gamma^-\left(q^{-1} u^{-2},m  \right)}
{(qu^2-q^{-1}u^{-2}) \gamma^-(1,m+1)} 
\mathscr{A}^{-}(u,m) +
\frac{\gamma^- \left(q u^2,m\right)}{ (q^2u^2-q^{-2}u^{-2})  \gamma^-(1,m+1)}
\mathscr{D}^{-}(u,m)\right.
\nonumber\\
&&\qquad\qquad\qquad   \left.+
\frac{\alpha  q^{-m-1}}
{ \gamma^-(1,m)} \mathscr{B}^{-}(u,m)- \frac{\beta  q^{m-1}} {\gamma^- (1,m)} \mathscr{C}^{-}(u,m)\right)\ \nonumber\,
\eeqa
where the notation (\ref{gam}) is used.
On the other hand,  if the reference state is $|\Omega^+\rangle$,  the element $\tA^*$ is given by (\ref{defAsepp}) whereas  $\tA$ is now given by:
\beqa\label{defAepp}
\textsf{A}&=&\tilde{\mathbb{A}}^+(u,m)  +\frac{\left(q\,u\,\bar\eta(u)+q^{-1}u^{-1}\bar\eta(u^{-1})\right)}{(u^2-u^{-2})(q^2u^2-q^{-2}u^{-2})}\ ,\
\eeqa
where
\beqa
\label{Asepm}\tilde{\mathbb{A}}^+(u,m)&=&\frac{u}{(u^2-u^{-2})}     \left( \frac{\gamma^+\left(q^{-1} u^{-2},m  \right)}
{(qu^2-q^{-1}u^{-2}) \gamma^+(1,m+1)} 
\mathscr{A}^{+}(u,m) +
\frac{\gamma^+ \left(q u^2,m\right)}{ (q^2u^2-q^{-2}u^{-2})  \gamma^+(1,m+1)}
\mathscr{D}^{+}(u,m)\right.
\nonumber\\
&&\qquad\qquad\qquad   \left.+
\frac{\alpha  q^{m+1}}
{ \gamma^+(1,m)} \mathscr{B}^{+}(u,m)- \frac{\beta  q^{1-m}} {\gamma^+ (1,m)} \mathscr{C}^{+}(u,m)\right)\ .\nonumber\,
\eeqa
According to the gauge transformation chosen, combining the expressions for $\tA,\tA^*$ it follows:
\ben\label{diagonalHAW}
&&  \kappa\,\tA + \kappa^*\,\tA^*=
\frac{u^{\epsilon } \left(\alpha  u^{\epsilon } \left(\kappa  u+{\kappa^*} u^{-1}\right)-\beta   q^{-(2 m+2) \epsilon }u^{-\epsilon
   } \left({\kappa^*} u+\kappa  u^{-1}\right)\right)}
{\left(u^2-u^{-2}\right) \left(q u^2-q^{-1} u^{-2}\right) \left(\alpha -\beta  q^{-(2 m+2) \epsilon }\right)}\mathscr{A}^{\epsilon}(u,m)
\\&&\quad\quad\quad\quad\quad\quad
+
\frac{q^{-\epsilon } u^{\epsilon } \left(\alpha  u^{-\epsilon } \left(q{\kappa^*} 
   u+q^{-1}\kappa   u^{-1}\right)-\beta   q^{-2 m \epsilon }u^{\epsilon } \left(q \kappa  
   u+q^{-1}{\kappa^*}  u^{-1}\right)\right)}
{\left(u^2-u^{-2}\right) \left(q^2 u^2-q^{-2} u^{-2}\right) \left(\alpha -\beta  q^{-(2 m+2) \epsilon }\right)}\mathscr{D}^{\epsilon}(u,m)
\nonumber\\&&\quad\quad\quad\quad\quad\quad
-
\frac{\epsilon\,\alpha  q^{\epsilon }  \kappa ^{\frac{\epsilon +1}{2}} {\kappa^*}^{\frac{1-\epsilon }{2}}u^{\epsilon }}
{\left(u^2-u^{-2}\right) \left(\alpha -\beta  q^{-2 m \epsilon }\right)}
\mathscr{B}^{\epsilon}(u,m)+
\frac{\epsilon\,\beta     \kappa ^{\frac{\epsilon +1}{2}} {\kappa^*}^{\frac{1-\epsilon }{2}}
   q^{-(2 m-1) \epsilon }u^{\epsilon }}
{\left(u^2-u^{-2}\right) \left(\alpha -\beta  q^{-2 m \epsilon }\right)}\mathscr{C}^{\epsilon}(u,m)\, \nonumber\\
&&\quad\quad\quad\quad\quad\quad + \frac{(q+q^{-1})^2}{\rho  \left(u^2-u^{-2}\right) \left(q^2 u^2-q^{-2} u^{-2}\right)}     
\left(    \eta \kappa^* + \eta^*\kappa  +  ( \eta \kappa + \eta^*\kappa^*)\left(\frac{qu^2+q^{-1}u^{-2}}{q+q^{-1}} \right)\right)\ .\nonumber
\een
Recall that (\ref{tm}) holds for any choice of gauge transformation, i.e. any choice of parameters $\epsilon,\alpha$ and $\beta$. Having fixed the gauge parameter $\alpha$ according to the choice of reference state $|\Omega^+\rangle$ or $|\Omega^-\rangle$, to simplify the analysis of the spectral problem for (\ref{Idiag}), without loss of generality we choose to fix the gauge parameter $\beta$  in order to eliminate the term $\mathscr{C}^{\epsilon}(u,m)$ in (\ref{diagonalHAW}). To this end, we set:
\ben
\beta=0\,.\nonumber
\een

A crucial ingredient for the solution of the spectral problem of (\ref{Idiag}) is the following lemma\footnote{Similar relations have been obtained  in  \cite{Cra14,MABAq1}.}. For simplicity, the proof is reported Appendix \ref{apD}.   Recall the notation (\ref{SB}), (\ref{SB2}). For a generic set of parameters $\bar u=\{u_1,u_2,...,u_{2s}\}$,  define the Bethe vector:
\beqa
|\Psi_{{d},\epsilon}^{2s}(\bar u,m_0)\rangle &=& \bar\pi(B^\epsilon(\bar u,m_0,2s))|\Omega^\epsilon\rangle\, ,\nonumber\\
 |\Psi_{{d},\epsilon}^{2s}(\{u,\bar u_i\},m_0)\rangle &=& \bar\pi(B^\epsilon(\{u,\bar u_i\},m_0,2s))|\Omega^\epsilon\rangle\, . \nonumber
\eeqa
\begin{lem}\label{prop:diagonaloffshell}
For $M=2s$  and generic $\{u,u_i\}$, one has:
\ben
&&\bar\pi(\mathscr{B}^{\epsilon}(u,m_0+4s))|\Psi_{{d},\epsilon}^{2s}(\bar u,m_0)\rangle=
\label{eq:conj1}\\
&&\quad\quad
\delta_d\frac{u^{-\epsilon}b(u^2)\prod_{k=0}^{2s}b(q^{1/2+k-s}vu)b(q^{1/2+k-s}v^{-1}u)}{\prod_{i=1}^{2s}b(uu_i^{-1})b(q^{-1}u^{-1}u_i^{-1})}|\Psi_{{d},\epsilon}^{2s}(\bar u,m_0)\rangle
\nonumber\\
&&\quad\quad
-\delta_d
\sum_{i=1}^{2s}
\frac{
u_i^{-\epsilon}b(u_i^2)\prod_{k=0}^{2s}b(q^{1/2+k-s}vu_i)b(q^{1/2+k-s}v^{-1}u_i)}
{b(uu_i^{-1}) b(q^{-1} u^{-1} u_i^{-1})\prod_{j=1,j\neq i}^{2s}b(u_iu_j^{-1})b(q^{-1}u_i^{-1}u_j^{-1})}
|\Psi_{{d},\epsilon}^{2s}(\{u,\bar u_i\},m_0)\rangle \nonumber
\een
where we denote
\ben
\delta_d =- \frac{\epsilon(-1)^{2s+1}}{2}e^{-\mu  (1-\epsilon )/2- \mu' (1+\epsilon)/2} q^{(\nu+\nu')/2-\epsilon (2 s+2) }\,.\label{deltad}
\een
\end{lem}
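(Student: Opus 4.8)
The plan is to exploit that in the diagonal case the gauge has been fixed with $\beta=0$, which collapses the creation operator to a polynomial in a single operator. First I would observe that setting $\beta=0$ in the covariant/contravariant vectors (\ref{coVec})--(\ref{conVec}) reduces $|Y^\epsilon(u,m)\rangle$ to $(0,1)^{t}$ and $\langle\tilde Y^\epsilon(u,m)|$ to a multiple of $(1,0)$, so that (\ref{odyn}) gives $\bar\pi(\mathscr{B}^\epsilon(u,m))=\sigma^\epsilon(u,m)\,\bar\pi(\cB(u))$ for an explicit scalar $\sigma^\epsilon(u,m)$ (a monomial in $u$ and $q^{m}$ coming from $\gamma^\epsilon$ in (\ref{gam}), with $\sigma^\epsilon\propto u^{-\epsilon}$). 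Using $[\tA^*,\tA]_q=\tfrac{\sqrt\rho}{i}\tC-\tfrac{\omega}{q-q^{-1}}$ from (\ref{Cop}), I then rewrite (\ref{monoB}) as $\cB(u)=\lambda(u)\big(\tC-\tilde g(u)\mathcal{I}\big)$ with $\lambda(u)\propto(u^2-u^{-2})=b(u^2)$ and $\tilde g(u)$ a scalar proportional to $qu^2+q^{-1}u^{-2}$. Hence \emph{each} factor of the string (\ref{SB}) is an affine function of the single operator $\bar\pi(\tC)=\bar\pi(\tA^\diamond)$, and up to an overall scalar depending on $\bar u$,
\[
|\Psi^{2s}_{d,\epsilon}(\bar u,m_0)\rangle = \Big(\textstyle\prod_{i}\sigma^\epsilon\lambda\Big)\,\bar\pi\!\Big(\textstyle\prod_{i=1}^{2s}(\tC-\tilde g(u_i))\Big)|\Omega^\epsilon\rangle .
\]
In particular the Bethe state is manifestly symmetric in $\bar u$ (re-deriving (\ref{comBdBd})) and lies in the cyclic subspace generated from $|\Omega^\epsilon\rangle$ by $\bar\pi(\tC)$.

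Second, I record two dictionary identities relating the $b$-products of the statement to spectral data of $\tC$. Using the parametrization (\ref{par}) for $b^\diamond,c^\diamond$ together with (\ref{sC}), a direct computation shows that $\prod_{M=0}^{2s}\big(\tilde g(u)-\theta^\diamond_M\big)$ equals, up to an explicit $u$-independent scalar, $\prod_{k=0}^{2s}b(q^{1/2+k-s}vu)\,b(q^{1/2+k-s}v^{-1}u)$; that is, the inhomogeneous product in (\ref{eq:conj1}) is the characteristic polynomial $\mathcal P(X)=\prod_{M}(X-\theta^\diamond_M)$ of $\bar\pi(\tC)$ evaluated at $\tilde g(u)$. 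Likewise $b(uu_i^{-1})\,b(q^{-1}u^{-1}u_i^{-1})$ is proportional to $\tilde g(u)-\tilde g(u_i)$. Finally, since $(\tA^*,\tC)$ (for $\epsilon=+$) and $(\tA,\tC)$ (for $\epsilon=-$) are Leonard pairs, the reference vector $|\Omega^\epsilon\rangle$ (an extremal eigenvector of $\tA^*$, resp.\ $\tA$) is cyclic for $\bar\pi(\tC)$, so $P\mapsto\bar\pi(P(\tC))|\Omega^\epsilon\rangle$ is a linear isomorphism from polynomials of degree $\le 2s$ onto $\bcV$.

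Third, I carry out the reduction. Applying one more factor, $\bar\pi(\mathscr{B}^\epsilon(u,m_0+4s))=\sigma^\epsilon(u,m_0+4s)\lambda(u)\big(\bar\pi(\tC)-\tilde g(u)\big)$, turns the left side of (\ref{eq:conj1}) into $\bar\pi(Q(\tC))|\Omega^\epsilon\rangle$ with the degree-$(2s+1)$ polynomial $Q(X)=\prod_{i=0}^{2s}(X-\tilde g(u_i))$, $u_0:=u$. By the Cayley--Hamilton relation (\ref{polyc}) one has $\bar\pi(\mathcal P(\tC))=0$; dividing, $Q=\mathcal P+R$ with $\deg R\le 2s$, so $\bar\pi(Q(\tC))=\bar\pi(R(\tC))$. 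I then expand $R$ in the $2s+1$ polynomials $P_{\bar u}(X)=\prod_{i=1}^{2s}(X-\tilde g(u_i))$ and $P_{\{u,\bar u_i\}}(X)=(X-\tilde g(u))\prod_{j\neq i}(X-\tilde g(u_j))$, which are precisely the Lagrange basis for the nodes $\{\tilde g(u),\tilde g(u_1),\dots,\tilde g(u_{2s})\}$. Evaluating $R=Q-\mathcal P$ at these nodes and using $Q(\tilde g(u))=Q(\tilde g(u_i))=0$ gives $R(\tilde g(u))=-\mathcal P(\tilde g(u))$ and $R(\tilde g(u_i))=-\mathcal P(\tilde g(u_i))$, hence
\[
c_0=\frac{-\mathcal P(\tilde g(u))}{\prod_{k}(\tilde g(u)-\tilde g(u_k))},\qquad c_i=\frac{-\mathcal P(\tilde g(u_i))}{(\tilde g(u_i)-\tilde g(u))\prod_{j\neq i}(\tilde g(u_i)-\tilde g(u_j))} .
\]
Translating $\mathcal P(\tilde g(\cdot))$ and the differences through the dictionary of the previous step reproduces the numerators $\prod_k b(\cdots)b(\cdots)$ and the denominators $b(\cdot)b(\cdot)$ of (\ref{eq:conj1}), while the accumulated scalars $\sigma^\epsilon\lambda$ produce the factors $u^{-\epsilon}b(u^2)$, $u_i^{-\epsilon}b(u_i^2)$ and the overall constant $\delta_d$.

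The main obstacle is not conceptual but the careful bookkeeping of all scalar prefactors — the gauge monomials $\sigma^\epsilon(u,m)$ through $\gamma^\epsilon$ in (\ref{gam}), the normalization $\lambda(u)$, and the deliberately asymmetric identification $b(uu_i^{-1})b(q^{-1}u^{-1}u_i^{-1})\propto \tilde g(u)-\tilde g(u_i)$ — so that the constants collapse exactly to $\delta_d$ in (\ref{deltad}); this is where the half-integer shifts $q^{1/2+k-s}$ and the sign $(-1)^{2s+1}$ are generated, and where the most care is needed. The case $\epsilon=-$ is obtained verbatim after exchanging $\tA\leftrightarrow\tA^*$, $b\leftrightarrow b^*$ and $\mu\leftrightarrow\mu'$, using the Leonard pair $(\tA,\tC)$ in place of $(\tA^*,\tC)$ and the structure constants (\ref{sc1})--(\ref{sc4}).
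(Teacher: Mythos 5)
Your proposal is correct and follows essentially the same route as the paper's proof in Appendix \ref{apD}: at $\beta=0$ the operator $\mathscr{B}^{\epsilon}(u,m)$ becomes an affine function of $\tC$ (the paper's $\tilde{\tA}^\diamond$), the Bethe state is $\prod_i(\tC-\tilde g(u_i))|\Omega^\epsilon\rangle$ up to scalars, the $b$-products are identified with the characteristic polynomial of $\bar\pi(\tC)$ and with differences of the symmetrized variables, and Cayley--Hamilton (\ref{polyc}) finishes the argument. The only (cosmetic) difference is that you verify the final rational identity by Lagrange interpolation of $Q-\mathcal P$ at the nodes $\tilde g(u),\tilde g(u_1),\dots,\tilde g(u_{2s})$, whereas the paper checks that the corresponding combination has vanishing residues and then evaluates it at $U=\tilde{\tA}^\diamond$ --- two equivalent ways of proving the same partial-fraction identity (your phrasing is arguably the cleaner justification of that step).
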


We now turn to the solution of the spectral problem. 
\begin{prop}\label{p34}
 For $\epsilon=\pm 1$, one has:
\beqa
 \bar\pi\left({\textsf I}(\kappa,\kappa^*,0,0) \right)|\Psi_{{d},\epsilon}^{2s}(\bar u,m_0)\rangle= \Lambda_{d,\epsilon}^{2s}|\Psi_{{d},\epsilon}^{2s}(\bar u,m_0)\rangle \ \label{specdiag}
\eeqa
with
\beqa
\Lambda_{d,+}^{2s}&=& \kappa^* \theta^*_{2s} + \kappa  e^{\mu-\mu'}b \Big(  (v^2+v^{-2})[2s]_q + 2e^{\mu'}\cosh(\mu) -q\sum_{j=1}^{2s} (qu_j^2+q^{-1}u_j^{-2})\Big) \ ,\label{eigd1}\\
\Lambda_{d,-}^{2s}&=& \kappa \theta_{2s} + \kappa^* e^{\mu'-\mu}  c^*  \Big(   (v^2+v^{-2})[2s]_q + 2e^{\mu}\cosh(\mu') -q^{-1}\sum_{j=1}^{2s} (qu_j^2+q^{-1}u_j^{-2}) \Big) \ , \label{eigd2}
\eeqa
where the set $\bar u$ satisfies the (inhomogeneous) Bethe equations:
\beqa
&&\frac{b(u_i^2)}{b(qu_i^2)}(\kappa\, u_i+\kappa^* u_i^{-1})\prod_{j=1,j\neq i}^{2s}f(u_i,u_j)\Lambda_1^\epsilon(u_i)
-
q^{-\epsilon}u_i^{-2\epsilon}(q \kappa^* u_i+ q^{-1}\kappa u_i^{-1})\prod_{j=1,j\neq i}^{2s}h(u_i,u_j)\Lambda_2^\epsilon(u_i)
\label{BAEd}\\
&&
\  + \ \epsilon (q-q^{-1})^{-1} q^\epsilon \kappa^{(1+\epsilon)/2}{\kappa^*}^{(1-\epsilon)/2}\delta_d
\frac{u_i^{-2\epsilon}b(u_i^2)\prod_{k=0}^{2s}b(q^{1/2+k-s}vu_i)b(q^{1/2+k-s}v^{-1}u_i)}{\prod_{j=1,j\neq i}^{2s}b(u_iu_j^{-1}) b(qu_iu_j)}\, =0\ \nonumber
\eeqa
for $i=1,\dots,2s$.
\end{prop}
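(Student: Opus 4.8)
The plan is to turn the operator identity (\ref{diagonalHAW}) into a statement about the action on the Bethe vector. Since (\ref{tm}) holds for every value of the gauge label, I first specialize (\ref{diagonalHAW}) to $\beta=0$ (which removes the $\mathscr{C}^{\epsilon}$ contribution, as prescribed) and evaluate it at $m=m_0+4s$, so that the four dynamical operators are indexed exactly as needed to meet the string $B^\epsilon(\bar u,m_0,2s)$. This gives an expansion of the form
\begin{eqnarray}
\kappa\,\tA+\kappa^*\tA^* &=& a_\epsilon(u)\,\mathscr{A}^{\epsilon}(u,m_0+4s)+d_\epsilon(u)\,\mathscr{D}^{\epsilon}(u,m_0+4s)\nonumber\\
&& +\,b_\epsilon(u)\,\mathscr{B}^{\epsilon}(u,m_0+4s)+s_\epsilon(u)\,\mathcal{I}\,,\nonumber
\end{eqnarray}
where $a_\epsilon(u),d_\epsilon(u),b_\epsilon(u),s_\epsilon(u)$ are the explicit Laurent coefficients read off from (\ref{diagonalHAW}) at $\beta=0$; in particular $b_\epsilon(u)$ carries the factor $\kappa^{(\epsilon+1)/2}{\kappa^*}^{(1-\epsilon)/2}$ that will reappear in the inhomogeneous term. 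I then apply this identity to $|\Psi_{{d},\epsilon}^{2s}(\bar u,m_0)\rangle=\bar\pi(B^\epsilon(\bar u,m_0,2s))|\Omega^\epsilon\rangle$ and treat the four terms separately.

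For the two diagonal operators I would push $\mathscr{A}^{\epsilon}(u,m_0+4s)$ and $\mathscr{D}^{\epsilon}(u,m_0+4s)$ through the string using the off-shell relations (\ref{AonSB}) and (\ref{DonSB}), and then use the reference-state actions of Lemma \ref{lem:diagonalaction}, namely $\bar\pi(\mathscr{A}^{\epsilon}(u,m_0))|\Omega^\epsilon\rangle=\Lambda_1^\epsilon(u)|\Omega^\epsilon\rangle$ and $\bar\pi(\mathscr{D}^{\epsilon}(u,m_0))|\Omega^\epsilon\rangle=\Lambda_2^\epsilon(u)|\Omega^\epsilon\rangle$. Each diagonal term splits into a \emph{wanted} piece proportional to $|\Psi_{{d},\epsilon}^{2s}(\bar u,m_0)\rangle$ (the $\prod_j f(u,u_j)\Lambda_1^\epsilon(u)$ and $\prod_j h(u,u_j)\Lambda_2^\epsilon(u)$ factors) and a sum of \emph{unwanted} pieces proportional to $|\Psi_{{d},\epsilon}^{2s}(\{u,\bar u_i\},m_0)\rangle$, with coefficients built from $g,w$ (resp. $k,n$) and $\Lambda_1^\epsilon(u_i),\Lambda_2^\epsilon(u_i)$. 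For the off-diagonal operator $\mathscr{B}^{\epsilon}(u,m_0+4s)$ I would invoke the off-shell relation of Lemma \ref{prop:diagonaloffshell} directly; this is precisely the ingredient absent in the special case of Proposition \ref{p31}, and it contributes both an extra wanted term (the homogeneous $\delta_d$ factor with the product over $k$) and extra unwanted terms.

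Collecting the coefficient of each $|\Psi_{{d},\epsilon}^{2s}(\{u,\bar u_i\},m_0)\rangle$, I expect (as in the proof of Proposition \ref{p31}, where the unwanted coefficient factorizes as a $u$-dependent prefactor times the $u$-independent quantity $E_{sp}^M(u_i,\bar u_i)$) that the whole $u$-dependence factors out, leaving a $u$-independent expression whose vanishing is the Bethe equation. The three summands of (\ref{BAEd}) then correspond respectively to the $\mathscr{A}^{\epsilon}$ contribution (the $\Lambda_1^\epsilon(u_i)$ term, fed by the $g$-term of (\ref{AonSB}) and the $n$-term of (\ref{DonSB})), the $\mathscr{D}^{\epsilon}$ contribution (the $\Lambda_2^\epsilon(u_i)$ term, fed by $w$ and $k$), and the $\mathscr{B}^{\epsilon}$ contribution (the $\delta_d$ inhomogeneous term coming from Lemma \ref{prop:diagonaloffshell}). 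Verifying that these regroup into exactly the clean form (\ref{BAEd}) requires the explicit coefficients $f,g,h,k,n,w$ of Appendix \ref{Sec:coefcommut}, and this bookkeeping — together with pinning down the normalization $\delta_d$ of the inhomogeneous term — is the main obstacle.

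Finally, for the eigenvalue: once (\ref{BAEd}) annihilates all unwanted terms, the coefficient of $|\Psi_{{d},\epsilon}^{2s}(\bar u,m_0)\rangle$, namely
\begin{eqnarray}
&& a_\epsilon(u)\prod_{j} f(u,u_j)\Lambda_1^\epsilon(u)+d_\epsilon(u)\prod_{j} h(u,u_j)\Lambda_2^\epsilon(u)\nonumber\\
&&\quad +\,b_\epsilon(u)\,\delta_d\,\frac{u^{-\epsilon}b(u^2)\prod_{k=0}^{2s} b(q^{1/2+k-s}vu)\,b(q^{1/2+k-s}v^{-1}u)}{\prod_{i=1}^{2s} b(uu_i^{-1})\,b(q^{-1}u^{-1}u_i^{-1})}+s_\epsilon(u)\,,\nonumber
\end{eqnarray}
must equal the $u$-independent left-hand side. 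Mirroring the argument of Proposition \ref{p31}, I would check that this is meromorphic in $u$ with poles only at $u\in\{\pm u_j,\pm q^{-1}u_j^{-1}\}$, verify (using (\ref{Lap}) and the coefficients of Appendix \ref{Sec:coefcommut}, together with (\ref{BAEd})) that all residues cancel, and then read off the constant by evaluating in a convenient limit such as $u\to0$ or $u\to\infty$. Combined with the parametrization (\ref{par})--(\ref{sc4}) and the endpoint eigenvalues $\theta_{2s},\theta^*_{2s}$, this produces the closed forms (\ref{eigd1}), (\ref{eigd2}); the $q$-number $[2s]_q$ and the Bethe-root sum $\sum_j(qu_j^2+q^{-1}u_j^{-2})$ arise from the asymptotics of $\prod_j f$, $\prod_j h$ and of the product over $k$. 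Beyond the regrouping of the Bethe equations, the second delicate point is precisely this residue cancellation and limit extraction, where the homogeneous factors and the inhomogeneous $\delta_d$ term must conspire to leave a $u$-independent, but Bethe-root-dependent, constant.
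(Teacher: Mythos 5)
Your proposal follows essentially the same route as the paper: the same $\beta=0$ specialization of (\ref{diagonalHAW}) into $\mathscr{A}^{\epsilon},\mathscr{D}^{\epsilon},\mathscr{B}^{\epsilon}$ plus a scalar, the same use of (\ref{AonSB}), (\ref{DonSB}), Lemma \ref{lem:diagonalaction} and Lemma \ref{prop:diagonaloffshell}, the same identification of the Bethe equations from the unwanted-term coefficients, and the same residue-cancellation/constant-extraction argument for the eigenvalue. The steps you flag as "bookkeeping" are exactly the computations the paper carries out, so the approach is correct and matches.
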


\begin{proof} For convenience, define the element:
\ben
\mathbb{W}_{{d}}(u,m)=  \kappa\,\tA + \kappa^*\,\tA^* - \frac{(q+q^{-1})^2}{\rho  \left(u^2-u^{-2}\right) \left(q^2 u^2-q^{-2} u^{-2}\right)}     
\left(    \eta \kappa^* + \eta^*\kappa  +  ( \eta \kappa + \eta^*\kappa^*)\left(\frac{qu^2+q^{-1}u^{-2}}{q+q^{-1}} \right)\right)\ .\nonumber
\een
Explicitly, using (\ref{diagonalHAW}) for $\beta=0$ it gives:
\ben
&&\mathbb{W}_{{d}}(u,m)=
 \frac{ u^{ \epsilon } \Delta_d(u)}{\left(u^2-u^{-2}\right) \left(q u^2-q^{-1} u^{-2}\right)}\mathscr{A}^{\epsilon}(u,m)
+
 \frac{ u^\epsilon \Delta_d(q^{-1}u^{-1})}{\left(u^2-u^{-2}\right) \left(q^2 u^2-q^{-2} u^{-2}\right)}\mathscr{D}^{\epsilon}(u,m)
\nonumber\\&&\quad\quad\quad\quad\quad\quad
-
\frac{\epsilon \, q^{\epsilon }  \kappa ^{\frac{\epsilon +1}{2}} {\kappa^*}^{\frac{1-\epsilon
   }{2}}u^{\epsilon }}{u^2-u^{-2}}\mathscr{B}^{\epsilon}(u,m)\, ,\nonumber
\een
where the notation
\beqa
\Delta_d(u) =  u^{\epsilon } \left(\kappa  u+{\kappa^*} u^{-1}\right)\label{Deltad}
\eeqa
is introduced.
From the multiple actions (\ref{AonSB}), (\ref{DonSB}), Lemma \ref{lem:diagonalaction} and Lemma \ref{prop:diagonaloffshell}, it follows that the action of $\mathbb{W}_{{d}}(u,m_0+4s)$ on the Bethe vector $|\Psi_{{d},\epsilon}^{2s}(\bar u,m_0)\rangle$ is given by:
\ben
 &&\quad \mathbb{W}_{{d}}(u,m_0+4s)|\Psi_{{d},\epsilon}^{2s}(\bar u,m_0)\rangle=\Lambda_{d,\epsilon}^{2s}(u,\bar u)|\Psi_{{d},\epsilon}^{2s}(\bar u,m_0)\rangle
+
\sum_{i=1}^{2s}\frac{(q-q^{-1})u^\epsilon E_d(u_i,\bar u_i)}{(u^2-u^{-2})b(uu_i^{-1})b(quu_i)}|\Psi_{{d},\epsilon}^{2s}(\{u,\bar u_i\},m_0)\rangle \nonumber
\een
where
\ben
&&\quad \Lambda_{d,\epsilon}^{2s}(u,\bar u)=\frac{u^\epsilon \Delta_d(u)}{\left(u^2-u^{-2}\right) \left(q u^2-q^{-1} u^{-2}\right)}
\prod_{j=1}^{2s}f(u,u_j)\Lambda_1^\epsilon(u)
+\frac{ u^\epsilon \Delta_d(q^{-1}u^{-1})}{\left(u^2-u^{-2}\right) \left(q^2 u^2-q^{-2} u^{-2}\right)}
\prod_{j=1}^{2s}h(u,u_j)\Lambda_2^\epsilon(u)
\nonumber\\&&\quad\quad\quad\quad\quad
+ (-1)^{2s+1}\epsilon q^\epsilon \kappa^{(1+\epsilon)/2}{\kappa^*}^{(1-\epsilon)/2}\delta_d
\frac{\prod_{k=0}^{2s}b(q^{1/2+k-s}vu)b(q^{1/2+k-s}v^{-1}u)}{\prod_{i=1}^{2s}b(uu_i^{-1})b(quu_i)}\,\label{lambddiag}
\een
and
\ben
\qquad E_{{d}}(u_i,\bar u_i)&=& \frac{b(u_i^2)}{b(qu_i^2)} \Delta_d(u_i)\prod_{j=1,j\neq i}^{2s}f(u_i,u_j)\Lambda_1^\epsilon(u_i)-
\Delta_d(q^{-1}u_i^{-1}) \prod_{j=1,j\neq i}^{2s}h(u_i,u_j)\Lambda_2^\epsilon(u_i)
\nonumber\\
&&
+ \epsilon (q-q^{-1})^{-1} q^\epsilon \kappa^{(1+\epsilon)/2}{\kappa^*}^{(1-\epsilon)/2}\delta_d
\frac{ u_i^{-\epsilon}b(u_i^2)\prod_{k=0}^{2s}b(q^{1/2+k-s}vu_i)b(q^{1/2+k-s}v^{-1}u_i)}{\prod_{j=1,j\neq i}^{2s}b(u_iu_j^{-1})b(qu_iu_j)}\,.\label{Ed}
\een

Requiring ${E}_{{d}}(u_i,\bar u_i)=0$ for $i=1,\dots,2s$, one gets the Bethe equations (\ref{BAEd}). To determine the eigenvalues $\Lambda_{d,\pm}^{2s}$ in (\ref{specdiag}), observe that 
\beqa
\Lambda_{d,\epsilon}^{2s}(u,\bar u)  + \frac{(q+q^{-1})^2}{\rho  \left(u^2-u^{-2}\right) \left(q^2 u^2-q^{-2} u^{-2}\right)}     
\left(    \eta \kappa^* + \eta^*\kappa  +  ( \eta \kappa + \eta^*\kappa^*)\left(\frac{qu^2+q^{-1}u^{-2}}{q+q^{-1}} \right)\right)\   \label{eigdiag}
\eeqa
is a meromorphic function in the variable $u$. To be equal to a constant, we study its singular part.
Using (\ref{Lap}) and the expressions in Appendix \ref{Sec:coefcommut}, one finds that the singular points are located at:%
\beqa
u\in\{
\pm u_j ,\ \pm q^{-1}u_j^{-1},\ j=1,...,M\}\ .   
\eeqa 
The sum of all the residues at these points vanishes.  Extracting the constant part of (\ref{eigdiag}), one obtains (\ref{eigd1}), (\ref{eigd2}).
\end{proof}

\subsection{Diagonalization of the generic case $\kappa\neq 0,\ \kappa^*\neq 0,\ \kappa_\pm\neq 0$} 

We now consider the most general case, associated with the non-diagonal  matrix $K^+(u)$ given by (\ref{KP}). Similarly to the special and diagonal cases, the first step is to express the Heun-Askey-Wilson operator (\ref{I}) in terms of the dynamical operators (\ref{Ae1})-(\ref{De1}). According to the gauge transformation parametrized by $\alpha,\beta$ applied in (\ref{tm}), recall the expressions (\ref{defAepm}), (\ref{defAsepp}), (\ref{defAsepm}), (\ref{defAepp}) for the elements  $\textsf{A}$ and $\textsf{A}^*$. In addition, from (\ref{monoA})-(\ref{monoc}) and (\ref{Ae1})-(\ref{De1})  one gets the following expressions for $[\textsf{A},\textsf{A}^*]_q$ and $[\textsf{A}^*,\textsf{A}]_q$ in terms of the dynamical operators:
\ben
&&[\textsf{A}^*,\textsf{A}]_q=
-\frac{\alpha\beta\rho\chi^{-1}q^{-\epsilon(m+1)}u^\epsilon}
{\alpha-q^{-2\epsilon(m+1)}\beta}
\left(\frac{1}{qu^2-q^{-1}u^{-2}}\mathscr{A}^{\epsilon}(u,m)-\frac{1}{u^2-u^{-2}}\mathscr{D}^{\epsilon}(u,m)\right)\\
&&\quad\quad\quad\quad\quad
+\frac{\rho \chi^{-1} u^\epsilon }{(\alpha-q^{-2\epsilon m}\beta)(u^2-u^{-2})}
\left(\alpha^2q^{\epsilon(m+2)}\mathscr{B}^{\epsilon}(u,m)
-\beta^2q^{\epsilon(-3m+2)}
\mathscr{C}^{\epsilon}(u,m)\right)
\nonumber\\
&&\quad\quad\quad\quad\quad
-\left(\rho\frac{qu^2+q^{-1}u^{-2}}{q^2-q^{-2}}+\frac{\omega}{q-q^{-1}}\right)
\,,
\nonumber\\
&&
[\textsf{A},\textsf{A}^*]_q=
\frac{\chi q^{-\epsilon(m+1)}u^\epsilon}
{\alpha-q^{-2\epsilon(m+1)}\beta}
\left(\frac{1}{qu^2-q^{-1}u^{-2}}\mathscr{A}^{\epsilon}(u,m)-\frac{1}{u^2-u^{-2}}\mathscr{D}^{\epsilon}(u,m)\right)\nonumber\\
&&\quad\quad\quad\quad\quad
-\frac{\chi e^{-m\epsilon}u^\epsilon }{(\alpha-q^{-2\epsilon m}\beta)(u^2-u^{-2})}
\left(\mathscr{B}^{\epsilon}(u,m)
-\mathscr{C}^{\epsilon}(u,m)\right)
\nonumber\\
&&\quad\quad\quad\quad\quad
-\left(\rho\frac{qu^2+q^{-1}u^{-2}}{q^2-q^{-2}}+\frac{\omega}{q-q^{-1}}\right)
\,.\nonumber
\een

For generic  parameters $\alpha$, $\beta$ and integer $m$, in terms of the dynamical operators recall that the transfer matrix is given by (\ref{tmgauge}). In order to simplify the analysis of the spectral problem, as a second step let us fix the gauge parameters $\alpha,\beta$ and $m$ such that the coefficients of $\mathscr{B}^{\epsilon}(u,m)$ and $\mathscr{C}^{\epsilon}(u,m)$ in (\ref{tmgauge}) vanish. For convenience and without loss of generality, let us choose the following parametrization:
\beqa
&&\kappa_+ = \frac{q^\varphi}{2}\,,\quad \kappa_- = -\frac{{q^\varphi} '}{2}\,,
\quad \kappa = \rho^{1/2}q^{(\varphi+\varphi')/2}\cosh(\xi)\,,\quad \kappa^* = \rho^{1/2}q^{(\varphi+\varphi')/2}\cosh(\xi')\,.\label{param2}
\eeqa
Then, for  the choice of gauge parameters
\ben
&&\alpha = -\epsilon \rho^{-1/2}q^{(\varphi-\varphi')/2-\epsilon(1+m_0+2M)}e^{-\xi(1+\epsilon)/2+\xi'(1-\epsilon)/2}\,,\label{gaugedl}\\
&&\beta = -\epsilon \rho^{-1/2}q^{(\varphi-\varphi')/2-\epsilon(1-m_0-2M)}e^{\xi(1+\epsilon)/2-\xi'(1-\epsilon)/2}\,,\nonumber
\een
from the expression of the transfer matrix (\ref{tmgauge}) at $m=m_0+2M$ and (\ref{trans}), for any choice of $\epsilon=\pm 1$ one finds the following expression for the Heun-Askey-Wilson element (\ref{I}):
\beqa
&& \quad {\textsf I}(\kappa,\kappa^*,\kappa_+,\kappa_-)=   \frac{1}{(u^2-u^{-2})}\left(
\frac{u^\epsilon\Delta_g(u)}{(q u^2-q^{-1} u^{-2})}
\mathscr{A}^{\epsilon}(u,m_0+2M) 
+
\frac{u^\epsilon\Delta_g(q^{-1}u^{-1})}{(q^2 u^2-q^{-2} u^{-2})}\mathscr{D}^{\epsilon}(u,m_0+2M)\right)  \label{Igen}\\
&&
\quad +\frac{q^{(\varphi+\varphi')/2}(q+q^{-1})^2}{\rho^{1/2}(u^2-u^{-2})(q^2u^2-q^{-2}u^{-2})}\left(\eta\cosh(\xi')+\eta^*\cosh(\xi)+(\eta\cosh(\xi)+\eta^*\cosh(\xi'))\frac{qu^2+q^{-1}u^{-2}}{q+q^{-1}}\right)
\nonumber\\
&& \quad
-\frac{(q^\varphi\chi^{-1}
-q^{\varphi'}\chi)}{2}\left(\rho\frac{qu^2+q^{-1}u^{-2}}{q^2-q^{-2}}+\frac{\omega}{q-q^{-1}} \right)\nonumber\ 
\eeqa
where we have denoted
\ben
\Delta_g(u)= \frac{1}{2}
e^{-\xi }  q^{(\varphi+\varphi')/2}\rho^{1/2}
(e^{-\xi'} u+e^{\xi }u^{-1})
 (e^{ \xi  (1-\epsilon )/2+ \xi' (1+\epsilon)/2} u + e^{ \xi  (1+\epsilon)/2+ \xi' (1-\epsilon )/2}u^{-1})\ . \label{Deltag}
\een

Below, we will consider the action of the dynamical operators on the reference state $|\Omega^+\rangle$ or  $|\Omega^-\rangle$.
For the choice of the gauge parameters $\alpha,\beta$ (\ref{gaugedl}), the action of the dynamical operators  $\mathscr{A}^{\epsilon}(u,m_0)$ and  $\mathscr{D}^{\epsilon}(u,m_0)$ is now considered\footnote{Similar relations were derived in \cite{MABAq3}}.
\begin{lem}\label{lem:modifiedaction} 
For the choice of gauge parameters (\ref{gaugedl}), the action of the operators $\mathscr{A}^{\epsilon}(u,m_0)$ and $\mathscr{D}^{\epsilon}(u,m_0)$ on $|\Omega^\epsilon\rangle$ is given by
\ben
&&\bar\pi(\mathscr{A}^{\epsilon}(u,m_0))|\Omega^\epsilon\rangle = \Lambda_1^\epsilon(u)|\Omega^\epsilon\rangle + c_{\epsilon}^M
\bar\pi(\mathscr{B}^{\epsilon}(u,m_0-2))|\Omega^\epsilon\rangle \,,\label{actA}\\
&&\bar\pi(\mathscr{D}^{\epsilon}(u,m_0))|\Omega^\epsilon\rangle = \Lambda_2^\epsilon(u)|\Omega^\epsilon\rangle
-\frac{(q^2u^2-q^{-2}u^{-2})}{(qu^2-q^{-1}u^{-2})} c_{\epsilon}^M \bar\pi(\mathscr{B}^{\epsilon}(u,m_0-2))|\Omega^\epsilon\rangle\,, \label{actD}
\een
where $\Lambda_{1}^\epsilon(u)$ and $\Lambda_{2}^\epsilon(u)$ are given by (\ref{Lap}) and
\beqa
\qquad c_{\epsilon}^M&=&e^{-\xi(1+\epsilon)+\xi'(1-\epsilon)}q^{-2\epsilon(2M+1)}
\frac{q^{\varphi/2}-e^{(\mu'+\xi)(1+\epsilon)/2+(\mu-\xi')(1-\epsilon)/2}q^{(\varphi'+\nu-\nu')/2+\epsilon(2M-2s+1)}\rho^{1/2}}
{q^{\varphi/2}-e^{(\mu'-\xi)(1+\epsilon)/2+(\mu+\xi')(1-\epsilon)/2}q^{(\varphi'+\nu-\nu')/2-\epsilon(2M+2s+1)}\rho^{1/2}}\ .\label{ceM}
\eeqa
\end{lem}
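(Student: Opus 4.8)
The plan is to compute the left-hand sides of (\ref{actA}), (\ref{actD}) directly and match them term by term against the right-hand sides, working throughout in the eigenbasis of the relevant Leonard pair generator. By the $\tA\leftrightarrow\tA^*$ symmetry that exchanges the two reference states (interchanging $\epsilon=+$ with $\epsilon=-$ and the data $b,c\leftrightarrow b^*,c^*$, $\mu\leftrightarrow\mu'$), it suffices to treat $\epsilon=+$, where $|\Omega^+\rangle=|\theta^*_0\rangle$ is the extreme eigenvector of $\bar\pi(\tA^*)$. First I would substitute the explicit expressions for the dynamical operators $\mathscr{A}^{+}(u,m_0)$, $\mathscr{D}^{+}(u,m_0)$ and $\mathscr{B}^{+}(u,m_0-2)$ in terms of $\tA,\tA^*$ collected in Appendix \ref{apA} (see (\ref{Am}), (\ref{Bm})), and let them act on $|\theta^*_0\rangle$. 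Since $\bar\pi(\tA^*)|\theta^*_0\rangle=\theta^*_0|\theta^*_0\rangle$ is diagonal while $\bar\pi(\tA)|\theta^*_0\rangle=a^*_{0,1}|\theta^*_1\rangle+a^*_{0,0}|\theta^*_0\rangle$ is tridiagonal with no $|\theta^*_{-1}\rangle$ term (as $a^*_{0,-1}=0$), each of the three operators maps $|\Omega^+\rangle$ into $\mathrm{span}\{|\theta^*_0\rangle,|\theta^*_1\rangle\}$. Writing $\bar\pi(\mathscr{A}^{+}(u,m_0))|\Omega^+\rangle=P_0(u)|\theta^*_0\rangle+P_1(u)|\theta^*_1\rangle$ and $\bar\pi(\mathscr{B}^{+}(u,m_0-2))|\Omega^+\rangle=S_0(u)|\theta^*_0\rangle+S_1(u)|\theta^*_1\rangle$ (and similarly $R_0,R_1$ for $\mathscr{D}^{+}$), the lemma becomes an identity between these scalar coefficients, to be verified using (\ref{st}), (\ref{par})--(\ref{sc4}) and (\ref{gaugedl}).

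The observation that fixes $c_+^M$ comes from the off-diagonal ($|\theta^*_1\rangle$) coefficients. In both $\mathscr{A}^{+}(u,m_0)$ and $\mathscr{B}^{+}(u,m_0-2)$, only the terms proportional to $\tA$ and to the $q$-commutators $[\tA,\tA^*]_q,[\tA^*,\tA]_q$ contribute to $|\theta^*_1\rangle$ (the $\tA^*$ and scalar terms act diagonally on $|\theta^*_0\rangle$), and these contributions all carry the overall factor $\tfrac{u^2-u^{-2}}{u}\,a^*_{0,1}$ multiplied by a $u$-independent bracket; this is visible in the computation in the proof of Lemma \ref{lem:diagonalaction} and in (\ref{Bplusm}). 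Hence $P_1(u)/S_1(u)$ is automatically independent of $u$, and I would set $c_+^M:=P_1(u)/S_1(u)$. Evaluating this ratio with $\theta^*_0=b^*+c^*$ and $\theta^*_1=b^*q^2+c^*q^{-2}$ from (\ref{st}), and inserting the gauge values (\ref{gaugedl}) together with the parametrization (\ref{par}), reproduces (\ref{ceM}). The same constant should control the $\mathscr{D}^{+}$ relation: computing $R_1(u)$ from the explicit form of $\mathscr{D}^{+}$ in Appendix \ref{apA} is expected to yield exactly $-\tfrac{q^2u^2-q^{-2}u^{-2}}{qu^2-q^{-1}u^{-2}}\,c_+^M\,S_1(u)$, the off-diagonal part of the right-hand side of (\ref{actD}).

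With $c_+^M$ fixed, the diagonal coefficients are forced: one defines $\Lambda_1^+(u):=P_0(u)-c_+^M S_0(u)$ and $\Lambda_2^+(u):=R_0(u)+\tfrac{q^2u^2-q^{-2}u^{-2}}{qu^2-q^{-1}u^{-2}}\,c_+^M S_0(u)$, so that (\ref{actA}), (\ref{actD}) hold by construction. The remaining task, and the main obstacle, is to show that these two scalar functions coincide with the factorized expressions (\ref{Lap}). For this I would substitute the value of $a^*_{0,0}$ fixed by the Askey-Wilson relation (\ref{aw2}), namely $a^*_{0,0}=-(\omega\theta^*_0+\eta^*)/(\rho+(q-q^{-1})^2(\theta^*_0)^2)$, together with (\ref{st}), (\ref{par}) and the structure constants (\ref{sc1})--(\ref{sc4}), and check that all non-factorized contributions cancel.

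This last step is a lengthy but purely mechanical rational-function identity in $u$; the delicate point is precisely that the subtraction of $c_+^M S_0(u)$ is what makes the otherwise gauge-dependent coefficient $P_0(u)$ collapse onto the gauge-free product form (\ref{Lap}) already obtained in Lemma \ref{lem:diagonalaction}, and likewise for $\Lambda_2^+$. I expect the bookkeeping of the many $q$- and exponential factors in $f_0(u)$ and in the $\mathscr{B}^{+}$ diagonal element to be the most error-prone part, so I would organize the verification by comparing the zeros and the leading/subleading Laurent coefficients in $u$ of both sides rather than expanding everything symbolically. Finally, the case $\epsilon=-$ follows verbatim after applying the $\tA\leftrightarrow\tA^*$ substitution described at the outset, with $|\Omega^-\rangle=|\theta_0\rangle$ the extreme eigenvector of $\bar\pi(\tA)$ and the tridiagonal action (\ref{tridAstar}) in place of (\ref{tridAstar2}).
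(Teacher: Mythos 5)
Your proposal follows essentially the same route as the paper's proof: expand $\mathscr{A}^{+}(u,m_0)-c_+^M\mathscr{B}^{+}(u,m_0-2)$ acting on $|\theta^*_0\rangle$ in the $\tA^*$-eigenbasis, fix $c_+^M$ by killing the $|\theta^*_1\rangle$ coefficient (your ratio $P_1/S_1$ is exactly the paper's condition $const'_1(c_+^M)=0$), then verify the surviving diagonal coefficient reduces to $\Lambda_1^+(u)$ after inserting (\ref{st}), (\ref{par}) and (\ref{gaugedl}), with the $\mathscr{D}^{\epsilon}$ relation and $\epsilon=-1$ treated analogously. The outline is correct and complete at the level of detail the paper itself provides.
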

\begin{proof} We show (\ref{actA}) for $\epsilon=+1$. Recall $|\Omega^+\rangle\equiv |\theta^*_0\rangle $. Using the explicit expressions (\ref{Am}) and (\ref{Bm}) for $m\rightarrow m-2$ together with (\ref{tridAstar2}) it follows:
\beqa
\left( \bar\pi(\mathscr{A}^{\epsilon}(u,m_0))  - c_+^M  \bar\pi(\mathscr{B}^{\epsilon}(u,m_0-2))  \right)|\Omega^+\rangle = \frac{ u^{-1}(u^2-u^{-2})}{(\alpha q^{2m_0}-\beta q^2)}  const'_1( c_+^M)  a_{01}^*|\theta^*_1\rangle  +  eigen(u, c_+^M)|\theta^*_0\rangle \nonumber
\eeqa 
where 
\beqa
 const'_1 ( c_+^M) &=& (1-c_+^M)\frac{\chi q^{m_0}}{\rho}(q\theta_1^*-q^{-1}\theta_0^*) - \frac{(\alpha\beta q^{m_0+2}-c_+^M\beta^2q^{-m_0+4})}{\chi}(q\theta_0^*-q^{-1}\theta_1^*) \nonumber\\
&&\qquad +\  \beta q + \alpha q^{2m_0+1} - c_+^M \beta (q^3+q)\ .\nonumber
\eeqa
Requiring $ const'_1 ( c_+^M) \equiv 0$ determines  the coefficient $c_+^M$ in terms of the Leonard pair's data and the gauge parameters $\alpha,\beta$:
\beqa
c_+^M= \frac  { (\chi^2 +\alpha\beta\rho ) q^{m_0+2} \theta_1^*   -   (\chi^2 +\alpha\beta\rho q^4 ) q^{m_0} \theta_0^*   + ( \alpha q^{2m_0+2} + \beta q^2  ) \rho\chi}{  (\chi^2 q^{m_0+2}  +\beta^2\rho  q^{-m_0+4}) \theta_1^*   -   (\chi^2q^{m_0} +\beta^2\rho q^{-m_0+6}) \theta_0^*   +  (q^{4} + q^2 )\beta \rho\chi}\ .
\eeqa
 Inserting (\ref{st}) for $M=0,1$ in the above expression, for the choice of gauge parameters (\ref{gaugedl}) and using the parametrization (\ref{par}) one gets (\ref{ceM}). Then, inserting  (\ref{ceM}) in  $eigen(u, c_+^M)$, one finds  $eigen_0(u, c_+^M) = \Lambda_1^+(u)$. This completes the proof of (\ref{actA}) for $\epsilon=+1$. The other relations (\ref{actA}) for $\epsilon=-1$ and (\ref{actD}) are shown similarly, so we skip the details.
\end{proof}

As for the case of `diagonal' parameters studied in the previous section, a crucial ingredient for the solution of the spectral problem of (\ref{Igen}) is the conjecture below (see also  similar relations in \cite{MABAq2,MABAq3}) which is a generalization of Lemma \ref{prop:diagonaloffshell}. For a generic set of parameters $\bar u=\{u_1,u_2,...,u_{2s}\}$,  define the Bethe vector:
\beqa
|\Psi_{{g},\epsilon}^{2s}(\bar u,m_0)\rangle &=& \bar\pi(B^\epsilon(\bar u,m_0,2s))|\Omega^\epsilon\rangle\, ,\label{SBgen}\\
 |\Psi_{{g},\epsilon}^{2s}(\{u,\bar u_i\},m_0)\rangle &=& \bar\pi(B^\epsilon(\{u,\bar u_i\},m_0,2s))|\Omega^\epsilon\rangle\, . \label{SBgen2}
\eeqa
\begin{conj}\label{prop:genericoffshell}
For $M=2s$ and generic $\{u,u_i\}$, one has:
\ben
&&\tilde\delta_g\bar\pi(\mathscr{B}^{\epsilon}(u,m_0+4s))|\Psi_{g,\epsilon}^{2s}(\bar u,m_0)\rangle=
\label{eq:conj2}\\
&&\quad\quad
\delta_g\frac{u^{-\epsilon}b(u^2)\prod_{k=0}^{2s}b(q^{1/2+k-s}vu)b(q^{1/2+k-s}v^{-1}u)}{\prod_{i=1}^{2s}b(uu_i^{-1})b(q^{-1}u^{-1}u_i^{-1})}|\Psi_{g,\epsilon}^{2s}(\bar u,m_0)\rangle
\nonumber\\
&&\quad\quad
-\delta_g
\sum_{i=1}^{2s}
\frac{
u_i^{-\epsilon}b(u_i^2)\prod_{k=0}^{2s}b(q^{1/2+k-s}vu_i)b(q^{1/2+k-s}v^{-1}u_i)}
{b(uu_i^{-1}) b(q^{-1} u^{-1} u_i^{-1})\prod_{j=1,j\neq i}^{2s}b(u_iu_j^{-1})b(q^{-1}u_i^{-1}u_j^{-1})}
|\Psi_{g,\epsilon}^{2s}(\{u,\bar u_i\},m_0)\rangle \nonumber
\een
where 
\ben
&&\delta_g =\frac{(-1)^{2s}q^{\nu'}}{4}
(q^{\varphi/2}-e^{(\mu-\xi')(1-\epsilon)/2-(\mu'+\xi)(1+\epsilon)/2}q^{(\varphi'+\nu-\nu')/2-2s-1}\rho^{1/2})\label{deltag}\\
&&\quad\quad\times 
(q^{\varphi/2}-e^{-(\mu-\xi')(1-\epsilon)/2+(\mu'+\xi)(1+\epsilon)/2}q^{(\varphi'+\nu-\nu')/2+2s+1}\rho^{1/2})
\,,\nonumber
\een
\ben
&&\tilde\delta_g=
e^{-3\xi(1+\epsilon)/2+\xi'(1-\epsilon)/2}q^{(\varphi+\varphi')/2-(1+2\epsilon)(4s+1)}
(1-e^{\xi'(1-\epsilon)+\xi(1+\epsilon)}q^{2(4s+1)})\nonumber\\
&&\quad\quad\times
\frac{(q^{\varphi/2}-e^{(\mu-\xi')(1-\epsilon)/2+(\mu'+\xi)(1+\epsilon)/2}q^{(\varphi'+\nu-\nu')/2+\epsilon(2s+1)}\rho^{1/2})\rho^{1/2}}
{2(q^{\varphi/2}-e^{(\mu+\xi')(1-\epsilon)/2+(\mu'-\xi)(1+\epsilon)/2}q^{(\varphi'+\nu-\nu')/2-\epsilon(6s+1)}\rho^{1/2})}\,.\nonumber
\een
\end{conj}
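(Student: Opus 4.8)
The plan is to prove Conjecture \ref{prop:genericoffshell} by the same representation-theoretic mechanism that underlies Lemma \ref{prop:diagonaloffshell}, viewing the statement as a \emph{truncation identity} forced by $\dim(\bcV)=2s+1$. By the symmetry relation (\ref{comBdBd}), the string $B^{\epsilon}(\bar u,m_0,2s)$ is symmetric in $\bar u$, so the left-hand side of (\ref{eq:conj2}) is a would-be length-$(2s+1)$ string $\bar\pi(\mathscr{B}^{\epsilon}(u,m_0+4s))$ acting on $|\Omega^\epsilon\rangle$, symmetric in all of $\{u,u_1,\dots,u_{2s}\}$ up to the $m$-shifts. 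Since the module is $(2s+1)$-dimensional, such a string cannot be independent of the length-$2s$ vectors; the content of the conjecture is that it collapses onto exactly $|\Psi_{g,\epsilon}^{2s}(\bar u,m_0)\rangle$ and the $2s$ replaced vectors $|\Psi_{g,\epsilon}^{2s}(\{u,\bar u_i\},m_0)\rangle$, with explicit rational coefficients. First I would establish this collapse, then determine the coefficients by analyticity.

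For the collapse I would work in the Leonard eigenbasis $\{|\theta^*_M\rangle\}$ for $\epsilon=+$ (and $\{|\theta_M\rangle\}$ for $\epsilon=-$) and use the tridiagonal action (\ref{Bplusm}). The gauge (\ref{gaugedl}) is engineered precisely so that the image of $\bar\pi(\mathscr{B}^{\epsilon}(u,m_0+4s))$ on the top-weight vector degenerates, in analogy with the annihilation relation (\ref{nulsp}); this provides the boundary datum that terminates the recursion at length $2s+1$. Combined with the commutation relations (\ref{comAdBd}), (\ref{comDdBd}), (\ref{comcdBd}) and the multiple-action formulas (\ref{AonSB}), (\ref{DonSB}), one pushes $\mathscr{B}^{\epsilon}(u,m_0+4s)$ through the string; using Lemma \ref{lem:modifiedaction} to evaluate $\mathscr{A}^{\epsilon}$ and $\mathscr{D}^{\epsilon}$ on $|\Omega^\epsilon\rangle$ (note the extra $\mathscr{B}^{\epsilon}$-terms there, absent in the diagonal case), the higher-level contributions should reorganize into the two-term structure of (\ref{eq:conj2}).

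The coefficients would then be fixed by the analytic structure in $u$. Both sides are vector-valued rational (Laurent-polynomial-ratio) functions of $u$, symmetric under permutations of $\{u,u_1,\dots,u_{2s}\}$ modulo the shifts of (\ref{comBdBd}). Matching residues at the poles $u=\pm u_i$ and $u=\pm q^{-1}u_i^{-1}$, produced by the coefficients listed in Appendix \ref{Sec:coefcommut}, fixes the coefficient functions up to scalars; the normalizations $\delta_g$ and $\tilde\delta_g$ of (\ref{deltag}) are then pinned down by comparing leading behaviour as $u\to\infty$, together with the diagonal eigenvalues $\Lambda_1^\epsilon,\Lambda_2^\epsilon$ (see (\ref{Lap}) and (\ref{actionADvac})). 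A strong consistency check is that specializing $\kappa_\pm$ to the diagonal locus, equivalently $\beta\to0$ in the parametrization (\ref{param2}), must reduce (\ref{eq:conj2}) to (\ref{eq:conj1}) with $\delta_d$ of (\ref{deltad}); the Sklyanin quantum determinant $\Gamma(u)$ of (\ref{gamma}) and the Cayley--Hamilton relations (\ref{polyc}) give an independent route to the same scalar, which I would use to cross-check.

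The main obstacle is the combinatorial control of the generic case. Unlike the diagonal case, where $\beta=0$ removes $\mathscr{C}^{\epsilon}$ and truncates (\ref{comcdBd}), here the full relation (\ref{comcdBd}) feeds $\mathscr{A}\mathscr{A}$, $\mathscr{A}\mathscr{D}$, $\mathscr{D}\mathscr{A}$ and $\mathscr{D}\mathscr{D}$ terms back into the $(2s+1)$-fold product, and Lemma \ref{lem:modifiedaction} injects additional $\mathscr{B}^{\epsilon}$-contributions at each step. Proving that this large sum cancels down to exactly the two terms displayed---for every $s$, not merely in low-dimensional examples---is the crux, and is precisely why the statement is recorded as a conjecture rather than a lemma. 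A generating-function formulation of the strings, or an induction on $\dim(\bcV)$ anchored at the two-dimensional ($2s=1$) case, seems the most promising way to tame this cancellation.
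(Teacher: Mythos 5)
The first thing to note is that the paper offers no proof of this statement: it is recorded as a conjecture, verified by computer algebra for $s=1/2,1,3/2$, with only the $s=1/2$ case established elsewhere (via separation of variables, in \cite{MABAq3}). So there is no proof in the paper to compare yours against; the relevant question is whether your proposal closes the gap, and it does not. You candidly identify the crux yourself: showing that the $(2s+1)$-fold string collapses onto exactly the two displayed families of vectors $|\Psi_{g,\epsilon}^{2s}(\bar u,m_0)\rangle$ and $|\Psi_{g,\epsilon}^{2s}(\{u,\bar u_i\},m_0)\rangle$. Everything else in your outline (residue matching at $u=\pm u_i$, $\pm q^{-1}u_i^{-1}$, fixing $\delta_g,\tilde\delta_g$ from the leading behaviour and from $\Lambda_1^\epsilon,\Lambda_2^\epsilon$) presupposes that this two-term structure already holds, so the proposal is a plan with the decisive step missing rather than a proof.

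The concrete reason the Appendix \ref{apD} mechanism does not transfer should be made explicit. That proof works because at $\beta=0$ one has $\mathscr{B}^{+}(u,m)\propto (U-\tilde{\tA}^{\diamond})$, a polynomial in a \emph{single} operator; the string becomes a commuting product $\prod_i(U_i-\tilde{\tA}^{\diamond})$, the whole identity reduces to a scalar Lagrange-interpolation identity, and the leftover term is the characteristic polynomial of $\tC$, which vanishes by Cayley--Hamilton. In the generic case the gauge (\ref{gaugedl}) forces $\alpha,\beta\neq 0$, so the $\mathscr{B}^{\epsilon}(u_i,m_i)$ at shifted dynamical parameters are genuinely non-commuting quadratic expressions in $\tA,\tA^*$, and no single Cayley--Hamilton relation produces the collapse; this is exactly why the paper leaves the statement as a conjecture and why your "generating function or induction on $\dim(\bcV)$" suggestion is a hope rather than an argument. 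Your proposed cross-check by specializing to the diagonal locus is also delicate: in the parametrization (\ref{param2}), $\kappa_\pm\to 0$ is a singular limit ($q^{\varphi},q^{\varphi'}\to 0$) that simultaneously degenerates $\kappa,\kappa^*$ and the gauge, so (\ref{eq:conj2}) does not reduce to (\ref{eq:conj1}) by naive substitution. The routes the paper actually points to are the SoV argument of \cite{MABAq3} (proved for $s=1/2$, claimed generalizable) or a yet-to-be-found Leonard-pair analogue of Appendix \ref{apD}; your outline is consistent with the latter but does not supply the new idea it would require.
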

This conjecture has been checked with Mathematica for small values of $s=1/2,1,3/2$. Note that the case $s=1/2$ has been proved in \cite{MABAq3} using the separation of variables (SoV) basis, and the method can be generalized for arbitrary $s$.
For generic $s$, it might be interesting to give a proof by analogy with Appendix \ref{apD} using the theory of Leonard pairs. This might be studied elsewhere.\vspace{1mm}

We now turn to the solution of the spectral problem.   Recall the parametrization (\ref{param2}).
\begin{prop}\label{p35}
 For $\epsilon=\pm 1$, one has:
\beqa\label{specgeneric}
 \bar\pi\left({\textsf I}(\kappa,\kappa^*,\kappa_+,\kappa_-) \right)|\Psi_{g,\epsilon}^{2s}(\bar u,m_0)\rangle= \Lambda_{g,\epsilon}^{2s}|\Psi_{g,\epsilon}^{2s}(\bar u,m_0)\rangle \ 
\eeqa
with
\beqa
&&\Lambda_{g,+}^{2s}=\kappa^*\theta_{2s}^* + \kappa^*\theta_{2s}^*|_{\mu'\rightarrow \xi}\frac{\cosh(\mu)}{\cosh(\xi')} 
+\left(-\frac{\kappa^*}{2\cosh(\xi')}\theta_{3s+1/2}^*|_{\mu'\rightarrow \mu' + \xi} +(-1)^{2s+1}\delta_g[2s+1]_q\right)(v^2+v^{-2})\nonumber\\
&&\qquad\quad-\omega\frac{(\chi^{-1}\kappa_+ + \chi\kappa_-)}{(q-q^{-1})}
+\left(\frac{\kappa^*q^{\nu+\nu'}}{4\cosh(\xi')}(q-q^{-1})(q^{2s}e^{\mu'+\xi}-q^{-2s}e^{-\mu'-\xi})-(-1)^{2s+1}\delta_g
\right)\sum_{j=1}^{2s}(qu_j^2+q^{-1}u_j^{-2})\,,\nonumber\\
&&\Lambda_{g,-}^{2s}=\kappa\theta_{2s} + \kappa\theta_{2s}|_{\mu\rightarrow -\xi'}\frac{\cosh(\mu')}{\cosh(\xi)} 
+\left(-\frac{\kappa}{2\cosh(\xi)}\theta_{3s+1/2}|_{\mu\rightarrow \mu- \xi'} +(-1)^{2s+1}\delta_g[2s+1]_q\right)(v^2+v^{-2})\nonumber\\
&&\qquad\quad-\omega\frac{(\chi^{-1}\kappa_+ + \chi\kappa_-)}{(q-q^{-1})}
+\left(\frac{\kappa}{2\cosh(\xi)}(q-q^{-1})\theta_{2s}|_{\mu\rightarrow \mu - \xi}-(-1)^{2s+1}\delta_g
\right)\sum_{j=1}^{2s}(qu_j^2+q^{-1}u_j^{-2})\,,\nonumber
\eeqa
where the set $\bar u$ satisfies the (inhomogeneous) Bethe equations: 
\beqa\label{BAEg}
&&
\frac{b(u_i^2)}{b(qu_i^2)}\frac{1}{2}
e^{-\xi }  q^{(\varphi+\varphi')/2}\rho^{1/2}
(e^{-\xi'} u_i+e^{\xi }u_i^{-1})
 (e^{ \xi  (1-\epsilon )/2+ \xi' (1+\epsilon)/2} u_i + e^{ \xi  (1+\epsilon)/2+ \xi' (1-\epsilon )/2}u_i^{-1})
\prod_{j=1,j\neq i}^{2s}f(u_i,u_j)\Lambda_1^\epsilon(u_i)
\nonumber\\&&
-
\frac{1}{2}e^{-\xi }  q^{(\varphi+\varphi')/2}\rho^{1/2}
(qe^{\xi} u_i+q^{-1}e^{-\xi'}u_i^{-1})
 (q e^{ \xi  (1+\epsilon )/2+ \xi' (1-\epsilon)/2} u_i +q^{-1} e^{\xi (1-\epsilon )/2+ \xi'  (1+\epsilon)/2}u_i^{-1})
\prod_{j=1,j\neq i}^{2s}h(u_i,u_j)\Lambda_2^\epsilon(u_i)
\nonumber\\
&&
-(-1)^{2s}\delta_g(q-q^{-1})^{-1}
\frac{u_i^{-\epsilon}b(u_i^2)\prod_{k=0}^{2s}b(q^{1/2+k-s}vu_i)b(q^{1/2+k-s}v^{-1}u_i)}{\prod_{j=1,j\neq i}^{2s}b(u_iu_j^{-1})b(qu_iu_j)}=0\,\nonumber
\eeqa
for $i=1,\dots,2s$.
\end{prop}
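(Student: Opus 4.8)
The plan is to run the argument of Proposition \ref{p34} essentially verbatim, with the diagonal action of Lemma \ref{lem:diagonalaction} replaced by the \emph{modified} action of Lemma \ref{lem:modifiedaction}, and the off-shell relation of Lemma \ref{prop:diagonaloffshell} replaced by Conjecture \ref{prop:genericoffshell}. First I would introduce the operator $\mathbb{W}_g(u,m)$ obtained from (\ref{Igen}) by dropping the two $u$-dependent scalar terms, namely $\mathbb{W}_g(u,m)=\frac{u^\epsilon\Delta_g(u)}{(u^2-u^{-2})(qu^2-q^{-1}u^{-2})}\mathscr{A}^{\epsilon}(u,m)+\frac{u^\epsilon\Delta_g(q^{-1}u^{-1})}{(u^2-u^{-2})(q^2u^2-q^{-2}u^{-2})}\mathscr{D}^{\epsilon}(u,m)$, with $\Delta_g$ as in (\ref{Deltag}). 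Since the left-hand side of (\ref{Igen}) does not depend on $u$, that identity reads $\textsf{I}(\kappa,\kappa^*,\kappa_+,\kappa_-)=\mathbb{W}_g(u,m_0+4s)+(\text{scalar in }u)$, and this enforced $u$-independence will ultimately fix the eigenvalue. Note that, unlike the diagonal case where an explicit $\mathscr{B}^{\epsilon}$ term survives in the operator, here the gauge (\ref{gaugedl}) has been tuned to remove both $\mathscr{B}^{\epsilon}$ and $\mathscr{C}^{\epsilon}$ from $\mathbb{W}_g$.

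Next I would set $m=m_0+4s$ and let $\mathbb{W}_g(u,m_0+4s)$ act on $|\Psi_{g,\epsilon}^{2s}(\bar u,m_0)\rangle=\bar\pi(B^{\epsilon}(\bar u,m_0,2s))|\Omega^\epsilon\rangle$. Commuting $\mathscr{A}^{\epsilon}(u,m_0+4s)$ and $\mathscr{D}^{\epsilon}(u,m_0+4s)$ through the string by the multiple-action formulas (\ref{AonSB}), (\ref{DonSB}) yields a ``wanted'' term with $B^{\epsilon}(\bar u,m_0,2s)\mathscr{A}^{\epsilon}(u,m_0)$ and $B^{\epsilon}(\bar u,m_0,2s)\mathscr{D}^{\epsilon}(u,m_0)$ acting on $|\Omega^\epsilon\rangle$, plus ``unwanted'' strings $B^{\epsilon}(\{u,\bar u_i\},m_0,2s)$ with the spectral variable transported to $u_i$. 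At this point Lemma \ref{lem:modifiedaction} enters: the action of $\mathscr{A}^{\epsilon}(u,m_0),\mathscr{D}^{\epsilon}(u,m_0)$ on $|\Omega^\epsilon\rangle$ is no longer purely diagonal, but produces, besides $\Lambda_1^\epsilon(u),\Lambda_2^\epsilon(u)$, the extra piece $c_\epsilon^M\,\mathscr{B}^{\epsilon}(u,m_0-2)|\Omega^\epsilon\rangle$ of (\ref{actA}), (\ref{actD}). Wrapping this piece inside the string and reordering the creation operators with the symmetry (\ref{comBdBd}), the object $B^{\epsilon}(\bar u,m_0,2s)\mathscr{B}^{\epsilon}(u,m_0-2)|\Omega^\epsilon\rangle$ becomes a string of $2s+1$ operators $\mathscr{B}^{\epsilon}$ of precisely the type appearing on the left-hand side of Conjecture \ref{prop:genericoffshell}; this is the mechanism that injects the inhomogeneous contribution, in contrast with the diagonal case where it came directly from the explicit $\mathscr{B}^{\epsilon}$ term.

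Collecting the pieces, I expect an expansion of the shape found in Proposition \ref{p34}: $\mathbb{W}_g(u,m_0+4s)|\Psi_{g,\epsilon}^{2s}(\bar u,m_0)\rangle=\Lambda_{g,\epsilon}^{2s}(u,\bar u)|\Psi_{g,\epsilon}^{2s}(\bar u,m_0)\rangle$ minus a sum over $i$ of terms $\propto E_g(u_i,\bar u_i)\,|\Psi_{g,\epsilon}^{2s}(\{u,\bar u_i\},m_0)\rangle$, where $\Lambda_{g,\epsilon}^{2s}(u,\bar u)$ is assembled from $\Delta_g$, the products $\prod_j f(u,u_j)\Lambda_1^\epsilon(u)$, $\prod_j h(u,u_j)\Lambda_2^\epsilon(u)$ and a $\delta_g$-proportional inhomogeneous term coming from Conjecture \ref{prop:genericoffshell}, entirely parallel to (\ref{lambddiag}), (\ref{Ed}). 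Imposing $E_g(u_i,\bar u_i)=0$ for $i=1,\dots,2s$ gives the inhomogeneous Bethe equations (\ref{BAEg}). For the eigenvalue I would argue as before: restoring the two subtracted scalars, the function $\Lambda_{g,\epsilon}^{2s}(u,\bar u)$ is meromorphic in $u$ and, by the $u$-independence of $\textsf{I}$, must be constant; its poles lie only at $u\in\{\pm u_j,\pm q^{-1}u_j^{-1}\}$, the residues sum to zero on-shell, and extracting the $u$-independent part with the parametrizations (\ref{par})-(\ref{sc4}) and (\ref{param2}) produces the closed forms $\Lambda_{g,+}^{2s},\Lambda_{g,-}^{2s}$.

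The hard part is twofold. First, the whole argument is conditional on Conjecture \ref{prop:genericoffshell}, which is only verified for $s=1/2,1,3/2$; making it unconditional would require a representation-theoretic derivation in the spirit of Appendix \ref{apD}. Second, the real labour is the bookkeeping: one must check that the coefficient $c_\epsilon^M$ of (\ref{ceM}), the normalizations $\delta_g,\tilde\delta_g$ appearing in Conjecture \ref{prop:genericoffshell} and the $\Delta_g$ prefactors conspire so that the $\mathscr{B}^{\epsilon}$-generated terms recombine exactly into the inhomogeneous pieces of $\Lambda_{g,\epsilon}^{2s}(u,\bar u)$ and of $E_g(u_i,\bar u_i)$, and in particular that the level shift $m_0\mapsto m_0-2$ introduced by the reordering is reconciled with the gauge choice (\ref{gaugedl}). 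The residue cancellation at $u=\pm u_i$ is equivalent to (\ref{BAEg}), while the cancellation at $u=\pm q^{-1}u_i^{-1}$ must follow from the quasi-periodicity of $\Lambda_1^\epsilon,\Lambda_2^\epsilon$ and of the coefficient functions $f,h$ collected in Appendix \ref{Sec:coefcommut}; verifying that the residues still cancel after adding the inhomogeneous term, and that the surviving constant matches $\Lambda_{g,\pm}^{2s}$, is where the computation is heaviest.
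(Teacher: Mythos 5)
Your proposal follows the paper's proof essentially verbatim: the paper likewise defines $\mathbb{W}_g(u,m)$ by subtracting the $u$-dependent scalars from (\ref{Igen}), combines the multiple actions (\ref{AonSB}), (\ref{DonSB}) with Lemma \ref{lem:modifiedaction} and Conjecture \ref{prop:genericoffshell} to obtain the wanted/unwanted decomposition with $\Lambda_{g,\epsilon}^{2s}(u,\bar u)$ and $E_g(u_i,\bar u_i)$, and then extracts the eigenvalue by the same residue/constant-part argument at the poles $u\in\{\pm u_j,\pm q^{-1}u_j^{-1}\}$. You also correctly identify both the mechanism by which the inhomogeneous term now enters (the $c_\epsilon^M\,\mathscr{B}^{\epsilon}(u,m_0-2)$ piece of the modified vacuum action feeding into the conjecture) and the fact that the whole result remains conditional on Conjecture \ref{prop:genericoffshell}, exactly as in the paper.
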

\begin{proof}
For convenience, define the element:
\ben
&&\mathbb{W}_g(u,m)={\textsf I}(\kappa,\kappa^*,\kappa_+,\kappa_-)+\frac{(q^\varphi\chi^{-1}
-q^{\varphi'}\chi)}{2}\left(\rho\frac{qu^2+q^{-1}u^{-2}}{q^2-q^{-2}}+\frac{\omega}{q-q^{-1}} \right)
\nonumber\\
&&\quad-\frac{ \rho^{-1/2}q^{(\varphi+\varphi')/2}(q+q^{-1})^2}{(u^2-u^{-2})(q^2u^2-q^{-2}u^{-2})}\left(\eta\cosh(\xi')+\eta^*\cosh(\xi)+(\eta\cosh(\xi)+\eta^*\cosh(\xi'))\frac{qu^2+q^{-1}u^{-2}}{q+q^{-1}}\right)\ .\nonumber
\een
Explicitly, using (\ref{Igen}) evaluated at the point $m=m_0+2M$ we have:
\ben\label{Wg}
&& \quad \mathbb{W}_g(u,m_0+2M)= \frac{1}{(u^2-u^{-2})}\left(
\frac{u^\epsilon\Delta_g(u)}{(q u^2-q^{-1} u^{-2})}
\mathscr{A}^{\epsilon}(u,m_0+2M) 
+
\frac{u^\epsilon\Delta_g(q^{-1}u^{-1})}{(q^2 u^2-q^{-2} u^{-2})}\mathscr{D}^{\epsilon}(u,m_0+2M)\right)\,. 
\een
From the multiple actions (\ref{AonSB}), (\ref{DonSB}), Lemma \ref{lem:modifiedaction} and Conjecture \ref{prop:genericoffshell}, it follows that the action of $\mathbb{W}_{g}(u,m_0+4s)$ on the Bethe vector $|\Psi_{g,\epsilon}^{2s}(\bar u,m_0)\rangle$ is given by:
\ben
&& \ \ \mathbb{W}_{g}(u,m_0+4s)|\Psi_{g,\epsilon}^{2s}(\bar u,m_0)\rangle=\Lambda_{g,\epsilon}^{2s}(u,\bar u)|\Psi_{g,\epsilon}^{2s}(\bar u,m_0)\rangle
-
\sum_{i=1}^{2s}\frac{(q+q^{-1})u^\epsilon E_g(u_i,\bar u_i)}{(u^2-u^{-2})b(uu_i^{-1})b(quu_i)}|\Psi_{g,\epsilon}^{2s}(\{u,\bar u_i\},m_0)\rangle
\een
where
\ben
&& \Lambda_{g,\epsilon}^{2s} (u,\bar u)=
\frac{ u^\epsilon\Delta_g(u)}{(u^2-u^{-2})(q u^2-q^{-1} u^{-2})}
\prod_{j=1}^{2s}f(u,u_j)\Lambda_1^\epsilon(u)
+
\frac{ u^\epsilon\Delta_g(q^{-1}u^{-1})}{(u^2-u^{-2})(q^2 u^2-q^{-2} u^{-2})}
\prod_{j=1}^{2s}h(u,u_j)\Lambda_2^\epsilon(u)
\nonumber\\
&& \qquad \qquad 
+ (-1)^{2s}\delta_g
\frac{\prod_{k=0}^{2s}b(q^{1/2+k-s}vu)b(q^{1/2+k-s}v^{-1}u)}{\prod_{i=1}^{2s}b(uu_i^{-1})b(quu_i)}\,,\label{lambdgen}
\een
and
\ben
&& E_{g}(u_i,\bar u_i)=
- \frac{b(u_i^2)}{b(qu_i^2)}\Delta_g(u_i)
\prod_{j=1,j\neq i}^{2s}f(u_i,u_j)\Lambda_1^\epsilon(u_i)
+ \Delta_g(q^{-1}u_i^{-1})
\prod_{j=1,j\neq i}^{2s}h(u_i,u_j)\Lambda_2^\epsilon(u_i)
\label{Eg}\\
&& \qquad \qquad 
+ (-1)^{2s}\delta_g(q-q^{-1})^{-1}
\frac{ u_i^{-\epsilon}b(u_i^2)\prod_{k=0}^{2s}b(q^{1/2+k-s}vu_i)b(q^{1/2+k-s}v^{-1}u_i)}{\prod_{j=1,j\neq i}^{2s}b(u_iu_j^{-1})b(qu_iu_j)}\,.\nonumber
\een

Requiring ${E}_{\textrm{g}}(u_i,\bar u_i)=0$ for $i=1,\dots,2s$,  one gets the Bethe equations below (\ref{specgeneric}). 
To determine the eigenvalues $\Lambda_{g,\epsilon}^{2s}$ in (\ref{specgeneric}), observe that 
\beqa
&&\Lambda_{g,\epsilon}^{2s}(u,\bar u)  +\frac{q^{(\varphi+\varphi')/2}(q+q^{-1})^2}{\rho^{1/2} (u^2-u^{-2})(q^2u^2-q^{-2}u^{-2})}\left(\eta\cosh(\xi')+\eta^*\cosh(\xi)+(\eta\cosh(\xi)+\eta^*\cosh(\xi'))\frac{qu^2+q^{-1}u^{-2}}{q+q^{-1}}\right)
\nonumber\\&&\quad\quad
-\frac{(q^\varphi\chi^{-1}
-q^{\varphi'}\chi)}{2}\left(\rho\frac{qu^2+q^{-1}u^{-2}}{q^2-q^{-2}}+\frac{\omega}{q-q^{-1}} \right)\,\nonumber\\&& \label{eiggeneric}
\eeqa
is a meromorphic function in the variable $u$. To be equal to a constant, we study its singular part.
The singular points are located at:
\beqa
u\in\{\pm u_j ,\ \pm q^{-1}u_j^{-1},\ j=1,...,M\}\ .   
\eeqa 
The sum of all the residues at these points vanishes.  Extracting the constant part of (\ref{eiggeneric}), one obtains the eigenvalues $\Lambda_{g,\pm}^{2s}$ below (\ref{specgeneric}).
\end{proof}

\subsection{Bethe ansatz equations and Bethe states revisited}
In this subsection, it is shown that each system of Bethe ansatz equations previously derived can be rewritten in terms of the polynomials (\ref{Polya})  in the `symmetrized' Bethe roots (\ref{sBr}), see Proposition \ref{propBAU}. Thus, solving the Bethe ansatz equations is  reduced to  finding the solutions of a system of polynomial equations instead of Laurent polynomial equations, see Proposition \ref{propP}. Also, an expansion formula for the Bethe states in the Poincar\'e-Birkhoff-Witt basis of the Askey-Wilson algebra is given, exhibiting the explicit dependence in the variables $\{U_i|i=1,...,M\}$, see Corollary \ref{propBS}. This leads to the construction of two different eigenbases for the Leonard pairs, see Proposition \ref{baseLP}.  
\subsubsection{An alternative presentation for the Bethe ansatz equations}\label{subs351}
 For the three cases studied in the previous  subsections, observe that each system of Bethe ansatz equations  in Propositions \ref{p31}, \ref{p33}, \ref{p34} and \ref{p35} enjoys the symmetries
\beqa
&& u_i \longleftrightarrow \pm q^{-1}u_i^{-1} \ ,\quad  u_i \longleftrightarrow - u_i\ \label{t1} \\
&& u_j \longleftrightarrow \pm q^{-1}u_j^{-1} \ ,\quad  u_j \longleftrightarrow - u_j\ \quad \mbox{for}\quad j\neq i \ .\label{t2}
\eeqa
This suggests that each system of Bethe equations admits an alternative presentation as a  system of equations written solely in terms of the `symmetrized' variables 
\beqa
U_i = \frac{ qu_i^2 + q^{-1}u_i^{-2}}{q+q^{-1}} \quad \mbox{with} \quad i=1,...,M, \label{sBr}
\eeqa
as we now show.  Recall the notation $\bar U_i=\{U_1,U_2,...,U_{i-1},U_{i+1}, ...,U_M\}$ with (\ref{sBr}). Define the polynomial:
\beqa
\quad P_a^M(U_i,\bar U_i) &=&  \sum_{k=0}^{M-1} \frac{(-1)^k(q+q^{-1})^{M-1}}{2^{M-1-k}} \textsf{e}_{k}(\bar U_i) \left(\sum_{l=0}^{M-1-k}\bin {M-1-k} {l}
\frac{(q-q^{-1})^{l}U_i^{M-1-k-l}}{(q^2+q^{-2})^{1+k+l-M}} g_0^{2[\frac{l}{2}]}(U_i) g_{a,\epsilon}^{( p[l])}(U_i) \right) \ \nonumber\\
&&+\bar\Delta_a H(U_i)\,\label{Polya}
\eeqa
with (\ref{bexp}), (\ref{geven}), (\ref{godd}), (\ref{defpl}), (\ref{bbp}) and
\beqa
\bar\Delta_a =  \left\{\begin{array}{c}
  0 \qquad\qquad\qquad \qquad \qquad \quad \mbox{for}\quad   a=sp , \nonumber \\
  -2(-1)^{2s}\epsilon \frac{q^{-(\nu+\nu')/2+\epsilon}}{(q-q^{-1})}{\kappa^*}^{\frac{1-\epsilon}{2}}\kappa^{\frac{1+\epsilon}{2}}\delta_d \qquad \quad \mbox{for} \quad a=d ,    \nonumber\\
2\frac{(-1)^{2s}q^{-(\nu+\nu')/2}}{(q-q^{-1})}\delta_g  \qquad \qquad \qquad \quad \mbox{for} \quad a=g  
 \end{array}  \right. \ .
\eeqa

Note that $P_a^M(U_i,\bar U_i)$ is of maximal degree $M$ in the variable $U_i$ for the case $a=sp$, and of maximal degree $2s+1$ for $a\in\{d,g\}$.
\begin{example}
For the special case $a=sp$ and $\epsilon=+$, for $M=1,2,3$, the polynomials are given by:
\ben
P_{sp}^1(U_1)&=&(q+q^{-1})s_{1-2s}U_1-r_0-s_1(v^2+v^{-2}) \,\ ,\nonumber\\
P_{sp}^2(U_1,U_2)&=&(q+q^{-1})^2 \left(s_{3-2s}U_1^2-s_{1-2s}U_1U_2\right)\nonumber\\
&&\quad
-(q+q^{-1})\left((r_2+s_3(v^2+v^{-2}))U_1 
-(r_0+s_1(v^2+v^{-2}))U_2\right)\nonumber\\
&&\quad+(q^2-q^{-2})(q^{2s}-q^{-2s})s_1+(q+q^{-1}) r_{2s+1}(v^2+v^{-2}))\,\ ,\nonumber
\een
\ben
P_{sp}^3(U_1,\{U_2,U_3\})&=&
 (q+q^{-1})^3 (s_{5-2 s} U_1^3
-s_{3-2 s} U_1^2(U_2 +U_3)+s_{1-2 s} U_1U_2 U_3)\non\\
&&+
U_1 \left(q^{-2 s-3} (q-q^{-1}) (q+q^{-1})^2 (e^{\mu'} (q^{4
   s+2}+q^{4 s+6}-2 q^6)+e^{-\mu'} (-2 q^{4 s}+q^4+1))\right.\non\\
&&\qquad \qquad \left.+(q+q^{-1})
   r_{2 s+4} (v^2+v^{-2}) \right)
\non\\ &&
+(U_2+U_3) (q+q^{-1}) (s_1
   (q^{-2 s}-q^{2 s}) (q^2-q^{-2})-r_{2 s+2}
   (v^2+v^{-2}))
\non\\ &&
-U_2 U_3 (q+q^{-1})^2
   (r_0+s_1 (v^2+v^{-2}))
\non\\&&
+(U_1 U_2+U_1 U_3) (q+q^{-1})^2
   (r_2+s_3 (v^2+v^{-2}))
\non\\&&
-U_1^2 (q+q^{-1})^2 (r_4+s_5
   (v^2+v^{-2}))
\non\\&&
-(q^2-q^{-2}) ((q^{2 s}-q^{-2 s}) r_{2s+2}
-s_1 (q^2-q^{-2}) (v^2+v^{-2})) 
\non
\een
where
\ben
&&r_j=2\cosh(\mu)(q^{2s-j}-q^{-2s+j})\,,\qquad s_j = e^{\mu'}q^j-e^{-\mu'}q^{-j}\ .\nonumber
\een
\end{example}
\begin{example}
For the diagonal case $a=d$ and $\epsilon=+$, for $M=1$ ($s=1/2$) and $M=2$ ($s=1$), the polynomials are given by: 
\ben
P_{d}^1(U_1)&=&-\frac{(q+q^{-1})^2}{(q-q^{-1})}e^{-\mu'}\kappa U_1^2
\non\\&&
+
2(q+q^{-1})(\sinh(\mu')\kappa^*+(-\cosh(\mu)+e^{-\mu'}q^{-1}(q-q^{-1})^{-1}(v^2+v^{-2}))\kappa)U_1 
\non\\&&
+
(-2 \cosh (\mu ) (q-q^{-1})- (qe^{\mu'} -q^{-1}e^{-\mu'}
   )(v^2+v^{-2}))\kappa^*
\non\\&&
-(e^{\mu'}(q-q^{-1})^2-2q^{-1}(q-q^{-1})\cosh(\mu)(v^2+v^{-2})+q^{-2}e^{-\mu'}(v^2+v^{-2})^2)(q-q^{-1})^{-1}\kappa \ ,
\nonumber\\
P_{d}^2(U_1,U_2)&=&
-\frac{(q+q^{-1})^3}{(q-q^{-1})}q^{-1}e^{-\mu'}\kappa (U_1^3+(1-q^2)U_1^2U_2)
\non\\&&
+U_2 ((-2 q^{-1}(q+q^{-1})\cosh (\mu )(v^2+v^{-2})
+q^{-1}(q-q^{-1})(q+q^{-1})^2(e^{\mu'}q-e^{-\mu'}q^{-1}))\kappa
\non\\&&\qq 
+(2(q-q^{-1})(q+q^{-1})^2\cosh(\mu)+(q+q^{-1})(e^{\mu'}q-e^{-\mu'}q^{-1})(v^2+v^{-2}))\kappa^*)
\non\\&&
+U_1U_2 (q+q^{-1})^2( (2q \cosh (\mu )-q^{-2}e^{-\mu'}(v^2+v^{-2}))\kappa- (e^{\mu'}q^{-1}-e^{-\mu'}q)\kappa^*)
\non\\&&
+ U_1^2(q+q^{-1})^2 (q^{-2} (\frac{ e^{-\mu'}
   (q+2q^{-1})}{(q-q^{-1})}(v^2+v^{-2})-2 q \cosh (\mu ))\kappa+
     (e^{\mu'} q-e^{-\mu'}q^{-1})\kappa^*)
\non\\&&
+ U_1 (  (2\cosh(\mu)q^{-3}(q+q^{-1})(v^2+v^{-2})
-\frac{e^{-\mu'} \left(1+2 q^{-2}+2 q^{-4}+q^{-6}\right)}{(q-q^{-1})}(v^4+v^{-4})
\non\\&&\qq\qq
-\frac{e^{-\mu'} (q+q^{-1}) (q^3+2 q^{-3}+3q^{-5})}{
   (q-q^{-1})})\kappa
-(q+q^{-1})
   (e^{\mu'} q^3-e^{-\mu'}q^{-3})(v^2+v^{-2})\kappa^*)
\non\\&&
+ (q^2-q^{-2}) ( (q^2-q^{-2}) (e^{\mu'}
   q-e^{-\mu'}q^{-1})-2\cosh (\mu )(v^2+v^{-2}))\kappa^*
\non\\&&
+
   (-2 q^{-1}(q^2-q^{-2})^2 \cosh (\mu )+e^{-\mu'}q^{-3}(q-q^{-1})^{-1}(v^6 +v^{-6} )
- e^{\mu'} (q^2-q^{-2})(v^2+v^{-2})
\non\\&&\qq
+e^{-\mu'} (2 q-q^{-1}+q^{-5}+q^{-7})(q-q^{-1})^{-1}(v^2 +v^{-2} ))\kappa \ .
\nonumber
\een
\end{example}

The proof of the following proposition is given in Appendix \ref{prP}.  Recall (\ref{Esp}), (\ref{Ed}) or (\ref{Eg}).
\begin{prop}\label{propBAU}
\beqa
{E}_a(u_i,\bar u_i)=\frac{u_i^{-\epsilon}b(u_i^2)q^{(\nu+\nu')/2}}{2\prod_{j\neq i}^Mb(u_i/u_j)b(qu_iu_j)}P_a^M(U_i,\bar U_i)\ \label{BAU}
\eeqa
with
\beqa
\left\{\begin{array}{c}
  M=0,1,...,2s \quad \mbox{for}\quad   a=sp , \nonumber \\
  M=2s \qquad \quad \mbox{for} \quad a\in\{d,g\}     \nonumber
 \end{array}  \right. \  .
\eeqa
\end{prop}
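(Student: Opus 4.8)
The plan is to establish the three identities (\ref{BAU}) (for $a=sp,d,g$) by a single mechanism: substitute the explicit ingredients of $E_a(u_i,\bar u_i)$ in (\ref{Esp}), (\ref{Ed}), (\ref{Eg}) — the factorized eigenvalues (\ref{Lap}), the coefficients $f,h$ of Appendix \ref{Sec:coefcommut}, and, for $a\in\{d,g\}$, the inhomogeneous summand — then pass everywhere to the symmetrized variables (\ref{sBr}) and read off $P_a^M$. That only $U_i$ and the symmetric functions $\textsf{e}_k(\bar U_i)$ of (\ref{esymdef}) can appear is dictated in advance by the invariances (\ref{t1}), (\ref{t2}): the variable $U_i$ is fixed by the Klein four-group generated by $u_i\mapsto -u_i$ and $u_i\mapsto q^{-1}u_i^{-1}$, the ring of Laurent polynomials in $u_i$ invariant under this group is exactly $\mathbb{C}[U_i]$, and the simultaneous invariance in the spectator roots forces symmetric dependence on the $U_j$, $j\neq i$.

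The computational linchpin is the elementary identity
\beqa
b(u_iu_j^{-1})\,b(qu_iu_j)=(q+q^{-1})(U_i-U_j)\ ,\nonumber
\eeqa
immediate from (\ref{b}) and (\ref{sBr}). Applied to the denominators of $f,h$ and to the denominator $\prod_{j\neq i}b(u_iu_j^{-1})b(qu_iu_j)$ of the inhomogeneous term, it produces the common factor $(q+q^{-1})^{M-1}\prod_{j\neq i}(U_i-U_j)$, which is exactly — up to the scalar $\tfrac12 q^{(\nu+\nu')/2}$ and the factor $u_i^{-\epsilon}b(u_i^2)$ — the prefactor on the right of (\ref{BAU}). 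Factoring it out is the first step, and reduces the claim to showing that the residual $u_i$-dependence of $E_a$, divided by $u_i^{-\epsilon}b(u_i^2)$, is a polynomial in $U_i$ whose coefficients are symmetric in $\bar U_i$.

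Next I would expand the surviving numerators. Once the identity above is used, each factor of $f(u_i,u_j)$ and $h(u_i,u_j)$ takes the form $W(u_i)-(q+q^{-1})U_j$ with $W$ a fixed Laurent expression in $u_i$; hence $\prod_{j\neq i}f$ and $\prod_{j\neq i}h$, after clearing $\prod_{j\neq i}(U_i-U_j)$, expand into $\sum_k(-1)^k(q+q^{-1})^k\textsf{e}_k(\bar U_i)\,W(u_i)^{M-1-k}$, which is where the $\textsf{e}_k(\bar U_i)$ of (\ref{Polya}) originate. Multiplying by the remaining $u_i$-dependent pieces — the factors of $\Lambda_1^\epsilon,\Lambda_2^\epsilon$ from (\ref{Lap}), the ratio $b(u_i^2)/b(qu_i^2)$, the monomial $\Delta_d$ or $\Delta_g$, and the string $\prod_{k=0}^{2s}b(q^{1/2+k-s}vu_i)b(q^{1/2+k-s}v^{-1}u_i)$ — one checks that the two summands (for $a=sp$), respectively three summands (for $a\in\{d,g\}$), recombine into a single expression even in $u_i$ and invariant under $u_i\mapsto q^{-1}u_i^{-1}$, hence a polynomial in $U_i$; the auxiliary functions (\ref{g0}), (\ref{geven}), (\ref{godd}) serve precisely to expand the symmetrized powers of $u_i$ in powers of $U_i$. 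Power counting then gives the degree bound, $M$ for $a=sp$ and $2s+1$ for $a\in\{d,g\}$, the extra unit in the latter coming from the degree-$(2s+1)$ string above, while the inhomogeneous contribution $\bar\Delta_a H(U_i)$ is the image of the last summand of (\ref{Ed}), (\ref{Eg}) and is absent for $a=sp$.

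The final step is to match coefficients with the closed form (\ref{Polya}), i.e. to verify the binomial sum $\sum_l\binom{M-1-k}{l}\dots$ together with the prefactors $(-1)^k(q+q^{-1})^{M-1}2^{-(M-1-k)}$. I expect the main obstacle to lie here and in the recombination claim of the previous step: the individual summands of $E_a$ are not separately polynomial in $U_i$, since $\Lambda_1^\epsilon,\Lambda_2^\epsilon$ and the products of $f,h$ carry genuine $u_i^{\pm}$-dependence through $W(u_i)$, so the crux is to prove that all non-$U_i$ contributions cancel in the sum and that the surviving polynomial is exactly (\ref{Polya}). This is a finite but intricate computation, naturally organized by the degree in $U_i$ and by the index $k$, and is the content carried out in Appendix \ref{prP}.
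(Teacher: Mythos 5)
Your proposal is correct and follows essentially the same route as the paper's proof in Appendix \ref{prP}: the factorization $b(u_i/u_j)b(qu_iu_j)=(q+q^{-1})(U_i-U_j)$, the expansion of the $f$- and $h$-products in the elementary symmetric polynomials $\textsf{e}_{k}(\bar U_i)$ with the leftover $u_i$-dependence sorted by parity in $b(qu_i^2)$ (the role of $g_0$, $g^{(even)}_{a,\epsilon}$, $g^{(odd)}_{a,\epsilon}$), and the polynomial $H(U_i)$ for the inhomogeneous string are precisely the content of Lemmas \ref{lemC1}--\ref{lemH} and Corollary \ref{cor1}. Like the paper, you defer the final coefficient matching to a finite routine computation, which is the same level of detail the paper itself provides.
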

Recall that the Bethe ansatz equations follow from requiring
\beqa
E_a(u_i,\bar u_i)= 0 \label{Ea0}
\eeqa
for all $i=1,...,M$.
Let $\{u_1,u_2,...,u_M\}$ denote the Bethe roots that satisfy the  Bethe ansatz equations  in Propositions \ref{p31}, \ref{p33}, \ref{p34} and  \ref{p35}. According to  (\ref{BAU}) and  (\ref{Ea0}), it follows:
\begin{prop}\label{propP}
The Bethe roots $\{u_1,u_2,...,u_M\}$ are determined from (\ref{sBr}) where  $\{U_1,U_2,..., U_M\}$ are solutions of the system of polynomial equations:
\beqa
P_a^M(U_i,\bar U_i)=0  \quad \mbox{for} \quad i=1,...,M.\ \label{PBA}
\eeqa 
\end{prop}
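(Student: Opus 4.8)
The plan is to obtain Proposition~\ref{propP} as an immediate consequence of the factorization identity~(\ref{BAU}) proved in Proposition~\ref{propBAU}, so that the statement reduces to a short genericity argument. First I would recall that, by construction, the Bethe roots $\{u_1,\dots,u_M\}$ are precisely those tuples satisfying the Bethe ansatz equations of Propositions~\ref{p31}, \ref{p33}, \ref{p34} and~\ref{p35}, which in every case $a\in\{sp,d,g\}$ amount to the conditions $E_a(u_i,\bar u_i)=0$ for $i=1,\dots,M$, cf.~(\ref{Ea0}). Proposition~\ref{propBAU} rewrites each $E_a(u_i,\bar u_i)$ as a Laurent prefactor times the polynomial $P_a^M(U_i,\bar U_i)$ evaluated at the symmetrized variables~(\ref{sBr}).

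The crux is then to observe that this prefactor,
\[
\frac{u_i^{-\epsilon}\,b(u_i^2)\,q^{(\nu+\nu')/2}}{2\prod_{j\neq i}^M b(u_i/u_j)\,b(qu_iu_j)}\,,
\]
is a nonzero scalar for a generic admissible set of Bethe roots: $u_i^{-\epsilon}$ and $q^{(\nu+\nu')/2}$ never vanish, $b(u_i^2)=u_i^2-u_i^{-2}$ vanishes only on the discrete locus $u_i^4=1$, and the denominator is finite and nonzero away from the coincidence configurations $u_i=\pm u_j$ and $u_iu_j=\pm q^{-1}$. These are exactly the non-generic configurations already excluded in the hypotheses of Propositions~\ref{p31}, \ref{p33}, \ref{p34}, \ref{p35} and of the off-shell relations used there, so for admissible $\bar u$ the prefactor is invertible.

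It then follows that, for each $i$, $E_a(u_i,\bar u_i)=0$ holds if and only if $P_a^M(U_i,\bar U_i)=0$; imposing this for all $i=1,\dots,M$, over the respective ranges $M=0,\dots,2s$ for $a=sp$ and $M=2s$ for $a\in\{d,g\}$, yields exactly the polynomial system~(\ref{PBA}). Conversely, any solution $\{U_1,\dots,U_M\}$ of~(\ref{PBA}) recovers the Bethe roots by inverting~(\ref{sBr}): solving the quadratic $qu_i^2+q^{-1}u_i^{-2}=(q+q^{-1})U_i$ for $u_i^2$ and taking a square root determines $u_i$ up to the discrete symmetries $u_i\to -u_i$ and $u_i\to \pm q^{-1}u_i^{-1}$ of~(\ref{t1}), (\ref{t2}), under which the Bethe state and its eigenvalue are invariant. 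This establishes~(\ref{PBA}) as an equivalent characterization of the Bethe roots.

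The only genuine difficulty has already been absorbed into Proposition~\ref{propBAU}, namely verifying that the Laurent combination $E_a(u_i,\bar u_i)$---assembled from the dressing eigenvalues $\Lambda_1^\epsilon,\Lambda_2^\epsilon$ of~(\ref{Lap}), the scattering factors $f,h$, and the inhomogeneous $\delta_a$ term---collapses, after factoring out the common prefactor, into the symmetric polynomial $P_a^M$ in the $U_i$. Granting that identity, the present proposition is essentially one line; the point to check carefully is simply that the admissibility hypotheses of the source propositions indeed rule out all zeros and poles of the prefactor, so that the equivalence $E_a=0 \Leftrightarrow P_a^M=0$ is exact rather than merely generic.
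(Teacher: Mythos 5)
Your proposal is correct and follows essentially the same route as the paper, which also obtains Proposition \ref{propP} as an immediate corollary of the factorization (\ref{BAU}) together with the defining condition $E_a(u_i,\bar u_i)=0$. Your added remarks on the non-vanishing of the Laurent prefactor (and its link to the admissibility condition $U_i\neq U_j$) and on inverting (\ref{sBr}) up to the symmetries (\ref{t1})--(\ref{t2}) only make explicit what the paper leaves implicit.
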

In general, there may be solutions such that $U_i=U_j$ for $i\neq j$. However, these solutions are not compatible with (\ref{BAU}). 
In the following, a solution is called  {\it admissible} if the condition $U_i\neq U_j$ for any $i\neq j$ is fulfilled. We now study the subset of admissible solutions  $\bar U= \{U_1,U_2,..., U_M\}$ of (\ref{PBA}) for the special $a=sp$, diagonal $a=d$ and generic $a=g$ cases.\vspace{1mm}

\underline{Admissible solutions for $a=sp$}:
 For $M=1$, it is clear that $P_{sp}^1(U_1)$ has a unique solution. For $M=2$, we have to solve the set of equations
\beqa
P_{sp}^2(U_1,U_2)=0 \ , \qquad P_{sp}^2(U_2,U_1)=0.\label{P2}
\eeqa
Numerically, we find that this system admits $4=2^2$ solutions. Two solutions are such that $U_1=U_2$, which are discarded (not admissible).  The other two solutions are distinct and related by permutation. So, up to permutation, the admissible solution of (\ref{P2}) associated with  (\ref{BAU})  is unique.  For $M=3$, we have to solve the set of equations
\beqa
P_{sp}^3(U_1,\{U_2,U_3\})=0 \ , \qquad  P_{sp}^3(U_2,\{U_1,U_3\})=0 \ , \qquad  P_{sp}^3(U_3,\{U_1,U_2\})=0. \label{P3}
\eeqa
 Numerically, we find $27=3^3$ solutions. Among them, only $6=3!$ are admissible. Furthermore, they are all related by a permutation. So, up to permutation, again we conclude that the  admissible  solution of  (\ref{P3})  is unique. For $M=4$, numerically we find $256=4^4$ solutions to (\ref{PBA}). Among these, only $24=4!$ are admissible and related by permutation. Again, this manifests the fact that, up to permutation, the admissible solution of the polynomial equations  (\ref{PBA})  is unique. \vspace{1mm}

More generally, by Bezout's theorem \cite[p. 670]{Grif} the  total number of solutions of the system (\ref{PBA}) for $a=sp$ is $M^M$. Previous numerical analysis suggests that the number of admissible solutions is $M!$, all related by permutation.
Although we have no proof at the moment for $M$ generic, we formulate the following conjecture:
\begin{conj} The system of polynomial equations (\ref{PBA}) for $a=sp$ admits a unique admissible solution $\bar U= \{U_1,U_2,..., U_M\}$ up to permutation.
\end{conj}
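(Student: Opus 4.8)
The plan is to prove uniqueness through the representation theory of the Leonard pair $\bar\pi(\tA),\bar\pi(\tA^*)$ rather than directly from the polynomial system, and to treat existence separately. By Proposition \ref{propP} together with the relation (\ref{BAU}) of Proposition \ref{propBAU}, an \emph{admissible} tuple $\bar U$ solves (\ref{PBA}) if and only if the associated distinct roots $\{u_i\}$ solve $E_{sp}^M(u_i,\bar u_i)=0$, i.e. the Bethe equations of Proposition \ref{p31}. For such roots, Proposition \ref{p31} asserts that $|\Psi_{sp,-}^M(\bar u,m_0)\rangle$ is an eigenvector of $\bar\pi(\tA)$ with eigenvalue $\theta_M$ (the right-hand side of (\ref{specA}), matching the sequence (\ref{st})). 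Since $\bar\pi(\tA)$ is a member of a Leonard pair its spectrum is multiplicity-free (Subsection \ref{ss31}), so the $\theta_M$-eigenspace is one-dimensional. Hence, provided the on-shell state is nonzero, any two admissible solutions of (\ref{PBA}) with the same $M$ produce \emph{proportional} Bethe states.

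The next step is to turn this proportionality into an equality of symmetrized roots. Each $\mathscr{B}^{-}(u,m)$ is, after factoring out a scalar prefactor, affine-linear in $U=(qu^2+q^{-1}u^{-2})/(q+q^{-1})$, say $\mathsf{B}_0(m)+U\,\mathsf{B}_1(m)$ with $\mathsf{B}_1(m)$ diagonal and $\mathsf{B}_0(m)$ tridiagonal in the $\bar\pi(\tA)$-eigenbasis (compare (\ref{Bplusm})), and the factors in the string (\ref{SB}) commute by (\ref{comBdBd}). Multilinearity in the $U_i$ together with this symmetry forces the PBW expansion (Corollary \ref{propBS}) into the form
\[
|\Psi_{sp,-}^M(\bar u,m_0)\rangle \ \propto\ \sum_{k=0}^{M}\textsf{e}_k(U_1,\dots,U_M)\,v_k ,
\]
where $v_0,\dots,v_M\in\bcV$ are fixed vectors independent of $\bar u$. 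I would then prove the \textbf{key lemma}: $v_0,\dots,v_M$ are linearly independent. The natural route is a triangularity argument: since $\mathsf{B}_1$ is diagonal and only $\mathsf{B}_0$ raises the index, $v_k$ (a symmetrized product of $k$ copies of $\mathsf{B}_1$ and $M-k$ copies of $\mathsf{B}_0$ applied to $|\Omega^-\rangle=|\theta_0\rangle$) lies in $\mathrm{span}(|\theta_0\rangle,\dots,|\theta_{M-k}\rangle)$ with nonzero top component along $|\theta_{M-k}\rangle$, the coefficient being a product of the nonzero super-diagonal entries of $\mathsf{B}_0$ (nonzero because $a_{j,j+1}\ne0$ for a Leonard pair, cf. (\ref{tridAstar})) and of certain diagonal entries of $\mathsf{B}_1$. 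This anti-triangular pattern gives independence. Granting it, expanding the fixed eigenvector as $|\theta_M\rangle=\sum_k a_k v_k$ and imposing the proportionality above forces $\textsf{e}_k(\bar U)=a_k/a_0$ for all $k$ (using $\textsf{e}_0=1$ and $a_0\ne0$), so the monic polynomial $\prod_{i=1}^M(Z-U_i)$, and hence the multiset $\{U_1,\dots,U_M\}$, is uniquely determined. This yields uniqueness up to permutation.

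The remaining input is \emph{existence} of one admissible solution, and this is where I expect the main obstacle to lie. The candidate symmetric functions $a_k/a_0$ produced by the key lemma determine a monic degree-$M$ polynomial, and by (\ref{BAU}) distinctness of its roots is precisely what is needed both to declare the solution admissible and to promote the relation ``$|\Psi_{sp,-}^M\rangle\propto|\theta_M\rangle$'' back to the genuine Bethe equations $E_{sp}^M(u_i,\bar u_i)=0$ (the off-shell action (\ref{Esp}) only closes when the $U_i$ are distinct). The natural way to obtain distinctness in the special case $a=sp$ is to identify $\prod_i(Z-U_i)$ with a (dual) orthogonal polynomial of the Askey scheme, whose zeros are simple; alternatively one can continue from a degenerate point where the roots are explicit and separated. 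I also expect to need a nonvanishing statement for the on-shell state, which should follow from the same triangular expansion (the leading coefficient of $v_0$ is nonzero). Combining this existence argument with the uniqueness above establishes the conjecture, and the observed count of $M!$ admissible solutions is then explained as the single free $S_M$-orbit.
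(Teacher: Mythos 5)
First, a point of calibration: the paper does not prove this statement. It is stated explicitly as a conjecture, supported only by the Bezout count of $M^M$ total solutions and by numerical checks for $M=1,2,3,4$; so there is no ``paper proof'' to compare against, and your proposal, if completed, would strictly exceed what the paper establishes. Your reduction of \emph{uniqueness} to the key lemma is, in outline, correct and rather elegant: the exchange relation (\ref{comBdBd}) makes the string (\ref{SB}) symmetric in the $u_i$, each factor $\mathscr{B}^{\epsilon}(u_i,m_i)$ is (after the scalar prefactor) affine-linear in $U_i$, and a symmetric multilinear polynomial in $U_1,\dots,U_M$ is exactly a combination of the $\textsf{e}_k$'s; combined with the multiplicity-free spectrum of $\bar\pi(\tA)$, the normalization $\textsf{e}_0=1$, and Proposition \ref{propBAU} (which is what lets you pass between admissible solutions of (\ref{PBA}) and genuine Bethe roots), this does pin down the multiset $\{U_i\}$ \emph{provided} $v_0,\dots,v_M$ are independent.

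The genuine gaps are two. (i) The key lemma is asserted, not proved, and it is more delicate than your sketch suggests. The component of $v_k$ along $|\theta_{M-k}\rangle$ is not ``a product'' but a sum over the $\binom{M}{k}$ placements of the $k$ diagonal factors $\mathsf{B}_1(m_i)$ among the $M-k$ raising factors, each term being a product of superdiagonal entries of $\mathsf{B}_0(m_i)$ and diagonal entries of $\mathsf{B}_1(m_i)$ evaluated at different levels; such a sum can in principle cancel. Worse, the superdiagonal entries are of the form $(\,\cdots)\,a_{j,j+1}$, and the scalar prefactor $(\,\cdots)$ \emph{does} vanish for special pairs $(j,m)$ --- the paper exploits precisely such a vanishing to prove $\bar\pi(\mathscr{B}^{+}(u,m_0+4s))|\theta^*_{2s}\rangle=0$ --- so ``nonzero because $a_{j,j+1}\neq 0$'' is not sufficient. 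You will need a genericity argument, most naturally in the gauge parameter $\beta$, which is unconstrained in the special case and with respect to which it suffices to find a single value making the $v_k$ independent. (ii) Existence is left open. The route you indicate is viable (run Proposition \ref{pQM} backwards: the Askey--Wilson difference equation is the homogeneous T-Q relation, and evaluating it at a simple zero of $P_M$ yields (\ref{PBA}); this is essentially the content of the references [WZ, DE] cited in the paper), but simplicity of the zeros of $P_M$ for the generally complex parameters (\ref{pfix}) is itself only classical in the orthogonality regime and must be propagated by a Zariski-density argument. Until (i) and (ii) are supplied, the proposal is a credible program rather than a proof.
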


\underline{Admissible solutions for $a=d$ and $a=g$}:
 For $M=1$ ($s=1/2$), $P_{a}^1(U_1)$ is a polynomial of degree $2$ with two distinct solutions. For $M=2$ ($s=1$), we have to solve the set of equations
\beqa
P_{a}^2(U_1,U_2)=0 \ , \qquad P_{a}^2(U_2,U_1)=0\label{Pd2}
\eeqa
where the two polynomials are of total degree $3$.
Numerically, we find  $9=3^2$ solutions. Among those, only $6$ solutions are admissible.  Up to permutation, we have $3$ distinct solutions.
 For $M=3$ ($s=3/2$), we have to solve the set of equations
\beqa
P_{a}^3(U_1,\{U_2,U_3\})=0 \ , \qquad  P_{a}^3(U_2,\{U_1,U_3\})=0 \ , \qquad  P_{a}^3(U_3,\{U_1,U_2\})=0 \label{Pd3}
\eeqa
where the three polynomials are of total degree $3$. Numerically, we find $64=4^3$ solutions. Among those, only $4$ are admissible and distinct.

For generic $M=2s$, by Bezout's theorem the total number of solutions of the system (\ref{PBA}) for $a=d$ or $a=g$ is $(2s+1)^{2s}$. Previous numerical analysis suggests that the number of admissible solutions is $2s+1$ which matches with the dimension of the vector space ${\bcV}$, as expected. We formulate the following conjecture:
\begin{conj} The system of polynomial equations (\ref{PBA}) for $a=d,g$ with $M=2s$ admits $2s+1$ distinct admissible solutions.
\end{conj}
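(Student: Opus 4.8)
The plan is to convert the algebraic count of admissible solutions into a spectral count on the $(2s+1)$-dimensional module $\bcV$. By Propositions \ref{p34} and \ref{p35}, every admissible solution $\bar u=\{u_1,\dots,u_{2s}\}$ yields an eigenvector $|\Psi_{a,\epsilon}^{2s}(\bar u,m_0)\rangle$ of $\bar\pi(\textsf{I})$ with eigenvalue $\Lambda_{a,\epsilon}^{2s}$. By the exchange relation (\ref{comBdBd}) the string of creation operators is symmetric in $\bar u$, so this eigenvector depends only on the permutation orbit of $\bar u$, equivalently on the unordered set $\{U_1,\dots,U_{2s}\}$ of symmetrized roots (\ref{sBr}); by Proposition \ref{propP} that set is precisely the data of an admissible solution. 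First I would show that the resulting map, from orbits of admissible solutions to eigenlines of $\bar\pi(\textsf{I})$, is well defined and \emph{injective}. Well-definedness requires the Bethe state to be nonzero, which I would obtain from the nonvanishing of its normalization ${\cal N}_{2s}(\bar u)$ on admissible data, using the Poincar\'e--Birkhoff--Witt expansion underlying Proposition \ref{baseLP}. Injectivity is then immediate from the transition-coefficient identity of Lemma \ref{lem5}: the eigenvector determines $Q_{2s}(Z)=\prod_{i=1}^{2s}(Z-U_i)$ up to scale, and $Q_{2s}$ determines the orbit through its roots.

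The second step is to bound the number of orbits from above by $2s+1$ by proving that $\bar\pi(\textsf{I})$ has a multiplicity-free spectrum for generic parameters. Here I would argue by deformation from the special point: at $\kappa^*=\kappa_\pm=0$ the operator equals $\kappa\,\bar\pi(\textsf{A})$, whose spectrum is multiplicity-free by the standing Leonard-pair hypothesis of Subsection \ref{ss31}. Simplicity of the spectrum is the nonvanishing of the discriminant of the characteristic polynomial (\ref{polyc}), which is an open and dense condition on $(\kappa,\kappa^*,\kappa_\pm)$; since it holds at the special point it persists generically. Multiplicity-freeness gives exactly $2s+1$ eigenlines, so the injection of Step~1 forces the number of admissible orbits to be at most $2s+1$.

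The lower bound --- completeness of the construction --- is the crux, and is where I expect the main obstacle. One must show that \emph{every} eigenline of $\bar\pi(\textsf{I})$ is hit, i.e. that each of the $2s+1$ eigenvectors is an on-shell Bethe state with an admissible set of roots. The natural route is through the Baxter T-Q relation of Section \ref{s4}: passing to the $q$-difference realization $(\pi,{\cal V})$, one sets $Q(Z)=\langle z|\psi\rangle$ for an eigenvector $\psi$; the (inhomogeneous) T-Q equation then forces $Q$ to be polynomial, and a degree count anchored by the eigenvalue structure (\ref{st}) together with the inhomogeneous term --- governed in the diagonal case by the characteristic polynomial of the Leonard-triple operator $\tA^\diamond$, cf. the discussion following (\ref{polyc}) --- should pin $\deg Q=2s$. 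It remains to prove such a $Q$ is \emph{squarefree}, so that its roots form an admissible solution; for $a=sp$ this follows from the simplicity of the zeros of the associated Askey-Wilson polynomial identified in Lemma \ref{lem5}, and I would attempt to transport this nondegeneracy to $a=d,g$ via the realization of Section \ref{s4}. Equivalently, completeness would follow by showing directly that the $2s+1$ Bethe states span $\bcV$, extending the basis statement of Proposition \ref{baseLP} from the special operator to the diagonal and generic ones.

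Two caveats must be recorded. For $a=g$ the construction of the on-shell states rests on Conjecture \ref{prop:genericoffshell}, so the generic case is only as strong as that conjecture, whereas for $a=d$ the relevant off-shell relation is the proved Lemma \ref{prop:diagonaloffshell} and the argument is unconditional modulo completeness. The genuinely nonroutine point is precisely this squarefree degree-count of $Q$: Bezout's theorem supplies only the gross total $(2s+1)^{2s}$ of solutions of (\ref{PBA}) and gives no control over how many are squarefree, so turning the spectral bound into a count of admissible polynomial solutions is exactly the content of the conjecture and cannot be reached by the elementary symmetric-function bookkeeping of Proposition \ref{propBAU} alone.
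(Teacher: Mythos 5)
The statement you are trying to prove is presented in the paper as a \emph{conjecture}: the authors support it only by the Bezout count $(2s+1)^{2s}$ of all solutions of (\ref{PBA}) and by numerical checks at $s=1/2,1,3/2$, and they explicitly record that the multiplicity-freeness of the spectra (\ref{specdiag}), (\ref{specgeneric}) --- the key ingredient your upper bound needs --- ``is missing.'' So there is no proof in the paper to compare against. Your three-step programme (inject admissible orbits into eigenlines, bound above by simplicity of the spectrum, close by completeness) is the natural one, and your genericity argument for simplicity --- nonvanishing of the discriminant of the characteristic polynomial of $\bar\pi({\textsf I})$ is an open condition verified at the special point $\kappa^*=\kappa_\pm=0$ where the operator reduces to $\kappa\,\bar\pi(\tA)$ --- is a genuine step beyond what the paper states. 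But the proposal does not prove the conjecture, for the reason you partly flag and for one you pass over.

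The load-bearing gap is the completeness/squarefree step, which as written is a plan rather than an argument. The pairing $\langle z|\cdot\rangle$ is defined in the paper (Lemma \ref{lem5}) only for Bethe states, as the image under the $q$-difference realization $\pi$ of a product of creation operators acting on ${\bf 1}$; there is no pairing between the finite-dimensional module $\bcV$ and $(\pi,{\cal V})$ that would let you set $Q(Z)=\langle z|\psi\rangle$ for an \emph{arbitrary} eigenvector $\psi$ of $\bar\pi({\textsf I})$ --- the paper itself notes that a rigged-Hilbert-space construction of a $\{|z\rangle\}$ basis is open. Without that, the inhomogeneous T-Q relation cannot be run backwards to produce, for each eigenvalue, a polynomial $Q$ of degree exactly $2s$ with simple roots; yet that is precisely the content of the conjecture. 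The same defect undercuts your injectivity step: Lemma \ref{lem5} computes $Q_{2s}$ from the formal product of dynamical operators in $\pi$, not from the resulting vector in $\bcV$, so the eigenline does not obviously determine $Q_{2s}$; and the nonvanishing of the scalar prefactor ${\cal N}_{2s}(\bar u)$ says nothing about the nonvanishing of the Bethe vector in $\bcV$. Finally, for $a=g$ the on-shell construction itself rests on the unproven Conjecture \ref{prop:genericoffshell}, so your scheme could at best be made to work for $a=d$. In short, the statement remains a conjecture after your proposal: the two steps you defer (polynomiality/degree of $Q$ for every eigenvalue, and squarefreeness) are exactly what would have to be invented.
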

For all three cases $a=\{sp,d,g\}$, we wish to observe that the numerical analysis of (\ref{PBA})  is simpler compared with the usual analysis of the Bethe ansatz equations in terms of the original Bethe roots $\{u_i\}$.

\subsubsection{Bethe states and the PBW basis of Askey-Wilson algebra}
In the framework of the algebraic Bethe ansatz applied to the $K-$matrix given in Proposition \ref{Kd}, by construction any Bethe state is a polynomial in the elements $\tA,\tA^*$  of the Askey-Wilson algebra acting on a  certain reference state, $|\Omega^+\rangle$ or $|\Omega^-\rangle$.  In general, this polynomial is not written in terms of linearly independent monomials in $\tA,\tA^*$.  Furthermore, according to the previous subsection, it is expected that any Bethe state can be written, up to an overall factor, in terms of the `symmetrized' variables (\ref{sBr}). 
Below, an expansion formula for any  (off-shell or on-shell) Bethe state is given in the linear basis of the Askey-Wilson algebra, and the dependency in the variables $U_i$ is exhibited.\vspace{1mm}

As a preliminary, recall that a linear basis for the Askey-Wilson algebra is known, see e.g. \cite[Theorem 4.1]{T11}. For convenience, we introduce the element  $\tB$ such that:
\beqa
\tB = \big [\tA,\tA^* \big]_q \quad ( \Rightarrow \quad \tA^* \tA = q^2 \tA\tA^* -q\tB) \ .\label{o1}
\eeqa
In terms of the elements $\tA,\tA^*,\tB$, from the Askey-Wilson relations (\ref{aw1}), (\ref{aw2}) one obtains:
\beqa
 \big[\tA,\tB \big]_{q^{-1}} &=&  \rho \,\tA^*+\omega \,\tA+\eta\mathcal{I}\quad\quad \Rightarrow\quad \tB \tA = q^{-2} \tA\tB - \rho q^{-1}\tA^* -\omega q^{-1}\tA - \eta q^{-1}\ \ ,\label{o2}\\
 \big[\tA^*,\tB \big]_q &=&  -\rho \,\tA - \omega \,\tA^* - \eta^*\mathcal{I}\quad \Rightarrow \quad \tB \tA^* = q^{2} \tA^* \tB + \rho q\tA +\omega q\tA^* + \eta^*q\ .\label{o3}
\eeqa
Thus, using the ordering relations (\ref{o1}),(\ref{o2}),(\ref{o3}), any polynomial in the elements $\tA,\tA^*,\tB$ can be written in terms of the linearly independent monomials
\beqa
\tA^i {\tA^*}^j \tB^k  \quad \mbox{for any $i,j,k\geq 0$}\label{pbw}
\eeqa
that form a Poincar\'e-Birkhoff-Witt basis (see definition in \cite[ p. 299]{Dam}) of the Askey-Wilson algebra. We refer the reader to \cite{T11} for more details.

Consider the dynamical operators  $\mathscr{B}^{\pm}(u,m)$. Recall their expressions in terms of the elements $\tA,\tA^*$ of the Askey-Wilson algebra given in Appendix \ref{apA}.  Define
\beqa
U= \frac{ qu^2 + q^{-1}u^{-2}}{q+q^{-1}}.\label{sV}
\eeqa
According to the ordering prescription (\ref{pbw}), in the PBW basis of the Askey-Wilson algebra (\ref{pbw}) we obtain (the notation (\ref{b}) is used): 
\ben
\mathscr{B}^{+}(u,m)&=\frac{\beta b(u^2) (q+q^{-1})}{u(\alpha  q^{2 m+2}-\beta)}\left(\frac{\chi q^{m+1}b(q)}{\beta\rho}\tA\tA^*  -\frac{(\chi^2q^{m+2}+\beta^2\rho q^{-m})}{(q+q^{-1})\beta\rho\chi} \tB   - U\tA^* + \tA + h^+_0(U,m) \mathcal{I}\right)\ ,\label{bp} \\
\mathscr{B}^{-}(u,m)&=\frac{\beta u b(u^2) (q+q^{-1})}{(\alpha  q^{-2 m-2}-\beta)}\left(\frac{\chi q^{-m+1}b(q)}{\beta\rho}\tA\tA^*  -\frac{(\chi^2q^{-m+2}+\beta^2\rho q^{m})}{(q+q^{-1})\beta\rho\chi} \tB   +U\tA - \tA^* + h^-_0(U,m) \mathcal{I}\right)\ \label{bm}
\een
where
\ben
h^\epsilon_0(U,m)=\frac{(\chi^2 q^{\epsilon m} - \beta^2\rho q^{-\epsilon m})}{\beta\chi(q^2-q^{-2})}\left( U+ \frac{\omega}{\rho}   \right)  + \left\{ \begin{array}{cc} \  \frac{\eta^*}{\rho}\ \mbox{for}\ \epsilon=+1\ ,\\
 -\frac{\eta}{\rho}\ \mbox{for}\ \epsilon=-1 \end{array} \right.\nonumber \ .
\een
For an arbitrary product of the dynamical operators $\mathscr{B}^{\pm}(u,m)$, it is easy to extract the general structure in terms of the ordered monomials (\ref{pbw}). For instance, consider a product of two operators $\mathscr{B}^{\epsilon}(u_1,m')$, $\mathscr{B}^{\epsilon}(u_2,m)$. In its ordered form, it reads as a polynomial of total degree four in $\tA,\tA^*$ and $\tB$ with the prescription that the power of each element is at most two. More generally, by induction it follows:
\begin{lem}\label{l1} For any integer $M\geq 1$ and any set $\bar u=\{u_1,u_2,...,u_M\}$:
\beqa
\mathscr{B}^{\epsilon}(u_1,m+2(M-1))\cdots \mathscr{B}^{\epsilon}(u_M,m) = \left(  \left(\frac{\beta}{\alpha}\right)^M\frac{ (q+q^{-1})^M q^{2M(M-m-2)}}{    (\frac{\beta}{\alpha}q^{-2m-2};q^4)_M}\prod_{l=1}^M\frac{(u_l^2-u_l^{-2})}{u_l^\epsilon}\right)\!\!\!\!\!\!\!\!\!\!\!\! \sum_{\footnotesize{ \begin{array}{cc} i,j,k  \leq M \\ i+j+k \leq 2M\end{array}}} \!\!\!\!\!\!\!\!\!\!\!\! \zeta_{i,j,k}^{[M]}(\bar U) \tA^i {\tA^*}^j \tB^k \ ,\nonumber
\eeqa
where  $\zeta_{i,j,k}^{[M]}(\bar U) $ are polynomials of total degree $M$  in the variables $U_i=\frac{qu_i^2+q^{-1}u_i^{-2}}{q+q^{-1}}$.
\end{lem}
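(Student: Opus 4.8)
The plan is to proceed by induction on $M$. The base case $M=1$ is immediate from the explicit formulas (\ref{bp}), (\ref{bm}): each $\mathscr{B}^{\epsilon}(u,m)$ is, up to the prefactor $\tfrac{\beta b(u^2)(q+q^{-1})}{u^\epsilon(\alpha q^{\pm(2m+2)}-\beta)}$, an element of the span of $\{\tA\tA^*,\tB,\tA^*,\tA,\mathcal{I}\}$, in which the coefficients of $\tA^*$ (for $\epsilon=+$) or $\tA$ (for $\epsilon=-$) are exactly $-U$ and $+U$, i.e. polynomials of degree $1$ in $U$, while all other coefficients are constants (degree $0\le 1$). Since the monomials appearing are $\tA\tA^*$ ($i=j=1$), $\tB$ ($k=1$), $\tA^*$ or $\tA$ ($i+j=1$) and $\mathcal{I}$, the constraints $i,j,k\le 1$ and $i+j+k\le 2$ hold, and each $\zeta^{[1]}_{i,j,k}(U)$ has degree $\le 1$ in $U$, matching the claim. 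First I would verify that the displayed normalizing prefactor in the statement reduces, for $M=1$, to the prefactor in (\ref{bp})/(\ref{bm}) — this is a routine check using $(\tfrac{\beta}{\alpha}q^{-2m-2};q^4)_1 = 1-\tfrac{\beta}{\alpha}q^{-2m-2}$.

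For the inductive step, I would assume the formula for $M-1$ and multiply on the left by a single factor $\mathscr{B}^{\epsilon}(u_1,m+2(M-1))$, using
\[
\mathscr{B}^{\epsilon}(u_1,m+2(M-1))\,\mathscr{B}^{\epsilon}(u_2,m+2(M-2))\cdots\mathscr{B}^{\epsilon}(u_M,m).
\]
The shift of the gauge integer by $2$ at each step is exactly what makes the prefactors telescope: extracting the single-factor prefactor at gauge parameter $m+2(M-1)$ and combining it with the induction prefactor at $m$ should reproduce the $q$-Pochhammer $(\tfrac{\beta}{\alpha}q^{-2m-2};q^4)_M$ in the denominator, the power $q^{2M(M-m-2)}$, and the product $\prod_{l=1}^M (u_l^2-u_l^{-2})/u_l^\epsilon$. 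I would track this scalar bookkeeping separately from the operator content, so that it suffices to show that the operator part — a product of the degree-one element (in $U_1$) coming from the new factor with the induction polynomial of total degree $M-1$ in $\{U_2,\dots,U_M\}$ — is, after reordering into PBW form, a polynomial of total degree $M$ in $\bar U=\{U_1,\dots,U_M\}$ whose monomials obey $i,j,k\le M$ and $i+j+k\le 2M$.

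The main obstacle will be controlling the PBW reordering: multiplying the induction hypothesis (a sum of ordered monomials $\tA^i{\tA^*}^j\tB^k$ with $i,j,k\le M-1$ and $i+j+k\le 2(M-1)$) on the left by the single-factor expression, which contains $\tA\tA^*$, $\tB$, $\tA^*$ or $\tA$, and $\mathcal{I}$, produces out-of-order products such as $\tA^*\,\tA^i$ and $\tB\,\tA^i{\tA^*}^j$ that must be rewritten using the ordering relations (\ref{o1}), (\ref{o2}), (\ref{o3}). The key point to establish is that these relations are \emph{degree-preserving in $U$}: each application of (\ref{o1})–(\ref{o3}) replaces a leading-order product by ordered monomials whose coefficients are \emph{constants} (independent of $U_1,\dots,U_M$) times lower-or-equal PBW monomials, so the $U$-degree is carried entirely by the explicit coefficients $\zeta^{[M-1]}$ and by the single factor of $U_1$. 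Thus each reordering can lower the monomial exponents but never raises the total $U$-degree above $(M-1)+1=M$. I would argue the exponent bounds $i,j,k\le M$ by noting that each new factor raises any one exponent by at most $1$ (the $\tA\tA^*$ term raises both $i$ and $j$ by one, $\tB$ raises $k$ by one, $\tA^*$/$\tA$ raises $j$/$i$ by one), while the relations (\ref{o2}), (\ref{o3}) trade a $\tB$ against an $\tA$ or $\tA^*$ without increasing the maximal individual exponent, and (\ref{o1}) converts $\tA^*\tA$ into $\tA\tA^*$ and $\tB$ with no increase. The total-degree bound $i+j+k\le 2M$ follows the same way, since each of the $M$ factors contributes at most $2$ to $i+j+k$ (the maximum being the $\tA\tA^*$ monomial), and the commutation relations are non-increasing in $i+j+k$. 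Carefully formalizing this "non-increasing under reordering" statement — ideally by a filtration or weight argument on the free algebra that is respected by (\ref{o1})–(\ref{o3}) — is the technical heart of the proof; once it is in place, the degree-in-$U$ claim and the exponent bounds follow by a straightforward induction.
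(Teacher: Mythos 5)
Your plan coincides with the paper's own (essentially unstated) argument: the paper simply displays the single-factor PBW forms (\ref{bp}), (\ref{bm}), works out the $M=2$ case in Example \ref{e1}, and asserts the lemma ``by induction''; your induction on $M$ with telescoping prefactors and degree bookkeeping through the ordering relations (\ref{o1})--(\ref{o3}) is exactly that route, spelled out in more detail than the source. The one step that would fail as literally written is the claim that (\ref{o2}), (\ref{o3}) never increase the maximal individual exponent: the term $-\rho q^{-1}\tA^*$ in (\ref{o2}) raises $j$ by one while lowering $k$, so a single application can increase an individual exponent. The fix is the filtration you anticipate: the combined quantities $i+k$ and $j+k$ are each at most $M$ for the raw (unordered) word coming from $M$ factors and are non-increasing under every term of (\ref{o1})--(\ref{o3}), which yields $i,j,k\le M$ and $i+j+k\le 2M$; the degree-in-$U$ bound is correct as you argue, since the structure constants in (\ref{o1})--(\ref{o3}) are independent of $u$.
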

\begin{rem} Note that both expressions (\ref{bp}), (\ref{bm}) are regular in the parameter $\beta$. Thus, for the special case $\beta=0$ the product of $\mathscr{B}^{\epsilon}(u_i,m_i)$ in  Lemma \ref{l1} is well-defined.
\end{rem}

Applying the ordering prescription (\ref{pbw}) to any product of dynamical operators $\mathscr{B}^{\epsilon}(u,m)$, the polynomials $\zeta_{i,j,k}^M(\bar u) $ can be derived recursively.
\begin{example}\label{e1} For $\epsilon=+$, the non-vanishing coefficients $\zeta_{i,j,k}^{[M]}(\bar u)$ are given by:
\ben
\mbox{For $M=1$:}\qquad \zeta^{[1]}_{110}&=&\zeta_{110}(m)=\frac{\chi q^{m+1}(q-q^{-1})}{\beta\rho},\quad \zeta^{[1]}_{001}=\zeta_{001}(m)=  -\frac{(\chi^2q^{m+2}+\beta^2\rho q^{-m})}{(q+q^{-1})\beta\rho\chi} ,\nonumber\\
    \zeta^{[1]}_{100}&=&\zeta_{100}= 1,  \quad \zeta^{[1]}_{010}(U_1)=\zeta_{010}(U_1)= -U_1  , \quad  \zeta^{[1]}_{000}(U_1)=\zeta_{000}(U_1,m)= h_0^{+}(U_1,m).\nonumber\\  
\nonumber\\
\mbox{For $M=2$:}\qquad \zeta^{[2]}_{220}&=& q^2\zeta_{110}(m-2)\zeta_{110}(m),\nonumber\\
 \zeta^{[2]}_{111}&=& \zeta_{110}(m-2)\zeta_{001}(m) + \zeta_{110}(m)\zeta_{001}(m-2)- q^3\zeta_{110}(m-2)\zeta_{110}(m)\nonumber\\
\zeta^{[2]}_{101}&=& \zeta_{100}\left(\zeta_{001}(m) + q^{-2}\zeta_{001}(m-2) -q\zeta_{110}(m-2)\right)  ,\nonumber\\
 \zeta^{[2]}_{011}&=& q^2\zeta_{001}(m-2)\zeta_{010}(U_2) + \zeta_{001}(m)\zeta_{010}(U_1) -q^3 \zeta_{110}(m)\zeta_{010}(U_1),\nonumber 
\een

\ben
 \zeta^{[2]}_{120}&=& q^2\zeta_{110}(m) \zeta_{010}(U_1) +  \zeta_{110}(m-2) \zeta_{010}(U_2) , \nonumber\\
 \zeta^{[2]}_{210}&=& \zeta_{100}(q^2 \zeta_{110}(m-2)+\zeta_{110}(m)) , \nonumber\\
 \zeta^{[2]}_{110}&=& \zeta_{110}(m-2)\zeta_{000}(U_2,m) +\zeta_{110}(m)\zeta_{000}(U_1,m-2)\nonumber\\
&&\qquad + \zeta_{100}(q^2\zeta_{010}(U_1)+ \zeta_{010}(U_2)) -  q^2\omega\zeta_{110}(m-2) \zeta_{110}(m)
\nonumber\\
 \zeta^{[2]}_{200}&=&  \rho \zeta_{110}(m)  (q^{-1}\zeta_{001}(m-2)  -q^2\zeta_{110}(m-2))+   \zeta_{100}^2  , \nonumber\\
 \zeta^{[2]}_{020}&=& \zeta_{010}(U_1) \zeta_{010}(U_2)- \rho q^{-1} \zeta_{001}(m-2) \zeta_{110}(m)  , \nonumber\\
 \zeta^{[2]}_{002}&=& \zeta_{001}(m-2)\zeta_{001}(m),\nonumber\\ 
 \zeta^{[2]}_{100}&=&   \rho q   \zeta_{001}(m-2) \zeta_{010}(U_2) - \rho q^2  \zeta_{110}(m) \zeta_{010}(U_1) -\eta^* q^2\zeta_{110}(m-2) \zeta_{110}(m)\nonumber\\ && -\omega q^{-1} \zeta_{001}(m-2) \zeta_{100}  +\eta^* q^{-1} \zeta_{001}(m-2) \zeta_{110}(m) +  ( \zeta_{000}(U_1,m-2) + \zeta_{000}(U_2,m)) \zeta_{100},\nonumber
\een
\ben
 \zeta^{[2]}_{010}&=&   -\rho q^{-1} \zeta_{001}(m-2) \zeta_{100} - \eta  \zeta_{001}(m-2)  \zeta_{110}(m) + \omega q \zeta_{001}(m-2) \zeta_{010}(U_2) \nonumber\\ &&   - \omega q^2 \zeta_{010}(U_1) \zeta_{110}(m)   + \zeta_{000}(U_1,m-2) \zeta_{010}(U_2)  + \zeta_{000}(U_2,m) \zeta_{010}(U_1)  ,  \nonumber\\ 
 \zeta^{[2]}_{001}&=&  - q  \zeta_{010}(U_1)\zeta_{100} + \zeta_{000}(U_1,m-2)  \zeta_{001}(m) + \zeta_{000}(U_2,m)  \zeta_{001}(m-2),\nonumber\\
 \zeta^{[2]}_{000}&=&      \zeta_{000}(U_1,m-2) \zeta_{000}(U_2,m) - \eta q^{-1} \zeta_{001}(m-2) \zeta_{100}\nonumber\\ &&  - \eta^*\left(q^2\zeta_{010}(U_1) \zeta_{110}(m) -  q \zeta_{010}(U_2)\ \zeta_{001}(m-2)\right)  \ . \nonumber
\een
\end{example}

Let ${\bcV}$ be the irreducible  finite dimensional vector space on which the Leonard pair $\bar\pi(\tA),\bar\pi(\tA^*)$  act. Denote the (off-shell or on-shell) Bethe state:
\beqa
|\Psi_{\epsilon}(\bar u,m)\rangle= \bar\pi(\mathscr{B}^{\epsilon}(u_1,m+2(M-1))\cdots \mathscr{B}^{\epsilon}(u_M,m))|\Omega^\epsilon\rangle\ .
\eeqa
For $M \leq 2s$, there are no relations besides the Askey-Wilson relations (\ref{aw1}), (\ref{aw2}), whereas for $M\geq \dim({\bcV})=2s+1$, additional relations occur from the characteristic polynomials of $\bar\pi(\tA)$ and $\bar\pi(\tA^*)$, see (\ref{polyc}). By Lemma \ref{l1}, it follows:
\begin{cor}\label{propBS} For $ 0\leq M \leq 2s$, any Bethe state admits the expansion formula: 
\beqa
&&|\Psi^M_{\epsilon}(\bar u,m)\rangle= \left(  \left(\frac{\beta}{\alpha}\right)^M\frac{ (q+q^{-1})^M q^{2M(M-m-2)}}{    (\frac{\beta}{\alpha}q^{-2m-2};q^4)_M}\prod_{l=1}^M\frac{(u_l^2-u_l^{-2})}{u_l^\epsilon}\right)\!\!\!\!\!\!\!\!\!\!\!\! \sum_{\footnotesize{ \begin{array}{cc} i,j,k  \leq M \\ i+j+k \leq 2M\end{array}}} \!\!\!\!\!\!\!\!\!\!\!\! \zeta_{i,j,k}^M(\bar U) \bar\pi\left(\tA^i {\tA^*}^j \tB^k\right)|\Omega^\epsilon\rangle \ .\label{expf}
\eeqa
\end{cor}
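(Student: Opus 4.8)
The plan is to read off (\ref{expf}) as the image under $\bar\pi$ of the purely algebraic identity already established in Lemma \ref{l1}. By definition the Bethe state is $|\Psi^M_{\epsilon}(\bar u,m)\rangle=\bar\pi(B^\epsilon(\bar u,m,M))|\Omega^\epsilon\rangle$ with the string $B^\epsilon(\bar u,m,M)=\mathscr{B}^{\epsilon}(u_1,m+2(M-1))\cdots\mathscr{B}^{\epsilon}(u_M,m)$ as in (\ref{SB}). Thus the operator to which $\bar\pi$ is applied is exactly the ordered product whose Poincar\'e--Birkhoff--Witt form Lemma \ref{l1} computes, namely the scalar prefactor times $\sum \zeta^{[M]}_{i,j,k}(\bar U)\,\tA^i{\tA^*}^j\tB^k$, the sum running over $i,j,k\leq M$ with $i+j+k\leq 2M$.

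First I would invoke Lemma \ref{l1} to rewrite $B^\epsilon(\bar u,m,M)$ in this PBW form, an identity holding in the abstract Askey-Wilson algebra independently of any representation. Since $\bar\pi$ is an algebra homomorphism, applying it to both sides yields an identity in $\mathrm{End}(\bcV)$: the scalar prefactor and the polynomial coefficients $\zeta^{[M]}_{i,j,k}(\bar U)$ are untouched, while each monomial maps to $\bar\pi(\tA^i{\tA^*}^j\tB^k)=\bar\pi(\tA)^i\,\bar\pi(\tA^*)^j\,\bar\pi(\tB)^k$ with $\bar\pi(\tB)=\big[\bar\pi(\tA),\bar\pi(\tA^*)\big]_q$ by (\ref{o1}). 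Acting with both sides on the reference state $|\Omega^\epsilon\rangle$ then produces (\ref{expf}) verbatim, with $\zeta^M_{i,j,k}=\zeta^{[M]}_{i,j,k}$.

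The only point requiring care, and hence what I expect to be the main obstacle, is to justify the role of the hypothesis $0\leq M\leq 2s$: one must check that the coefficients displayed in (\ref{expf}) are \emph{exactly} those of Lemma \ref{l1}, with no further simplification forced. In this range every exponent satisfies $i,j,k\leq M\leq 2s$, hence stays strictly below the degree $2s+1$ of the characteristic polynomials (\ref{polyc}) of $\bar\pi(\tA)$ and $\bar\pi(\tA^*)$ (and likewise below the degree $\leq\dim\bcV=2s+1$ of the minimal polynomial of $\bar\pi(\tB)$). Consequently no Cayley--Hamilton reduction is triggered, the monomials $\bar\pi(\tA^i{\tA^*}^j\tB^k)$ cannot be rewritten in lower-order ones, and the expansion descends unchanged. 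For $M\geq 2s+1$ one would instead have to reduce all powers $\geq 2s+1$ through (\ref{polyc}), which would modify the coefficients; in that regime the construction moreover degenerates, as is already visible from $\bar\pi(\mathscr{B}^{+}(u,m_0+4s))|\theta^*_{2s}\rangle=0$. This confirms that $M\leq 2s$ is precisely the range in which the stated formula is the genuine, unreduced PBW expansion.
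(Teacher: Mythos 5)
Your proof is correct and follows essentially the same route as the paper: the corollary is obtained by applying the homomorphism $\bar\pi$ to the PBW identity of Lemma \ref{l1} and acting on the reference state, with the restriction $M\leq 2s$ guaranteeing that no additional relations from the characteristic polynomials (\ref{polyc}) reduce the monomials. Your explicit discussion of why Cayley--Hamilton reduction is not triggered matches the paper's own remark preceding the corollary.
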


\subsubsection{Eigenbases for Leonard pairs}\label{eigLP}
The results of previous subsections suggest  that the two families of on-shell Bethe states (\ref{bas1}) and  (\ref{bas2}) for $M=0,1,2,...,2s$, provide two explicit examples of bases on which the Leonard pair $\bar\pi(\tA),\bar\pi(\tA^*)$ act as (\ref{tridAstar}),  (\ref{tridAstar2}). Recall that solving the  system of Bethe equations in Proposition \ref{p31} or \ref{p33} reduces to solving (\ref{PBA}) for $a=sp$.
\begin{prop}\label{baseLP} For each $M=0,1,...,2s$, assume the system of polynomial equations  (\ref{PBA}) for $a=sp$ admits at least one admissible solution $\overline{U}=\{U_1,...,U_M\}$ with (\ref{sBr}).  Given the Leonard pair $\bar\pi(\tA),\bar\pi(\tA^*)$ defined in subsection \ref{ss31},
the set of Bethe states (\ref{bas1}) (resp. (\ref{bas2})) with $M=0,1,...,2s$, forms a basis of the vector space ${\bcV}$. 
One has:
\beqa
|\theta_M\rangle 
&\propto &  \left(\prod_{l=1}^M\frac{u_l^\epsilon}{  (u_l^2-u_l^{-2})}\right) | \Psi_{sp,-}^M(\bar u,m_0)\rangle \, \quad \mbox{for} \quad \alpha \quad \mbox{according to Lemma \ref{lem:g2}}\ ,\nonumber\\
|\theta_M^*\rangle 
&\propto &  \left( \prod_{l=1}^M\frac{u_l^\epsilon}{  (u_l^2-u_l^{-2})}\right) | \Psi_{sp,+}^M(\bar u,m_0)\rangle \,   \quad \mbox{for} \quad \alpha \quad \mbox{according to Lemma \ref{lem:g1}}\nonumber\ .
\eeqa
\end{prop}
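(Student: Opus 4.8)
The plan is to read off from Propositions \ref{p31} and \ref{p33} that the on-shell Bethe states are eigenvectors of the Leonard pair, and then to upgrade ``eigenvector'' to ``eigenbasis'' using the multiplicity-freeness recorded in subsection \ref{ss31}. Concretely, fix $M\in\{0,1,\dots,2s\}$ and an admissible solution of (\ref{PBA}) for $a=sp$, which exists by hypothesis; the associated Bethe roots $\bar u$ then satisfy the Bethe equations of Proposition \ref{p31}. By that proposition, whenever $|\Psi_{sp,-}^M(\bar u,m_0)\rangle$ is nonzero it is an eigenvector of $\bar\pi(\tA)$ with eigenvalue $\tfrac12 q^{(\nu+\nu')/2}(e^{-\mu}q^{-2s+2M}+e^{\mu}q^{2s-2M})$, which under the parametrization (\ref{par}) is exactly $\theta_M=bq^{2M}+cq^{-2M}$ of (\ref{st}). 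Since $\bar\pi(\tA)$ is assumed diagonalizable with multiplicity-free spectrum, the eigenvalues $\theta_0,\dots,\theta_{2s}$ are pairwise distinct and each eigenspace is one-dimensional, spanned by $|\theta_M\rangle$. Hence the whole statement reduces to showing that each of the $2s+1$ states $|\Psi_{sp,-}^M(\bar u,m_0)\rangle$ is nonzero.

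To prove nonvanishing I would work in the eigenbasis $\{|\theta_0\rangle,\dots,|\theta_{2s}\rangle\}$ of $\bar\pi(\tA)$, in which $|\Omega^-\rangle=|\theta_0\rangle$. By the computation leading to (\ref{Bplusm}), applied now to $\mathscr{B}^-$ and to the $\bar\pi(\tA)$-eigenbasis in which $\bar\pi(\tA^*)$ is tridiagonal with off-diagonal entries $a_{j,j\pm1}$ of (\ref{tridAstar}), the operator $\bar\pi(\mathscr{B}^-(u,m))$ acts tridiagonally, with a \emph{raising} coefficient of the form $(\text{scalar})\cdot a_{j,j+1}$ depending only on $m$ and on the spectral data $\theta_j,\theta_{j+1}$, together with an overall $u$-dependent prefactor $\propto (u^2-u^{-2})/u^{\epsilon}$. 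Applying the string $\bar\pi(B^-(\bar u,m_0,M))$ of (\ref{bas1}) to $|\theta_0\rangle$ therefore produces a linear combination of $|\theta_0\rangle,\dots,|\theta_M\rangle$, whose component along the top vector $|\theta_M\rangle$ is the ordered product of these raising coefficients over $j=0,\dots,M-1$ evaluated at the shifted arguments $m_0+2j$, times the product of the $u_l$-prefactors. For a Leonard pair one has $a_{j,j+1}\neq 0$ for $0\le j\le 2s-1$, so each factor is nonzero provided the accompanying scalar does not vanish; for the generic parameters and the parametrization (\ref{par})--(\ref{sc4}) in force these scalars are nonzero, and the $u_l$-prefactors are nonzero for admissible (hence generic, with $u_l\neq\pm1$) roots. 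Thus the $|\theta_M\rangle$-component is nonzero and $|\Psi_{sp,-}^M(\bar u,m_0)\rangle\neq 0$. Note that the string has length $M\le 2s$, so only the raising coefficients with $j\le 2s-1$ occur and the ceiling relation of the preceding lemma (where $a_{2s,2s+1}=0$) is never reached.

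Having established nonvanishing, the conclusion is immediate: the $2s+1$ vectors $|\Psi_{sp,-}^M(\bar u,m_0)\rangle$, $M=0,\dots,2s$, are eigenvectors of $\bar\pi(\tA)$ for the pairwise distinct eigenvalues $\theta_0,\dots,\theta_{2s}$, hence linearly independent; as $\dim(\bcV)=2s+1$ they form a basis. Each lies in the one-dimensional $\theta_M$-eigenspace, so it is a scalar multiple of $|\theta_M\rangle$; multiplying by $\prod_{l=1}^M u_l^{\epsilon}/(u_l^2-u_l^{-2})$ cancels the $u_l$-dependent prefactors and fixes the normalization, giving the stated identification of $|\theta_M\rangle$. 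The claim for $|\Psi_{sp,+}^M(\bar u,m_0)\rangle$ and $|\theta_M^*\rangle$ is proved verbatim, exchanging the roles of $\bar\pi(\tA)$ and $\bar\pi(\tA^*)$ and using Proposition \ref{p33}, the gauge of Lemma \ref{lem:g1}, the eigenbasis (\ref{tridAstar2}), and (\ref{bas2}) in place of their counterparts. The one genuinely nontrivial point, and the step I expect to require the most care, is the nonvanishing of the top component, i.e. checking that none of the scalar factors multiplying $a_{j,j+1}$ degenerates for the parametrization at hand; everything else is a direct consequence of the multiplicity-free Leonard-pair structure.
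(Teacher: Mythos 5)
Your proposal is correct and follows the same skeleton as the paper's proof: invoke Propositions \ref{p31} and \ref{p33} to identify the on-shell Bethe states as eigenvectors, use the multiplicity-free spectrum (\ref{st}) (for $q$ not a root of unity) to get one-dimensional eigenspaces, and conclude a basis from $\dim(\bcV)=2s+1$. Where you genuinely diverge is in the one step that carries all the content, namely why the $2s+1$ states are nonzero and linearly independent. The paper disposes of this by appealing to the PBW expansion of Corollary \ref{propBS}: the Bethe states for different $M$ are expressed in the linearly independent monomials $\tA^i{\tA^*}^j\tB^k$ with ``no additional relations besides the Askey--Wilson relations,'' and linear independence is read off from the form of (\ref{expf}). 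You instead work directly in the eigenbasis of the relevant Leonard-pair member, use the tridiagonal action (\ref{Bplusm}) of $\mathscr{B}^{\epsilon}$ to track the top component along $|\theta_M\rangle$ (resp.\ $|\theta^*_M\rangle$), and identify it as a product of raising coefficients, each a scalar multiple of the nonzero Leonard-pair entry $a_{j,j+1}$ (resp.\ $a^*_{j,j+1}$); nonvanishing of the top component then gives both nonvanishing of the state and, via distinct eigenvalues, linear independence. Your route is more explicit and localizes the only genuinely delicate point --- that the scalar factors multiplying $a_{j,j+1}$ in $B_{j,j+1}(m_0+2j)$ do not degenerate for the gauge parameters in force --- which you correctly flag but do not fully verify; the paper's route avoids that genericity check at the price of a rather compressed appeal to the PBW basis. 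Either argument yields the identification with $|\theta_M\rangle$, $|\theta^*_M\rangle$ up to the stated normalization, so the proposal stands as a valid, somewhat more self-contained alternative to the published proof.
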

\begin{proof} By definition,
recall that $\dim({\bcV})=2s+1$, see subsection \ref{ss31}. For $q$ not a root of unity by (\ref{st})  the ($2s+1$)-eigenvalues $\theta_M$ and $\theta_M^*$ are multiplicity-free, with $M=0,1,...,2s$. By Propositions \ref{p31} and \ref{p33}, the Bethe states (\ref{bas1}) and (\ref{bas2}) are the eigenvectors of $\bar\pi(\tA)$ and $\bar\pi(\tA^*)$, respectively. Given $M$ fixed and according to previous comments, each eigenspace has dimension one.  Also, in view of the form  (\ref{expf}) for different values of $M$ the eigenvectors in the set  $\{|\Psi_{sp,\epsilon}^M(\bar u,m_0)\rangle\}$   are linearly independent (no additional relations besides the Askey-Wilson relations). 
\end{proof}
Note that this is consistent with the numerical analysis of the polynomial equations (\ref{PBA}) for $a=sp$: up to permutation, (\ref{PBA}) admits a unique admissible solution $\{U_1,...,U_M\}$.\vspace{1mm}

Finally, let us mention that the numerical analysis for $s=1/2,1,3/2$ done in subsection \ref{subs351} suggests that it should be possible to  generalize Proposition \ref{baseLP} to the cases $a=d$ or $a=g$. Indeed, in these cases and $M=1,2,3$, it was found that the total number of distinct admissible solutions is exactly $2s+1=\dim({\bcV})$. However, for generic values of $M=2s$ the proof that the spectrum in (\ref{specdiag}) or (\ref{specgeneric}) is multiplicity-free - a key ingredient in Proposition \ref{baseLP} - is missing.

\section{Baxter T-Q relations and the Heun-Askey-Wilson $q$-difference operator}\label{s4}
In this section, homogeneous and inhomogeneous Baxter T-Q relations are deduced from the results of the previous section.  Independently,   for the special and diagonal cases  we construct a q-difference operator realization of the Heun-Askey-Wilson element (\ref{I}) acting on an infinite dimensional representation $(\pi,{\cal V})$. For each choice of parameters, $\pi({\textsf I}(\kappa,\kappa^*,0,0))$ gives a specialization of the Heun-Askey-Wilson q-difference operator
introduced in \cite[Proposition 5]{BTVZ}.  Its action on the Q-polynomial produces the T-Q relations. 
 We also  briefly comment on the similar result for the generic case. For completeness, the action of the dynamical operators on the unit is given. These results suggest an interpretation of the Q-polynomials as transition matrix coefficients.\vspace{1mm} 

\subsection{(In)homogeneous Baxter T-Q relations}

\subsubsection{Special case $\kappa^*\neq 0$ and $\kappa=\kappa_\pm=0$} For the special case $\kappa^*=\kappa_\pm=0$, $\kappa\neq 0$ and the reference state $|\Omega^-\rangle$, a functional relation for the spectrum of $\bar\pi(\tA)$ is given by (\ref{LsM}) with (\ref{Amm}), (\ref{defAepm}).  For the special case $\kappa=\kappa_\pm=0$, $\kappa^*\neq 0$ and the reference state $|\Omega^+\rangle$, the functional relation for the spectrum of $\bar\pi(\kappa^*\tA^*)$ - denoted below ${\Lambda^*}_{sp,+}^M$ - that occurs in the proof of Proposition \ref{p33} is given by: 
\ben
&&{\Lambda^*}_{sp,+}^M=  \frac{\kappa^* u}{(u^2-u^{-2})}\left(
\frac{\Lambda_1^+(u)}{(qu^2-q^{-1}u^{-2})}
\prod_{j=1}^Mf(u,u_j)+
  \frac{\Lambda_2^+(u)}{(q^2u^2-q^{-2}u^{-2})}\prod_{j=1}^Mh(u,u_j)\right)\label{LsMplus} \\
&& \qquad \qquad +\kappa^*\frac{\left(q\,u\,\bar\eta(u^{-1})+q^{-1}u^{-1}\bar\eta(u)\right)}{(u^2-u^{-2})(q^2u^2-q^{-2}u^{-2})} \ . \nonumber
\een
Introduce the q-difference operator $T_\pm$ such that $T_\pm(f(u^2))=f(q^{\pm 2}u^2)$. By elementary computations, one finds:
\beqa
\prod_{j=1}^M h(u,u_j) = \frac{T_+Q_M(U)}{Q_M(U)} \quad \mbox{and} \quad  \prod_{j=1}^M f(u,u_j) = \frac{T_-Q_M(U)}{Q_M(U)} \quad \mbox{with} \quad  Q_M(U)=\prod_{j=1}^{M}(U- U_j)\ \label{QM}
\eeqa
where the notations (\ref{sV}), (\ref{sBr}) are used. Furthermore
\beqa
\frac{\left(q\,u\,\bar\eta(u^{-1})+q^{-1}u^{-1}\bar\eta(u)\right)}{(u^2-u^{-2})(q^2u^2-q^{-2}u^{-2})}  = \frac{(q+q^{-1})^2 (\eta + \eta^* U)}{\rho(u^2-u^{-2})(q^2u^2-q^{-2}u^{-2})}\ .\nonumber
\eeqa
Replacing these expressions in (\ref{LsMplus}),  the proof of the following proposition is immediate.
\begin{prop}\label{propTQsp} The eigenvalues ${\Lambda^*}_{sp,+}^M$ of the Heun-Askey-Wilson operator $\bar\pi(\textsf{I}(0,\kappa^*,0,0))$ are given by the homogeneous Baxter T-Q relation
\ben
&&\left((u^2-u^{-2})(q^2u^2-q^{-2}u^{-2})\right){\Lambda^*}_{sp,+}^MQ_M(U)= \kappa^*u  \Lambda_2^+(u)T_+Q_M(U)  +  \kappa^*u
\Lambda_1^+(u)\frac{(q^2u^2-q^{-2}u^{-2})}{(qu^2-q^{-1}u^{-2})}
T_-Q_M(U) \nonumber
\\
&& \qquad \qquad  \qquad \qquad \qquad \qquad  \qquad \qquad \qquad \qquad  +\kappa^*\frac{(q+q^{-1})^2}{\rho} (\eta + \eta^* U)Q_M(U)\  \nonumber
\een
with (\ref{sc1}), (\ref{sc3}), (\ref{sc4}).
\end{prop}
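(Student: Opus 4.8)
The plan is to read off the asserted T-Q relation directly from the eigenvalue formula (\ref{LsMplus}) established in the proof of Proposition \ref{p33}, which already expresses ${\Lambda^*}_{sp,+}^M$ as a meromorphic function of $u$ built from the two products $\prod_{j=1}^M f(u,u_j)$ and $\prod_{j=1}^M h(u,u_j)$ together with the scalar $\bar\eta$-contribution. Since the claimed relation is nothing but (\ref{LsMplus}) after clearing denominators, the first step is to multiply both sides of (\ref{LsMplus}) by the common factor $(u^2-u^{-2})(q^2u^2-q^{-2}u^{-2})Q_M(U)$, where $Q_M(U)=\prod_{j=1}^M(U-U_j)$ and the symmetrized variables $U,U_j$ are given by (\ref{sV}), (\ref{sBr}).

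Next I would invoke the two rational-function identities (\ref{QM}), namely $\prod_{j=1}^M f(u,u_j)=T_-Q_M(U)/Q_M(U)$ and $\prod_{j=1}^M h(u,u_j)=T_+Q_M(U)/Q_M(U)$, with $T_\pm$ the $q$-shift operator $T_\pm(f(u^2))=f(q^{\pm2}u^2)$. Substituting these into the cleared equation, the denominators $Q_M(U)$ cancel against the overall factor: the $h$-term produces $\kappa^* u\,\Lambda_2^+(u)\,T_+Q_M(U)$ and the $f$-term produces $\kappa^* u\,\Lambda_1^+(u)\,\tfrac{(q^2u^2-q^{-2}u^{-2})}{(qu^2-q^{-1}u^{-2})}\,T_-Q_M(U)$, exactly the first two terms on the right-hand side. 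For the remaining piece I would use the simplification recorded just above the statement, $\tfrac{q\,u\,\bar\eta(u^{-1})+q^{-1}u^{-1}\bar\eta(u)}{(u^2-u^{-2})(q^2u^2-q^{-2}u^{-2})}=\tfrac{(q+q^{-1})^2(\eta+\eta^*U)}{\rho(u^2-u^{-2})(q^2u^2-q^{-2}u^{-2})}$; after multiplying by the common factor this contributes precisely $\kappa^*\tfrac{(q+q^{-1})^2}{\rho}(\eta+\eta^*U)\,Q_M(U)$, the last term, which is proportional to $Q_M(U)$ and hence keeps the relation homogeneous. Collecting the three contributions yields the claimed Baxter T-Q relation, with the structure constants $\rho,\eta,\eta^*$ fixed by (\ref{sc1}), (\ref{sc3}), (\ref{sc4}).

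The computation is otherwise pure bookkeeping of the prefactors $(u^2-u^{-2})$, $(qu^2-q^{-1}u^{-2})$ and $(q^2u^2-q^{-2}u^{-2})$, so the only genuine ingredient, and the one point I would actually check, is the pair of identities (\ref{QM}). These reduce, factor by factor, to verifying that $h(u,u_j)=\bigl(T_+U-U_j\bigr)/(U-U_j)$ and $f(u,u_j)=\bigl(T_-U-U_j\bigr)/(U-U_j)$ once the explicit coefficients from Appendix \ref{Sec:coefcommut} are rewritten through the change of variables (\ref{sV}), (\ref{sBr}); this is the elementary computation alluded to in the text, resting on the fact that $T_\pm$ acts on $U$ but fixes each Bethe root $U_j$, so that $T_\pm Q_M(U)=\prod_{j=1}^M(T_\pm U-U_j)$. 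Granting (\ref{QM}), the proposition follows immediately, exactly as asserted.
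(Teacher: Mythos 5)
Your proposal is correct and follows essentially the same route as the paper: the paper's proof consists precisely of substituting the identities (\ref{QM}) and the $\bar\eta$-simplification into (\ref{LsMplus}) and clearing denominators, exactly as you describe. Your factor-by-factor verification that $f(u,u_j)=(T_-U-U_j)/(U-U_j)$ and $h(u,u_j)=(T_+U-U_j)/(U-U_j)$ is just the "elementary computation" the paper alludes to, and it checks out.
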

According to the results of Section 3, the roots of the Q-polynomial are determined by (\ref{PBA}) and the spectrum is given by  
\beqa
{\Lambda^*}_{sp,+}^M = \frac{\kappa^*}{2} q^{\frac{1}{2} ({ \nu +\nu'})}\left(e^{-\mu' }q^{2s-2 M}+e^{\mu' } q^{-2 s+2M}\right)\ .\nonumber
\eeqa

\subsubsection{Special case $\kappa\neq 0$ and $\kappa^*=\kappa_\pm=0$} As mentioned previously, eq.  (\ref{LsM}) together with (\ref{Amm}), (\ref{defAepm}) determine the spectrum  of $\bar\pi(\kappa\tA)$. Following the same arguments as for the special case $\kappa^*\neq 0$ just described, an {\it homogenous} Baxter T-Q relation characterizing the spectrum of $\bar\pi(\kappa\tA)$ can be written. However, it is interesting to observe that specializing the results of the diagonal case  allows to exhibit an {\it inhomogeneous} T-Q relation  characterizing the spectrum of $\bar\pi(\kappa\tA)$ too. Consider (\ref{lambddiag}) for $\epsilon=+$ and $\kappa^*=0$. It reads:
\ben
&&\quad \Lambda_{d,+}^{2s}(u,\bar u)|_{\kappa^*=0}=\frac{\kappa u}{(u^2-u^{-2})}\left( \frac{u^2\Lambda_1^+(u)}{(q u^2-q^{-1} u^{-2})}\prod_{j=1}^{2s}f(u,u_j) +     \frac{q^{-2}u^{-2}\Lambda_2^+(u)}{(q^2 u^2-q^{-2} u^{-2})}\prod_{j=1}^{2s}h(u,u_j) \right)
\label{lambddiagspec}\\ 
&& \quad\quad\quad\quad\quad
+ (-1)^{2s+1} q \kappa\delta_d
\frac{\prod_{k=0}^{2s}b(q^{1/2+k-s}vu)b(q^{1/2+k-s}v^{-1}u)}{\prod_{i=1}^{2s}b(uu_i^{-1})b(quu_i)}\ .\nonumber
\een
Using (\ref{bbp}), 
\beqa
\prod_{ i=1}^{2s}  b(u/u_i)b(quu_i) &=& (q+q^{-1})^{2s}Q_{2s}(U)\ \nonumber
\eeqa
and
\beqa
\frac{\left(q\,u\,\bar\eta(u)+q^{-1}u^{-1}\bar\eta(u^{-1})\right)}{(u^2-u^{-2})(q^2u^2-q^{-2}u^{-2})} = \frac{(q+q^{-1})^2 (\eta^* + \eta U)}{\rho(u^2-u^{-2})(q^2u^2-q^{-2}u^{-2})}\  ,
\eeqa
it follows:
\begin{prop} The eigenvalues ${\Lambda}_{sp,+}^{2s}$ of the Heun-Askey-Wilson operator $\bar\pi(\textsf{I}(\kappa,0,0,0))$ are given by the inhomogeneous Baxter T-Q relation
\ben
&&\left((u^2-u^{-2})(q^2u^2-q^{-2}u^{-2})\right){\Lambda}_{sp,+}^MQ_{2s}(U)= \kappa q^{-2}u^{-1}  \Lambda_2^+(u)T_+Q_{2s}(U)  +  \kappa u^3
\Lambda_1^+(u)\frac{(q^2u^2-q^{-2}u^{-2})}{(qu^2-q^{-1}u^{-2})}
T_-Q_{2s}(U) \nonumber
\\
&& \qquad \qquad  \qquad \qquad \qquad \qquad  \qquad \qquad \qquad \qquad  +\kappa\frac{(q+q^{-1})^2}{\rho} (\eta^* + \eta U)Q_{2s}(U)\   \nonumber\\
&&  \qquad \qquad  \qquad \qquad \qquad \qquad  \qquad \qquad \qquad \qquad + \kappa q\delta_d\frac{(-1)^{2s+1}}{(q+q^{-1})^{2s-2}}(U^2-1) H(U) \nonumber
\een
with (\ref{sc1}), (\ref{sc3}), (\ref{sc4}), (\ref{deltad}) and (\ref{bbp}).
\end{prop}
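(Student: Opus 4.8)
The plan is to obtain the relation as the $\kappa^*=0$ specialization of the diagonal-case computation of Proposition~\ref{p34}, re-expressed in the symmetrized variable $U$ of~(\ref{sV}). Since $\bar\pi(\textsf{I}(\kappa,0,0,0))=\kappa\,\bar\pi(\textsf{A})$, I would invoke Proposition~\ref{p34} at $\epsilon=+$ with $\kappa^*=0$: there the action of $\kappa\,\bar\pi(\textsf{A})$ on the on-shell state $|\Psi_{d,+}^{2s}(\bar u,m_0)\rangle$ is expressed through the eigenvalue function~(\ref{lambddiagspec}) together with its accompanying scalar term, and the proof of that proposition establishes that this sum is independent of $u$ once the Bethe equations are imposed, its constant value being the eigenvalue $\Lambda_{sp,+}^{2s}$. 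Thus the content to be proved is the functional identity stating that the sum of~(\ref{lambddiagspec}) and the scalar equals the constant $\Lambda_{sp,+}^{2s}$, and the T--Q relation is precisely this identity after every denominator has been cleared.

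For the substitution I would proceed term by term. First, using~(\ref{QM}) I would replace $\prod_j f(u,u_j)$ and $\prod_j h(u,u_j)$ by $T_-Q_{2s}(U)/Q_{2s}(U)$ and $T_+Q_{2s}(U)/Q_{2s}(U)$, so that the two diagonal contributions of~(\ref{lambddiagspec}) acquire the shift-operator form of the statement. Next I would rewrite the scalar through the identity $q\,u\,\bar\eta(u)+q^{-1}u^{-1}\bar\eta(u^{-1})=\rho^{-1}(q+q^{-1})^2(\eta^*+\eta U)$ recorded just above the statement, producing the contribution proportional to $(\eta^*+\eta U)Q_{2s}(U)$. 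For the inhomogeneous term of~(\ref{lambddiagspec}) I would use $\prod_{i=1}^{2s}b(uu_i^{-1})b(quu_i)=(q+q^{-1})^{2s}Q_{2s}(U)$ for its denominator and~(\ref{bbp}) to express its numerator $\prod_{k=0}^{2s}b(q^{1/2+k-s}vu)b(q^{1/2+k-s}v^{-1}u)$ as the polynomial $H(U)$.

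The algebraic key is the elementary factorization $(u^2-u^{-2})(q^2u^2-q^{-2}u^{-2})=(q+q^{-1})^2(U^2-1)$, which I would verify by expanding $(qu^2+q^{-1}u^{-2})^2-(q+q^{-1})^2$. Multiplying the whole functional identity by $(u^2-u^{-2})(q^2u^2-q^{-2}u^{-2})\,Q_{2s}(U)$ then clears all denominators: the left side becomes $(u^2-u^{-2})(q^2u^2-q^{-2}u^{-2})\,\Lambda_{sp,+}^{2s}\,Q_{2s}(U)$; the two diagonal terms yield the $\Lambda_2^+(u)\,T_+Q_{2s}(U)$ and $\Lambda_1^+(u)\,T_-Q_{2s}(U)$ contributions; the scalar yields $\kappa(q+q^{-1})^2\rho^{-1}(\eta^*+\eta U)Q_{2s}(U)$; and the inhomogeneous term becomes $(-1)^{2s+1}q\kappa\delta_d(q+q^{-1})^{2-2s}(U^2-1)H(U)$, the factor $Q_{2s}(U)$ cancelling against the denominator while the prefactor $(u^2-u^{-2})(q^2u^2-q^{-2}u^{-2})$ supplies $(q+q^{-1})^2(U^2-1)$. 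Since $(q+q^{-1})^{2-2s}=(q+q^{-1})^{-(2s-2)}$ this reproduces the stated coefficient, with $\Lambda_1^+,\Lambda_2^+$ and $\delta_d$ carrying the structure constants~(\ref{sc1}),~(\ref{sc3}),~(\ref{sc4}) and~(\ref{deltad}).

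The step I expect to demand the most care, rather than genuine difficulty, is justifying that the numerator $\prod_{k=0}^{2s}b(q^{1/2+k-s}vu)b(q^{1/2+k-s}v^{-1}u)$ is truly a polynomial in $U$, so that it may be written as $H(U)$. The reason is that the index set $\{0,1,\dots,2s\}$ is symmetric about $k=s$, so the involution $u^2\mapsto q^{-2}u^{-2}$ that fixes $U$ interchanges the $k$-th and $(2s-k)$-th factors and leaves the product invariant; this invariance is precisely what~(\ref{bbp}) encodes. Once this and the factorization above are in hand, the remainder is the bookkeeping of powers of $q+q^{-1}$ and of signs, together with the observation that specializing the diagonal case in this manner produces an inhomogeneous relation for the spectrum of $\kappa\,\bar\pi(\textsf{A})$, complementary to the homogeneous relation obtainable from the direct special-case construction.
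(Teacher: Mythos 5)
Your proposal is correct and follows essentially the same route as the paper: specialize the diagonal-case eigenvalue function (\ref{lambddiag}) at $\epsilon=+$, $\kappa^*=0$ to get (\ref{lambddiagspec}), rewrite the products via (\ref{QM}) and (\ref{bbp}), use the stated identity for the $\bar\eta$ scalar, and clear denominators using $(u^2-u^{-2})(q^2u^2-q^{-2}u^{-2})=(q+q^{-1})^2(U^2-1)$. The paper leaves the last factorization and the bookkeeping implicit, so your write-up is simply a more explicit version of the same argument.
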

According to the results of Section 3, the roots of the Q-polynomial are now determined by the specialization of (\ref{PBA}) for $a=d,\epsilon=+$ and $\kappa^*=0$. Also, the spectrum follows from ${\Lambda}_{d,+}^{2s}|_{\kappa^*=0}$. Contrary to (\ref{spm}), it depends on the Bethe roots satisfying the inhomogeneous Bethe ansatz equation (\ref{PBA}):
\beqa
{\Lambda}_{sp,+}^{2s} =  \frac{\kappa}{2}q^{\frac{1}{2} ({ \nu +\nu'})}\   e^{-\mu'}q^{-2s} \Big(  (v^2+v^{-2})[2s]_q + 2e^{\mu'}\cosh(\mu) -(1+q^2)\sum_{i=1}^{2s} U_i\Big) \ .\label{speckappa}
\eeqa
Combined with the fact that the spectrum of $\bar \pi(\kappa\tA)$ is always of the form (\ref{st}) with (\ref{par}), it implies that for each set of symmetrized Bethe roots satisfying (\ref{PBA}), there exists an integer $M\in\{0,1,...,2s\}$ such that the following equality holds:
\beqa
e^{-\mu}q^{-2s+2M} + e^{\mu}q^{2s-2M} =  e^{-\mu'}q^{-2s} \Big(  (v^2+v^{-2})[2s]_q + 2e^{\mu'}\cosh(\mu) -(1+q^2)\sum_{i=1}^{2s} U_i\Big)  \ . 
\eeqa
This equality has been checked numerically for small values of $s=1/2,1,3/2,2$. Given $s$ fixed, for each set of symmetrized Bethe roots the eigenvalues ${\Lambda}_{sp,+}^{2s}$ are displayed in Table \ref{table:speckappa}. The corresponding value of $M$ is given in parenthesis. 

\begin{table}[]
\begin{tabular}{|c|c|c|}	
\hline
spin $s$ & ${\Lambda}_{sp,+}^{2s}$      $(M)$                                                      & Bethe roots $\{U_1,\dots,U_{2s}\}$                                     \\ \hline
1/2  &
\begin{tabular}[c]{@{}c@{}}2.16667 $(0)$\\ 6.16667 $(1)$\end{tabular}                   & \begin{tabular}[c]{@{}c@{}}$\{0.36\}$\\ $\{0.04\}$\end{tabular}                   \\ \hline
1    & \begin{tabular}[c]{@{}c@{}}2.08333 $(0)$\\ 3.33333 $(1)$\\ 12.0833 $(2)$\end{tabular}             & \begin{tabular}[c]{@{}c@{}}$\{0.4\, -0.646529 i,0.4\, +0.646529 i\}$\\ $\{0.3\, +0.676757 i,0.3\, -0.676757 i\}$\\ $\{-0.4+0.859069 i,-0.4-0.859069 i\}$\end{tabular}             \\\hline
3/2  & \begin{tabular}[c]{@{}c@{}}2.16667 $(1)$\\ 3.04167 $(0)$\\ 6.16667 $(2)$\\ 24.0417 $(3)$\end{tabular}       & \begin{tabular}[c]{@{}c@{}}$\{0.249828,0.645086\, +1.62544 i,0.645086\, -1.62544 i\}$\\ $\{0.201668,0.529166\, -1.67841 i,0.529166\, +1.67841 i\}$\\ $\{0.0420546,0.108973\, +1.83288 i,0.108973\, -1.83288 i\}$\\ $\{-2.42474+2.13318 i,-2.42474-2.13318 i,-0.610518\}$\end{tabular}       \\\hline
2    & \begin{tabular}[c]{@{}c@{}}2.08333 $(1)$\\ 3.33333 $(2)$\\ 5.52083 $(0)$\\12.0833 $(3)$\\ 48.0208 $(4)$\end{tabular} & \begin{tabular}[c]{@{}c@{}}$\{0.299681\, +0.596429 i,0.299681\, -0.596429 i,1.22532\, -3.42503 i,1.22532\, +3.42503 i\}$\\ $\{0.214101\, +0.612971 i,0.214101\, -0.612971 i,0.910899\, -3.58263 i,0.910899\, +3.58263 i\}$\\ $\{0.0723844\, -0.630621 i,0.0723844\, +0.630621 i,0.352616\, +3.81015 i,0.352616\, -3.81015 i\}$\\ $\{-1.3877-4.22879 i,-1.3877+4.22879 i,-0.287303+0.617224 i,-0.287303-0.617224 i\}$\\ $\{-11.8331-3.15869 i,-11.8331+3.15869 i,-1.84146,-0.842389\}$
\end{tabular}\\ \hline
\end{tabular}
\vspace{2mm}

\caption{Spectrum of $\bar \pi(\kappa\tA)$ for $\kappa=\nu=\nu'=v=1,q=2,\mu=\log(1/3),\mu'=\log(1/5)$.}
\label{table:speckappa}
\end{table}

\subsubsection{Diagonal case $\kappa,\kappa^*\neq 0$ and $\kappa_\pm=0$}
Following a similar analysis, for the diagonal case it is straightforward to derive the inhomogeneous Baxter T-Q relation from (\ref{lambddiag}). It yields to:
\begin{prop}
The eigenvalues ${\Lambda}_{d,+}^{2s}$ of the Heun-Askey-Wilson operator $\bar\pi(\textsf{I}(\kappa,\kappa^*,0,0))$ are given by the inhomogeneous Baxter T-Q relation
\ben
&&\left((u^2-u^{-2})(q^2u^2-q^{-2}u^{-2})\right){\Lambda}_{d,+}^{2s}Q_{2s}(U)=  \label{LsMdiag}\\
&& \qquad \quad = u \Delta_d(q^{-1}u^{-1}) \Lambda_2^+(u)T_+Q_{2s}(U)  +   u\Delta_d(u)
\Lambda_1^+(u)\frac{(q^2u^2-q^{-2}u^{-2})}{(qu^2-q^{-1}u^{-2})}
T_-Q_{2s}(U)\nonumber
\\
&& \qquad \quad  +\frac{(q+q^{-1})^2}{\rho} (\kappa\eta^* +\kappa^*\eta + (\kappa\eta + \kappa^*\eta^*) U)Q_{2s}(U)\     + \kappa q\delta_d(-1)^{2s+1}\frac{(U^2-1)}{(q+q^{-1})^{2s-2}} H(U) \nonumber
\een
with (\ref{sc1}), (\ref{sc3}), (\ref{sc4}), (\ref{deltad}), (\ref{Deltad}) and (\ref{bbp}).
\end{prop}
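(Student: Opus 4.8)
The plan is to convert the functional relation for the eigenvalue $\Lambda_{d,+}^{2s}(u,\bar u)$ already established in the proof of Proposition~\ref{p34} into the stated T-Q form by clearing denominators and recognizing the products over Bethe roots as $q$-shifts of the Baxter polynomial $Q_{2s}(U)$. Concretely, I would start from (\ref{lambddiag}) specialized to $\epsilon=+1$, where by (\ref{Deltad}) one has $\Delta_d(u)=u(\kappa u+\kappa^* u^{-1})$, together with the fact extracted from (\ref{eigdiag}) in the proof of Proposition~\ref{p34} that
\[
\Lambda_{d,+}^{2s}(u,\bar u)+\frac{(q+q^{-1})^2}{\rho\left(u^2-u^{-2}\right)\left(q^2u^2-q^{-2}u^{-2}\right)}\left(\eta\kappa^*+\eta^*\kappa+(\eta\kappa+\eta^*\kappa^*)U\right)
\]
is $u$-independent and equals $\Lambda_{d,+}^{2s}$, with $U=(qu^2+q^{-1}u^{-2})/(q+q^{-1})$ as in (\ref{sV}). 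Multiplying this identity through by $\left(u^2-u^{-2}\right)\left(q^2u^2-q^{-2}u^{-2}\right)Q_{2s}(U)$ produces the left-hand side of (\ref{LsMdiag}) together with the scalar term, which I would move back to the right-hand side at the end.

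For the first two terms on the right of (\ref{lambddiag}) I would use the elementary $q$-difference identities (\ref{QM}), namely $\prod_{j=1}^{2s}f(u,u_j)=T_-Q_{2s}(U)/Q_{2s}(U)$ and $\prod_{j=1}^{2s}h(u,u_j)=T_+Q_{2s}(U)/Q_{2s}(U)$. After multiplication by $\left(u^2-u^{-2}\right)\left(q^2u^2-q^{-2}u^{-2}\right)Q_{2s}(U)$ the factor $Q_{2s}(U)$ cancels and the two diagonal contributions collapse exactly to $u\,\Delta_d(q^{-1}u^{-1})\Lambda_2^+(u)\,T_+Q_{2s}(U)$ and $u\,\Delta_d(u)\Lambda_1^+(u)\frac{(q^2u^2-q^{-2}u^{-2})}{(qu^2-q^{-1}u^{-2})}\,T_-Q_{2s}(U)$, matching the first line of the target relation.

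The delicate point is the inhomogeneous (third) term of (\ref{lambddiag}). Here I would use (\ref{bbp}) in two ways: first to replace the denominator, $\prod_{i=1}^{2s}b(uu_i^{-1})b(quu_i)=(q+q^{-1})^{2s}Q_{2s}(U)$, so that the surviving $Q_{2s}(U)$ cancels after clearing denominators; and second to identify the numerator $\prod_{k=0}^{2s}b(q^{1/2+k-s}vu)b(q^{1/2+k-s}v^{-1}u)$ with the polynomial $H(U)$. The remaining overall factor is handled by the elementary computation
\[
(q+q^{-1})^2(U^2-1)=\left(u^2-u^{-2}\right)\left(q^2u^2-q^{-2}u^{-2}\right),
\]
which follows by squaring $U=(qu^2+q^{-1}u^{-2})/(q+q^{-1})$ and simplifying. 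Combined with the prefactor $(-1)^{2s+1}q\kappa\,\delta_d$, that is the $\epsilon=+1$ value of $(-1)^{2s+1}\epsilon q^\epsilon\kappa^{(1+\epsilon)/2}{\kappa^*}^{(1-\epsilon)/2}$, this yields precisely $\kappa q\delta_d(-1)^{2s+1}\frac{(U^2-1)}{(q+q^{-1})^{2s-2}}H(U)$. Collecting all contributions then gives (\ref{LsMdiag}).

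The main obstacle I anticipate is bookkeeping rather than conceptual: one must verify that the numerator product $\prod_{k=0}^{2s}b(q^{1/2+k-s}vu)b(q^{1/2+k-s}v^{-1}u)$ is genuinely a polynomial in $U$ — that is, invariant under $u\mapsto u^{-1}$ and $u\mapsto -u$, and hence descends through (\ref{bbp}) to $H(U)$ — and that the powers of $(q+q^{-1})$ produced by the two uses of (\ref{bbp}) and by the $(U^2-1)$ identity combine into the single factor $(q+q^{-1})^{2s-2}$ of the statement. Since the diagonal terms have already been cross-checked in the special-case proposition for $\kappa^*=0$, the only genuinely new verification is that reinstating $\kappa^*\neq 0$ merely restores the full $\Delta_d(u)=u(\kappa u+\kappa^* u^{-1})$ throughout, leaving the structure of the argument and the inhomogeneous term (which is independent of $\kappa^*$ for $\epsilon=+1$) unchanged.
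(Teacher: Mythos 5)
Your proposal is correct and follows exactly the route the paper intends: the paper derives this relation by the "same analysis" applied to (\ref{lambddiag}) with $\epsilon=+1$, namely substituting (\ref{QM}) and (\ref{bbp}), clearing the denominator $(u^2-u^{-2})(q^2u^2-q^{-2}u^{-2})Q_{2s}(U)$, and using $(q+q^{-1})^2(U^2-1)=(u^2-u^{-2})(q^2u^2-q^{-2}u^{-2})$ to produce the inhomogeneous term. All of your bookkeeping (the $\epsilon=+1$ specialization of the prefactor to $q\kappa\delta_d$, the power $(q+q^{-1})^{2s-2}$, and the reinstatement of $\kappa^*$ only through $\Delta_d$ and the scalar term) checks out against (\ref{lambddiag}), (\ref{Deltad}) and (\ref{eigdiag}).
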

According to the results of Section 3, the roots of the Q-polynomial are determined by (\ref{PBA}) for $a=d,\epsilon=+$ and the spectrum of $\bar\pi(\textsf{I}(\kappa,\kappa^*,0,0))$  is given by (\ref{eigd1}).

\subsubsection{Generic case $\kappa,\kappa^*,\kappa_\pm\neq 0$}
For the generic case it is also straightforward to derive the corresponding inhomogeneous Baxter T-Q relation from (\ref{lambdgen}):
\begin{prop}
The eigenvalues ${\Lambda}_{g,+}^{2s}$ of the Heun-Askey-Wilson operator $\bar\pi(\textsf{I}(\kappa,\kappa^*,\kappa_+,\kappa_-))$ are given by the inhomogeneous Baxter T-Q relation
\ben
&&\left((u^2-u^{-2})(q^2u^2-q^{-2}u^{-2})\right){\Lambda}_{g,+}^{2s}Q_{2s}(U)=  \label{LsMgen}\\
&& \qquad \quad = u \Delta_g(q^{-1}u^{-1}) \Lambda_2^+(u)T_+Q_{2s}(U)  +   u\Delta_g(u)
\Lambda_1^+(u)\frac{(q^2u^2-q^{-2}u^{-2})}{(qu^2-q^{-1}u^{-2})}
T_-Q_{2s}(U)\nonumber
\\
&& \qquad  \qquad + \zeta_g(U)  Q_{2s}(U)  + \delta_g(-1)^{2s}\frac{(U^2-1)}{(q+q^{-1})^{2s-2}} H(U)  \nonumber
\een
where
\beqa
\zeta_g(U) &=& (q+q^{-1})^2\left(\frac{\big(\kappa\eta^* +\kappa^*\eta + (\kappa\eta + \kappa^*\eta^*) U
\big)}{\rho} - \frac{(\kappa_+\chi^{-1} + \kappa_-\chi)}{(q-q^{-1})}(U^2-1)\big(\rho U +\omega\big)\right)\ 
\eeqa
with (\ref{sc1}), (\ref{sc3}), (\ref{sc4}), (\ref{deltag}), (\ref{Deltag}), (\ref{param2}) and (\ref{bbp}).
\end{prop}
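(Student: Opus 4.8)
The plan is to specialise the eigenvalue function $\Lambda_{g,+}^{2s}(u,\bar u)$ of equation (\ref{lambdgen}), already obtained in the proof of Proposition \ref{p35} for $\epsilon=+$, and to recast it as a functional relation for the polynomial $Q_{2s}(U)=\prod_{j}(U-U_j)$. Recall from that proof that the operator $\mathbb{W}_g(u,m)$ differs from $\textsf{I}(\kappa,\kappa^*,\kappa_+,\kappa_-)$ only by the $u$-dependent scalar terms displayed in (\ref{eiggeneric}); consequently, once the Bethe roots $\bar u$ are on-shell, the constant eigenvalue $\Lambda_{g,+}^{2s}$ of $\bar\pi(\textsf{I}(\kappa,\kappa^*,\kappa_+,\kappa_-))$ equals the meromorphic function (\ref{eiggeneric}) for \emph{every} value of $u$. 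First I would therefore write this identity out, solve it for $\Lambda_{g,+}^{2s}(u,\bar u)$, and prepare to clear denominators by multiplying through by $(u^2-u^{-2})(q^2u^2-q^{-2}u^{-2})\,Q_{2s}(U)$.

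The second step is the change of variables from the Laurent data in $u$ to the symmetrised variable $U$ of (\ref{sBr}). Using the q-shift identities (\ref{QM}), the products $\prod_{j}f(u,u_j)$ and $\prod_{j}h(u,u_j)$ become $T_-Q_{2s}(U)/Q_{2s}(U)$ and $T_+Q_{2s}(U)/Q_{2s}(U)$, so the two diagonal contributions built from $\mathscr{A}^{+}$ and $\mathscr{D}^{+}$ produce exactly the $\Lambda_1^+(u)T_-Q_{2s}(U)$ and $\Lambda_2^+(u)T_+Q_{2s}(U)$ terms of the claimed relation, carrying the weights $u\Delta_g(u)$ and $u\Delta_g(q^{-1}u^{-1})$ from (\ref{Deltag}). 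For the inhomogeneous contribution I would invoke the factorisation identities collected in (\ref{bbp}), namely $\prod_{i=1}^{2s}b(uu_i^{-1})b(quu_i)=(q+q^{-1})^{2s}Q_{2s}(U)$ and $\prod_{k=0}^{2s}b(q^{1/2+k-s}vu)\,b(q^{1/2+k-s}v^{-1}u)=H(U)$, together with the elementary identity $(u^2-u^{-2})(q^2u^2-q^{-2}u^{-2})=(q+q^{-1})^2(U^2-1)$. These turn the $\delta_g$-term of (\ref{lambdgen}) into $\delta_g(-1)^{2s}(q+q^{-1})^{2-2s}(U^2-1)H(U)$, matching the last term of the statement.

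The remaining task is to verify that the scalar shift relating $\mathbb{W}_g$ to $\textsf{I}$ assembles into $\zeta_g(U)$. After multiplication by $(u^2-u^{-2})(q^2u^2-q^{-2}u^{-2})=(q+q^{-1})^2(U^2-1)$, the two scalar terms of (\ref{eiggeneric}) become polynomials in $U$: using $\frac{qu^2+q^{-1}u^{-2}}{q^2-q^{-2}}=\frac{U}{q-q^{-1}}$ and the parametrisation (\ref{param2}), which gives $q^\varphi\chi^{-1}-q^{\varphi'}\chi=2(\kappa_+\chi^{-1}+\kappa_-\chi)$ and $\rho^{-1/2}q^{(\varphi+\varphi')/2}\cosh(\xi')=\kappa^*/\rho$, $\rho^{-1/2}q^{(\varphi+\varphi')/2}\cosh(\xi)=\kappa/\rho$, the first term yields $(q+q^{-1})^2\rho^{-1}(\kappa\eta^*+\kappa^*\eta+(\kappa\eta+\kappa^*\eta^*)U)$ and the second $-(q+q^{-1})^2(q-q^{-1})^{-1}(\kappa_+\chi^{-1}+\kappa_-\chi)(U^2-1)(\rho U+\omega)$, whose sum is precisely $\zeta_g(U)$. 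I expect the principal difficulty to be purely one of bookkeeping: tracking the gauge-dependent prefactors $u^\epsilon$ and the powers of $q$ through the clearing of denominators so that the rational pieces cancel and only the stated polynomials in $U$ survive. Finally, it should be stressed that, since the action formula (\ref{lambdgen}) rests on Conjecture \ref{prop:genericoffshell}, the present proof is conditional on that conjecture, exactly as Proposition \ref{p35} is.
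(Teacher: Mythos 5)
Your proposal is correct and follows exactly the route the paper intends: the paper gives no explicit proof beyond stating that the relation is "straightforward to derive from (\ref{lambdgen})", and your computation — clearing denominators via $(u^2-u^{-2})(q^2u^2-q^{-2}u^{-2})=(q+q^{-1})^2(U^2-1)$, converting the products over Bethe roots with (\ref{QM}) and (\ref{bbp}), and matching the scalar shift to $\zeta_g(U)$ through the parametrization (\ref{param2}) — is precisely that derivation, with the correct caveat that it is conditional on Conjecture \ref{prop:genericoffshell}.
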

Here the roots of the Q-polynomial are determined by (\ref{PBA}) for $a=g,\epsilon=+$ and the spectrum of $\bar\pi(\textsf{I}(\kappa,\kappa^*,\kappa_+,\kappa_-))$  is given below (\ref{specgeneric}).

\subsection{Heun-Askey-Wilson q-difference operator and the Q-polynomial}
The starting point is a realization of the Askey-Wilson algebra (\ref{aw1}), (\ref{aw2}) in terms of q-difference operators. Define the elementary q-difference operators $T_\pm$ such that $T_\pm(f(z))=f(q^{\pm 2}z)$. In Appendix \ref{appqdiff}, examples are given. Applying the invertible transformation (\ref{dirA}) to the realization  (\ref{AWo1}), (\ref{AWo2}) of the Askey-Wilson algebra given in \cite[Section 5]{Ter03}, we obtain the linear transformation denoted $\pi$: AW $\mapsto$ $\mathbb{C}[z,z^{-1}]$ such that:\\
\beqa
\pi(\tA) &=&
 q^{-1}z^{-1}\phi(z)(T_+-1) + q^{-1}z\phi(z^{-1})(T_--1)  \label{AWop1a}\\
&& \qquad + \ \frac{1}{2}q^{\frac{\nu+\nu'}{2}}e^{-\mu'}q^{2s}\left( 2e^{\mu'}\cosh(\mu) - (v^2+v^{-2}) q^{-2s-1} + q^{-1}(z+z^{-1})\right)
\ ,\nonumber\\
\pi(\tA^*)  &=&      \phi(z)(T_+ -1)  + \phi(z^{-1}) (T_- -1) + \frac{1}{2}q^{(\nu+\nu')/2} (e^{\mu'}q^{-2s} + e^{-\mu'}q^{2s}) \ \label{AWop2a}
\eeqa
where
\beqa
\phi(z)= \frac{1}{2}q^{\frac{\nu+\nu'}{2}} e^{-\mu'} q^{2s} \frac{(1+qe^{-\mu + \mu'} z)(1+qe^{\mu + \mu'} z)(1- q^{-2s}v^2 z)(1- q^{-2s}v^{-2} z)}{(1-z^2)(1-q^2z^2)}\ .\label{phidef}
\eeqa
Note that $\pi(\textsf{A}^*)$  is a specialization of  the Askey-Wilson second-order q-difference operator, see e.g.   \cite{KS} for details. By straightforward calculations, one finds  that  the corresponding structure constants in (\ref{aw1}), (\ref{aw2}) are given by (\ref{sc1})-(\ref{sc4}). Using (\ref{AWop1a}),  (\ref{AWop2a}),  the realization of the q-commutators in $\tA,\tA^*$ is also computed. For instance, one finds:
\ben
\label{qc1}\\
\pi([\tA^*,\tA]_q) = -\frac{q^{\nu+\nu'}(q-q^{-1})}{4}
\left((q+q^{-1})(z+z^{-1})-(q^{2s+1}+q^{-2s-1})(v^2+v^{-2})+4\cosh(\mu)\cosh(\mu')\right)\ .\nonumber
\een
For simplicity, the expression of $\pi([\tA,\tA^*]_q)$ is reported at the end of Appendix \ref{appqdiff}.\vspace{1mm}

In general, the image of the element (\ref{I}) by the map $\pi$ reads as a fourth order q-difference operator that generalizes the Askey-Wilson operator (\ref{AWop2a}). However, for the special and diagonal cases this expression reduces to a second-order q-difference operator.
In these cases,  it is easy to see that it is a specialization of the  Heun-Askey-Wilson operator  introduced in \cite[Proposition 5]{BTVZ}. Below, we consider the action of $\pi(\textsf{I}(\kappa,\kappa^*,0,0))$ on the Q-polynomial $Q_M(Z)$ with (\ref{QM}) where we denote:
\beqa
Z = \frac{(z+z^{-1})}{(q+q^{-1})}\ .
\eeqa
For convenience, we denote respectively by $Q_M(Z)$, $\bar Q_{2s}(Z)$ and $\tilde Q_{2s}(Z)$  the Q-polynomials associated with the symmetrized Bethe roots (\ref{sBr}) satisfying (\ref{PBA}) for $a=sp$, for  $a=d$ and $\kappa^*=0$,   for $a=d$, respectively.
\begin{lem} According to the parameters $\kappa,\kappa^*$, the action of the Heun-Askey-Wilson q-difference operator $\pi(\textsf{I}(\kappa,\kappa^*,0,0))$ on the corresponding  Q-polynomial is such that:
\beqa
\pi(\textsf{I}(0,\kappa^*,0,0)) Q_{M}(Z) &=& \Lambda_{sp,+}^{M} Q_{M}(Z) \ ,\label{act1}\\
\pi(\textsf{I}(\kappa,0,0,0)) \bar Q_{2s}(Z) &=& {\Lambda^*}^{2s}_{sp,+} \bar  Q_{2s}(Z)  -   \kappa q\delta_d\frac{(-1)^{2s+1}}{(q+q^{-1})^{2s}}H(Z) \ ,\label{act2}\\
\pi(\textsf{I}(\kappa,\kappa^*,0,0)) \tilde Q_{2s}(Z) &=& \Lambda_{d,+}^{2s} \tilde Q_{2s}(Z)  -   \kappa q\delta_d\frac{(-1)^{2s+1}}{(q+q^{-1})^{2s}}H(Z) \ .\label{act3}
\eeqa
\end{lem}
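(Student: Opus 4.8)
The plan is to compute directly the action of the second–order $q$-difference operator $\pi(\textsf{I}(\kappa,\kappa^*,0,0))=\kappa\,\pi(\tA)+\kappa^*\pi(\tA^*)$, read off from (\ref{AWop1a})--(\ref{phidef}), on the polynomial $Q_M(Z)$, and to match the outcome term by term with the Baxter T--Q relation of Proposition \ref{propTQsp} (for (\ref{act1})) and with the inhomogeneous relation (\ref{LsMdiag}) and its $\kappa^*=0$ specialization (for (\ref{act3}), (\ref{act2})). The bridge between the two settings is the change of variable $z=qu^2$: under it $Z=(z+z^{-1})/(q+q^{-1})=(qu^2+q^{-1}u^{-2})/(q+q^{-1})=U$, and the shift $z\mapsto q^{\pm2}z$ defining $T_\pm$ in the $q$-difference realization becomes the shift $u^2\mapsto q^{\pm2}u^2$ defining $T_\pm$ in (\ref{QM}). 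Consequently $T_\pm Q_M(Z)$ is identified with $T_\pm Q_M(U)$, and both $Q$-polynomials are built from the same symmetrized roots $U_j$ solving (\ref{PBA}).

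First I would write $\pi(\tA^*)Q_M(Z)=\phi(z)\,T_+Q_M(Z)+\phi(z^{-1})\,T_-Q_M(Z)+\big(c_0-\phi(z)-\phi(z^{-1})\big)Q_M(Z)$, where $c_0$ denotes the constant term in (\ref{AWop2a}), together with the analogous expression for $\pi(\tA)$, whose $T_+$ and $T_-$ coefficients carry the extra factors $q^{-1}z^{-1}$ and $q^{-1}z$, equal under $z=qu^2$ to $q^{-2}u^{-2}$ and $u^2$. The core of the argument is then the verification of the coefficient identities
\begin{align}
\phi(z)&=\frac{u\,\Lambda_2^+(u)}{(u^2-u^{-2})(q^2u^2-q^{-2}u^{-2})}\,,\qquad
\phi(z^{-1})=\frac{u\,\Lambda_1^+(u)}{(u^2-u^{-2})(qu^2-q^{-1}u^{-2})}\,,\nonumber\\
c_0-\phi(z)-\phi(z^{-1})&=\frac{(q+q^{-1})^2(\eta+\eta^*U)}{\rho\,(u^2-u^{-2})(q^2u^2-q^{-2}u^{-2})}\,,\nonumber
\end{align}
valid under $z=qu^2$ with the parametrization (\ref{par}), (\ref{sc1})--(\ref{sc4}). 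The first two follow from (\ref{Lap}) and (\ref{phidef}) by matching, one factor at a time, the four factors of $\phi$ with the four spectral factors of $\Lambda_{1,2}^+$ and absorbing the remaining monomial prefactors in $u,v,q$. The third identity fixes the multiplication term; checking it amounts to verifying that the spurious pole of $\phi(z)+\phi(z^{-1})$ at $z^2=1$ cancels and that the residue matches, which is precisely the statement that $\pi$ realizes the Askey--Wilson algebra with structure constants (\ref{sc1})--(\ref{sc4}).

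Granting these identities, dividing the Baxter T--Q relation of Proposition \ref{propTQsp} by $(u^2-u^{-2})(q^2u^2-q^{-2}u^{-2})=(q+q^{-1})^2(U^2-1)$ turns its right-hand side into exactly $\kappa^*\pi(\tA^*)Q_M(Z)$, while the Bethe equations (\ref{PBA}) for $a=sp$ collapse the left-hand side to $\Lambda_{sp,+}^M Q_M(Z)$; this gives (\ref{act1}). For the diagonal case the two operators combine: using $q^{-1}z^{-1}=q^{-2}u^{-2}$, $q^{-1}z=u^2$ and (\ref{Deltad}), the $T_+$ and $T_-$ coefficients of $\kappa\,\pi(\tA)+\kappa^*\pi(\tA^*)$ assemble into $\Delta_d(q^{-1}u^{-1})$ and $\Delta_d(u)$ times the respective $\phi$'s, reproducing the homogeneous part of (\ref{LsMdiag}); the inhomogeneous term, after the same division and the identity $(U^2-1)(q+q^{-1})^2=(u^2-u^{-2})(q^2u^2-q^{-2}u^{-2})$, becomes precisely $-\kappa q\delta_d(-1)^{2s+1}(q+q^{-1})^{-2s}H(Z)$, giving (\ref{act3}); identity (\ref{act2}) is the $\kappa^*=0$ specialization. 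The main obstacle is the explicit check of the two $\phi$-identities and the potential identity above: these are rational-function identities in $u$ built from the rather involved expressions (\ref{Lap}) and (\ref{phidef}), so the work lies in tracking the exponential and $q$-power prefactors and confirming the exact cancellation of the monomial factors produced by the factorizations. Once these are settled the remainder is the bookkeeping just described.
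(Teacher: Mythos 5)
Your proposal is correct and follows essentially the same route as the paper: the identification $z=qu^2$ (so $Z=U$), the two factorization identities expressing $\Lambda_{1}^{+}(u)$ and $\Lambda_{2}^{+}(u)$ as $u^{-1}(u^2-u^{-2})(q^{\mp1}u^{\mp2}-\cdots)\,\phi(q^{\mp1}u^{\mp2})$, the companion identity for the scalar term $(q+q^{-1})^2(\eta+\eta^*U)/\rho(\cdots)$, and then reading the Baxter T--Q relations of Propositions \ref{propTQsp} and (\ref{LsMdiag}) as the action of $\pi(\tA^*)$ and $\kappa\,\pi(\tA)+\kappa^*\pi(\tA^*)$ on $Q$, with (\ref{act2}) obtained as the $\kappa^*=0$ specialization. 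The paper's proof is exactly this computation, so no further comment is needed.
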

\begin{proof} Firstly, we show (\ref{act1}). Consider the Baxter T-Q relation (\ref{LsMplus}). Recall (\ref{Lap}). Observe that:
\beqa
\Lambda_1^+(u) &=& u^{-1}(u^2-u^{-2})(qu^2-q^{-1}u^{-2}) \phi(q^{-1}u^{-2})\ ,\label{L1p}\\
\Lambda_2^+(u) &=& u^{-1}(u^2-u^{-2})(q^2u^2-q^{-2}u^{-2}) \phi(qu^{2})\ \label{L2p}
\eeqa
with (\ref{phidef}). Also, one easily show that:
\beqa
 \frac{(q+q^{-1})^2 (\eta + \eta^* U)}{\rho(u^2-u^{-2})(q^2u^2-q^{-2}u^{-2})} =  \frac{1}{2} q^{\frac{1}{2} ({ \nu +\nu'})}\left(e^{-\mu' }q^{2s}+e^{\mu' } q^{-2 s}\right) -  \phi(qu^{2})-\phi(q^{-1}u^{-2})\ .\nonumber
\eeqa
Inserting these expressions in (\ref{LsMplus}),  one obtains (\ref{act1}) using (\ref{AWop2a}) through the identification $z=qu^2$. \vspace{1mm}

Secondly, we show   (\ref{act3}). Inserting (\ref{L1p}), (\ref{L2p}) and
\beqa
 \frac{(q+q^{-1})^2 (\eta^* + \eta U)}{\rho(u^2-u^{-2})(q^2u^2-q^{-2}u^{-2})} &=&  \frac{1}{2} q^{\frac{1}{2} ({ \nu +\nu'})}e^{-\mu' }q^{2s} \left((1+q^{-2})U  +   2e^{\mu'}\cosh(\mu)   -(v^2+v^{-2})q^{-2s-1} \right)\nonumber\\
&&   -  q^{-2}u^{-2} \phi(qu^{2})-u^2\phi(q^{-1}u^{-2})\ \nonumber
\eeqa
into (\ref{LsMdiag}), one obtains (\ref{act3}) using  (\ref{AWop1a}),  (\ref{AWop2a}) through the identification $z=qu^2$. The proof of (\ref{act2}) is the specialization $\kappa^*=0$ of the proof of (\ref{act3}).
\end{proof}

Although  not reported here, the connection between the Baxter T-Q relation for the generic case (\ref{LsMgen}) and  the Heun-Askey-Wilson q-difference operator introduced in \cite{BTVZ} can be established using a different realization of the Askey-Wilson algebra.
In this case, the action of the Heun-Askey-Wilson q-difference operator $\pi({\textsf I}(\kappa,\kappa^*,\kappa_+,\kappa_-))$ on the Baxter Q-polynomial 
also takes a form similar to (\ref{act3}).

\subsection{The Q-polynomial for the special case and the Askey-Wilson polynomials}
It is well-known that the eigenfunctions of the second-order q-difference operator (\ref{AWop2a}) are given by the Askey-Wilson orthogonal polynomials \cite[eq. (3.1.6)]{KS}. Recall that the Askey-Wilson polynomials   $P_M(x)$, $M=0,1,2,...$ are defined by:
\beqa
P_M(x;\fa,\fb,\fc,\fd)= \fpt{q^{-2M}}{\fa\fb\fc\fd q^{2M-2}}{\fa z}{\fa z^{-1}}{\fa\fb}{\fa\fc}{\fa\fd }\quad \mbox{with}\quad x=z+z^{-1}\ . \label{awpoly}
\eeqa
Denoting the Askey-Wilson  second-order q-difference by ${\mathbb D}$ according to \cite[eq. (3.1.6)]{KS},  the Askey-Wilson polynomials solve the bispectral problem:
\beqa
{\mathbb D} P_M(x) &=&   (q^{-2M} + \fa\fb\fc\fd q^{2M-2}) P_M(x)  ,\label{awd}\\
x P_M(x) &=& b_M  P_{M+1}(x)  + a_M  P_{M}(x) + c_M  P_{M-1}(x)  \label{rec}
\eeqa
where $P_{-1}\equiv 0$ and
\beqa
b_M&=& \frac{ (1-\fa\fb q^{2M}) (1-\fa\fc q^{2M}) (1-\fa\fd q^{2M}) (1-\fa\fb\fc\fd q^{2M-2})}{\fa (1-\fa\fb\fc\fd q^{4M-2}) (1-\fa\fb\fc\fd q^{4M})} \ ,\nonumber\\
 c_M&=&   \frac{ \fa(1- q^{2M}) (1-\fb\fc q^{2M-2}) (1-\fb\fd q^{2M-2}) (1-\fc\fd q^{2M-2})}{ (1-\fa\fb\fc\fd q^{4M-4}) (1-\fa\fb\fc\fd q^{4M-2})}  \ ,\nonumber\\
 a_M&=&  \fa + \fa^{-1} - b_M - c_M . \nonumber
\eeqa
From the three-term recurrence relation (\ref{rec}), one extracts the leading term of degree $M$ in the variable $x$:
\beqa
P_M(x) =  \left(\frac{(\fa\fb;q^2)_M (\fa\fc;q^2)_M (\fa\fd;q^2)_M (\fa\fb\fc\fd q^{-2};q^2)_M}{ \fa^M (\fa\fb\fc\fd q^{-2};q^4)_M (\fa\fb\fc\fd;q^4)_M} \right)^{-1} x^M + \cdots\label{asymPM}
\eeqa 

For generic parameters $q,\fa,\fb,\fc,\fd$, the spectrum in (\ref{awd})  is non-degenerate, of the form (\ref{st}). Based on this observation, a comparison of  (\ref{awd}) and (\ref{asymPM}) with (\ref{act1})  immediately yields to:
\begin{prop}\label{pQM} For the special case $\kappa=\kappa_\pm=0$, the Q-polynomial  (\ref{QM}) of Proposition \ref{propTQsp} is given by
\beqa
Q_M(Z)= \frac{(\fa\fb;q^2)_M (\fa\fc;q^2)_M (\fa\fd;q^2)_M (\fa\fb\fc\fd q^{-2};q^2)_M}{(q+q^{-1})^M \fa^M (\fa\fb\fc\fd q^{-2};q^4)_M (\fa\fb\fc\fd;q^4)_M }  P_{M}(z + z^{-1};\fa,\fb,\fc,\fd )\label{Qf}
\eeqa
with  
\beqa
\fa = -q e^{-\mu + \mu'}\ ,\quad \fb = -q e^{\mu + \mu'}\ ,\quad \fc = q^{-2s}v^2 \ ,\quad \fd = q^{-2s}v^{-2}\ .\label{pfix} 
\eeqa
\end{prop}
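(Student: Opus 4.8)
The plan is to exploit the lemma yielding (\ref{act1}), which already shows that $Q_M(Z)$ is an eigenfunction of the second-order $q$-difference operator $\pi(\tA^*)$, and to compare this with the bispectral characterization (\ref{awd}), (\ref{rec}) of the Askey-Wilson polynomials. First I would identify $\pi(\tA^*)$ with the Askey-Wilson operator ${\mathbb D}$ of \cite{KS} under the parameter choice (\ref{pfix}). To this end, observe that in (\ref{AWop2a}) the only nontrivial $z$-dependence beyond the shifts $T_\pm$ sits in the coefficient $\phi(z)$ of (\ref{phidef}), whose numerator factors as $(1+qe^{-\mu+\mu'}z)(1+qe^{\mu+\mu'}z)(1-q^{-2s}v^2z)(1-q^{-2s}v^{-2}z)$. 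Matching these four linear factors against the numerator $(1-\fa z)(1-\fb z)(1-\fc z)(1-\fd z)$ of the coefficient of ${\mathbb D}$ reads off exactly (\ref{pfix}). One then checks that, with these parameters, $\pi(\tA^*)$ coincides with ${\mathbb D}$ up to the overall factor $\tfrac12 q^{(\nu+\nu')/2}e^{-\mu'}q^{2s}$; concretely, using $\fa\fb\fc\fd=q^{2-4s}e^{2\mu'}$, the eigenvalue $\tfrac12 q^{(\nu+\nu')/2}(e^{-\mu'}q^{2s-2M}+e^{\mu'}q^{-2s+2M})$ of $\pi(\tA^*)$ factors as $\tfrac12 q^{(\nu+\nu')/2}e^{-\mu'}q^{2s}\bigl(q^{-2M}+\fa\fb\fc\fd q^{2M-2}\bigr)$, which is the Askey-Wilson eigenvalue in (\ref{awd}) times this same scalar.

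Second, I would invoke non-degeneracy of the spectrum. For generic $q,\fa,\fb,\fc,\fd$ the eigenvalues $q^{-2M}+\fa\fb\fc\fd q^{2M-2}$ in (\ref{awd}) are pairwise distinct, being of the form (\ref{st}); hence each eigenspace of ${\mathbb D}$, equivalently of $\pi(\tA^*)$, is one-dimensional. By (\ref{act1}) the polynomial $Q_M(Z)$ belongs to the eigenspace labelled by $M$, and by (\ref{awd}) so does $P_M(z+z^{-1};\fa,\fb,\fc,\fd)$; therefore the two are scalar multiples of one another.

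Third, I would determine the scalar by a single leading-coefficient comparison. Writing $x=z+z^{-1}=(q+q^{-1})Z$, the monic polynomial $Q_M(Z)=\prod_{j=1}^M(Z-U_j)$ has leading term $(q+q^{-1})^{-M}x^M$, whereas the coefficient of $x^M$ in $P_M$ is the reciprocal of the factor displayed in (\ref{asymPM}). Taking the ratio of these leading coefficients produces precisely the prefactor appearing in (\ref{Qf}), which finishes the argument.

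The main obstacle will be the first step: verifying that $\pi(\tA^*)$ as given by (\ref{AWop2a})--(\ref{phidef}) is genuinely the Askey-Wilson operator ${\mathbb D}$ in the normalization of \cite[eq.~(3.1.6)]{KS}, and not merely conjugate or similar to it, so that the two polynomial families sit in the \emph{same} eigenspaces and the leading-coefficient ratio is the whole story. Once the coefficient functions are aligned through (\ref{pfix}) and the multiplicative and additive constants are matched---a finite if slightly tedious check---the remainder follows formally from one-dimensionality of the eigenspaces and the leading-coefficient comparison above.
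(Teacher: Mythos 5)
Your proposal is correct and follows essentially the same route as the paper: the text preceding Proposition \ref{pQM} identifies $\pi(\tA^*)$ with the Askey--Wilson operator ${\mathbb D}$ via the factorization of $\phi(z)$ in (\ref{phidef}) (giving (\ref{pfix})), invokes non-degeneracy of the spectrum (\ref{awd}) of the form (\ref{st}), and compares (\ref{act1}) with (\ref{awd}) and the leading coefficient (\ref{asymPM}). You merely make explicit the eigenvalue matching and the normalization computation that the paper declares ``immediate.''
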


Expanding the Askey-Wilson polynomials in the variable $Z=(z+z^{-1})/(q+q^{-1})$ using (\ref{awpoly}) (see \cite{KS} for definitions), for the Q-polynomial  (\ref{Qf}) we obtain:
\beqa
Q_M(Z) = \sum_{l=0}^M (-1)^l\textsf{Q}_{l,M} Z^{M-l}\ \label{Qpower}
\eeqa
with
\beqa
\textsf{Q}_{l,M}&=& \frac{(-1)^M(\fa\fb;q^2)_M (\fa\fc;q^2)_M (\fa\fd;q^2)_M (\fa\fb\fc\fd q^{-2};q^2)_M}{(q+q^{-1})^l \fa^M (\fa\fb\fc\fd q^{-2};q^4)_M (\fa\fb\fc\fd;q^4)_M } \label{coeffQlm}\\
&&\times \ \sum_{k=M-l}^M \frac{(q^{-2M};q^2)_k(\fa\fb\fc\fd q^{2M-2};q^2)_kq^{k(k+1)} \fa^k} {(\fa\fb ;q^2)_k (\fa\fc ;q^2)_k (\fa\fd ;q^2)_k(q^2;q^2)_k} \textsf{e}_{l+k-M}(x_1,x_2,...,x_k)\ \nonumber
\eeqa
where
\beqa
x_l = \fa q^{2l-2} + \fa^{-1}q^{2-2l}\ . \nonumber
\eeqa
\begin{example} With the identification (\ref{pfix}):
\beqa
&&Q_1(Z)=   Z-{\frac {(\fa + \fb + \fc + \fd - \fa\fb\fc - \fa\fb \fd - \fa\fc\fd -\fb\fc\fd ) }{\left( {q}+q^{-1} \right) \left( 1-\fa\fb\fc\fd \right)   }}
\ ,\nonumber\\
&&Q_2(Z)= Z^2 -  {\frac {q(\fa + \fb + \fc + \fd - q^2(\fa\fb\fc + \fa\fb \fd + \fa\fc\fd +\fb\fc\fd )) }{ \left( 1-\fa\fb\fc\fd q^4 \right)   }} Z \nonumber\\
&&+ \frac{\left( \begin{array}{cc} \left( {q}^{2}-1 \right)  \left( 1-{\fa}^{2}{\fb}^{2}{\fc}^{2}{\fd}^{2}{q}^{4
} \right) + \left( \fa\fb    +\fa\fc    +\fa\fd    +\fb\fd    +\fb\fc  +\fc\fd \right)  \left( {q}^{2}+1
 \right) + \left( {\fa}^{2}+{\fb}^{2}+{\fc}^{2}+{\fd}^{2} \right) {q}^{2}  \nonumber\\
+
 \left( {\fa}^{2}{\fb}^{2}{\fc}^{2}+{\fa}^{2}{\fb}^{2}{\fd}^{2}+{\fa}^{2}{\fc}^{2}{\fd}^
{2}+{\fb}^{2}{\fc}^{2}{\fd}^{2} \right) {q}^{4}+\fa\fb\fc\fd  \left( \fa\fb    +\fa\fc    +\fa\fd    +\fb\fd    +\fb\fc  +\fc\fd \right)  \left( {q}^{4}+{q
}^{6} \right)   \nonumber\\
- \left( {q}^{4}+{q}^{2}+1 \right)  \left( \fa\fb    {\fc}^{2}+\fa{\fb
}^{2}\fc+\fa\fb    {\fd}^{2}+\fa{\fb}^{2}\fd+\fa\fc    {\fd}^{2}+\fa{\fc}^{2}\fd+{\fa}^{2}\fc\fd+{\fa}^{2}\fb\fd    +{\fa}
^{2}\fb\fc  +{\fb}^{2}\fc\fd+\fb{\fc}^{2}\fd+\fb\fc    {\fd}^{2} \right)  \nonumber\\ 
-\fa\fb\fc\fd  \left( {q}^{2}+1
 \right) ^{3} +\fa\fb\fc   \left( \fa+\fb+\fc
 \right) +\fa\fb\fd    \left( \fa+\fb+\fd \right) +\fa\fc\fd     \left( \fa+\fc+\fd \right) +\fb\fc\fd    
 \left( \fb+\fc+\fd \right)  \end{array} \right)}{(q+q^{-1})^2(1-\fa\fb\fc\fd q^2)(1-\fa\fb\fc\fd q^4)}
 .\nonumber
\eeqa
\end{example}

According to Proposition \ref{pQM}, the roots of the Q-polynomial can be computed for large values of $M$, a regime in which usually they are difficult to access by solving directly the Bethe equations of Proposition \ref{p33} (or similarly (\ref{PBA}) for $a=sp$).
\vspace{1mm}

To conclude this subsection, let us remark that the identification of the roots of the Q-polynomial (\ref{Qpower}) with the admissible Bethe roots satisfying Proposition \ref{p33}  (or (\ref{PBA}) for $a=sp$)  imply the existence of certain relations that are now described for completeness. For instance, given $M$ fixed a comparison between $Q_M(Z)$ (see (\ref{QM}))  and (\ref{Qpower}) leads to a system of equations for the symmetrized Bethe roots (\ref{sBr}). Equating the coefficients of both polynomials,  one finds that the symmetrized Bethe roots (\ref{sBr}) satisfy the following set of relations:
\beqa
 \textsf{e}_{l}(U_1,U_2,...,U_M)= \textsf{Q}_{l,M}\ \quad \mbox{for} \quad l=0,1,\cdots, M \ .\label{poly2}
\eeqa
For small values of $M=1,2,3$, these relations are equivalent to the following polynomial equations.
\begin{example}
\beqa
\mbox{For $M=1$:}&& U_1 = \textsf{Q}_{1,1}\ , \nonumber\\
\mbox{For $M=2$:}&& U_2 = \textsf{Q}_{1,2}- U_1\ ,\nonumber\\
&&  U_1^2 - U_1 \textsf{Q}_{1,2} + \textsf{Q}_{2,2}=0 \ , \nonumber\\
\mbox{For $M=3$:}&& U_3 = \textsf{Q}_{1,3}- U_1-U_2\ , \nonumber\\
&& U_2^2 +U_2(U_1- \textsf{Q}_{1,3}) + U_1^2  -U_1\textsf{Q}_{1,3} + \textsf{Q}_{2,3}=0 \ , \nonumber\\
&& U_1^3 - U_1^2 \textsf{Q}_{13} + U_1\textsf{Q}_{2,3} - \textsf{Q}_{3,3}=0\ .\nonumber
\eeqa
\end{example}

Let us make some comments about the polynomial equations (\ref{poly2}) compared with the polynomial equations  (\ref{PBA}).
For small values of $M$, using (\ref{pfix}) and (\ref{coeffQlm}) we have computed numerically the solutions of   (\ref{poly2}). It is found that they correspond to the subset of adminissible solutions of (\ref{PBA}). A direct proof of this fact for generic values of $M$ remains to be done.

As a second consequence of the relation between the Q-polynomial and the Askey-Wilson polynomials, an explicit relation between  sets of Bethe roots associated with different values of $M$ can be extracted from the recurrence relation (\ref{rec}). For convenience, introduce the notation 
\beqa
\textsf{Q}_{l,M} \equiv \textsf{e}_{l}(U^{[M]}_1,U^{[M]}_2,...,U^{[M]}_M)
\eeqa
where the indices are added in order to distinguish between different root systems $\{U^{[M]}_1,U^{[M]}_2,...,U^{[M]}_M\}$ that solve (\ref{PBA}).  From (\ref{rec}) and (\ref{Qf}) it follows:
\beqa
\textsf{Q}_{1,M+1}&=& \textsf{Q}_{1,M} +  \frac{a_M}{(q+q^{-1})}\ ,\quad M>0\nonumber\\
\textsf{Q}_{l+2,M+1}&=& \textsf{Q}_{l+2,M}  + \frac{a_M}{(q+q^{-1})} \textsf{Q}_{l+1,M} -   \frac{c_Mb_{M-1}}{(q+q^{-1})^2} \textsf{Q}_{l,M-1} \ \quad \mbox{for}\quad l=0,1,\cdots , M-2,\quad M>1\nonumber\\
\textsf{Q}_{M+1,M+1} &=&  \frac{a_M}{(q+q^{-1})} \textsf{Q}_{M,M} -   \frac{c_Mb_{M-1}}{(q+q^{-1})^2} \textsf{Q}_{M-1,M-1} \ .\nonumber
\eeqa

\subsection{The Q-polynomial as a transition coefficient}
For the dynamical operators (\ref{Bm}), consider the choice of gauge parameter $\beta=0$. Using (\ref{qc1}), after straightforward simplifications one gets:
\ben
&&\pi(\mathscr{B}^{+}(u,m))
=\frac{\chi b(u^2)}{\alpha (q-q^{-1})q^{2+m}u}
\left(U-\frac{z+z^{-1}}{q+q^{-1}}\right) \ .
\een
Considering in particular the product of dynamical operators (\ref{SB}) entering in the definition of the Bethe states of Propositions \ref{p31}, \ref{p33} and \ref{p34},  the following lemma is easily shown.
\begin{lem}\label{lem5} Assume the gauge parameter $\alpha$ satisfies (\ref{ab}). For $a=sp\ (M=0,1,...,2s)$ or  $a=d \ (M=2s)$, the Baxter Q-polynomial (\ref{QM}) is given by
\beqa
Q_M(Z) = {\cal N}_M(\bar u)^{-1} \langle z |\Psi_{a,+}^{M}(\bar u)\rangle \quad \mbox{with}\quad   {\cal N}_M(\bar u)  = (-1)^M \frac{(q+q^{-1})^M}{2^M} \left(q^{\frac{\nu+\nu'}{2}} e^{-\mu'}q^{2s-M-1}\right)^M \prod_{i=1}^{M} \frac{b(u_i^2)}{u_i}\ \label{norm}\nonumber
\eeqa
for (\ref{PBA}) and the following notation:
\beqa
 \langle z |\Psi_{a,+}^{M}(\bar u)\rangle = \pi(\mathscr{B}^{+}(u_1,m_0+2(M-1))\cdots \mathscr{B}^{+}(u_M,m_0))|_{\beta=0}\ \label{Qtrans} {\bf 1} \ .
\eeqa
\end{lem}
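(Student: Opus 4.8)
The plan is to exploit the multiplication-operator form of $\pi(\mathscr{B}^{+}(u,m))$ established just above the statement. Recall how it arises: starting from the PBW expression (\ref{bp}) for $\mathscr{B}^{+}(u,m)$, setting $\beta=0$, using the ordering relation (\ref{o1}) to write $\tB=[\tA,\tA^*]_q$, and substituting the explicit image (\ref{qc1}) of $\pi([\tA^*,\tA]_q)$, all $q$-difference pieces cancel and one is left with
\begin{equation*}
\pi(\mathscr{B}^{+}(u,m))=\frac{\chi\, b(u^2)}{\alpha (q-q^{-1})q^{2+m}u}\left(U-Z\right),
\end{equation*}
that is, pure multiplication by $(U-Z)$, with $U$ as in (\ref{sV}) and $Z=(z+z^{-1})/(q+q^{-1})$. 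I would also record that in the realization $\pi$ the reference state $|\Omega^+\rangle\equiv|\theta^*_0\rangle$ is represented by the constant function ${\bf 1}$: applying (\ref{AWop2a}) gives $(T_\pm-1){\bf 1}=0$, so $\pi(\tA^*){\bf 1}=\tfrac12 q^{(\nu+\nu')/2}(e^{\mu'}q^{-2s}+e^{-\mu'}q^{2s}){\bf 1}=\theta^*_0\,{\bf 1}$, consistent with (\ref{st}), (\ref{par}).

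Since $\pi$ is a representation, $\langle z|\Psi_{a,+}^M(\bar u)\rangle$ is the composition $\pi(\mathscr{B}^{+}(u_1,m_0+2(M-1)))\cdots\pi(\mathscr{B}^{+}(u_M,m_0))$ applied to ${\bf 1}$. Each factor being a multiplication operator, these commute and the composition is just multiplication by the product of their symbols; acting on the constant ${\bf 1}$ it returns that product. With $m_i=m_0+2(M-i)$ this gives
\begin{equation*}
\langle z|\Psi_{a,+}^M(\bar u)\rangle=\left(\prod_{i=1}^M\frac{\chi\,b(u_i^2)}{\alpha(q-q^{-1})q^{2+m_i}u_i}\right)\prod_{i=1}^M(U_i-Z),
\end{equation*}
and I would rewrite $\prod_{i=1}^M(U_i-Z)=(-1)^M\prod_{i=1}^M(Z-U_i)=(-1)^M Q_M(Z)$ using (\ref{QM}), (\ref{sBr}). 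This already yields proportionality of the transition coefficient to $Q_M(Z)$ for both $a=sp$ and $a=d$, the two cases differing only in the Bethe equations satisfied by the $u_i$, which play no role in this computation.

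It then remains to identify the scalar prefactor with ${\cal N}_M(\bar u)$. I would substitute $\alpha=\chi q^{-m_0}/((q^2-q^{-2})c^*)$ from (\ref{ab}) together with $c^*=\tfrac12 q^{(\nu+\nu')/2}e^{-\mu'}q^{2s}$ from (\ref{par}), obtaining $\chi/(\alpha(q-q^{-1}))=\tfrac12(q+q^{-1})q^{(\nu+\nu')/2}e^{-\mu'}q^{2s}q^{m_0}$ after using $(q^2-q^{-2})/(q-q^{-1})=q+q^{-1}$. Collecting the remaining powers of $q$ via $\sum_{i=1}^M m_i=Mm_0+M(M-1)$, the $m_0$-dependence cancels and the net extra exponent reduces to $-M(M+1)$, so that $(q^{2s})^M q^{-M(M+1)}=q^{M(2s-M-1)}=(q^{2s-M-1})^M$. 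The accumulated prefactor times $(-1)^M$ then reads exactly
\begin{equation*}
(-1)^M\frac{(q+q^{-1})^M}{2^M}\left(q^{(\nu+\nu')/2}e^{-\mu'}q^{2s-M-1}\right)^M\prod_{i=1}^M\frac{b(u_i^2)}{u_i}={\cal N}_M(\bar u),
\end{equation*}
whence $Q_M(Z)={\cal N}_M(\bar u)^{-1}\langle z|\Psi_{a,+}^M(\bar u)\rangle$ as claimed.

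The conceptual content — the collapse of $\pi(\mathscr{B}^{+}(u,m))$ to a multiplication operator at $\beta=0$ — is already in hand from the computation preceding the statement, so the lemma is essentially immediate. The only genuine work is the exponent bookkeeping of the last paragraph, which I expect to be the main obstacle (being the step most prone to slips) and which I would carry out carefully, tracking $\sum_i m_i$ and the substitutions of $\alpha$ and $c^*$.
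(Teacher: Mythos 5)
Your proposal is correct and follows essentially the same route as the paper: the paper establishes the multiplication-operator form $\pi(\mathscr{B}^{+}(u,m))=\frac{\chi\,b(u^2)}{\alpha(q-q^{-1})q^{2+m}u}(U-Z)$ at $\beta=0$ immediately before the lemma and then states that the lemma "is easily shown" from the product (\ref{SB}). Your exponent bookkeeping (using $\sum_i m_i=Mm_0+M(M-1)$, the substitution of $\alpha$ from (\ref{ab}) and $c^*$ from (\ref{par}), and the sign $(-1)^M$ from $\prod_i(U_i-Z)$) checks out and reproduces ${\cal N}_M(\bar u)$ exactly.
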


Here we have  denoted
\beqa
 \langle z |\Psi_{a,+}^{0}(\bar u)\rangle = \langle z  |\Omega_+\rangle = Q_0(Z) \equiv  {\bf 1} \  . \nonumber
\eeqa
For completeness, the action of the other dynamical operators on ${\bf 1}$ is computed in a straightforward manner using $T_\pm  {\bf 1} = {\bf 1} $. Using the expressions reported in Appendix \ref{apF}, for which $\alpha$ satisfies (\ref{ab}) and $\beta=0$, for any integer $m_0$ one finds\footnote{Note that after acting on ${\bf 1} $ with the expressions in Appendix \ref{apF} the dependence on $z$ vanishes and the results drastically simplify.}
\ben
\pi(\mathscr{A}^{+}(u,m_0))  {\bf 1} &=& u^{-1}b(u^2)b(qu^2)\phi(q^{-1}u^{-2})\,,\nonumber\\
 \pi(\mathscr{D}^{+}(u,m_0)) {\bf 1} &=& u^{-1}b(u^2)b(q^2u^2)\phi(qu^2)\,,\nonumber\\
 \pi(\mathscr{C}^{+}(u,m_0)) {\bf 1} &=& 0\, .\nonumber
\een

Suppose a rigorous mathematical definition of a basis $\{|z\rangle \}$ within the framework of rigged Hilbert spaces (see e.g. \cite{Madr1}) is given for the Askey-Wilson algebra, extending the case of the quantum harmonic oscillator and Hermite functions \cite{Ce} to the realm of Askey-Wilson orthogonal polynomials \cite[p. 50]{KS}. To our knowledge, this problem has not been considered yet in the literature.
If solved, then the notation  $\langle z |\Psi_{sp,+}^{M}(\bar u)\rangle$ would find a natural interpretation as a transition coefficient connecting the continuous basis   $\{|z\rangle \}$ and the discrete basis $\{|\theta^*_{M}\rangle\}$  given in  Proposition \ref{baseLP}.

\section{Applications}\label{App}

In this Section, we apply the algebraic Bethe ansatz solution for the Heun-Askey-Wilson operator of Section 3 to the diagonalization of the q-analog of the quantum Euler top as well as to various examples of Hamiltonians of $3$-sites Heisenberg spin chains in a magnetic field, inhomogeneous couplings, three-body terms and boundary interactions.

\subsection{Algebraic Bethe ansatz solution for the q-analog of the quantum Euler top}
The relation between the Hamiltonian of a quantum Euler top in a magnetic field built from $sl_2({\mathbb R})$ and the Heun operator has been recently studied  in \cite{Tu16} (see also \cite{WZ}). It is a quantum version of the Zhukovsky-Volterra gyrostat  of classical mechanics \cite{Ba09,LOZ06}, and arises in spin systems with anisotropy \cite{ZU87}. Considering the realization of the Askey-Wilson algebra given in Example \ref{ex1}, it is straightforward to derive a q-deformed analog of
 the Euler top \cite[eq. (2)]{Tu16} generated from $U_q(sl_2)$ starting from (\ref{I}). Replacing (\ref{Aex1}), (\ref{Astarex1}) in (\ref{I}) one gets the bilinear  expression considered in \cite[Section 3.2]{WZ}. It gives a Hamiltonian of the form:
\beqa
\bar\pi({\textsf I}(\kappa,\kappa^*,\kappa_+,\kappa_-)) &=& t_{+-}S_+S_- + t_{00}q^{2s_3} + t'_{00}q^{-2s3} + t_{++}S_+^2  + t_{--}S_-^2 \nonumber\\
&& + t_{0+}S_+ q^{s_3} + t_{0-}S_- q^{s_3} + t'_{0+}S_+ q^{-s_3} +
 t'_{0-}S_- q^{-s_3} + I_0\ ,\nonumber
\eeqa
where the coupling constants $t_{+-},t_{00},t'_{00}, t_{\pm \pm},t_{0\pm},t'_{0\pm}$ are expressed in terms of the parameters
$k_\pm,\epsilon_\pm,v,\kappa,\kappa^*,\kappa_\pm,\chi$ introduced  in Example \ref{ex1}  and $I_0$ central in $U_q(sl_2)$.\vspace{1mm}

For an irreducible finite dimensional representation of $U_q(sl_2)$ of dimension $2s+1$ on which this Hamiltonian acts, we  consider the parametrization
(\ref{parbis}) and apply the results of Section \ref{s3}. For generic parameters $\kappa,\kappa^*,\kappa_\pm$, the  spectrum and Bethe eigenstates are
 given by Proposition \ref{p35}. In the Tables \ref{table:top1}, \ref{table:top2} and \ref{table:top3}, we give the numerical results for $s=1/2,s=1,s=3/2$, respectively.

\begin{table}[h]
\begin{tabular}{|c|c|c|c|}	
\hline
{\bf Spin $s=\frac{1}{2}$}
 & \begin{tabular}[c]{c} Direct diagonalization \end{tabular} & 
\begin{tabular}[c]{c}Diagonalization via ABA\\${\Lambda}_{a,+}^{1}$        \end{tabular}
& 
\begin{tabular}[c]{c}Bethe roots\\ $\{U_1,\dots,U_{2s}\}$   \end{tabular}
 \\ \hline
\begin{tabular}[c]{c}$\kappa=1$, $\kappa^*=0.25$,\\ $\kappa_\pm=0$ \end{tabular}&  \begin{tabular}[c]{c}6.40701 + 3.99187 i\\ 6.40701 - 3.99187 i\end{tabular}  &
\begin{tabular}[c]{c}6.40701 + 3.99187 i\\ 6.40701 - 3.99187 i\end{tabular}
& 
\begin{tabular}[c]{c}\{-0.8157 - 1.07769 i\}\\ \{-0.8157 + 1.07769 i\}\end{tabular}
\\ \hline
\begin{tabular}[c]{c} $\kappa=\frac{10i(1+\sqrt{5})}{\sqrt{3}}\,, \kappa^*=\frac{20i}{\sqrt{3}}$, \\$\kappa_+=\frac{\sqrt{3}}{2}\,,\kappa_-=-\frac{3}{2}$\\$\chi=-\frac{40}{3\sqrt{3}}$ \end{tabular} &  \begin{tabular}[c]{c}294.909 - 337.018 i\\ -1006.7 + 746.697 i\end{tabular}  &
\begin{tabular}[c]{c}294.909 - 337.018 i\\ -1006.7 + 746.697 i\end{tabular}
& 
\begin{tabular}[c]{c}\{0.454184 + 0.509566 i\}\\ \{1.36299 - 0.412627 i\}\end{tabular}
\\ \hline
\end{tabular}
\vspace{2mm}
\caption{Numerical results for the parameters $q=3$, $\nu=\nu'=1$,  $\mu=0.2$, $\mu'=0.3$, $v=1.1$.}
\label{table:top1}
\end{table}

\begin{table}[h]
\begin{tabular}{|c|c|c|c|}	
\hline
{\bf Spin $s=1$}
 & \begin{tabular}[c]{c} Direct diagonalization \end{tabular} & 
\begin{tabular}[c]{c}Diagonalization via ABA\\${\Lambda}_{a,+}^{2}$        \end{tabular}
& 
\begin{tabular}[c]{c}Bethe roots\\ $\{U_1,\dots,U_{2s}\}$   \end{tabular}
 \\ \hline
\begin{tabular}[c]{c}$\kappa=1$, $\kappa^*=0.25$,\\ $\kappa_\pm=0$ \end{tabular}&  \begin{tabular}[c]{c}\begin{tabular}[c]{c}17.8556\\{}\end{tabular}\\ \begin{tabular}[c]{c}10.5068 + 9.82751 i\\{}\end{tabular}\\ \begin{tabular}[c]{c}10.5068 - 9.82751 i\\{}\end{tabular}\end{tabular}  &
\begin{tabular}[c]{c}\begin{tabular}[c]{c}17.8556\\{}\end{tabular}\\ \begin{tabular}[c]{c}10.5068 + 9.82751 i\\{}\end{tabular}\\ \begin{tabular}[c]{c}10.5068 - 9.82751 i\\{}\end{tabular}\end{tabular}
& 
\begin{tabular}[c]{c}\begin{tabular}[c]{c}\{-7.53525,\\-2.25731\}\end{tabular}\\ \begin{tabular}[c]{c}\{-2.89915 - 7.58381 i,\\-0.941451 - 0.375642 i\}\end{tabular}\\ \begin{tabular}[c]{c}\{-2.89915 + 7.58381 i,\\-0.941451 + 0.375642 i\}\end{tabular}\end{tabular}
\\ \hline
\begin{tabular}[c]{c} $\kappa=\frac{10i(1+\sqrt{5})}{\sqrt{3}}\,, \kappa^*=\frac{20i}{\sqrt{3}}$, \\$\kappa_+=\frac{\sqrt{3}}{2}\,,\kappa_-=-\frac{3}{2}$\\$\chi=-\frac{40}{3\sqrt{3}}$ \end{tabular} &  \begin{tabular}[c]{c}\begin{tabular}[c]{c}-2394.67 + 986.732 i\\{}\end{tabular}\\ \begin{tabular}[c]{c}-6079.21 + 1505.54 i\\{}\end{tabular}\\ \begin{tabular}[c]{c}-1543.12 - 1249.58 i\\{}\end{tabular}\end{tabular}  &
\begin{tabular}[c]{c}\begin{tabular}[c]{c}-2394.67 + 986.732 i\\{}\end{tabular}\\ \begin{tabular}[c]{c}-6079.21 + 1505.54 i\\{}\end{tabular}\\ \begin{tabular}[c]{c}-1543.12 - 1249.58 i\\{}\end{tabular}\end{tabular}
& 
\begin{tabular}[c]{c}\begin{tabular}[c]{c}\{2.98826 - 0.846233 i,\\-0.155658 + 1.20672 i\}\end{tabular}\\ \begin{tabular}[c]{c}\{4.06015 + 0.244047 i,\\1.69724 - 0.997537 i\}\end{tabular}\\ \begin{tabular}[c]{c}\{2.43438 + 1.09148 i,\\0.117738 + 1.26215 i\}\end{tabular}\end{tabular}
\\ \hline
\end{tabular}
\vspace{2mm}
\caption{Numerical results for the parameters $q=3$, $\nu=\nu'=1$,  $\mu=0.2$, $\mu'=0.3$, $v=1.1$.}
\label{table:top2}
\end{table}

\begin{table}[h]
\begin{tabular}{|c|c|c|c|}	
\hline
{\bf Spin $s=\frac{3}{2}$}
 & \begin{tabular}[c]{c} Direct diagonalization \end{tabular} & 
\begin{tabular}[c]{c}Diagonalization via ABA\\${\Lambda}_{a,+}^{3}$        \end{tabular}
& 
\begin{tabular}[c]{c}Bethe roots\\ $\{U_1,\dots,U_{2s}\}$   \end{tabular}
 \\ \hline
\begin{tabular}[c]{c}$\kappa=1$, $\kappa^*=0.25$,\\ $\kappa_\pm=0$ \end{tabular}&  \begin{tabular}[c]{c}\begin{tabular}[c]{c}37.2756\\{}\\{}\end{tabular}\\ \begin{tabular}[c]{c}46.446\\{}\\{}\end{tabular}\\ \begin{tabular}[c]{c}16.5142 + 19.0709 i\\{}\\{}\end{tabular}\\ \begin{tabular}[c]{c}16.5142 - 19.0709 i\\{}\\{}\end{tabular}\end{tabular}  &
\begin{tabular}[c]{c}\begin{tabular}[c]{c}37.2756\\{}\\{}\end{tabular}\\ \begin{tabular}[c]{c}46.446\\{}\\{}\end{tabular}\\ \begin{tabular}[c]{c}16.5142 + 19.0709 i\\{}\\{}\end{tabular}\\ \begin{tabular}[c]{c}16.5142 - 19.0709 i\\{}\\{}\end{tabular}\end{tabular}
& 
\begin{tabular}[c]{c}\begin{tabular}[c]{c}\{-13.1346,\\-1.2149,\\-40.6532\}\end{tabular}\\ \begin{tabular}[c]{c}\{-14.4407,\\-1.493,\\-61.3506\}\end{tabular}\\ \begin{tabular}[c]{c}\{-3.616 - 3.04982 i,\\-0.0635861 - 43.0976 i,\\-0.878068 - 0.190033 i\}\end{tabular}\\ \begin{tabular}[c]{c}\{-3.616 + 3.04982 i,\\-0.0635861 + 43.0976 i\\-0.878068 + 0.190033 i\}\end{tabular}\end{tabular}
\\ \hline
\begin{tabular}[c]{c} $\kappa=\frac{10i(1+\sqrt{5})}{\sqrt{3}}\,, \kappa^*=\frac{20i}{\sqrt{3}}$, \\$\kappa_+=\frac{\sqrt{3}}{2}\,,\kappa_-=-\frac{3}{2}$\\$\chi=-\frac{40}{3\sqrt{3}}$ \end{tabular} &  \begin{tabular}[c]{c}\begin{tabular}[c]{c}-20905.2 + 2621.26 i\\{}\\{}\end{tabular}\\ \begin{tabular}[c]{c}-15139.6 + 3210.92 i\\{}\\{}\end{tabular}\\ \begin{tabular}[c]{c}-9508.34 - 2492.84 i\\{}\\{}\end{tabular}\\ \begin{tabular}[c]{c}-6679.41 + 393.291 i\\{}\\{}\end{tabular}
\end{tabular}  &
\begin{tabular}[c]{c}\begin{tabular}[c]{c}-20905.2 + 2621.26 i\\{}\\{}\end{tabular}\\ \begin{tabular}[c]{c}-15139.6 + 3210.92 i\\{}\\{}\end{tabular}\\ \begin{tabular}[c]{c}-9508.34 - 2492.84 i\\{}\\{}\end{tabular}\\ \begin{tabular}[c]{c}-6679.41 + 393.291 i\\{}\\{}\end{tabular}\end{tabular}
& 
\begin{tabular}[c]{c}\begin{tabular}[c]{c}\{1.04709 - 1.16555 i,\\8.06887 - 1.51037 i\\9.48236 + 1.52911 i\}\end{tabular}\\ \begin{tabular}[c]{c}\{-0.255185 + 3.99139 i,\\3.64641 - 1.51464 i\\11.0049 - 0.370951 i\}\end{tabular}\\ \begin{tabular}[c]{c}\{0.658762 + 0.810534 i,\\4.03758 + 7.37438 i,\\9.56462 + 1.26855 i\}\end{tabular}\\ \begin{tabular}[c]{c}\{2.80458 + 9.49603 i,\\0.718949 + 0.525899 i,\\7.03427 - 0.69735 i\}\end{tabular}\end{tabular}
\\ \hline
\end{tabular}
\vspace{2mm}
\caption{Numerical results for the parameters $q=3$, $\nu=\nu'=1$,  $\mu=0.2$, $\mu'=0.3$, $v=1.1$.}
\label{table:top3}
\end{table}

\subsection{Algebraic Bethe ansatz solution of $3$-sites Heisenberg spin chains}
The construction of $3$-sites Heisenberg spin chains follow from Example \ref{ex2} and Example \ref{ex4}, by inserting (\ref{Aex2}), (\ref{Astarex2}) into (\ref{I}). In the next subsections, we compare numerically the direct diagonalization of these Hamiltonians to the algebraic Bethe ansatz solution given in Propositions \ref{p34}, \ref{p35}. Successively, we consider the cases of a spin-1/2 chain and a spin-1 chain. Introduce the anisotropy parameter $\Delta = \frac{q+q^{-1}}{2}$.

\subsubsection{The Heisenberg chain for $j_1=j_2=j_3=1/2$}
Denote $S_i^a$, $a=x,y,z$ with $i=1,2,3$ as the operators acting on the representation $V(j_i)$ such that 
\beqa\label{Pauli}
S^x = \frac{1}{2}\left(
\begin{array}{cc}
 0    & 1 \\
 1 & 0 
\end{array} \right)\ ,\quad S^y=\frac{1}{2}\left(
\begin{array}{cc}
 0    & -i \\
 i & 0 
\end{array} \right)  \ , \quad S^z = \frac{1}{2}\left(
\begin{array}{cc}
 1    & 0 \\
 0 & -1 
\end{array} \right)\ .
\eeqa
Using
\ben
&& \bar \pi (q^{\pm s_3}) =\pm \frac{q-1}{\sqrt{q}}S^z + \frac{q+1}{2\sqrt{q}} \,,\quad  \bar \pi (S_\pm) = S^x \pm i S^y\, ,\nonumber
\een
it follows that the Heun-Askey-Wilson operator is given by
\beqa\label{Hs12}
\bar\pi( {\textsf I}(\kappa,\kappa^*,\kappa_+,\kappa_-)) &=& \displaystyle {2(q-q^{-1})^2\Big( \sum_a \sum_{i,j}} J_{ij}^a S_i^a S_j^a + \displaystyle{ \sum_{ a \neq b } \sum_{i,j}} K_{ij}^{ab} S_i^aS_j^b + \displaystyle{\sum_{\{a,b,c\}} \sum_{i,j,k}} L_{ijk}^{abc} S_i^a S_j^b S_k^c + \displaystyle{\sum_i} b_i^z S_i^z \Big)+J_0\non\\
\eeqa
with the coupling constants,
\beqa
	J_{12}^x&=&J_{12}^y=\kappa + \frac{3q^{4}-2q^2+2q^{-2}-3q^{-4}}{4}(\kappa_+ \chi^{-1} + \kappa_- \chi ),\nonumber \\ J_{23}^x&=&J_{23}^y=\kappa^* + \frac{3q^{4}-2q^2+2q^{-2}-3q^{-4}}{4}(\kappa_+ \chi^{-1} + \kappa_- \chi ),\nonumber \\
	J_{12}^z&=&\Delta \kappa +   \frac{q-q^{-1}}{2}(q^2+q^{-2})^2 (\kappa_+ \chi^{-1} + \kappa_- \chi ),\nonumber \\
	J_{23}^z&=&\Delta \kappa^* +   \frac{q-q^{-1}}{2}(q^2+q^{-2})^2 (\kappa_+ \chi^{-1} + \kappa_- \chi ) ,\nonumber\\
	J_{13}^x&=&J_{13}^y=J_{13}^z =   \frac{(q-q^{-1})^3}{2}(\kappa_+ \chi^{-1} + \kappa_- \chi ), \nonumber\\
 K_{12}^{xy}&=&K_{23}^{yx}=-K_{12}^{yx}=-K_{23}^{xy}=-i\frac{(q-q^{-1})^3}{2} (\kappa_+ \chi^{-1} - \kappa_- \chi ) \Delta,\nonumber \\
	L_{123}^{xzy}&=&-L_{123}^{yzx}=2i(q-q^{-1})^2 (\kappa_+ \chi^{-1} - \kappa_- \chi )\Delta, \nonumber\\
 L_{123}^{yxz}&=&L_{123}^{zyx}=-L_{123}^{xyz}=-L_{123}^{zxy}=\Delta L_{123}^{xzy},\nonumber \\
	L_{123}^{xxz} &=& L_{123}^{yyz}=-L_{123}^{zxx}=-L_{123}^{zyy}= -\frac{(q-q^{-1})^4}{4} (\kappa_+ \chi^{-1} + \kappa_- \chi ),\nonumber \\
	b_2^z&=&-\frac{q-q^{-1}}{2}(\kappa-\kappa^*), \nonumber\\
b_1^z&=&\frac{q-q^{-1}}{4} (\kappa + (q^3+q^{-3})(q-q^{-1}) (\kappa_+ \chi^{-1} + \kappa_- \chi ) ),\nonumber \\ 
b_3^z&=&-\frac{q-q^{-1}}{4} (\kappa^* + (q^3+q^{-3})(q-q^{-1}) (\kappa_+ \chi^{-1} + \kappa_- \chi ) )
, \nonumber\\
J_0 &=& (q+q^{-1}+3 q^3+3 q^{-3})\frac{(\kappa+\kappa^*)}{4}\nonumber\\&&+
(12 q-12 q^{-1}-5 q^3+5 q^{-3}+q^5-q^{-5}+2 q^7-2 q^{-7})\frac{(\kappa^+\chi^{-1}+\kappa^-\chi)}{4}\,.
\nonumber
\eeqa
For the diagonal case $\kappa_\pm=0$, the three-body terms  vanish and the Heun-Askey-Wilson operator simplifies to
\beqa 
\bar\pi({\textsf I}(\kappa,\kappa^*,0,0))  &=&\frac{(q-q^{-1})^2}{2}\Big(  \kappa ( S_1^x S_2^x + S_1^y S_2^y + \Delta S_1^z S_2^z) + \kappa^* (  S_2^x S_3^x + S_2^y S_3^y + \Delta S_2^z S_3^z)\non\\&&- \frac{q-q^{-1}}{2} ( \kappa^* S_3^z - \kappa S_1^z + (\kappa -\kappa^*) S_2^z)\Big)+\frac{(q+q^{-1})}{2}(\kappa+\kappa^*)\Big(\frac{3}{2}(q^2+q^{-2})-1\Big).\non
\eeqa

Following Example \ref{ex4}, we have set $j_1=j_2=j_3=1/2$. Then the set $\Sigma=(\ell,k)$ such that: 
\beqa
0 \leq \ell \leq 1 \ ,\qquad 0 \leq k  \leq 3 - 2\ell \  .
\eeqa
In the algebraic Bethe ansatz results of Proposition \ref{p33} and \ref{p34}, we fix:
\beqa
q^{\nu+\nu'}=4\ ,\quad  e^{-\mu}= e^{\mu'}= -v^2= q^{\min(1,\ell)-3}  \quad \mbox{and} \quad 2s=3\min(1,\ell) - 2\ell\ .
\eeqa

In Table \ref{table:j1j2j3a}, for numerical values of the scalar parameters $\kappa,\kappa^*,\kappa_\pm$ the eigenvalues of the Hamiltonian $\bar\pi({\textsf I}(\kappa,\kappa^*,\kappa_+,\kappa_-))$ obtained by direct diagonalization and the ones obtained from the algebraic Bethe ansatz (ABA) are displayed.
We remark that the Hamiltonian (\ref{Hs12}) is Hermitian if
\ben
\operatorname{Im}(q)=\operatorname{Im}(\kappa)=\operatorname{Im}(\kappa^*)=0\,,\quad \kappa_+ = \bar \kappa_- \lvert \chi\rvert ^2\nonumber\,,
\een
where $\bar \kappa_-$ denotes the complex conjugate of $\kappa_-$. In addition, for $\kappa=\kappa^*=1$ and $\kappa_{\pm}=0$, the Hamiltonian (\ref{Hs12}) reduces to the $U_q(sl_2)$-invariant XXZ chain \cite{Alcaraz:1987uk,Pasquier:1989kd}, and thus has real spectrum for $q$ in the unit circle \cite{Morin-Duchesne:2015afa}.
\begin{table}[h]
\begin{tabular}{|c|c|c|c|}	
\hline
{\bf Spin-$\frac{1}{2}$ chain}
 & \begin{tabular}[c]{c} Direct diagonalization \\ (degeneracy) \end{tabular} & 
\begin{tabular}[c]{c}Diagonalization via ABA\\${\Lambda}_{a,+}^{2s}$      $(s)$   \end{tabular}
& 
\begin{tabular}[c]{c}Bethe roots\\ $\{U_1,\dots,U_{2s}\}$   \end{tabular}
 \\ \hline
\begin{tabular}[c]{c}$\kappa=3\,, \kappa^*=1$ \,,\\ $\kappa_\pm=0$   \end{tabular}
 &  \begin{tabular}[c]{c}32.5 (4)\\ 14.4069 (2)\\ 28.0931 (2)\end{tabular}  &
\begin{tabular}[c]{c}32.5 (0)\\ 14.4069 (1/2)\\ 28.0931 (1/2)\end{tabular}
& 
\begin{tabular}[c]{c}-\\ \{-1.0344\}\\ \{-1.4906\}\end{tabular}
\\ \hline
\begin{tabular}[c]{c} $\kappa=-\frac{5}{4 \sqrt{2}}\,, \kappa^*=-\frac{9}{4 \sqrt{2}}$, \\$\kappa_+=\frac{1}{8}\,,\kappa_-=-\frac{1}{16}$\\$\chi=-\frac{15}{4}$ \end{tabular}
&  \begin{tabular}[c]{c}-0.200512 (4)\\ -6.25895 + 3.32745 i (2)\\ -6.25895 - 3.32745 i (2)\end{tabular}  &
\begin{tabular}[c]{c}-0.200512 (0)\\ -6.25895 + 3.32745 i (1/2)\\ -6.25895 - 3.32745 i (1/2)\end{tabular}
& 
\begin{tabular}[c]{c}-\\ \{-0.793147 - 1.40509 i\}\\ \{-0.793147 + 1.40509 i\}\end{tabular}
\\ \hline
\end{tabular}
\vspace{2mm}
\caption{Numerical results for the parameters $q=2$, $\nu=\nu'=1$.}
\label{table:j1j2j3a}
\end{table}

\subsubsection{The Heisenberg chain for $j_1=j_2=j_3=1$}
Denote $S_i^a$, $a=x,y,z$ with $i=1,2,3$ as the operators acting on the representation $V(j_i)$ such that 
\beqa\label{Pauli1}
S^x = \frac{1}{\sqrt{2}}\left(
\begin{array}{ccc}
 0 & 1 & 0 \\
 1 & 0 & 1 \\
 0 & 1 & 0 \\
\end{array}
\right)\ ,\quad S^y=\frac{i}{\sqrt{2}}\left(
\begin{array}{ccc}
 0 & -1 & 0 \\
 1 & 0 & -1 \\
 0 & 1 & 0 \\
\end{array}
\right)  \ , \quad S^z = \frac{1}{2}\left(
\begin{array}{ccc}
 1 & 0 & 0 \\
 0 & 0 & 0 \\
 0 & 0 & -1 \\
\end{array}
\right)\ .
\eeqa
We now have
\ben
&& \bar \pi (q^{\pm s_3}) =\pm \frac{(q-q^{-1})}{2}S^z + \frac{(q-1)^2}{2q}(S^z)^2+1 \,,\quad  \bar \pi (S_\pm) = \sqrt{\frac{q+q^{-1}}{2}}\left(S^x \pm i S^y\right)\, .\nonumber
\een
Note that the term $(S^z)^2$ will lead to a proliferation of higher-order terms in the basis $\{S_i^x,S_i^y,S_i^z\}$, and for that reason we only write the expression of the Heun-Askey-Wilson operator for the diagonal case.
It is given by
\beqa 
\bar\pi({\textsf I}(\kappa,\kappa^*,0,0))  &=&
\frac{\kappa}{2} \Big(
q(q^2-q^{-2})^2(S_1^xS_2^x+q^{-2}S_1^yS_2^y+q^{-1}\Delta S_1^zS_2^z)
\non\\&&+ (q-q^{-1})^3 (q+q^{-1})(-q(S_1^xS_2^xS_2^z+iS_1^xS_2^zS_2^y-S_1^xS_1^zS_2^x+iS_1^zS_1^yS_2^x)+
\non\\&&
\qq\qq\qq\qq\q +q^{-1}(S_1^zS_1^yS_2^y+iS_1^xS_1^zS_2^y-S_1^yS_2^zS_2^y+iS_1^yS_2^xS_2^z))
\non\\&&+ (q-q^{-1})^4 (q+q^{-1})(-S_1^xS_1^zS_2^xS_2^z-iS_1^xS_1^zS_2^zS_2^y-S_1^zS_1^yS_2^zS_2^y+iS_1^zS_1^yS_2^xS_2^z
\non\\&&
\qq\qq\qq\qq\q+(S_1^z)^2+(S_2^z)^2-\frac{1}{2}(S_1^z)^2(S_2^z)^2)
\non\\&&+ (q-q^{-1})^3 (q+q^{-1})^2(S_1^z-S_2^z+\frac{1}{2}(S_1^z)^2S_2^z-\frac{1}{2}S_1^z(S_2^z)^2)
+ 2(2q^3+2q^{-3}-q-q^{-1})
\Big) 
\non\\&+&
\frac{\kappa^*}{2} \Big(
q(q^2-q^{-2})^2(S_2^xS_3^x+q^{-2}S_2^yS_3^y+q^{-1}\Delta S_2^zS_3^z)
\non\\&&+ (q-q^{-1})^3 (q+q^{-1})(-q(S_2^xS_3^xS_3^z+iS_2^xS_3^zS_3^y-S_2^xS_2^zS_3^x+iS_2^zS_2^yS_3^x)+
\non\\&&
\qq\qq\qq\qq\q +q^{-1}(S_2^zS_2^yS_3^y+iS_2^xS_2^zS_3^y-S_2^yS_3^zS_3^y+iS_2^yS_3^xS_3^z))
\non\\&&+ (q-q^{-1})^4 (q+q^{-1})(-S_2^xS_2^zS_3^xS_3^z-iS_2^xS_2^zS_3^zS_3^y-S_2^zS_2^yS_3^zS_3^y+iS_2^zS_2^yS_3^xS_3^z
\non\\&&
\qq\qq\qq\qq\q+(S_2^z)^2+(S_3^z)^2-\frac{1}{2}(S_2^z)^2(S_3^z)^2)
\non\\&&+ (q-q^{-1})^3 (q+q^{-1})^2(S_2^z-S_3^z+\frac{1}{2}(S_2^z)^2S_3^z-\frac{1}{2}S_2^z(S_3^z)^2)
+ 2(2q^3+2q^{-3}-q-q^{-1})
\Big) \,.
\eeqa
Following Example \ref{ex4}, we have set $j_1=j_2=j_3=1$, and the set $\Sigma=(\ell,k)$ such that: 
\beqa
0 \leq \ell \leq 3 \ ,\qquad 0 \leq k  \leq 6 - 2\ell \  .
\eeqa
In the algebraic Bethe ansatz results of Proposition \ref{p33} and \ref{p34}, we fix:
\beqa
q^{\nu+\nu'}=4\ ,\quad  e^{-\mu}= e^{\mu'}= -v^2= q^{\min(2,\ell)-5}  \quad \mbox{and} \quad 2s=3\min(2,\ell) - 2\ell\ .
\eeqa
In Table \ref{table:j1j2j3b}, for numerical values of the scalar parameters $\kappa,\kappa^*,\kappa_\pm$ the eigenvalues of the Hamiltonian $\bar\pi({\textsf I}(\kappa,\kappa^*,\kappa_+,\kappa_-))$ obtained by direct diagonalization and the ones obtained from the algebraic Bethe ansatz (ABA) are displayed.  We remark that for generic $q$ the Hamiltonian $\bar\pi({\textsf I}(\kappa,\kappa^*,\kappa_+,\kappa_-))$ for $j_1=j_2=j_3=1$ is not Hermitian. However, similarly to the $j_1=j_2=j_3=1/2$ case, we observe that for $\kappa=\kappa^*=1$, $\kappa_{\pm}=0$ and $q$ in the unit circle, the spectrum is real.

\begin{table}[h]
\begin{tabular}{|c|c|c|c|}	
\hline
{\bf Spin-$1$ chain}
 & \begin{tabular}[c]{c} Direct diagonalization \\ (degeneracy) \end{tabular} & 
\begin{tabular}[c]{c}Diagonalization via ABA\\${\Lambda}_{a,+}^{2s}$      $(s)$   \end{tabular}
& 
\begin{tabular}[c]{c}Bethe roots\\ $\{U_1,\dots,U_{2s}\}$   \end{tabular}
 \\ \hline
\begin{tabular}[c]{c}$\kappa=3\,, \kappa^*=1$\,,\\ $\kappa_\pm=0$   \end{tabular} &  \begin{tabular}[c]{c}128.125 (7)\\ 106.128 (5)\\ 54.4972 (5)\\  100.033 (3)\\   45.8128 (3) \\ 24.7794 (3) \\ 32.5 (1)\end{tabular}  &
\begin{tabular}[c]{c}128.125 (0)\\ 106.128 (1/2)\\ 54.4972(1/2)\\ 100.033 (1) \\ 45.8128 (1) \\ 24.7794 (1) \\ 32.5 (0)\end{tabular}
& 
\begin{tabular}[c]{c} - \\ \{-3.82841\}\\ \{-3.39815\} \\ \{-5.84346, -1.26701\}\\ \{-4.28657,-1.01656\} \\\{-3.59681, -1.00521\} \\ -\end{tabular}
\\ \hline
\begin{tabular}[c]{c} $\kappa=-\frac{5}{4 \sqrt{2}}\,, \kappa^*=-\frac{9}{4 \sqrt{2}}$, \\$\kappa_+=\frac{1}{8}\,,\kappa_-=-\frac{1}{16}$\\$\chi=-\frac{15}{4}$ \end{tabular} &  \begin{tabular}[c]{c}230.13 (7)\\ 33.5628 + 33.3326 i (5)\\ 33.5628 - 33.3326 i (5)\\ -2.26847 + 22.221 i (3)\\ \\ -2.26847 - 22.221 i (3)\\ \\ -13.7437 (3)\\ \\ -0.200512 (1)\end{tabular}  &
\begin{tabular}[c]{c}230.13 (0)\\ 33.5628 + 33.3326 i (1/2)\\ 33.5628 - 33.3326 i (1/2)\\ -2.26847 + 22.221 i (1)\\ \\ -2.26847 - 22.221 i (1)\\ \\ -13.7437 (1)\\ \\ -0.200512 (0)\end{tabular}
& 
\begin{tabular}[c]{c} - \\ \{ -3.51786 - 5.42098 i\}\\ \{-3.51786 + 5.42098 i\}\\ \{-8.41833 - 8.60097 i,\\-0.252667 - 0.78235 i\} \\ \{-8.41833 + 8.60097 i,\\-0.252667 + 0.78235 i\}\\  \{-1.91267 + 5.49269 i,\\-1.91267 - 5.49269 i\}\\ - \end{tabular}
\\ \hline
\end{tabular}
\vspace{2mm}
\caption{Numerical results for the parameters $q=2$, $\nu=\nu'=1$.}
\label{table:j1j2j3b}
\end{table}

\newpage

\section{Perspectives}\label{persp}
Besides the generalization to the Askey-Scheme of the basic quantum harmonic oscillator construction described in the Introduction, there are four main motivations for the present paper.\vspace{1mm}

$\bullet$ The Askey-Wilson algebra with generators $\tA,\tA^*$ provides the algebraic framework for all orthogonal polynomials of the Askey-scheme \cite{Z91}: the bispectral problem with respect to $\tA,\tA^*$ produces the well-known recurrence and second-order q-difference, difference or differential equations satisfied by these polynomials. If  $\tA,\tA^*$ act as a Leonard pair (irreducible finite dimensional representation of the Askey-Wilson algebra are considered), the entries of the transition matrix between the two respective eigenbasis of the Leonard pair are given in terms  of the orthogonal polynomials \cite{T03,T04}. By analogy, the Heun-Askey-Wilson algebra \cite[Definition 2.1]{BTVZ} with generators $\tA,{\textsf I}$ is a generalization of the Askey-Wilson algebra. So, it should provide the algebraic framework for a class of special functions beyond the Askey-scheme. Thus, investigating  the spectral problem with respect to $\tA,{\textsf I}$ is an important issue in this direction. Related works in these directions are e.g. \cite{Ta17,KST18,Ta19}. \vspace{1mm}

$\bullet$ The q-Onsager algebra  with generators $\tW_0,\tW_1$ introduced in \cite{Ter03} (see also \cite{Bas2})  is known to be
 a homomorphic pre-image of  the Askey-Wilson algebra with $\tW_0 \rightarrow \tA $, $\tW_1 \rightarrow \tA^* $. From that point of view, the theory of tridiagonal pairs developed by Terwilliger {\it et al.}  generalizes the theory of Leonard pairs \cite{Ter03}. In \cite{BK,BK2}, elements denoted $\{{\textsf I}_{2k+1},k\in{\mathbb Z}_+\}$ that generate a commutative subalgebra of the q-Onsager algebra have been constructed.  In general, they read as polynomials in $\tW_0,\tW_1$ of maximal degree $2k+2$ \cite{BB}. In particular, the image of the element ${\textsf I}_1$ in the Askey-Wilson algebra is the Heun-Askey-Wilson element (\ref{I}). Thus, the analysis presented here can be viewed as a warm up for the diagonalization of the mutually commuting elements  $\{{\textsf I}_{2k+1},k\in{\mathbb Z}_+\}$ within the algebraic Bethe ansatz. In quantum integrable systems, it is important to stress that the elements $\{{\textsf I}_{2k+1},k\in{\mathbb Z}_+\}$ are the basic building quantities for mutually commuting quantities, for instance the Hamiltonian of the open XXZ spin chain with  generic integrable boundary conditions. 
\vspace{1mm}

$\bullet$ The Askey-Wilson algebra admits an embedding into $U_q(sl_2)\otimes U_q(sl_2) \otimes U_q(sl_2)$ \cite{GZ93b} (see also \cite{Huang}), where the generators map as $ \tA \rightarrow \Delta(C) \otimes 1$,  $ \tA^* \rightarrow 1\otimes \Delta(C)$ with $C,\Delta$, respectively the Casimir element and coproduct of  $U_q(sl_2)$. In the recent literature \cite{Post,DDV}, a generalization of the Askey-Wilson algebra indexed by $N$ has been introduced. For $N=3$, it reduces to the Askey-Wilson algebra. Importantly, an embedding of this algebra in terms of `intermediate' Casimir elements of $(U_q(sl_2))^{\otimes N}$ has been given. Thus, the spectral problem solved in the present paper can be viewed as a toy model for studying generalizations of (\ref{I}) to $N>3$.  In the context of quantum integrable spin chains, this approach differs from the usual one based on Sklyanin's construction \cite{Skly88}. Indeed,  for $N=3$ considering an irreducible finite dimensional representation for each $U_q(sl_2)$  labeled by $j_1,j_2,j_3$,  for $j_1=j_2=j_3=j$, the Heun-Askey-Wilson operator associated with (\ref{I}) gives the Hamiltonian of a three-sites spin$-j$ chain in a magnetic field with inhomogeneous couplings, three-body  and boundary interactions. Thus, generalizations of the Askey-Wilson algebra should naturally generate Heisenberg spin chains with possible  inhomogeneous couplings and long range interactions.  For instance, see \cite{Ku}.\vspace{1mm}

$\bullet$ In the context of signal treatment, the optimal reconstruction of a signal from limited observational data is a central problem. In particular,  the diagonalization of Heun type operators play a crucial role in the so-called band-time limiting problem \cite{P87, G94,GVZ17}. For a recent review see \cite{bvz19} and references therein. For the q-deformed case, it is straigthforward to derive the q-analogs of the conditions \cite[eqs. (5.31), (5.32), (5.33)]{GVZ17} such that $\bar\pi(\textsf{I}(\kappa,\kappa^*,\kappa_+,\kappa_-)$ commutes with projectors on certain eigenspaces of $\bar\pi(\tA),\bar\pi(\tA^*)$. Thus, the results here presented apply in this context.
\vspace{1mm}

Other perspectives may be also briefly mentioned. For instance, for the realization (\ref{AWop1a}), (\ref{AWop2a}), the Heun-Askey-Wilson operator gives a fourth-order q-difference operator. Possible connections with q-Krall polynomials \cite{VYZ} could be explored. Another interesting direction is to study the relation with the separation of variables framework, see e.g. \cite{sov} and references therein. In particular, the eigenvectors  $|\theta^\diamond_{M}\rangle$ introduced in subsection \ref{ss31} diagonalize the off-diagonal entry $\bar\pi({\cal B}(u))$ given by (\ref{monoB}). In addition, it should be possible to study the spectrum of the Heun-Askey-Wilson q-difference operator using an {\it homogeneous} Baxter T-Q relation with non-polynomial solution \cite{LP14}. Also, for  q a root of unity and cyclic representations  of the Askey-Wilson algebra, the results here presented may be extended in light of \cite{BGV,Huang3}.
\vspace{1mm}

\vspace{0.7cm}

\noindent{\bf Acknowledgments:}  We thank S. Belliard, O. Brodier, N. Cramp\'e and P. Terwilliger for discussions and comments. We thank H-W. Huang for explanations on \cite{Huang}. P.B. thanks H. Saleur for pointing out reference \cite{Fen}, C. Fewster for pointing out reference \cite{Madr2} and comments on rigged Hilbert spaces, R. de la Madrid for comments on the construction of an $`|x\rangle '$ basis for the Askey-scheme using rigged Hilbert space, G. Roux for comments on potential applications of the Hamiltonians of Section \ref{App}, and L. Vinet and A. Zhedanov for joint collaborations around the Heun operator and comments on applications of the results here presented to the time-band limiting problem in signal processing. R.P. thanks the Institut Denis Poisson for hospitality. R.P. is supported by
the S\~ao Paulo Research Foundation (FAPESP) and by the Coordination for the Improvement of Higher Education Personnel (CAPES), grants \# 2017/02987-8 and \#88881.171877/2018-01.
P.B.  is supported by C.N.R.S. 
\vspace{0.2cm}

\newpage

\begin{appendix}

\section{Dynamical operators in the Askey-Wilson presentation}\label{apA}
The dynamical operators are polynomials of maximum degree two in the elements $\tA,\tA^*$ of the Askey-Wilson algebra.
From (\ref{Ae1})-(\ref{De1}), one finds:
\ben
&& \label{Am}\\
 &&  \mathscr{A}^{+}(u,m)=\frac{b(u^2)}{u (\alpha  q^{2 m}-q^2\beta  )} \Big(
\frac{\chi  q^m }{\rho }[\textsf{A}^*,\textsf{A}]_q-\frac{\alpha  \beta  q^{m+2} }{\chi
   }[\textsf{A},\textsf{A}^*]_q+ (\beta +\alpha  q^{2
   m})q\textsf{A}  -\frac{ (\alpha  q^{2 m} +\beta  q^4 u^4)}{q u^2}\textsf{A}^*
\nonumber\\&&\quad
-\frac{1}{\rho  q^3 u^2 \chi 
   b(q^2) b(u^2)}\Big(     \rho q^{{m}+2}(\alpha\beta q^2\rho -\chi^2)(u^6q^2-u^{-2})  \nonumber\\
&& \qquad \qquad \qquad \qquad \qquad  + (q^2+1)q^{{m}}( \alpha \eta^* \chi q^{m}(q^4-1) -q^2\omega (\alpha\beta q^2\rho -\chi^2)   )                          \nonumber\\
&& \qquad \qquad \qquad \qquad \qquad  - (q^2+1)q^{2}u^4( \beta \eta^* \chi (q^4-1) -q^{m}\omega (\alpha\beta q^2\rho -\chi^2)   ) \nonumber\\         
&& \qquad \qquad \qquad \qquad \qquad  + (q^2-1)u^2 (   \eta\chi( \alpha q^{2{m}} -  \beta q^2) (1+q^2)^2 -\rho q^{{m}+2} (\alpha\beta q^2\rho -\chi^2)   ) \Big)\Big),\nonumber\\  \nonumber
\een
\ben
&&\mathscr{A}^{-}(u,m)=
\frac{b(u^2)}{\alpha  q^2-\beta 
   q^{2 m}}
 \Big(\frac{u \chi  q^{m+2} }{\rho }[\textsf{A}^*,\textsf{A}]_q-\frac{\alpha 
   \beta  u q^m }{\chi }[\textsf{A},\textsf{A}^*]_q+\frac{ (\beta  q^{2 m}+\alpha  q^4
   u^4)}{q u}\textsf{A} -q u (\alpha +\beta  q^{2 m})\textsf{A}^* 
\nonumber\\&&\quad
+
\frac{1}{
\rho  q^3 u^2 \chi  b(q^2) b(u^2)}\Big(\rho  q^{m+2} (q^2 u^7-u^{-1})  (q^2 \chi ^2-\alpha  \beta  \rho )-(q^2-1)
   u^3 ( (q^2+1)^2 \eta^*\chi  (\alpha  q^2-\beta  q^{2 m})
\nonumber\\&&\quad\quad\quad\quad\quad\quad\quad\quad\quad\quad
+\rho  q^{m+2}
   (q^2 \chi ^2-\alpha  \beta  \rho ))-(q^2+1) q^2 u^5 (\omega  q^m
   (\alpha  \beta  \rho -q^2 \chi ^2)+\alpha   (q^4-1)\eta  \chi
   )
\nonumber\\&&\quad\quad\quad\quad\quad\quad\quad\quad\quad\quad
+(q^2+1) u q^m (\beta    (q^4-1) \eta\chi  q^m+\omega  (\alpha  \beta
    q^2 \rho -q^4 \chi ^2))\Big)\Big),\nonumber\\ \nonumber
\een
\ben
&&\mathscr{B}^{+}(u,m)=\frac{\beta  b(u^2)}{\alpha  q^{2 m+2}-\beta }\Big(\frac{\chi  q^m }{\beta  \rho  u}[\textsf{A}^*,\textsf{A}]_q-\frac{\beta  q^{-m} }{u \chi
   }[\textsf{A},\textsf{A}^*]_q+\frac{ (q^2+1)}{q
   u}\textsf{A}-(\frac{1}{q u^3}+q u)\textsf{A}^* 
\label{Bm}\\&&\quad
+\frac{q^{-m-3}}{\beta  \rho 
   \chi  b(q^2)}\big((q^2+1) u^{-1} (\omega  (\chi ^2 q^{2 m+2}-\beta ^2 q^2 \rho )+\beta   (q^4-1) \eta^*\chi  q^m)-q^2 \rho  (q^2 u+u^{-3})  (\beta ^2 \rho -\chi ^2
   q^{2 m})\big)\Big),\nonumber\\ \nonumber
\een
\ben
&&\mathscr{B}^{-}(u,m)=
\frac{\beta  b(u^2) q^{2 m+1}}{\alpha q^{-2}-\beta  q^{2 m}}
\Big(\frac{u \chi  q^{-m-1} }{\beta 
   \rho }[\textsf{A}^*,\textsf{A}]_q-\frac{\beta  u q^{m-1} }{\chi }[\textsf{A},\textsf{A}^*]_q+\frac{ (q^2 u^4+1)}{q^2
   u}\textsf{A}-\frac{ (q^2+1) u}{q^2}\textsf{A}^*
\nonumber\\&&\quad
-\frac{q^{-m-4} }{\beta  \rho  \chi 
   b(q^2)}\big(\rho  q^2 (q^2 u^3+u^{-1})  (\beta ^2 \rho  q^{2 m}-\chi ^2)+(q^2+1) u
   (\omega  (\beta ^2 \rho  q^{2 m+2}-q^2 \chi ^2)+\beta    (q^4-1)\eta \chi 
   q^m)\big)\Big),\nonumber\\ \nonumber
\een
\ben
&&\mathscr{C}^{+}(u,m)=
\frac{\alpha  b(u^2)}{\alpha
   -\beta  q^{2-2 m}}
 \Big(\frac{\alpha  q^m }{u \chi }[\textsf{A},\textsf{A}^*]_q-\frac{\chi  q^{-m} }{\alpha  \rho 
   u}[\textsf{A}^*,\textsf{A}]_q-\frac{ (q^2+1)}{q u}\textsf{A}+
   (\frac{1}{q u^3}+q u)\textsf{A}^*
\label{cm}\\&&
+\frac{q^{-m-3}}{\alpha  \rho  \chi  b(q^2)}\Big(
  \rho  q^2 (q^2 u+u^{-3})   (\alpha ^2 \rho  q^{2 m}-\chi ^2)+
   (q^2+1) u^{-1} (\omega  (\alpha ^2 \rho  q^{2 m+2}-q^2 \chi
   ^2)-\alpha (q^4-1) \eta^*\chi  q^m)
\Big)\Big),\nonumber
\een
\ben
&&\mathscr{C}^{-}(u,m)=
\frac{\alpha q b(u^2)}{\alpha q^2-\beta q^{2m}}
\Big(
\frac{\alpha  u q^{1-m}
   }{\chi }[\textsf{A},\textsf{A}^*]_q-\frac{u \chi  q^{m+1} }{\alpha  \rho }[\textsf{A}^*,\textsf{A}]_q-\frac{ (q^2 u^4+1)}{u}\textsf{A}+ u(q^2+1)  \textsf{A}^*
\nonumber\\
&&\quad+
\frac{q^{-m}}{\alpha\rho\chi (q^4-1) }
\Big(
\rho  q^2 (q^2 u^3+u^{-1})  (\alpha ^2 \rho -\chi ^2 q^{2 m})+(q^2+1) u
   (\omega  (\alpha ^2 q^2 \rho -\chi ^2 q^{2 m+2})+\alpha   (q^4-1) \eta \chi 
   q^m)
\Big)
\Big),\nonumber\\ \nonumber
\een
\ben
&&\mathscr{D}^{+}(u,m)=
\frac{b(u^2) b(q^2 u^2)}{u b(q u^2) (\alpha  q^{2 m}-\beta  q^2)}\Big(\frac{\alpha  \beta  q^{m+2}
   }{\chi }[\textsf{A},\textsf{A}^*]_q-\frac{\chi  q^m }{\rho }[\textsf{A}^*,\textsf{A}]_q- q
   (\beta +\alpha  q^{2 m})\textsf{A} +\frac{ q (\beta +\alpha  u^4 q^{2
   m})}{u^2}\textsf{A}^*
\label{dm}\\
&&\quad
+
\frac{1}{\rho  q^3 u^2  b(q^2)
   b(q^2 u^2)}\Big(
(q^2-1) u^2 \chi ^{-1} (\rho  q^{m+2} (\alpha  \beta  q^2 \rho -\chi ^2)+\eta  \chi 
   (\beta  (q^3+q)^2-\alpha  (q^2+1)^2 q^{2 m}))
\nonumber\\&&\quad\quad\quad\quad\quad\quad\quad
+\rho  \chi ^{-1} q^m (q^6
   u^6-u^{-2})   (\alpha  \beta  q^2 \rho -\chi ^2)+(q^2+1) \chi ^{-1}
   (\omega  q^m (\chi ^2-\alpha  \beta  q^2 \rho )+\beta   (q^4-1)
  \eta^* \chi )
\nonumber\\&&\quad\quad\quad\quad\quad\quad\quad
-(q^2+1) u^4 q^{m+2} \chi ^{-1} (\alpha   (q^4-1)\eta^* \chi
    q^m+q^2 \omega  (\chi ^2-\alpha  \beta  q^2 \rho ))
\Big)\Big),\nonumber\\ \nonumber
\een
\ben
&&\mathscr{D}^{-}(u,m)=
\frac{b(u^2) b(q^2 u^2)}{b(q u^2) (\alpha  q^2-\beta  q^{2
   m})}
\Big(\frac{\alpha  \beta  u q^m }{\chi
   }[\textsf{A},\textsf{A}^*]_q -\frac{u \chi  q^{m+2}
   }{\rho }[\textsf{A}^*,\textsf{A}]_q -\frac{ q (\alpha +\beta  u^4 q^{2 m})}{u}\textsf{A} + q u (\alpha +\beta  q^{2
   m})\textsf{A}^*
\nonumber\\
&&\quad
- \frac{1}{\rho  q^3 u^2 \chi 
   b(q^2) b(q^2 u^2)}
\Big(
(q^2-1) u^3 (\rho  q^{m+2} (q^2 \chi ^2-\alpha  \beta  \rho )+\eta^* \chi 
   (\alpha  (q^3+q)^2-\beta  (q^2+1)^2 q^{2 m}))
\nonumber\\&&\quad\quad\quad\quad\quad\quad\quad
+\rho  q^m (q^6
   u^7-u^{-1})  (q^2 \chi ^2-\alpha  \beta  \rho )+(q^2+1) u (\omega  q^m
   (\alpha  \beta  \rho -q^2 \chi ^2)+\alpha    (q^4-1)\eta \chi
   )
\nonumber\\&&\quad\quad\quad\quad\quad\quad\quad
-(q^2+1) u^5 q^{m+2} (\beta   (q^4-1)  \eta\chi  q^m+\omega  (\alpha 
   \beta  q^2 \rho -q^4 \chi ^2))
\Big)\Big).\nonumber\\ \nonumber
\een

\vspace{2mm}

\section{Coefficients of the commutation relations}\label{Sec:coefcommut}
The coefficients
of the dynamical commutation relations (\ref{comAdBd}), (\ref{comDdBd}), (\ref{comcdBd}) are given by 
\ben
\nonumber&&f(u,v)= \frac{b(qv/u)b(uv)}{b(v/u)b(quv)}\,,\quad h(u,v)= \frac{b(q^2uv)b(qu/v)}{b(quv)b(u/v)},\\
\nonumber&&g(u,v,m)=\frac{\gamma(u/v,m+1)}{\gamma(1,m+1)}\frac{b(q) b\left(v^2\right)}{b\left(q v^2\right) b\left(\frac{u}{v}\right)},
\quad w(u,v,m)=-\frac{\gamma(uv,m)}{\gamma(1,m+1)}\frac{b(q)}{b(q u v)},\\
\nonumber &&k(u,v,m)=\frac{ \gamma(v/u,m+1)}{\gamma(1,m+1)}\frac{b(q) b\left(q^2 u^2\right)}{b\left(q u^2\right) b\left(\frac{v}{u}\right)}, \quad
n(u,v,m)=\frac{\gamma(1/(uv),m+2)}{\gamma(1,m+1)} \frac{b(q) b\left(v^2\right) b\left(q^2 u^2\right)}{b\left(q u^2\right) b\left(q v^2\right)
   b(q u v)}\,\,,\\
\nonumber&&q(u,v,m)=\frac{ \gamma \left(u/v,m\right)b(q) b(u v)}{\gamma (1,m+1) b\left(u/v\right) b(q u v)}\,,\quad
r(u,v,m)=\frac{b(q) b\left(u^2\right) \gamma (1,m) \gamma \left(v/u,m+1\right)}{\gamma (1,m+1)^2 b\left(q
   u^2\right) b\left(v/u\right)}\,,\\
\nonumber&&s(u,v,m)=\frac{b(q)^2 b\left(u^2\right) \gamma \left(v^{-2},m+1\right) \gamma \left(v/u,m+1\right)}{\gamma
   (1,m+1)^2 b\left(q u^2\right) b\left(q v^2\right) b\left(\frac{v}{u}\right)}\,,
\quad
x(u,v,m)=\frac{b(q) b\left(u^2\right) b\left(q u/v\right) \gamma \left(1/(uv),m+1\right)}{\gamma (1,m+1)
   b\left(q u^2\right) b\left(u/v\right) b(q u v)}\,,\\
\nonumber&&y(u,v,m)=-\frac{b(q)^2 \gamma \left(v^{-2},m+1\right) \gamma (u v,m)}{\gamma (1,m+1)^2 b\left(q v^2\right) b(q u v)}\,,
\quad
z(u,v,m)=-\frac{b(q) \gamma (1,m) \gamma (u v,m)}{\gamma (1,m+1)^2 b(q u v)}\,.\nonumber
\een

\vspace{2mm}

\section{Proof of Lemma   \ref{prop:diagonaloffshell}}\label{apD}
In this section, we show Lemma  \ref{prop:diagonaloffshell} for the choice $\epsilon =+$. The proof for $\epsilon=-$ is done along the same line, so we omit the details. As a preliminary, consider (\ref{Bm}) for $\beta=0$. It reads:
\beqa
\mathscr{B}^{+}(u,m) = \frac{b(u^2)}{u\alpha q^{m+2}}\frac{\chi}{(q-q^{-1})} \left( U - \tilde{\tA}^\diamond \right)
\eeqa
where the notation $\tilde{\tA}^\diamond$
\beqa
\tilde{\tA}^\diamond = \frac{i(q-q^{-1})}{\sqrt \rho}\tC \label{tCop}
\eeqa
with (\ref{Cop}) is introduced for convenience. Recall the notation  (\ref{SB}), (\ref{SB2}).  Successively, one has:
\beqa
\mathscr{B}^{+}(u,m_0+4s)  B^{+}(\bar u,m_0,2s) &=& \left( \frac{b(u^2)}{u}\prod_{i=1}^{2s}\frac{b(u_i^2)}{u_i}\right) \frac{\chi^{2s+1}q^{-(m_0+2)(2s+1)}q^{-2s(2s+1)} }{\alpha^{2s+1}(q-q^{-1})^{2s+1}}(U-\tilde{\tA}^\diamond)\prod_{i=1}^{2s}(U_i-\tilde{\tA}^\diamond) \ ,\nonumber\\
 B^{+}(\bar u,m_0,2s) &=& \left(\prod_{i=1}^{2s}\frac{b(u_i^2)}{u_i}\right) \frac{\chi^{2s}q^{-(m_0+2)2s}q^{-2s(2s-1)} }{\alpha^{2s}(q-q^{-1})^{2s}}\prod_{i=1}^{2s}(U_i-\tilde{\tA}^\diamond) \ ,\nonumber\\
 B^{+}(\{u,\bar u_i\},m_0,2s) &=& \left( \frac{b(u^2)}{u}\prod_{j=1,j\neq i}^{2s}\frac{b(u_j^2)}{u_j}\right) \frac{\chi^{2s}q^{-(m_0+2)2s}q^{-2s(2s-1)} }{\alpha^{2s}(q-q^{-1})^{2s}}(U-\tilde{\tA}^\diamond)\prod_{j=1,j\neq i}^{2s}(U_j-\tilde{\tA}^\diamond) \ .\nonumber
\eeqa
Also, one has:
\beqa
\prod_{k=0}^{2s} b(q^{1/2+k-s} uv)b(q^{1/2+k-s} u v^{-1})&=& (q+q^{-1})^{2s+1} \prod_{k=0}^{2s} \left(U - X_k\right) \quad \mbox{with} \quad X_k = \frac{v^{2}q^{2k-2s} +v^{-2}q^{-2k+2s} }{q+q^{-1}}  \ ,\nonumber\\
b(uu_i^{-1})b(q^{-1}u^{-1}u_i^{-1})&=& -(q+q^{-1})(U-U_i) \ .\nonumber
\eeqa
Consider now the combination:
\beqa
(*)&=& \mathscr{B}^{+}(u,m_0+4s)  B^{+}(\bar u,m_0,2s) -\delta_d \frac{b(u^2)}{u}\frac{\prod_{k=0}^{2s} b(q^{1/2+k-s} uv)b(q^{1/2+k-s} u v^{-1})}{\prod_{i=1}^{2s}b(uu_i^{-1})b(q^{-1}u^{-1}u_i^{-1})} B^{\epsilon}(\bar u,m_0,2s) \nonumber\\
&& \qquad \qquad \qquad \qquad +\delta_d
\sum_{i=1}^{2s}
\frac{
u_i^{-1}b(u_i^2)\prod_{k=0}^{2s}b(q^{1/2+k-s}vu_i)b(q^{1/2+k-s}v^{-1}u_i)}
{b(uu_i^{-1}) b(q^{-1} u^{-1} u_i^{-1})\prod_{j=1,j\neq i}^{2s}b(u_iu_j^{-1})b(q^{-1}u_i^{-1}u_j^{-1})} B^{+}(\{u,\bar u_i\},m_0,2s)\ .\nonumber
\eeqa
In this expression, the gauge parameter $\alpha$ is given by (\ref{ab}) with (\ref{par}),  and we use (\ref{deltad}) for $\epsilon=+$. 
Using the previous expressions, after straightforward simplifications of $(*)$ and factorizing out the common overall factor, one gets the new combination:
 \beqa
(**)= (U-\tilde{\tA}^\diamond)\prod_{i=1}^{2s}(U_i-\tilde{\tA}^\diamond)  - \prod_{k=0}^{2s}(U-X_k) \frac{\prod_{i=1}^{2s}(U_i-\tilde{\tA}^\diamond)}{\prod_{i=1}^{2s}(U-U_i) } 
+ \sum_{i=1}^{2s} \frac{\prod_{k=0}^{2s}(U_i-X_k) }{(U-U_i)}
 \frac{\prod_{j=1,j\neq  i}^{2s}(U_j-\tilde{\tA}^\diamond)(U-\tilde{\tA}^\diamond) }{\prod_{j=1,j\neq i}^{2s}(U_i-U_j) }\ .\nonumber
\eeqa
The combination $(**)$ is a polynomial in  $\tilde{\tA}^\diamond$ of degree $2s+1$ with coefficients  that are meromorphic functions of $U,U_i$. It has poles located at $U=U_i$ and $U_i=U_j$, and the residues at these points vanishes. Setting $U=\tilde{\tA}^\diamond $, one finds:
\beqa
(**)=  (-1)^{2s+1} \prod_{k=0}^{2s}(\tilde{\tA}^\diamond - X_k)\ .
\eeqa
We now consider $(**)$ on the  finite dimensional representation $\bar{\cal V}$.
Recall (\ref{tCop}) with (\ref{Cop}). Using (\ref{sC}) with (\ref{par}),  observe that $X_k$ is the spectrum of $\tilde{\tA}^\diamond$. Thus, $(**)$ is proportional to the characteristic polynomial of $\tC$ which is vanishing on
$\bar{\cal V}$. This concludes the proof of Lemma  \ref{prop:diagonaloffshell} for $\epsilon=+$.\vspace{2mm}

\section{Proof of Proposition \ref{propBAU}}\label{prP}
To prepare the proof of Proposition \ref{propBAU}, we first derive some intermediate results.
Recall the notation $\bar U_i = \{ U_1,...,U_{i-1},U_{i+1},...,U_M\}$, (\ref{sBr}), (\ref{b}) and (\ref{esymdef}).
\begin{lem}\label{lemC1}
\beqa
&& \prod_{j=1,j\neq i}^{M}  b(qu_j/u_i)b(u_iu_j) =\nonumber\\
&& (q+q^{-1})^{M-1} \sum_{k=0}^{M-1} (-2)^{k+1-M}  \textsf{e}_{k}(\bar U_i) \sum_{l=0}^{M-1-k} \bin {M-1-k} {l} (q^2+q^{-2})^{M-1-k-l}(q-q^{-1})^l (-b(qu_i^2))^l U_i^{M-1-k-l}\ .\nonumber
\eeqa
\end{lem}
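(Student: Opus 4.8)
The plan is to reduce the whole product to a product of \emph{linear} factors in the symmetrized variables (\ref{sBr}) and then expand via elementary symmetric polynomials. First I would compute a single factor. Expanding directly from (\ref{b}),
\beqa
b(qu_j/u_i)\,b(u_iu_j)&=&\left(\frac{qu_j}{u_i}-\frac{u_i}{qu_j}\right)\left(u_iu_j-\frac{1}{u_iu_j}\right)\nonumber\\
&=&\left(qu_j^2+q^{-1}u_j^{-2}\right)-\left(qu_i^{-2}+q^{-1}u_i^{2}\right)\ ,\nonumber
\eeqa
so that the first bracket is exactly $(q+q^{-1})U_j$ by (\ref{sBr}), while the second bracket must be re-expressed in terms of $U_i$ and $b(qu_i^2)$.

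The key algebraic step is this re-expression. Solving the two relations $(q+q^{-1})U_i=qu_i^2+q^{-1}u_i^{-2}$ and $b(qu_i^2)=qu_i^2-q^{-1}u_i^{-2}$ for $qu_i^2$ and $q^{-1}u_i^{-2}$ and substituting, I expect to obtain
\[
qu_i^{-2}+q^{-1}u_i^{2}=\frac{(q+q^{-1})(q^2+q^{-2})}{2}\,U_i-\frac{(q+q^{-1})(q-q^{-1})}{2}\,b(qu_i^2)\ .
\]
Consequently each factor equals $(q+q^{-1})\bigl(U_j+c_i\bigr)$ with $c_i=-\tfrac12(q^2+q^{-2})U_i+\tfrac12(q-q^{-1})b(qu_i^2)$ a constant in $j$. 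Taking the product over the $M-1$ indices $j\ne i$ and using $\prod_{j\ne i}(U_j+c_i)=\sum_{k=0}^{M-1}\textsf{e}_{k}(\bar U_i)c_i^{\,M-1-k}$ then gives
\[
\prod_{j=1,\,j\neq i}^{M} b(qu_j/u_i)\,b(u_iu_j)=(q+q^{-1})^{M-1}\sum_{k=0}^{M-1}\textsf{e}_{k}(\bar U_i)\,c_i^{\,M-1-k}\ .
\]

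The final step is to expand $c_i^{\,M-1-k}=2^{-(M-1-k)}\bigl[-(q^2+q^{-2})U_i+(q-q^{-1})b(qu_i^2)\bigr]^{M-1-k}$ by the binomial theorem, producing the inner $l$-sum with coefficient $\bin{M-1-k}{l}$, the power $(q-q^{-1})^l b(qu_i^2)^l$, and the power $(-1)^{M-1-k-l}(q^2+q^{-2})^{M-1-k-l}U_i^{M-1-k-l}$. I anticipate that the \emph{only} genuine obstacle is sign bookkeeping: one must combine $2^{-(M-1-k)}(-1)^{M-1-k-l}b(qu_i^2)^l=(-2)^{k+1-M}(-b(qu_i^2))^l$, using $(-1)^{-l}=(-1)^l$ to absorb the $(-1)^l$ into $(-b(qu_i^2))^l$ and $(-1)^{M-1-k}/2^{M-1-k}=(-2)^{k+1-M}$ to recover the stated prefactor. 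Matching the resulting expression term by term against the right-hand side then completes the proof; everything apart from this sign tracking is the binomial theorem and the definition (\ref{esymdef}) of $\textsf{e}_k$.
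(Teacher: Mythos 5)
Your proposal is correct and follows essentially the same route as the paper's proof: rewrite each factor as $(q+q^{-1})$ times a linear expression in $U_j$ with a $j$-independent shift, expand the product via elementary symmetric polynomials, substitute $q^{-1}u_i^2+qu_i^{-2}=\tfrac{q+q^{-1}}{2}\bigl((q^2+q^{-2})U_i-(q-q^{-1})b(qu_i^2)\bigr)$, and finish with the binomial theorem. Your sign bookkeeping $(-1)^{M-1-k-l}2^{-(M-1-k)}=(-2)^{k+1-M}(-1)^l$ checks out, so nothing is missing.
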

\begin{proof} Observe that:
\beqa
\prod_{j=1,j\neq i}^{M}  b(qu_j/u_i)b(u_iu_j) &=& (-1)^{M-1}(q+q^{-1})^{M-1} \prod_{j=1,j\neq i}^{M} \left( \frac{q^{-1}u_i^2 + qu_i^{-2}}{q+q^{-1}}- U_j\right)\ .\nonumber
\eeqa
Expand this expression in terms of the elementary symmetric polynomials in the variables $U_j,j\neq i$. Insert: 
\beqa
 \frac{q^{-1}u_i^2 + qu_i^{-2}}{q+q^{-1}} &=& \frac{1}{2}\left( (q^2+q^{-2}) U_i - (q-q^{-1})b(qu_i^2)\right).\nonumber
\eeqa
Then, expand in $U_i$ and $b(qu_i^2)$ to get the final result.
\end{proof}
\begin{rem}\label{Rem1}
\beqa
\prod_{j=1,j\neq i}^{M}  b(qu_i/u_j)b(q^2u_iu_j) &=& (-1)^{M-1} \prod_{j=1,j\neq i}^{M}  b(qu_j/u_i)b(u_iu_j)|_{u_i \rightarrow q^{-1}u_i^{-1}}\ .\nonumber
\eeqa
\end{rem}

For $N$ any even integer, note that one has:
\beqa
(b(qu_i^2))^N= \left( (q+q^{-1})^2U_i^2 - 4\right)^{N/2} := g^{(N)}_0(U_i)\ .\label{bexp}
\eeqa

Let us introduce   $\Delta_{d}(u)$ and $\Delta_{g}(u)$ respectively given by (\ref{Deltad}) and (\ref{Deltag}), and define $\Delta_{sp}(u)=1$. 
Define
\beqa
&&\quad F_\epsilon(u_i)= e^{-(\mu-\mu')/2+\epsilon(\mu-\mu')/2}q^{-(2s+1)}(q^{2s+1}u_iv^{-1} - u_i^{-1}v)(q^{2s+1} u_iv - u_i^{-1}v^{-1})
\non\\&&\qq \qq\qq \times(e^{(\mu-\mu')(1-\epsilon)/2}u_i+e^{(\mu+\mu')(1+\epsilon)/2}u_i^{-1})
(e^{(\mu+\mu')(1-\epsilon)/2}u_i+e^{(\mu'-\mu)(1+\epsilon)/2}u_i^{-1}).\nonumber
\eeqa

\begin{lem}\label{lemC3} There exist   polynomials $\{g_{a,\epsilon}^{(p)}(U_i)|p\in\{even,odd\},a\in\{sp,d,g\},\epsilon=\pm\}$  of the form
\beqa
g_{a,\epsilon}^{(p)}(U_i) = \sum_{k=0}^3  g_{a,\epsilon,[k]}^{(p)}  U_i^{k}\ ,\nonumber
\eeqa
where  $g_{a,\epsilon,[k]}^{(p)}$ are scalars such that
\beqa
-\frac{\Delta_a(u_i)}{b(qu_i^2)}F_\epsilon(u_i) + \frac{\Delta_a(q^{-1}u_i^{-1})}{b(qu_i^2)}F_\epsilon(q^{-1}u_i^{-1}) &=&   g_{a,\epsilon}^{(even)}(U_i) \ ,\label{geven}\\
\Delta_a(u_i)F_\epsilon(u_i) + \Delta_a(q^{-1}u_i^{-1})F_\epsilon(q^{-1}u_i^{-1}) &=&  g_{a,\epsilon}^{(odd)}(U_i) \ .\label{godd}
\eeqa
\end{lem}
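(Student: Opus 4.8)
The plan is to prove the two identities \eqref{geven}, \eqref{godd} by showing that each side is a Laurent polynomial in $u_i$ invariant under the transformation $u_i \leftrightarrow q^{-1}u_i^{-1}$ (so that it descends to a polynomial in $U_i=(qu_i^2+q^{-1}u_i^{-2})/(q+q^{-1})$), and then controlling its degree to obtain the claimed cubic bound in $U_i$. First I would observe that $F_\epsilon(u_i)$ is, up to an overall factor, a product of four linear factors in $u_i^{\pm 1}$ together with the two factors coming from $v,v^{-1}$; so $F_\epsilon(u_i)$ is a Laurent polynomial spanning powers $u_i^{6}$ down to $u_i^{-6}$ symmetric under $u_i \to -u_i$, i.e.\ it lives in the even subspace. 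The key structural point is that the map $u_i \mapsto q^{-1}u_i^{-1}$ fixes $U_i$ (this is exactly the symmetry \eqref{t1} on which the whole `symmetrized Bethe root' formalism rests) and sends $b(qu_i^2)=qu_i^2-q^{-1}u_i^{-2}$ to its negative. Hence the combination with a relative minus sign divided by $b(qu_i^2)$ in \eqref{geven} is manifestly invariant under this involution, while the symmetric combination in \eqref{godd} is also invariant; both are therefore genuine functions of $U_i$.

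Next I would carry out the degree bookkeeping. Each of $F_\epsilon(u_i)$ and $F_\epsilon(q^{-1}u_i^{-1})$ is a Laurent polynomial whose top power is $u_i^{6}$ and bottom power $u_i^{-6}$, so modulo the $\{u_i\to -u_i\}$ symmetry it is a polynomial of degree $3$ in the variable $qu_i^2+q^{-1}u_i^{-2}$, hence degree $3$ in $U_i$. Multiplying by $\Delta_a$, which is linear in $u_i^{\pm 1}$ for $a\in\{d,g\}$ and constant for $a=sp$, raises the Laurent degree by one on each side; after forming the invariant combinations of \eqref{geven}, \eqref{godd} and dividing by $b(qu_i^2)$ where appropriate, the resulting function of $U_i$ has degree at most $3$. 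This is precisely the range $\sum_{k=0}^{3} g_{a,\epsilon,[k]}^{(p)} U_i^{k}$ asserted in the statement. I would make the degree count explicit by expanding $F_\epsilon(u_i)$ as a polynomial in $qu_i^2+q^{-1}u_i^{-2}$ using the elementary identity $u_i^{2n}+u_i^{-2n}=T_n\big((qu_i^2+q^{-1}u_i^{-2})/\text{const}\big)$-type recursions, or equivalently by collecting terms of equal degree exactly as in the proof of Lemma \ref{lemC2}.

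The construction of the polynomials is then immediate: define $g_{a,\epsilon}^{(even)}(U_i)$ and $g_{a,\epsilon}^{(odd)}(U_i)$ to be the left-hand sides of \eqref{geven} and \eqref{godd}, re-expressed in $U_i$ via the invariance just established, and read off the coefficients $g_{a,\epsilon,[k]}^{(p)}$. The existence claim of the lemma is thus reduced to the two verifications that (i) the stated combinations are invariant under $u_i\mapsto q^{-1}u_i^{-1}$ and therefore are polynomials in $U_i$, and (ii) their $U_i$-degree is at most $3$. The main obstacle I anticipate is purely computational rather than conceptual: confirming that the division by $b(qu_i^2)$ in \eqref{geven} leaves no pole, i.e.\ that the numerator $-\Delta_a(u_i)F_\epsilon(u_i)+\Delta_a(q^{-1}u_i^{-1})F_\epsilon(q^{-1}u_i^{-1})$ actually vanishes at the zeros of $b(qu_i^2)$. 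This follows because at those zeros one has $u_i^2=q^{-2}u_i^{-2}$, which is exactly the fixed-point locus of the involution $u_i\mapsto q^{-1}u_i^{-1}$, forcing the antisymmetric numerator to vanish there; making this cancellation manifest is the only step requiring care, and once it is checked the explicit coefficients $g_{a,\epsilon,[k]}^{(p)}$ emerge by direct expansion.
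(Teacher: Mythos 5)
Your proof is correct and is exactly the ``routine'' verification the paper omits (its proof of Lemma \ref{lemC3} consists of the single word ``Routine''): symmetrize under the involution $u_i\mapsto q^{-1}u_i^{-1}$, which fixes $U_i$ and negates $b(qu_i^2)$, observe that the numerator in (\ref{geven}) is antisymmetric under this involution and hence exactly divisible by the antisymmetric factor $b(qu_i^2)$, and then bound the degree in $U_i$. The only slip is your degree count for $F_\epsilon$ --- it is a product of four (not six) factors linear in $u_i^{\pm1}$, so it spans $u_i^{4}$ down to $u_i^{-4}$ --- but since $\Delta_a$ spans at most $u_i^{\pm2}$ the products $\Delta_a F_\epsilon$ still span at most $u_i^{\pm6}$, and the cubic bound $\sum_{k=0}^{3}g^{(p)}_{a,\epsilon,[k]}U_i^{k}$ follows as you claim.
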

\begin{proof} Routine.
\end{proof}

As a consequence of Lemma \ref{lemC1}, eq. (\ref{bexp}), Lemma \ref{lemC3} and Remark \ref{Rem1}, it follows: 
\begin{cor}\label{cor1}
\beqa
&& \frac{(-1)^M\Delta_a(u_i)}{b(qu_i^2)}
\prod_{j=1,j\neq i}^{M}   b(qu_j/u_i)b(u_iu_j)F_\epsilon(u_i)
+  \frac{\Delta_a(q^{-1}u_i^{-1})}{b(qu_i^2)}
\prod_{j=1,j\neq i}^{M}   b(qu_i/u_j) b(q^2u_iu_j)F_\epsilon(q^{-1}u_i^{-1}) \nonumber \\
&=& \sum_{k=0}^{M-1} \frac{(-1)^k(q+q^{-1})^{M-1}}{2^{M-1-k}} \textsf{e}_{k}(\bar U_i) \left(\sum_{l=0}^{M-1-k}\bin {M-1-k} {l}
\frac{(q-q^{-1})^{l}U_i^{M-1-k-l}}{(q^2+q^{-2})^{1+k+l-M}} g_0^{2[\frac{l}{2}]}(U_i) g_{a,\epsilon}^{( p[l])}(U_i) \right) \ ,\nonumber 
\eeqa 
where 
\beqa
p[l]=even\  (resp. \ odd) \quad \mbox{for}\quad  l \ \ \mbox{even (resp. \ odd)}.\label{defpl}
\eeqa
\end{cor}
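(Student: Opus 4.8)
The plan is to substitute the two product expansions into the left-hand side and reorganize so that the $\Delta_a$-dependent pieces assemble into the polynomials of Lemma \ref{lemC3}. First I would rewrite $\prod_{j\neq i}b(qu_j/u_i)b(u_iu_j)$ by Lemma \ref{lemC1}, so that its $u_i$-dependence enters only through $U_i^{M-1-k-l}$, $(-b(qu_i^2))^l$, and the prefactor $\Delta_a(u_i)F_\epsilon(u_i)/b(qu_i^2)$. For the second product the decisive observation is that the involution $u_i\mapsto q^{-1}u_i^{-1}$ fixes the symmetrized variable, $U_i\mapsto U_i$, while $b(qu_i^2)\mapsto -b(qu_i^2)$. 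Hence Remark \ref{Rem1} turns $\prod_{j\neq i}b(qu_i/u_j)b(q^2u_iu_j)$ into $(-1)^{M-1}$ times the right-hand side of Lemma \ref{lemC1} with $(-b(qu_i^2))^l$ replaced by $(+b(qu_i^2))^l$, leaving $\textsf{e}_k(\bar U_i)$ and $U_i^{M-1-k-l}$ intact. After these substitutions both terms carry the identical combinatorial factor $(q+q^{-1})^{M-1}(-2)^{k+1-M}\textsf{e}_{k}(\bar U_i)\bin{M-1-k}{l}(q^2+q^{-2})^{M-1-k-l}(q-q^{-1})^l U_i^{M-1-k-l}$, so they may be summed inside a single sum over $k,l$.

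Next I would collect the remaining $u_i$-dependence. Factoring out $(b(qu_i^2))^{l-1}$ and the global sign $(-1)^{M-1}$, the bracket multiplying each monomial becomes
\beqa
-(-1)^l\,\Delta_a(u_i)F_\epsilon(u_i)+\Delta_a(q^{-1}u_i^{-1})F_\epsilon(q^{-1}u_i^{-1})\,,\nonumber
\eeqa
where the parity factor $(-1)^l$ is precisely the sign picked up by $(-b(qu_i^2))^l$ versus $(b(qu_i^2))^l$ in the two terms. For $l$ even this equals $b(qu_i^2)\,g_{a,\epsilon}^{(even)}(U_i)$ by (\ref{geven}), and for $l$ odd it equals $g_{a,\epsilon}^{(odd)}(U_i)$ by (\ref{godd}). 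Multiplying by the leftover $(b(qu_i^2))^{l-1}$ and invoking Lemma \ref{lemC2} to write $(b(qu_i^2))^{2[\frac{l}{2}]}=g_0^{2[\frac{l}{2}]}(U_i)$, each summand becomes $g_0^{2[\frac{l}{2}]}(U_i)\,g_{a,\epsilon}^{(p[l])}(U_i)$ with $p[l]$ as in (\ref{defpl}), using $2[\frac{l}{2}]=l$ for $l$ even and $2[\frac{l}{2}]=l-1$ for $l$ odd.

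Finally I would reconcile the scalar prefactors. Simplifying $(-1)^{M-1}(-2)^{k+1-M}=(-1)^k2^{k+1-M}=(-1)^k/2^{M-1-k}$ reproduces the coefficient $\tfrac{(-1)^k(q+q^{-1})^{M-1}}{2^{M-1-k}}$, while $(q^2+q^{-2})^{M-1-k-l}=(q^2+q^{-2})^{-(1+k+l-M)}$ matches the displayed power, so the two sides agree term by term. No identity beyond Lemmas \ref{lemC1}--\ref{lemC3} and Remark \ref{Rem1} is required; the only genuine care is the bookkeeping of the two independent sign sources — the overall $(-1)^M$ against $(-1)^{M-1}$ and the parity $(-1)^l$ from the sign flip of $b(qu_i^2)$ — together with the parity-dependent identification of the brackets with $g_{a,\epsilon}^{(even)}$ or $g_{a,\epsilon}^{(odd)}$. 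I expect this sign-and-parity tracking to be the main (and essentially combinatorial) obstacle.
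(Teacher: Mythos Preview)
Your argument is correct and is precisely the route the paper indicates: the corollary is stated there as an immediate consequence of Lemmas~\ref{lemC1}, \ref{lemC2}, \ref{lemC3} and Remark~\ref{Rem1}, with no further proof given. You have faithfully carried out that deduction, including the parity bookkeeping for $(-1)^l$ and the identification $(b(qu_i^2))^{2[l/2]}=g_0^{2[l/2]}(U_i)$, so nothing is missing.
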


Also, we will need:
\begin{lem}\label{lemH} For any integer or half-integer $s$, define: 
\beqa
H(U_i) = (q+q^{-1})^{2s+1} \sum_{k=0}^{2s+1} (-1)^k   \textsf{e}_{k}(X_0,X_1,...,X_{2s}) U_i^{2s+1-k} \quad  \mbox{with} \quad 
X_k = \frac{q^{2k-2s}v^2 +q^{-2k+2s}v^{-2} }{q+q^{-1}}\ .\nonumber
\eeqa
One has:
\beqa
 \prod_{k=0}^{2s}b(q^{1/2+k-s}vu)b(q^{1/2+k-s}v^{-1}u) = H(U)\ . \label{bbp}
\eeqa
\end{lem}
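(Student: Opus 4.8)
The statement to prove is Lemma~\ref{lemH}, namely the identity
\beqa
\prod_{k=0}^{2s}b(q^{1/2+k-s}vu)\,b(q^{1/2+k-s}v^{-1}u) = H(U)\,,\nonumber
\eeqa
where $b(x)=x-x^{-1}$, $U=(qu^2+q^{-1}u^{-2})/(q+q^{-1})$, and $H(U)$ is the stated polynomial of degree $2s+1$ in $U$ built from the elementary symmetric functions of the points $X_k=(q^{2k-2s}v^2+q^{-2k+2s}v^{-2})/(q+q^{-1})$. The plan is to show that each of the $2s+1$ quadratic factors on the left, when written in the variable $U$, is an affine (degree-one) polynomial in $U$ whose unique root is precisely $X_k$, and to track the overall constant. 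Then the product of these linear factors is, up to a constant, $\prod_{k=0}^{2s}(U-X_k)$, which expands to the alternating elementary-symmetric sum appearing in $H(U)$.

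\textbf{Key steps.} First I would compute a single factor. Using $b(x)b(y)=(x-x^{-1})(y-y^{-1})$, I would expand
\beqa
b(q^{1/2+k-s}vu)\,b(q^{1/2+k-s}v^{-1}u)
&=&\bigl(q^{1/2+k-s}vu-q^{-1/2-k+s}v^{-1}u^{-1}\bigr)\nonumber\\
&&\times\bigl(q^{1/2+k-s}v^{-1}u-q^{-1/2-k+s}vu^{-1}\bigr)\,.\nonumber
\eeqa
Multiplying out, the cross terms produce $q^{2k+1-2s}u^2+q^{-2k-1+2s}u^{-2}$ together with a constant $-(v^2+v^{-2})$; the key observation is that $q^{2k+1-2s}u^2+q^{-2k-1+2s}u^{-2}$ is a \emph{linear} function of $U$, since $qu^2$ and $q^{-1}u^{-2}$ are recovered from $U$ linearly. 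Concretely I would verify that this factor equals $(q+q^{-1})\bigl(U-X_k\bigr)$ by matching the coefficient of $u^2$ (and of $u^{-2}$, and the constant) on both sides, where $X_k$ is read off as the value making the bracket vanish. This reduces the whole product to $(q+q^{-1})^{2s+1}\prod_{k=0}^{2s}(U-X_k)$.

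Second, I would expand $\prod_{k=0}^{2s}(U-X_k)=\sum_{k=0}^{2s+1}(-1)^k\,\textsf{e}_k(X_0,\dots,X_{2s})\,U^{2s+1-k}$ by the standard definition of elementary symmetric polynomials \eqref{esymdef}, which is exactly the sum defining $H(U)$; together with the prefactor $(q+q^{-1})^{2s+1}$ this is the claimed $H(U)$. This matches term by term with the displayed formula for $H(U_i)$ in the statement, completing the identification.

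\textbf{Main obstacle.} The proof is essentially a direct computation, so there is no deep difficulty; the only point requiring care is the bookkeeping of $q$-powers when showing $q^{2k+1-2s}u^2+q^{-2k-1+2s}u^{-2}=(q+q^{-1})U-(\text{shift})$ and confirming that the additive constant $-(v^2+v^{-2})$ combines correctly with $(q+q^{-1})X_k$ so that the root of the linear factor is \emph{exactly} $X_k$ with \emph{no} residual shift. I expect the cleanest route is to solve $b(q^{1/2+k-s}vu)b(q^{1/2+k-s}v^{-1}u)=0$ directly for $U$: the factor vanishes when $q^{1/2+k-s}vu=\pm q^{-1/2-k+s}v^{-1}u^{-1}$, i.e. $qu^2/v^2=q^{2s-2k}$ or equivalently $qu^2=q^{2s-2k}v^2$, whence $U=(q^{2k-2s}v^2+q^{-2k+2s}v^{-2})/(q+q^{-1})=X_k$ after using the second sign simultaneously. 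This confirms the root without delicate constant-chasing and is the step I would present most carefully; everything else is routine expansion.
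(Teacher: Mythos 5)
Your overall strategy (reduce the product to $(q+q^{-1})^{2s+1}\prod_{k=0}^{2s}(U-X_k)$ and expand in elementary symmetric polynomials) is sound, and the final identity is true, but your key step is false as stated: the individual quadratic factor is \emph{not} equal to $(q+q^{-1})(U-X_k)$, and in fact is not a function of $U$ at all. Expanding it gives
\beqa
b(q^{1/2+k-s}vu)\,b(q^{1/2+k-s}v^{-1}u) \;=\; q^{2k+1-2s}u^2+q^{2s-2k-1}u^{-2}-(v^2+v^{-2})\ ,\nonumber
\eeqa
whose $u^{\pm2}$ part is $q^{2(k-s)}\,(qu^2)+q^{-2(k-s)}\,(q^{-1}u^{-2})$; this is proportional to $(q+q^{-1})U=qu^2+q^{-1}u^{-2}$ only when $k=s$. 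Equivalently, in the variable $A=qu^2$ this factor has the two roots $A=q^{2(s-k)}v^{-2}$ and $A=q^{2(s-k)}v^{2}$, which are not reciprocal of each other (for $k\neq s$ and $q$ not a root of unity), so they map to two \emph{distinct} values $U=X_k$ and $U=X_{2s-k}$; a genuine linear factor $(q+q^{-1})(U-X_k)$ would vanish only at $U=X_k$. Your root-check in the last paragraph only tracks the zero coming from the first $b$-factor and misses the second, which is exactly where the discrepancy hides (there is also a $v^2$ versus $v^{-2}$ slip there: $q^{1/2+k-s}vu=\pm q^{-1/2-k+s}v^{-1}u^{-1}$ gives $qu^2=\pm q^{2s-2k}v^{-2}$, not $q^{2s-2k}v^2$).

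The argument can be repaired in two ways. The cheap fix is to regroup the $4s+2$ linear factors differently: pairing $b(q^{1/2+k-s}vu)$ with $b(q^{1/2+s-k}v^{-1}u)$ (i.e., the $v^{-1}$-factor of index $2s-k$, not of index $k$) one finds
\beqa
b(q^{1/2+k-s}vu)\,b(q^{1/2+s-k}v^{-1}u)=qu^2+q^{-1}u^{-2}-q^{2k-2s}v^2-q^{2s-2k}v^{-2}=(q+q^{-1})(U-X_k)\ ,\nonumber
\eeqa
and the product of these over $k=0,\dots,2s$ is the same total product, giving the lemma. The paper instead argues globally: the full product is invariant under $u\to\pm q^{-1}u^{-1}$ and $u\to -u$ (which exchanges $A\leftrightarrow A^{-1}$, precisely the symmetry each of your individual factors lacks), hence is a polynomial in $U$, of degree $2s+1$ with the visible roots $X_k$ and leading coefficient $(q+q^{-1})^{2s+1}$. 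Either route works; yours as written does not.
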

\begin{proof} Observe that the product (\ref{bbp}) is left invariant under the transformation (\ref{t1}). Thus, it is a polynomial in $U_i$. In terms of the elementary symmetric polynomials (\ref{esymdef}), one routinely gets (\ref{bbp}).
\end{proof}

We are now ready to show Proposition \ref{propBAU}. First,  we extract from $E_a(u_i,\bar u_i)$ the part that is left invariant under the transformations (\ref{t1}), (\ref{t2}). Introduce the rational functions in $u_i,i=1,...,M$:
\beqa
  p_a^M(u_i,\bar u_i) =  \frac{(-1)^M\Delta_a(u_i)}{b(qu_i^2)}
\prod_{j=1,j\neq i}^{M}   b(qu_j/u_i)b(u_iu_j)F_\epsilon(u_i)
+  \frac{\Delta_a(q^{-1}u_i^{-1})}{b(qu_i^2)}
\prod_{j=1,j\neq i}^{M}   b(qu_i/u_j) b(q^2u_iu_j)F_\epsilon(q^{-1}u_i^{-1})
\nonumber \\ 
 \quad \qquad \qquad \qquad
+\bar\Delta_a \prod_{k=0}^{2s}b(q^{1/2+k-s}vu_i)b(q^{1/2+k-s}v^{-1}u_i)\ .\nonumber
\eeqa
It is easy to show that:
\beqa
{E}_a(u_i,\bar u_i)=\frac{u_i^{-\epsilon}b(u_i^2)q^{(\nu+\nu')/2}}{2\prod_{j\neq i}^Mb(u_i/u_j)b(qu_iu_j)}p_a^M(u_i,\bar u_i)\ .\label{BAUcheck}
\eeqa
Then, by Lemma \ref{lemC1}, Corollary \ref{cor1} and Lemma \ref{lemH}, one finds $p_a^M(u_i,\bar u_i) = P_a^M(U_i,\bar U_i)$ with (\ref{Polya}).
\vspace{2mm}

\section{second-order q-difference operators and the Askey-Wilson algebra}\label{appqdiff}
Below, we display two examples of second-order q-difference operator realizations of the Askey-Wilson algebra.
The realization $\pi_1$ is taken from \cite{Ter03}, whereas $\pi_2$ is taken from \cite{BVZ16}. Using the invertible non-linear transformation (\ref{dirA}) that relates different Askey-Wilson algebras, in particular  it is shown that the realization $\pi$ in (\ref{AWop1a}), (\ref{AWop2a}) follows from $\pi_1$. In each case, the structure constants are given.\vspace{1mm}
 
Adapting the notations of  \cite{Ter03} for our purpose, let us consider the following linear transformation denoted $\pi_1$: AW $\mapsto$ $\mathbb{C}[z,z^{-1}]$ such that:
\beqa
\pi_1(\bar\tA) &=&  \frac{1}{2}q^{(\nu+\nu')/2}( z+z^{-1}) \ ,\label{AWo1}\\
\pi_1(\bar\tA^*)  &=&      \phi(z)(T_+ -1)  + \phi(z^{-1}) (T_- -1) + \frac{1}{2}q^{(\nu+\nu')/2} (e^{\mu'}q^{-2s} + e^{-\mu'}q^{2s})  \label{AWo2}
\eeqa
where  $T_\pm$ is such that $T_\pm(f(z))=f(q^{\pm 2}z)$ and (\ref{phidef}).
Inserting (\ref{AWo1}), (\ref{AWo2}) in (\ref{aw1}), (\ref{aw2}), the structure constants of the Askey-Wilson algebra are given by:
\beqa
\rho&=&-\frac{1}{4}q^{\nu+\nu'}(q^2-q^{-2})^2\  ,\label{newsc1}\\
 \bar\omega&=&   \frac{1}{2}(q-q^{-1})^2 q^{{\nu+\nu'}}\left(  \cosh(\mu)(q^{2s+1}+q^{-2s-1})  - \cosh(\mu')(v^2+v^{-2})   \right)  , \nonumber\\ 
\bar \eta&=&  \frac{1}{4}q^{\frac{3}{2}({\nu+\nu'})}\frac{(q^2-q^{-2})^2}{(q+q^{-1})}\left(   \cosh(\mu')(q^{2s+1}+q^{-2s-1})    - \cosh(\mu)(v^2+v^{-2})  \right)  , \nonumber \\ 
\bar \eta^*&=&  \frac{1}{8}q^{\frac{3}{2}({\nu+\nu'})}\frac{(q^2-q^{-2})^2}{(q+q^{-1})}\left(  (q^{2s+1}+q^{-2s-1})(v^2+v^{-2}) - 4\cosh(\mu)\cosh(\mu')  \right)   \ . \nonumber
\eeqa

Let $\pi_2$ : AW $\mapsto$ $\mathbb{C}[z,z^{-1}]$ such that

\beqa
\pi_2(\tA) &=&
-\frac{1}{2}(q^{\nu+s+\frac{1}{2}}v^{-1}z+q^{\nu'-s-\frac{1}{2}}vz^{-1})
\label{AWop1b}\\
&&+
\frac{q^{-2s}}{2}
\big(
2q^{(\nu+\nu')/2}\cosh(\mu)+q^{\nu-s+\frac{1}{2}}v^{-1}z+q^{\nu'+s-\frac{1}{2}}vz^{-1}
\big)T_+
 \ ,\nonumber\\
\pi_2(\tA^*)  &=& 
\frac{1}{2}(q^{\nu-s-\frac{1}{2}}vz+q^{\nu'+s+\frac{1}{2}}v^{-1}z^{-1})
\label{AWop2b}\\
&&+
\frac{q^{2s}}{2}
\big(
2q^{(\nu+\nu')/2}\cosh(\mu')-q^{\nu+s-\frac{1}{2}}vz-q^{\nu'-s+\frac{1}{2}}v^{-1}z^{-1}
\big)T_-
 \ . \nonumber
\eeqa
The structure constants in this case are given by (\ref{sc1})-(\ref{sc4}). This realization of the Askey-Wilson algebra has been used  in \cite{BVZ16}, based on \cite{WZ}.\vspace{1mm}

Using the invertible non-linear transformation (\ref{dirA}) that relates two different Askey-Wilson algebras, other realizations can be derived. Let $\tA,\tA^*$ satisfy the Askey-Wilson with structure constants  (\ref{sc1})-(\ref{sc4}). Then, using $\sqrt{-q^{\nu +\nu'-4}\left(q^4-1\right)^2 }\rightarrow i \left(q^4-1\right) q^{\frac{1}{2} (\nu +\nu'-4)}$ one finds that  $\bar\tA,\bar\tA^*$ satisfy (\ref{aw1map}), (\ref{aw2map}) with the structure constants $\{\rho,\bar\omega, \bar\eta,\bar\eta^*\}$ given by (\ref{newsc1}). In particular, one finds that $\pi_1(\tA),\pi_1(\tA^*)$ and $\pi(\tA),\pi(\tA^*)$ given by  (\ref{AWop1a}), (\ref{AWop2a}) are related through the transformation (\ref{dirA}).  For completeness, in addition to  (\ref{AWop1a}), (\ref{AWop2a}), (\ref{qc1}), one finds:
\ben
&&\label{qdifAAs}\\ 
&&\pi( [\tA,\tA^*]_q )=
b(q^2)\phi(z)\phi(q^2z)q^{-2}z^{-1}T_+^2 +
b(q^2)\phi(z^{-1})\phi(q^2z^{-1})q^{-2}zT_-^2
\non\\
&&+
\Big(
\frac{b(q) q^{(\nu+\nu')/2}}{2}  (2 \cosh (\mu ) q^{2 s}-e^{-\mu'} q^{-1}
   (v^2+v^{-2}))
+\frac{b(q)q^{(\nu+\nu')/2}}{2}  (e^{-\mu'} q^{2 s-2}
   (q+q^{-1})+e^{\mu'} q^{-2 s-1}+e^{-\mu'} q^{2 s-1})z^{-1}
\non\\
&&\qq
-b(q^2) q^{-2} z^{-1} \phi (q^2z)
-b(q) q^{-1} z^{-1} \phi (z)
- b(q z)q^{-1} \phi (z^{-1})
+b(z) \phi (q^{-2} z^{-1})
\Big)
 \phi(z)T_+
\non\\
&&+
\Big(
\frac{b(q) q^{(\nu+\nu')/2}}{2}  (2 \cosh (\mu ) q^{2 s}-e^{-\mu'} q^{-1}
   (v^2+v^{-2}))
+\frac{b(q) q^{(\nu+\nu')/2}}{2}  (e^{-\mu'} q^{2 s-2}
   (q+q^{-1})+e^{\mu'} q^{-2 s-1}+e^{-\mu'} q^{2 s-1})z
\non\\
&&\qq
- b(q^2) q^{-2} z\phi (q^2 z^{-1})
-b(q) q^{-1}z \phi(z^{-1})
-b(q z^{-1})q^{-1} \phi (z) 
+b(z^{-1}) \phi (q^{-2}z )
\Big)
 \phi(z^{-1})T_-
\non\\
&&+
\Big(
\frac{b(q)q^{\nu +\nu'}}{4}
(e^{\mu'} q^{-2 s}+e^{-\mu'} q^{2 s}) 
(
2 \cosh (\mu ) q^{2
   s}+e^{-\mu'} q^{2 s-1} (z+z^{-1})-e^{-\mu'} q^{-1} (v^2+v^{-2})
)
\non\\&&
\qq
-
\frac{b(q)q^{(\nu+\nu')/2} }{2} 
((2 \cosh (\mu ) q^{2 s}-e^{-\mu'} q^{-1} (v^2+v^{-2}))+
    (e^{\mu'} q^{-2 s-1}+2 e^{-\mu'} q^{2 s-1})z^{-1}+e^{-\mu'}   q^{2 s-1}z)\phi (z)
\non\\&&
\qq
-
\frac{b(q)q^{(\nu+\nu')/2}}{2} 
( (2 \cosh (\mu ) q^{2 s}-e^{-\mu'} q^{-1}
   (v^2+v^{-2}))+(e^{\mu'} q^{-2 s-1}+2 e^{-\mu'}
   q^{2 s-1})z+e^{-\mu'} q^{2 s-1} z^{-1} )\phi (z^{-1})
\non\\&&
\qq
+b(q) q^{-1} z^{-1} \phi (z)^2
+ b(q) q^{-1} z\phi (z^{-1})^2
+b(q) q^{-1} (z+z^{-1})\phi (z) \phi (z^{-1})
\non\\&&
\qq
+b(z) (\phi (z^{-1}) \phi (q^{-2}z )-\phi (z) \phi (q^{-2}
   z^{-1}))\Big)\ .
\non
\een

\vspace{1mm}

\section{Dynamical operators and the q-difference realization}\label{apF}
Fixing the gauge parameters $\beta=0$ and $\alpha$ by (\ref{ab}), using (\ref{AWop1a}), (\ref{AWop2a}), (\ref{qc1}) and (\ref{qdifAAs}), we obtain the following expressions for the dynamical operators (\ref{Am}), (\ref{cm}) and (\ref{dm}) :
\begingroup
\allowdisplaybreaks

\ben
&&\pi(\mathscr{A}^{+}(u,m_0))
=
-b(u^2) q^{-1} u^{-3} z^{-1} \phi (z) (z-q u^2)T_+
-b(u^2) q^{-1} u^{-3} z \phi (z^{-1}) (z^{-1}-q u^2)T_-
\non\\
&&\qq
+\frac{q^{(\nu+\nu')/2}(q+q^{-1})}{b(qz)b(qz^{-1})}(U-\frac{z+z^{-1}}{q+q^{-1}})
\Big(
-\frac{1}{2} e^{-\mu'} b(u^2) q^{2 s} u^{-1} (z^2+z^{-2})
\non\\
&&\qq\qq
+
 ((q^{-2
   s-1}+q^{2 s+1}) (\cosh (\mu ) u^{-3}+\cosh (\mu') u^{-1})-u^{-3} (v^{-2}+v^2) (u^2 \cosh
   (\mu )+\cosh (\mu')))(z+z^{-1})
\non\\
&&\qq\qq
-(q+q^{-1}) (v^{-2}+v^2) (\cosh (\mu ) u^{-3}+\cosh (\mu')
   u^{-1})+\cosh (\mu ) (q^{-2 s-1}+q^{2 s+1}) (q+q^{-1}) u^{-1}
\non\\
&&\qq\qq
+u^{-3} (\frac{1}{2} e^{-\mu'}
   ((q^{-2 s}-q^{2 s}) q^{-2}+q^{-2 s}+q^{2 s})+\frac{1}{2} e^{\mu'} (q^{-2 s-2}+q^{-2 s}+q^{2
   s}+q^{2 s+2}))
\non\\
&&\qq\qq
+\frac{1}{2} e^{-\mu'}  q^{2 s} (q^2+q^{-2})
u\Big)\ ,
\non
\een
\ben
&&\pi(\mathscr{D}^{+}(u,m_0))
=
-\frac{b(q^2u^2)b(u^2)}{b(qu^2)}\phi(z)(u^{-1}z^{-1}-qu)T_+
-\frac{b(q^2u^2)b(u^2)}{b(qu^2)}\phi(z^{-1})(u^{-1}z-qu)T_-
\non\\
&&\qq
+\frac{q^{(\nu+\nu')/2}(q+q^{-1})b(u^2)}{b(qz)b(qz^{-1})b(qu^2)}(U-\frac{z+z^{-1}}{q+q^{-1}})
\Big(
\frac{1}{2}
   e^{-\mu'} b(q^2 u^2)q^{2 s} u^{-1} (z^2+z^{-2}) 
\non\\
&&\qq\qq
+ ((q^{-2 s-1}+q^{2 s+1}) (
    \cosh (\mu )q^2\textup{}+\cosh (\mu') u^{-1})-u^{-1} (v^{-2}+v^2) (\cosh (\mu )+q^2 u^2 \cosh
   (\mu')))(z+z^{-1})
\non\\
&&\qq\qq
-(q+q^{-1}) (v^{-2}+v^2)
   ( \cosh (\mu )q^2 u+\cosh (\mu') u^{-1})
+ \cosh (\mu') (q^{-2
   s-1}+q^{2 s+1}) (q+q^{-1})q^2 u
\non\\
&&\qq\qq
+u^{-1} (\frac{1}{2} e^{-\mu } ((q^{-2 s}-q^{2 s}) q^{-2}+q^{-2 s}+q^{2
   s})
+\frac{1}{2} e^{\mu } (q^{-2 s-2}+q^{-2 s}+q^{2 s}+q^{2 s+2}))
\non\\
&&\qq\qq
-\frac{1}{2} q^{2 s} (q^2+q^{-2})  (e^{-\mu'} b(q^2 u^2)-e^{-\mu })u^{-1}
\Big)\ ,
\non
\een

\newpage 
\ben
&&\pi(\mathscr{C}^{+}(u,m_0))
=\non\\
&&
2e^{\mu'}q^{-2s-2}q^{-(\nu+\nu')/2}u^{-1}b(u^2)z^{-1}\phi(z)\phi(q^2z)T_+^2
+
2e^{\mu'}q^{-2s-2}q^{-(\nu+\nu')/2}u^{-1}b(u^2)z\phi(z^{-1})\phi(q^2z^{-1})T_-^2
\non\\
&&+
\Bigg(
\frac{2 \cosh (\mu ) (e^{-\mu'} q^{4 s}+e^{\mu'})}{q+q^{-1}}
-\frac{(v^2+v^{-2}) q^{-2
   s-1} (e^{-2 \mu'} q^{4 s}+1)}{q+q^{-1}}+q u^2+q^{-1}u^{-2}
\non\\&&\qq
+
\Big(-\frac{2 e^{-\mu'} \phi (z)
   (e^{2 \mu'}+q^{4 s}) q^{-(\nu+\nu')/2-2s-1}}{q+q^{-1}}
\non\\&&\qq
-\frac{2 e^{\mu'} \phi
   (z^{-1}) q^{(\nu+\nu')/2} (z^2 (q^{-2 s}-e^{-2 \mu'} q^{2 s-2})+e^{-2
   \mu'} q^{2 s}-q^{-2 s-2})}{q^2-q^{-2}}
\non\\&&\qq
+\frac{2 e^{\mu'} q^{(\nu+\nu')/2} \phi
   (q^{-2}z^{-1}) (z^2 (q^{-2 s}-e^{-2 \mu'} q^{2 s+2})+e^{-2 \mu'} q^{2 s-2}-q^{-2
   s})}{q^2-q^{-2}}
\non\\&&\qq
-2 e^{\mu'} \phi (q^2 z) q^{\frac{1}{2} (-\nu -\nu')-2 s-2}+\frac{e^{2
   \mu'} q^{-4 s-1}+e^{-2 \mu'} q^{4 s-1}-q+q^{-1}}{q+q^{-1}}
\Big)z^{-1}
+e^{-2 \mu'}  q^{4 s}z
\Bigg)b(u^2)u^{-1}\phi(z)T_+
\non\\
&&+
\Bigg(
\frac{2 \cosh (\mu ) (e^{-\mu'} q^{4 s}+e^{\mu'})}{q+q^{-1}}
-\frac{(v^2+v^{-2}) q^{-2
   s-1} (e^{-2 \mu'} q^{4 s}+1)}{q+q^{-1}}+q u^2+q^{-1}u^{-2}
\non\\&&\qq
+
\Big(-\frac{2 e^{-\mu'} \phi (z^{-1})
   (e^{2 \mu'}+q^{4 s}) q^{-(\nu+\nu')/2-2s-1}}{q+q^{-1}}
\non\\&&\qq
-\frac{2 e^{\mu'} \phi
   (z) q^{(\nu+\nu')/2} (z^{-2} (q^{-2 s}-e^{-2 \mu'} q^{2 s-2})+e^{-2
   \mu'} q^{2 s}-q^{-2 s-2})}{q^2-q^{-2}}
\non\\&&\qq
+\frac{2 e^{\mu'} q^{(\nu+\nu')/2} \phi
   (q^{-2}z) (z^{-2} (q^{-2 s}-e^{-2 \mu'} q^{2 s+2})+e^{-2 \mu'} q^{2 s-2}-q^{-2
   s})}{q^2-q^{-2}}
\non\\&&\qq
-2 e^{\mu'} \phi (q^2 z^{-1}) q^{\frac{1}{2} (-\nu -\nu')-2 s-2}+\frac{e^{2
   \mu'} q^{-4 s-1}+e^{-2 \mu'} q^{4 s-1}-q+q^{-1}}{q+q^{-1}}
\Big)z
+e^{-2 \mu'}  q^{4 s}z^{-1}
\Bigg)b(u^2)u^{-1}\phi(z^{-1})T_-
\non\\
&&+
\Bigg(
-\frac{2 e^{\mu'} (-q^{-2 s}+e^{-2 \mu'} q^{2 s-2}+(q^{-2 s}-e^{-2 \mu'} q^{2 s+2}) z^2)
   \phi (q^{-2}z^{-1}) \phi (z) q^{(\nu+\nu')/2}}{(q^2-q^{-2}) u z}
\non\\
&&\qq
+\frac{2
   e^{\mu'} ((e^{-2 \mu'} q^{2 s+2}-q^{-2 s})z^{-1}+(q^{-2 s}-e^{-2 \mu'} q^{2 s-2}) z)
   \phi (z^{-1}) \phi (q^{-2}z) q^{(\nu+\nu')/2}}{(q^2-q^{-2})
   u}
\non\\
&&\qq
-\frac{e^{-3 \mu'} (e^{2 \mu'} q^2-1) (e^{\mu'}-q^{2 s+1}) (q^{2
   s+1}+e^{\mu'}) (v^2+v^{-2}) q^{\frac{\nu +\nu'}{2}-3}}{2 (q+q^{-1})
   u}
\non\\
&&\qq
-\frac{e^{-2 \mu'} (q^{2 s}-q^{-2 s}) (e^{2 \mu'}-q^{4 s+2}) \cosh (\mu ) q^{\frac{\nu
   +\nu'}{2}-2}}{(q+q^{-1}) u}
+\frac{2 e^{-\mu'} (q^{4 s}+e^{2 \mu'}) z \phi
   (z^{-1})^2 q^{-(\nu+\nu')/2-2s-1}}{(q+q^{-1}) u}
\non\\
&&\qq
+\frac{2 e^{-\mu'}
   (q^{4 s}+e^{2 \mu'}) \phi (z)^2 q^{-(\nu+\nu')/2-2s-1}}{(q+q^{-1}) u
   z}
+\frac{2 e^{-\mu'} (q^{4 s}+e^{2 \mu'}) (z+z^{-1}) \phi (z^{-1}) \phi
   (z) q^{-(\nu+\nu')/2-2s-1}}{(q+q^{-1}) u}
\non\\
&&\qq
+\frac{e^{-3 \mu'} (-e^{2 \mu'}
   (q^4+1) q^{4 s}+q^{8 s+2}+e^{4 \mu'} q^2) (z+z^{-1}) q^{\frac{\nu +\nu'}{2}-2
   s-3}}{2 (q+q^{-1}) u}
\non\\
&&\qq
-\bigg(-\frac{e^{-2 \mu'} (q^{4 s}+e^{2 \mu'})
   (v^2+v^{-2}) q^{-2 s-1}}{(q+q^{-1}) u}+u q+\frac{e^{-\mu -\mu'} (1+e^{2 \mu })
   (q^{4 s}+e^{2 \mu'})}{(q+q^{-1}) u}
\non\\
&&\qq\qq
+\frac{(e^{2 \mu'} q^{-4 s}+2 e^{-2 \mu'}
   q^{4 s}-q^2+1-q^{-2}) z}{(q+q^{-1}) u q}+\frac{e^{-2 \mu'} q^{4
   s}+1}{(q+q^{-1}) u z q}+q^{-1}u^{-3}\bigg) \phi (z^{-1})
\non\\
&&\qq
-
\bigg(-\frac{e^{-2 \mu'}
   (q^{4 s}+e^{2 \mu'}) (v^2+v^{-2}) q^{-2 s-1}}{(q+q^{-1}) u}+u q+\frac{e^{-\mu
   -\mu'} (1+e^{2 \mu }) (q^{4 s}+e^{2 \mu'})}{(q+q^{-1}) u}
\non\\
&&\qq\qq
+\frac{(e^{-2
   \mu'} q^{4 s}+1) z}{(q+q^{-1}) u q}+\frac{e^{2 \mu'} q^{-4 s}+2 e^{-2 \mu'} q^{4
   s}-q^2+1-q^{-2}}{(q+q^{-1}) u z q}+q^{-1}u^{-3}\bigg) \phi (z)
\Bigg)
b(u^2)
\non\ .
\een

\endgroup

\end{appendix}

\end{document}